\theoremstyle{plain}
\newtheorem{thm}{\protect\theoremname}[section]
  \theoremstyle{definition}
  \newtheorem{defn}[thm]{\protect\definitionname}
  \theoremstyle{plain}
  \newtheorem{lem}[thm]{\protect\lemmaname}
  \theoremstyle{plain}
  \newtheorem{prop}[thm]{\protect\propositionname}
  \providecommand{\definitionname}{Definition}
  \providecommand{\lemmaname}{Lemma}
  \providecommand{\propositionname}{Proposition}
\providecommand{\theoremname}{Theorem}
\begin{document}

\title{\date{26.05.2014}Review of quantum gravity}

\author{Autor: Benjamin Schulz%
\thanks{Colmdorfstr. 32, 81249 München, e-mail: Benjamin.Schulz@physik.uni-muenchen.de%
} }

\maketitle
This document is merely a review article on quantum gravity. It is
organized as follows: in the section 1, it is argued why one should
construct a quantum theory of gravity. The importance of the singularity
theorems of general relativity is discussed, and the famous Penrose
singularity theorem is proven. In section 2.1, the covariant quantization
approach of gravity is reviewed and the Feynman rules of quantum gravity
are derived. The problem of divergences that occur at two loop order
is mentioned in section 2.2. In section 2.3 some comments are made
on the non perturbative evaluation of the quantum gravitational path
integral in the framework of Euclidean quantum gravity. In section
3, an article by Hawking is reviewed that shows the gravitational
path integral at one loop level to be dominated by contributions from
some kind of virtual gravitational instantons. In section 4, the canonical,
non-perturbative quantization approach that is based on the Wheeler
deWitt equation is reviewed. After deriving the Wheeler deWitt equation
in section 4.1, arguments from deWitt are described in section 4.2
which show occurrence of infinities at small distances within the
framework of canonical quantum gravity. In section 4.3, the loop quantum
gravity approach is reviewed shortly and its problems are mentioned.
In section 5, arguments from Hawking are reviewed which show the gravitational
path integral to be an approximate solution of the Wheeler deWitt
equation. In section 6, the black hole entropy is derived in various
ways. Section 6.1 uses the gravitational path integral for this calculation
and section 6.2 reviews how the black hole entropy can be derived
from canonical quantum gravity. In section 7.1, arguments from Dvali
and Gomez who claim that gravity can be quantized in a way which would
be in some sense ``self-complete'' are critically assessed. In section
7.2 a model from Dvali and Gomez for the description of quantum mechanical
black holes is critically assessed and compared with the standard
quantization methods of gravity.\tableofcontents{}

\section{Why one should construct a theory of quantum gravity?}

A classical particle with rest mass m becomes a black hole if its
entire mass is confined within its Schwarzschild radius $r{}_{s}=\frac{2Gm}{c^{2}}$
with G as the gravitational constant, and c as the speed of light
in vacuum. Quantum effects usually begin at the Compton wavelength
$\lambda=\frac{h}{mc}$, where h is Planck's constant and m is the
particle's rest mass. So, for the surrounding of a black hole, quantum
effects would become important if the black hole has a mass of $m=\sqrt{\frac{hc}{2G}}$.
This is an energy range of around $4.31\cdot10^{18}GeV/c^{2}$, which
is far beyond the energy of current particle accelerators. Therefore,
one might question whether it is reasonable, to do research on quantum
gravity. In the following, we will set $h=c=G=1$ if not explicitly
stated otherwise.

Well known classical solutions of general relativity, are e.g the
Schwarzschild solution, which describes static spherical black holes
of mass $M$ 
\begin{equation}
ds^{2}=-\left(1-\frac{2M}{r}\right)dt+\left(1-\frac{2M}{r}\right)^{-1}dr^{2}+r^{2}d\Omega,\label{eq:schwarzschild}
\end{equation}
 ($d\Omega^{2}=d\theta^{2}+sin^{2}\theta d\phi^{2}$ is the metric
of a unit two sphere in spherical coordinates). Physically more realistic
than the Schwarzschild solution is the Kerr solution which describes
rotating black holes 
\begin{eqnarray*}
ds^{2} & = & -\left(1-\frac{2Mr}{\rho}\right)dt^{2}-\frac{2Mar\sin^{2}\theta}{\rho^{2}}(dtd\phi+d\phi dt)\\
 &  & +\frac{\rho^{2}}{\Delta}dr^{2}+\rho^{2}d\theta^{2}+\frac{\sin^{2}\theta}{\rho^{2}}(r^{2}+a^{2})^{2}-a^{2}\Delta\sin^{2}\theta d\phi,
\end{eqnarray*}
 ($\Delta(r)=r^{2}-2Mr+a^{2}$ and $\rho(r,\theta)=r^{2}+acos^{2}\theta$,
with $a$ as some constant). And there is the important Friedmann
Robertson Walker metric hat describes the evolution of a spatially
homogeneous and isotropic universe 
\[
ds^{2}=-dt^{2}+R(t)d\sigma,
\]
(where t is the timelike coordinate and $d\sigma^{2}=\gamma_{ij}(u)du^{i}du^{j}$
is the line element of a maximally symmetric three manifold $\Sigma$
with $u^{1},u^{2},u^{3}$ as coordinates and $\gamma_{ij}$ as a symmetric
three dimensional metric). 

Unfortunately, in all these solutions, singularities are present where
the curvature becomes infinite. More precisely, with the Riemannian
tensor $R_{\mu\nu\beta}^{\lambda}$ defined by 
\begin{equation}
R_{\;\sigma\mu\nu}^{\rho}=\partial_{\mu}\Gamma_{\nu\sigma}^{\rho}-\partial_{\nu}\Gamma_{\mu\sigma}^{\rho}+\Gamma_{\mu\lambda}^{\rho}\Gamma_{\nu\sigma}^{\lambda}-\Gamma_{\nu\lambda}^{\rho}\Gamma_{\mu\sigma}^{\lambda},\label{eq:riemanntensor}
\end{equation}
where 
\begin{equation}
\Gamma_{\mu\nu}^{\sigma}=\frac{1}{2}g^{\sigma\rho}(\partial_{\mu}g_{\nu\rho}+\partial_{\nu}g_{\rho\mu}-\partial_{\rho}g_{\mu\nu})\label{eq:christoffel}
\end{equation}
is the Christoffel connection on the spacetime with metric tensor
$g_{\mu\nu}$, the solutions above have singularities in the sense
that at some point scalars like $R=g^{\mu\nu}R_{\mu\nu}$, $R^{\mu\nu}R_{\mu\nu}$,
$R^{\mu\nu\rho\sigma}R_{\mu\nu\rho\sigma}$,  or $R_{\mu\nu\rho\sigma}R^{\rho\sigma\lambda\tau}R_{\lambda\tau}^{\;\;\mu\nu}$
become infinite, where $R_{\mu\nu}=g^{\alpha\beta}R_{\alpha\mu\nu\beta}$
with $R_{\alpha\mu\nu\beta}=g_{\alpha\lambda}R_{\mu\nu\beta}^{\lambda}$ 

In the early years physicists were skeptic that these solutions given
above should have physical significance, due to the singularities
they contain. However, beginning in 1965, Penrose and Hawking have
shown a variety of singularity theorems. These theorems imply that
under reasonable conditions, the manifolds described by classical
general relativity must contain singularities. In the following section,
we review the derivation of the Penrose singularity theorem. Then,
we argue that the singularity theorems make the development of a theory
of quantum gravity necessary, since the singularities inside black
holes would render the physics of infalling matter inconsistent.

\subsection{The occurrence of singularities in gravity}

In this section, we give a short review of the first of the singularity
theorem that were published by Penrose \cite{penrose1}. The information
found in this section is merely a short summary of the excellent descriptions
in \cite{penrose,Carroll,Wald,Hawking}, with most of the proofs changed
only slightly. The focus of the text in this section is on topology,
which has seemingly beautiful applications on relativity. In order
to keep this section short, some important proofs from differential
geometry have been omitted. When necessary, the reader is pointed
out to appropriate references for these geometric results. 

We begin by noting some basic definitions and theorems from point
set topology

\subsubsection{Basic definitions and theorems of point set topology}
\begin{defn}
Basic definitions of point set topology 
\begin{itemize}
\item A topological space $(X,\mathcal{T})$ is a set $X\neq\emptyset$
with a collection of subsets $\mathcal{T}\subset\mathcal{P}(X):=\left\{ Y\subset X\right\} ,$
where
\begin{equation}
\begin{cases}
X,\emptyset\in\mathcal{T},\\
\forall U_{\lambda}\in\mathcal{T}\Rightarrow\cup_{\lambda\in\Lambda}U_{\lambda}\in\mathcal{T}\text{, with }\lambda\in\Lambda\text{ and }\Lambda\text{ as arbitrary index set,}\\
\forall U_{1},\ldots,U_{n}\in\mathcal{T}\Rightarrow\cap_{i=1}^{n}U_{i}\in\mathcal{T}\text{ with }n\in\mathbb{N}.
\end{cases}
\end{equation}
 
\item The sets $V\in\mathcal{T}$ are called open sets. 
\item Let $x\in X$ and $A\subset X$. Then $A$ is called a neighborhood
of $x$ if $\exists V\in\mathcal{T}:x\in V\subset A$ and $x$ is
then called an inner point of the neighborhood $A$. 
\item A subset $C\subset X$ is called closed, if the complement $C^{c}\equiv X\backslash C\equiv\left\{ x\in X:x\notin C\right\} $
is open, i.e. $C^{c}\in\mathcal{T}$.
\item A collection $\{U_{\lambda}\}$ of sets $U_{\lambda}\in\mathcal{T}$,
where$\lambda\in\Lambda$, and $\Lambda$ is an arbitrary index set,
is called an open cover of a set $A\subset X$, if 
\begin{equation}
A\subset\left(\cup_{\lambda\in\Lambda}U_{\lambda}\right)
\end{equation}

\item A subcover $\{U_{\alpha}\}$ of $A$ is a subset $\{U_{\alpha}\}\subset\{U_{\lambda}\}$,
where $\alpha\in\Theta,\Theta\subset\Lambda$, which fulfills 
\begin{equation}
A\subset\left(\cup_{\alpha\in\Theta}U_{\alpha}\right)
\end{equation}

\item The set $A$ is called compact if every open sub cover of $A$ has
a finite subcover. 
\item Let $\{U_{\lambda}\}$ be an open cover of $X$. An open cover $\left\{ V_{\beta}\right\} $
with $\beta\in\Psi$, where $\Psi$is an arbitrary index set, is called
a refinement of $\{U_{\lambda}\}$ if 
\begin{equation}
\forall\beta\exists\lambda:V_{\beta}\subset U_{\lambda}.
\end{equation}

\item an open cover$\left\{ U_{\beta}\right\} $ is locally finite if $\forall x\in X$
there exists an open neighborhood $U(x)$ such that the set 
\begin{equation}
\left\{ \beta\in\Psi:U_{\beta}\cap U(x)\neq\emptyset\right\} 
\end{equation}
is finite. 
\item A topological space $(X,\mathcal{T})$ is called paracompact if every
open cover $\{O_{\lambda}\}$ of $X$ has a locally finite refinement
$\left\{ V_{\beta}\right\} $. 
\item $(X,\mathcal{T})$ is Hausdorff if for all $x,y\in X,x\neq y$ there
exist neighborhoods $U$ of $x$ and $V$ of $y$ such that $U\cap V=\emptyset$.
In the following, we will consider only Hausdorff spaces.
\item $(X,\mathcal{T})$ is regular, if for each pair of a closed set $A$
and a point $p\notin A$, there exists neighborhoods $U,V$, where
$A\subset U$ and $p\in V$ and $U\cap V=\emptyset$.
\item Let $(X,\mathcal{T})$ be a topological space which is Hausdorff.
A sequence of points $\left\{ x_{n}\right\} \in X$ is said to converge
to a point $x$ if given any open neighborhood $O$ of $x,$
\begin{equation}
\exists n:x_{n}\in O\forall n\in\mathbb{N}.
\end{equation}

\item The point $x$ is then the limit point of $\left\{ x_{n}\right\} $.
\item A subset $Y\subset X$ is called dense in $X$ if for every point
$x\in X,$ $x$ is either in $Y$ or is a limit point of $Y$. 
\item $(X,\mathcal{T})$ is separable if $X$ contains a countable dense
subset.
\item Let $Y\subset X$ and $Y'$ be the set of all limit points of $Y$.
Then, the closure of $Y$ is $\overline{Y}=Y\cup Y'$.
\item A point $y\in X$ is an accumulation point of $\left\{ x_{n}\right\} $if
every open neighborhood of $y$ contains infinitely many points of
$\left\{ x_{n}\right\} $.
\item A topological space $(X,\mathcal{T})$ is first countable if for each
point $x\in X$
\begin{equation}
\exists U_{1},\ldots,U_{n}\in\mathcal{T},n\in\mathbb{N}
\end{equation}
 such that for any open neighborhood $V$ of $x$, there $\exists i:U_{i}\subset V$.
\item $(X,\mathcal{T})$ is second countable if there exists a countable
collection $\left\{ U_{i}\right\} _{i=1}^{\infty}$of open subsets
of $\mathcal{T}$ such that every open set $A\in\mathcal{T}$ can
be expressed as 
\begin{equation}
A=\cup_{i=1}^{\infty}U_{i}
\end{equation}

\item Let $(M,g)$ be a metric space with metric $g$. A metric topology
is the collection of sets that can be realized as union of open balls
\begin{equation}
B(x_{0},\epsilon)\equiv\left\{ x\in X|g(x_{0},x)<\epsilon\right\} \label{eq:ball}
\end{equation}
 where $x_{0}\in X$ and $\epsilon>0$,  
\item A topological space $(X,\mathcal{T})$ is called metrizable, if there
is a metric $g:(X,X)\rightarrow[0,\infty)$ on $X$ such that the
metric topology of $(X,d)$ equals $\mathcal{T}$.
\item A topological space $(X,\mathcal{T})$ is called connected, if the
only subsets that are both open and closed are the sets $X$ and $\emptyset$.
\item Let $(X,\mathcal{T})$ be a topological space. A function $h:V\subset X\rightarrow U\subset\mathbb{R}^{4}$
is called $4$ dimensional chart if $h$ is a homeomorphism of the
open set$V\subset X$ to an open set $U\subset R^{4}$. 
\item A family of cards $A=\left\{ h_{\alpha}:X_{\alpha}\rightarrow U_{\alpha}\right\} _{\alpha\in\Lambda}$,
 where $\Lambda$ is an arbitrary index set and $\cup_{\alpha}X_{\alpha}=X,$
is called an atlas of $X$. 
\item An atlas $A$ is called differentiable if $\forall\alpha,\beta\in\Lambda\times\Lambda:h_{\beta}\circ(h_{\alpha}|X_{\alpha}\cap X_{\beta})^{-1}$
is a diffeomorphism. 
\item Two differentiable atlases $A$ and $B$ are equivalent, if $A\cup B$
is a differentiable atlas. 
\item A 4-dimensional differentiable manifold is a set $X$ of a topological
space which is connected, Hausdorff and second countable, together
with an equivalence class of differentiable atlases that are homeomorphisms
to $\mathbb{R}^{4}$. 
\end{itemize}
\end{defn}
One has the following famous theorems in point set topology, which
we will use often:
\begin{thm}
\label{thm:metrizable} Let $(X,\mathcal{T})$ be any topological
space which is metrizable. Then, if $(X,\mathcal{T})$ is separable,
$(X,\mathcal{T})$ is second countable.\end{thm}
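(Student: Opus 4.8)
The plan is to manufacture an explicit countable basis for $\mathcal{T}$ out of the countable dense set, using the metric balls that already generate $\mathcal{T}$. Since $(X,\mathcal{T})$ is metrizable, I fix once and for all a metric $g$ on $X$ whose metric topology equals $\mathcal{T}$, so that the balls $B(x,\epsilon)$ of \eqref{eq:ball} form a basis of $\mathcal{T}$. Since $(X,\mathcal{T})$ is separable, I fix a countable dense subset $D=\{x_n\}_{n\in\mathbb{N}}\subset X$. The candidate basis is the family
\begin{equation}
\mathcal{B}:=\left\{ B\!\left(x_n,\tfrac{1}{m}\right):n\in\mathbb{N},\,m\in\mathbb{N}\right\},
\end{equation}
which is countable because $\mathbb{N}\times\mathbb{N}$ is countable.

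The one real step is to verify that every open set is a union of members of $\mathcal{B}$. Let $A\in\mathcal{T}$ and $x\in A$. Since the metric balls form a basis, there is $\epsilon>0$ with $B(x,\epsilon)\subset A$. I pick $m\in\mathbb{N}$ with $\frac{2}{m}<\epsilon$, and then, using density of $D$, a point $x_n\in D\cap B(x,\tfrac{1}{m})$, so that $g(x,x_n)<\tfrac{1}{m}$ and hence $x\in B(x_n,\tfrac{1}{m})\in\mathcal{B}$. The triangle inequality closes the argument: for every $y\in B(x_n,\tfrac{1}{m})$ we get $g(x,y)\le g(x,x_n)+g(x_n,y)<\tfrac{2}{m}<\epsilon$, so $B(x_n,\tfrac{1}{m})\subset B(x,\epsilon)\subset A$. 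Thus each point of $A$ is contained in some element of $\mathcal{B}$ lying inside $A$, and $A$ is the union of all those elements; in particular $A$ is a union of at most countably many members of $\mathcal{B}$, which is exactly the conclusion that $(X,\mathcal{T})$ is second countable (enumerating the chosen subfamily of $\mathcal{B}$ as $\{U_i\}_{i=1}^{\infty}$, repeating terms if it happens to be finite).

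I do not anticipate a genuine obstacle; the proof uses only the triangle inequality and the countability of $\mathbb{N}\times\mathbb{N}$, and in particular no completeness or separation hypothesis on $g$ beyond its being a metric. The only points that need a little care are bookkeeping ones: the factor of $2$ in the choice of $m$, which is what makes the triangle-inequality estimate land inside $B(x,\epsilon)$, and reconciling the argument with the literal phrasing of the second-countability definition stated above---what is really proved, and all that is needed, is that $\mathcal{B}$ is a countable basis, i.e. that every open $A$ equals the union of the subfamily $\{B\in\mathcal{B}:B\subset A\}$.
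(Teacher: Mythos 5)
Your proof is correct. Note that the paper itself does not actually prove this theorem: it only cites a reference and remarks that the statement follows from (and is much weaker than) Urysohn's metrization results. Your argument therefore supplies the standard elementary proof that the paper omits: the countable family $\mathcal{B}=\{B(x_n,1/m)\}$ of balls centered at the dense set with radii $1/m$, the choice $2/m<\epsilon$, and the triangle-inequality containment $B(x_n,1/m)\subset B(x,\epsilon)\subset A$ are exactly the textbook route, and they use nothing beyond the metric axioms and the countability of $\mathbb{N}\times\mathbb{N}$. Your closing remark is also the right reading of the paper's somewhat garbled definition of second countability (which, taken literally, would say every open set is the union of the entire collection $\{U_i\}$): what is intended, and what you prove, is that $\mathcal{B}$ is a countable basis, i.e.\ every open $A$ is the union of the subfamily $\{B\in\mathcal{B}:B\subset A\}$.
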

\begin{proof}
See \cite{Uhrysohn} for a proof of this rather weak statement. This
follows as part of Urysohn's lemma but actually, the Urysohn lemma
is much more powerful. It shows that a topological space is separable
and metrizable if and only if it is regular, Hausdorff and second-countable.
We will use theorem when we show that $C(p,q)$ is compact. We need
the result only in the weak form above.\end{proof}
\begin{thm}
\label{thm:(Bolzano-Weierstrass)}(Bolzano-Weierstrass) Let $(X,\mathcal{T})$
be a topological space and $A\subset X$. If A is compact, then every
sequence of points $\left\{ x_{n}\right\} \in A$ has an accumulation
point in $A$. Conversely, if $(X,\mathcal{T})$ is second countable,
and every sequence has an accumulation point in $A$ then $A$ is
compact.\end{thm}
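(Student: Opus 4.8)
The plan is to prove the two implications separately: the ``compact $\Rightarrow$ accumulation point'' direction by a direct covering argument, and the converse via the Lindelöf property that second countability supplies.

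For the forward implication I would argue by contradiction. Suppose $A$ is compact but some sequence $\{x_n\}$ in $A$ has no accumulation point in $A$. Negating the definition of accumulation point, for each $y\in A$ there is an open neighborhood $U_y$ of $y$ that meets the sequence only finitely often, i.e. the index set $\{n\in\mathbb{N}:x_n\in U_y\}$ is finite. The family $\{U_y\}_{y\in A}$ is then an open cover of $A$, so compactness yields a finite subcover $U_{y_1},\dots,U_{y_m}$. But then $\{n\in\mathbb{N}:x_n\in A\}\subset\bigcup_{i=1}^m\{n:x_n\in U_{y_i}\}$ is a finite union of finite sets, hence finite, contradicting $x_n\in A$ for every $n\in\mathbb{N}$. (This argument already disposes of the degenerate case in which $\{x_n\}$ assumes only finitely many values: then some value is assumed for infinitely many indices and is an accumulation point, so no $U_y$ around it could be as above.)

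For the converse, assume $(X,\mathcal{T})$ is second countable with countable basis $\{B_i\}_{i=1}^{\infty}$ and that every sequence in $A$ has an accumulation point in $A$; let $\{O_\lambda\}_{\lambda\in\Lambda}$ be an arbitrary open cover of $A$. First I would pass to a countable subcover: for each $x\in A$ pick $\lambda(x)$ with $x\in O_{\lambda(x)}$ and then, by definition of basis, a basis element $B_{i(x)}$ with $x\in B_{i(x)}\subset O_{\lambda(x)}$; the collection $\{B_{i(x)}:x\in A\}$ is a subfamily of the countable basis, hence is a countable family $\{B_{i_k}\}_{k=1}^{\infty}$ which still covers $A$, and choosing for each $k$ one $O_{\lambda_k}$ with $B_{i_k}\subset O_{\lambda_k}$ gives a countable subcover $\{O_{\lambda_k}\}_{k=1}^{\infty}$ of $A$. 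It then suffices to extract a finite subcover from this countable one. Suppose not; then for each $n$ one may choose $x_n\in A\setminus\bigcup_{k=1}^{n}O_{\lambda_k}$. By hypothesis the sequence $\{x_n\}$ has an accumulation point $y\in A$, and since the $O_{\lambda_k}$ cover $A$ we have $y\in O_{\lambda_N}$ for some $N$. Then $O_{\lambda_N}$ is an open neighborhood of $y$, so it contains $x_n$ for infinitely many $n$; but for every $n\geq N$ we have $x_n\notin\bigcup_{k=1}^{n}O_{\lambda_k}\supset O_{\lambda_N}$, so only finitely many indices (those $n<N$) can qualify, a contradiction. Hence $\{O_{\lambda_k}\}$ has a finite subcover, and therefore so does $\{O_\lambda\}$, so $A$ is compact.

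I expect the main obstacle to be the first step of the converse, namely the Lindelöf reduction from an arbitrary open cover to a countable subcover: this is the only place where second countability is genuinely invoked, and one must be careful that the chosen basis elements really do form a countable cover of $A$ and that each can be ``lifted'' back to a member of the original cover. Once that is in place, the diagonal choice of the $x_n$ and its clash with the accumulation point is routine, as is the forward direction. A minor bookkeeping point throughout is to keep fixed the reading of ``accumulation point of $\{x_n\}$'' as ``every open neighborhood contains $x_n$ for infinitely many indices $n$,'' since all the finiteness counting above relies on it.
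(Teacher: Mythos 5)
Your proof is correct. Note that the paper itself does not prove this theorem at all; it simply cites Wald (p.\ 426) and standard analysis texts, so there is no in-paper argument to compare against. What you give is the standard proof: the forward direction by negating the accumulation-point property pointwise and extracting a finite subcover, and the converse by the Lindel\"of reduction (second countability yields a countable subcover) followed by the diagonal choice $x_n\in A\setminus\bigcup_{k=1}^{n}O_{\lambda_k}$ clashing with the existence of an accumulation point. You are also right to fix the reading of ``accumulation point'' as ``every neighborhood contains $x_n$ for infinitely many indices $n$''; the paper's wording (``infinitely many points of $\{x_n\}$'') is ambiguous about indices versus values, and under the values reading the forward implication would fail for eventually constant sequences, so your convention is the one that makes the theorem true, and your parenthetical remark on the finitely-many-values case covers exactly that subtlety.
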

\begin{proof}
Stated in \cite{Wald} on p. 426, proven in almost every analysis
script one can get.\end{proof}
\begin{thm}
(Heine-Borel) A subset of $\mathbb{R}^{n},n\in\mathbb{N}$ is compact
if and only if it is bounded and closed.\end{thm}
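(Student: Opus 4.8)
The plan is to prove the two implications separately, with the reverse implication (closed and bounded $\Rightarrow$ compact) carrying essentially all of the content. For the easy direction, suppose $A\subset\mathbb{R}^{n}$ is compact. To see that $A$ is bounded, observe that $\{B(0,m)\}_{m\in\mathbb{N}}$ is an open cover of $A$ (indeed of all of $\mathbb{R}^{n}$); a finite subcover lies inside the largest ball $B(0,M)$, so $A\subset B(0,M)$. To see that $A$ is closed, I would show $\mathbb{R}^{n}\setminus A$ is open: given $p\notin A$, use the Hausdorff property to choose, for each $x\in A$, disjoint open neighborhoods $U_{x}\ni x$ and $W_{x}\ni p$; the $U_{x}$ cover $A$, so finitely many $U_{x_{1}},\dots,U_{x_{k}}$ do, and $W:=\bigcap_{i=1}^{k}W_{x_{i}}$ is an open neighborhood of $p$ disjoint from $\bigcup_{i}U_{x_{i}}\supset A$. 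Alternatively one can invoke Theorem \ref{thm:(Bolzano-Weierstrass)}: a convergent sequence in $A$ has its limit as an accumulation point, which must lie in $A$, so $A$ contains all its limit points; and an unbounded sequence in $A$ has no accumulation point, contradicting compactness.

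For the substantial direction, suppose $A$ is closed and bounded. Boundedness gives $A\subset Q:=[-R,R]^{n}$ for some $R>0$. I would first establish that the closed box $Q$ is compact, and then deduce compactness of $A$ from the general fact that a closed subset of a compact space is compact (given an open cover of $A$, adjoin the open set $\mathbb{R}^{n}\setminus A$ to get an open cover of $Q$, extract a finite subcover, and discard $\mathbb{R}^{n}\setminus A$). Compactness of $Q$ I would prove by the bisection argument: if some open cover $\{U_{\lambda}\}$ of $Q$ had no finite subcover, bisect $Q$ along each coordinate into $2^{n}$ congruent closed subboxes; at least one of them, call it $Q_{1}$, is still not covered by finitely many $U_{\lambda}$. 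Iterating produces a nested sequence $Q\supset Q_{1}\supset Q_{2}\supset\cdots$ of closed boxes with side lengths $2R\cdot 2^{-k}\to 0$, none finitely coverable. By the completeness of $\mathbb{R}$ (the nested interval property, applied coordinatewise) there is a point $p\in\bigcap_{k}Q_{k}$; then $p\in U_{\lambda_{0}}$ for some $\lambda_{0}$, and since $U_{\lambda_{0}}$ contains a ball $B(p,\epsilon)$ while $\operatorname{diam}Q_{k}\to 0$, we get $Q_{k}\subset U_{\lambda_{0}}$ for $k$ large, contradicting that $Q_{k}$ has no finite subcover.

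An alternative for this direction, which leans more on the machinery already set up, uses the converse half of Theorem \ref{thm:(Bolzano-Weierstrass)}: since $\mathbb{R}^{n}$ is second countable (the balls with rational center and rational radius form a countable base, and by the first countability/metric structure every open set is such a union), it suffices to show that every sequence in the closed bounded set $A$ has an accumulation point in $A$. A bounded real sequence has a convergent subsequence (one dimensional Bolzano--Weierstrass, via a monotone subsequence or bisection), and applying this successively in each of the $n$ coordinates extracts from a given sequence in $A$ a subsequence that converges in $\mathbb{R}^{n}$; its limit is an accumulation point of the original sequence, and it lies in $A$ because $A$ is closed.

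The main obstacle is precisely the reverse implication, and within it the irreducible appeal to the completeness of $\mathbb{R}$: whether packaged as the nested interval property feeding the bisection argument, or as the Bolzano--Weierstrass property of bounded real sequences, this is where the specific topology of $\mathbb{R}^{n}$ enters, as opposed to an arbitrary metric space where closed and bounded need not imply compact. Everything else, namely the ball cover argument for boundedness, the Hausdorff separation for closedness, and the ``closed subset of a compact set is compact'' lemma, is routine bookkeeping.
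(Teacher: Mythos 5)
Your proposal is correct, but there is nothing in the paper to compare it against: the paper does not prove Heine--Borel at all, it simply states it and refers to Wald (p.~425) and ``almost every analysis script.'' Your argument is the standard self-contained proof, and both halves check out. The forward direction (compact $\Rightarrow$ bounded via the nested ball cover, compact $\Rightarrow$ closed via the Hausdorff separation of a point from the finitely many covering neighborhoods) is routine and sound. The reverse direction via bisection of the box $[-R,R]^{n}$, the nested interval property, and the ``closed subset of a compact set is compact'' lemma is the classical route, and you correctly isolate where completeness of $\mathbb{R}$ enters. Your alternative route is actually the one that meshes best with the paper's own toolkit: it uses the converse half of the paper's Bolzano--Weierstrass theorem (Theorem \ref{thm:(Bolzano-Weierstrass)}), whose second-countability hypothesis you verify for $\mathbb{R}^{n}$ via rational balls, together with coordinatewise one-dimensional Bolzano--Weierstrass and closedness of $A$ to place the accumulation point inside $A$; since $A$ is bounded, every sequence in $A$ is bounded, so the hypothesis of that theorem is met for all sequences, as required. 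Either version would serve as a legitimate replacement for the paper's bare citation.
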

\begin{proof}
Stated in \cite{Wald} on p. 425, proven in almost every analysis
script one can get. 
\end{proof}
One can use the Heine-Borel theorem for proving:
\begin{thm}
\label{thm:attainsmaximum} (Extreme value theorem) A continuous function
$f:X\rightarrow\mathbb{R},$ from a compact set of a topological space
is bounded and attains its maximum and minimum values.\end{thm}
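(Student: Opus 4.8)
The plan is to reduce the statement to the Heine--Borel theorem by first showing that the continuous image $f(X)\subset\mathbb{R}$ is itself compact, and then extracting the maximum and minimum from the fact that a compact subset of $\mathbb{R}$ is closed and bounded.

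First I would prove that $f(X)$ is compact. Let $\{V_{\lambda}\}_{\lambda\in\Lambda}$ be any collection of open subsets of $\mathbb{R}$ with $f(X)\subset\cup_{\lambda\in\Lambda}V_{\lambda}$. Since $f$ is continuous, each preimage $f^{-1}(V_{\lambda})$ is open in $X$, and these sets cover $X$, because every $x\in X$ has $f(x)\in V_{\lambda}$ for some $\lambda$. Compactness of $X$ yields a finite subcover $f^{-1}(V_{\lambda_{1}}),\ldots,f^{-1}(V_{\lambda_{n}})$, and pushing forward gives $f(X)\subset V_{\lambda_{1}}\cup\cdots\cup V_{\lambda_{n}}$. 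Hence every open cover of $f(X)$ admits a finite subcover, i.e.\ $f(X)$ is compact.

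Next, by the Heine--Borel theorem, $f(X)$ is closed and bounded in $\mathbb{R}$. Boundedness together with the least upper bound property of $\mathbb{R}$ ensures that $M:=\sup f(X)$ and $m:=\inf f(X)$ are finite real numbers; in particular $f$ is bounded. To see that the supremum is attained, note that for every $\epsilon>0$ the interval $(M-\epsilon,M+\epsilon)$ contains a point of $f(X)$ by the definition of $\sup$, so $M$ lies in $f(X)$ or is a limit point of it; since $f(X)$ is closed, $M\in\overline{f(X)}=f(X)$. Thus there is $x_{\max}\in X$ with $f(x_{\max})=M$, and the same argument applied to $m$ produces $x_{\min}\in X$ with $f(x_{\min})=m$.

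I do not expect a genuine obstacle here; the only point requiring a little care is the first step, where one must correctly transfer covers back and forth between $X$ and the subspace $f(X)$ using the continuity of $f$. Alternatively one could argue directly via the Bolzano--Weierstrass theorem (Theorem \ref{thm:(Bolzano-Weierstrass)}): if $f$ were unbounded above, one could choose $x_{n}$ with $f(x_{n})\to\infty$, pass to an accumulation point $x_{*}$ of $\{x_{n}\}$, and derive a contradiction with the continuity of $f$ at $x_{*}$; a similar argument shows the supremum is attained. The compactness-of-the-image route is cleaner, so that is the one I would write up.
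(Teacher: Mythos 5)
Your proof is correct: the paper itself offers no argument for this theorem, merely citing Wald and ``almost every analysis script,'' and your route (continuous image of a compact set is compact, then Heine--Borel plus the least-upper-bound property to show $\sup f(X)$ and $\inf f(X)$ lie in the closed set $f(X)$) is exactly the standard proof those references contain. The cover-transfer step and the closedness argument are both handled correctly, so nothing needs to be added.
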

\begin{proof}
Stated in \cite{Wald} on p. 425. Proven in almost every analysis
script one can get.\end{proof}
\begin{thm}
\label{thm:compact}Let $(X,\mathcal{T})$be a topological space and
$A_{i}\subset X,i\in\Lambda,$with $\Lambda$ as an arbitrary index
set. If $A_{i}$ are compact, then $\cup A_{i}$ is is compact.\end{thm}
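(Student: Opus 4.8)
The plan is to work directly from the open-cover characterisation of compactness, since that is the formulation that interacts most naturally with unions. I would begin by fixing an arbitrary open cover $\{O_\mu\}_{\mu\in M}$ of the union $\bigcup_{i\in\Lambda} A_i$, where each $O_\mu\in\mathcal{T}$ and $M$ is some index set. The first observation is that, because each $A_i$ is contained in the whole union, the very same family $\{O_\mu\}_{\mu\in M}$ is automatically an open cover of every individual $A_i$. Thus a global cover of the union restricts to a cover of each piece, and the hypothesis that each $A_i$ is compact becomes applicable piece by piece.

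Next I would invoke compactness of the pieces separately: for each $i\in\Lambda$ the cover $\{O_\mu\}_{\mu\in M}$ admits a finite subcover, i.e. there is a finite subset $M_i\subset M$ with $A_i\subset\bigcup_{\mu\in M_i}O_\mu$. Passing to the union over all indices, the collection $\{O_\mu : \mu\in\bigcup_{i\in\Lambda} M_i\}$ certainly covers $\bigcup_{i\in\Lambda} A_i$, since it covers each summand. The candidate subcover is therefore the one indexed by the set $\bigcup_{i\in\Lambda} M_i\subset M$, and it remains only to check that this candidate is a \emph{finite} subcover, which is exactly what compactness of the union demands.

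The crux, and the step I expect to be the genuine obstacle, is precisely this final finiteness check. Each $M_i$ is finite, but the relevant index set $\bigcup_{i\in\Lambda} M_i$ is a union of as many finite sets as there are elements of $\Lambda$, and such a union is guaranteed to be finite only when $\Lambda$ is finite; in that case the proof closes at once and the union is compact. For an arbitrary index set the construction does not supply this finiteness on its own: the selections $M_i$ were made independently for each $i$, and nothing in the stated hypotheses forces all but finitely many of them to be redundant. I therefore regard the passage from ``finitely many sets per piece'' to ``finitely many sets overall'' as the decisive point, and I would concentrate all effort there, either by controlling the index set or by imposing a condition on the family $\{A_i\}$, such as local finiteness together with closedness, that would allow the per-piece choices to be merged into a single finite selection; absent such a device the combination step is where the argument genuinely has to be earned.
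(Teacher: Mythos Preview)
Your approach is exactly the one the paper takes: start from an open cover of the union, restrict it to each $A_i$, extract a finite subcover $U_i$ for each piece, and form $\bigcup_i U_i$. The paper's proof then simply asserts that $\bigcup_i U_i$ is a finite subcover and stops.

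You have correctly located the genuine gap, and it is not a gap in your argument but in the statement itself. The theorem is false for an arbitrary index set $\Lambda$: take $X=\mathbb{R}$ with the usual topology and $A_n=[-n,n]$ for $n\in\mathbb{N}$; each $A_n$ is compact but $\bigcup_n A_n=\mathbb{R}$ is not. The paper's proof commits precisely the error you flag---it passes from ``each $U_i$ is finite'' to ``$\bigcup_i U_i$ is finite'' without any control on $\Lambda$. Your instinct to demand that $\Lambda$ be finite (or to impose some auxiliary structure) is the correct repair; with $\Lambda$ finite the argument you wrote closes immediately and is the standard one. No further device will rescue the statement at the stated generality.
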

\begin{proof}
Let $U$ be an open cover of $\cup A_{i}$. Since $U$ is an open
cover of each $A_{i}$,  there exists a finite subcover $U_{i}\subseteq U$
for each $A_{i}$. Then $\cup U_{i}\subset U$ is a finite subcover
of $\cup A_{i}$. \end{proof}
\begin{thm}
\label{riemannian}Let $M$ be a paracompact differentiable manifold.
Then, $M$ is metrizable with a positive definite Riemannian metric.\end{thm}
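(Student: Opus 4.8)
The plan is to build a Riemannian metric by gluing Euclidean metrics on coordinate patches with a partition of unity, and then to show that the associated length metric recovers the manifold topology, so that $M$ becomes metrizable.

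First I would use paracompactness to obtain a locally finite atlas. Take an atlas $A=\{h_\alpha:X_\alpha\to U_\alpha\}_{\alpha\in\Lambda}$ representing the differentiable structure, with each $U_\alpha\subset\mathbb{R}^4$ open. Since $M$ is paracompact, the open cover $\{X_\alpha\}$ has a locally finite refinement $\{V_\beta\}_{\beta\in\Psi}$; by the refinement property each $V_\beta$ lies in some $X_{\alpha(\beta)}$, so $h_{\alpha(\beta)}|_{V_\beta}$ is a diffeomorphism of $V_\beta$ onto an open subset of $\mathbb{R}^4$. After relabelling, I may assume $\{V_\beta\}_{\beta\in\Psi}$ is itself a locally finite cover of $M$ by coordinate patches. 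Next I invoke the standard fact that a paracompact Hausdorff manifold admits a smooth partition of unity $\{\psi_\beta\}_{\beta\in\Psi}$ subordinate to $\{V_\beta\}$: $\operatorname{supp}\psi_\beta\subset V_\beta$, $0\le\psi_\beta\le 1$, and $\sum_\beta\psi_\beta\equiv 1$ with the sum locally finite. I would cite this rather than reprove it.

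Second, on each $V_\beta$ let $g^{(\beta)}$ be the pullback $(h_{\alpha(\beta)})^{*}\delta$ of the standard Euclidean inner product $\delta$ on $\mathbb{R}^4$; this is a smooth, positive-definite, symmetric $2$-tensor on $V_\beta$. Define
\[
g:=\sum_{\beta\in\Psi}\psi_\beta\,g^{(\beta)},
\]
each summand extended by zero outside $V_\beta$. Local finiteness makes $g$ a well-defined smooth symmetric $2$-tensor. At any $p\in M$ at least one $\psi_\beta(p)>0$; since a sum of positive-semidefinite quadratic forms with at least one strictly positive coefficient on a positive-definite form is itself positive definite, $g_p$ is positive definite. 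Hence $g$ is a Riemannian metric on $M$. Using that $M$ is connected (hence path-connected), any $p,q\in M$ are joined by a piecewise-$C^1$ curve, so
\[
d(p,q):=\inf\left\{\int_0^1\sqrt{g_{\gamma(t)}\!\left(\dot\gamma(t),\dot\gamma(t)\right)}\,dt \;:\; \gamma\ \text{piecewise }C^1,\ \gamma(0)=p,\ \gamma(1)=q\right\}
\]
is finite. Symmetry and the triangle inequality are immediate from the definition.

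Third — and this is where the real work lies — I would show $d$ is a genuine metric (i.e.\ $d(p,q)>0$ for $p\neq q$) and that its metric topology equals $\mathcal{T}$. Fix $p$ and a chart $h$ around it; on a compact coordinate ball $\overline{B}$ about $h(p)$, the map sending a point and a unit coordinate vector to $g$ evaluated on that vector is continuous on a compact set, so by Theorem~\ref{thm:attainsmaximum} it is bounded above and below by positive constants $c,C$. Hence on $\overline{B}$ the Riemannian length of any curve is squeezed between $\sqrt{c}$ and $\sqrt{C}$ times its Euclidean length. From this comparison: (i) any curve from $p$ to a point outside a small coordinate ball must cross that ball's boundary and therefore has length at least $\sqrt{c}$ times the Euclidean radius, giving $d(p,q)>0$; and (ii) $d$-balls and coordinate balls around $p$ are mutually nested, so the two topologies agree. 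Thus $M$ is metrizable, as claimed. The main obstacle I anticipate is exactly step three: carrying out the eigenvalue/length comparison cleanly and justifying that a curve reaching a far point must exit a fixed small ball (so the lower bound, which is only available while the curve stays in one chart, still forces $d(p,q)>0$). The rest — the locally finite refinement, the cited partition of unity, and the convexity argument for positive-definiteness — is routine.
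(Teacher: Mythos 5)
Your construction is correct and is essentially the standard argument: the paper gives no proof of its own but defers to \cite{JuergenJost} (Corollary 1.4.4), where the metric is obtained in the same way, by gluing pulled-back Euclidean metrics with a smooth partition of unity subordinate to a locally finite refinement of a chart cover, and metrizability follows from comparing the induced length distance with the chart topology. The details you flag in your third step (the two-sided bound on $g$ over a compact coordinate ball and the argument that a curve reaching a point outside a small ball must first cross its boundary, giving $d(p,q)>0$ and the equality of topologies) are exactly what the cited textbook carries out, so they should be written out or explicitly cited rather than left as an anticipated obstacle.
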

\begin{proof}
This is shown in most texts on Riemannian geometry. For example in
\cite{JuergenJost} on p. 25 as corollary 1.4.4. Note that the author
of \cite{JuergenJost} already defines the term manifolds such that
they must be paracompact.
\end{proof}

\subsubsection{The Raychaudhuri's equation and conjugate points on geodesics}

Most of the topology definitions and lemmas will be used later. For
now, we need some definitions for the spacetime manifold, and the
curves in that manifold:
\begin{defn}
Basic definitions from Lorentzian geometry:
\begin{itemize}
\item A pair $(M,g)$ is a Lorentzian manifold if $M$ is a paracompact,
$4$ dimensional differentiable manifold and $g$ is a symmetric non-degenerate
$2$-tensor field on $M$, called metric, which has the signature
$(-,+,+,+)$ .
\item Let $(M,g)$ be a Lorentzian manifold and $p\in M$ . A vector $v\in T_{p}M$
is said to be timelike, if $g_{\mu\nu}v^{a}v^{b}<0$,  spacelike,
if $g_{\mu\nu}v^{\mu}v^{\nu}>0$ and null if $g_{\mu\nu}v^{\mu}v^{\nu}=0$.
The length of v is 
\begin{equation}
|v|=\sqrt{|g_{\mu\nu}v^{\mu}v^{\nu}|},
\end{equation}

\item Two timelike tangent vectors $x,y\in T_{p}M$ define the same time
orientation, if 
\begin{equation}
g_{\mu\nu}x^{\mu}y^{\nu}>0.
\end{equation}

\item A Lorentzian manifold is stably causal if and only if there exists
a global time function $t:M\rightarrow\mathbb{R},$ where $\nabla^{a}t(p)$
is a timelike vectorfield $\forall p\in M$.
\item A timelike vector $v\in T_{p}M$ is future pointing if and only if
$g_{\mu\nu}\nabla^{\mu}t(p)v^{\nu}>0$ and past pointing if $g_{\mu\nu}\nabla^{\mu}t(p)v^{\nu}<0$
. 
\item A Lorentzian manifold is time ordered if a continuous designation
of future pointing and past pointing for timelike vectors can be made
over the entire manifold.
\item A differentiable curve $\gamma:I\subset R\mapsto M$ is said to be
timelike, spacelike or null, if its tangent vector $\dot{\gamma}=\frac{d\gamma^{\mu}(\zeta)}{d\zeta}$
is timelike, spacelike, or null for all $\lambda\in I$ . For a timelike
curve, the tangent vector can not be null. A differential curve is
a causal curve if $\dot{\gamma}$is either timelike or null.
\item If a curve $\gamma$ is a differentiable spacelike curve connecting
the points $p,q\in M$, it has a the length function 
\begin{equation}
l(\gamma)=\int_{a}^{b}(g_{\mu\nu}\dot{\gamma}^{\mu}\dot{\gamma}^{\nu})^{1/2}dt
\end{equation}
 where $\gamma(a)=p$,  $\gamma(b)=q$. If instead $\gamma$ is a
differentiable timelike curve, the length is defined by the so called
proper time 
\begin{equation}
\tau(\gamma)=\int_{a}^{b}(-g_{\mu\nu}\dot{\gamma}^{\mu}\dot{\gamma}^{\nu})^{1/2}dt
\end{equation}

\item A curve $\gamma$ is called a geodesic if and only if it fulfills
the geodesic equation 
\begin{equation}
\frac{d^{2}\gamma^{\sigma}}{d\lambda^{2}}+\Gamma_{\mu\nu}^{^{\sigma}}\frac{d\gamma^{\mu}d\gamma^{\nu}}{d\lambda d\lambda}=0
\end{equation}

\end{itemize}
\end{defn}
There is an important theorem on the uniqueness of geodesics around
small neighborhoods that we will use often:
\begin{thm}
Let $(M,g)$ be a time ordered Lorentzian manifold. Then, each point
$p_{0}\in M$ has an open neighborhood $V\subset M$ called geodesically
convex, such that the spacetime $(V,g)$ obtained by restricting $g$
to V satisfies: if $p,q\in V$ then there exists a unique geodesic
joining $p$ to $q$, with the geodesic confined entirely to $V$.\end{thm}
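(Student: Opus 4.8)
The plan is to reduce this purely local statement to the classical existence and uniqueness theory for ordinary differential equations, supplemented by a convexity argument (the time-orientability hypothesis is in fact not needed). \textbf{Step 1 (exponential map).} I would first rewrite the geodesic equation as a first-order system on the tangent bundle $TM$: a curve $\gamma$ is a geodesic precisely when $t\mapsto(\gamma(t),\dot\gamma(t))$ is an integral curve of the geodesic spray, the smooth vector field on $TM$ whose local expression is built from the Christoffel symbols $\Gamma^{\sigma}_{\mu\nu}$ of (\ref{eq:christoffel}). The classical existence, uniqueness and smooth-dependence theorem for ODEs then provides an open neighborhood $\mathcal{U}\subset TM$ of the zero vector $0_{p_0}$ such that for every $v\in\mathcal{U}$ the geodesic $\gamma_v$ with $\gamma_v(0)=\pi(v)$ and $\dot\gamma_v(0)=v$ is defined for $t\in[0,1]$ and depends smoothly on $v$; here one uses the homogeneity $\gamma_{cv}(t)=\gamma_v(ct)$ to shrink $\mathcal{U}$ until the parameter interval is $[0,1]$. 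Set $\exp_p(v):=\gamma_v(1)$ for $v\in\mathcal{U}\cap T_pM$.

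\textbf{Step 2 (a totally normal neighborhood).} Under the canonical identification $T_{0_p}(T_pM)\cong T_pM$ one has $d(\exp_p)_{0_p}=\mathrm{id}$, so the inverse function theorem makes $\exp_p$ a diffeomorphism from a star-shaped neighborhood of $0_p$ onto an open neighborhood of $p$. To get a neighborhood serving all nearby pairs at once, I would consider $F:\mathcal{U}\to M\times M$, $F(v)=(\pi(v),\exp_{\pi(v)}(v))$, whose differential at $0_{p_0}$ is block-triangular with identity diagonal blocks, hence invertible; thus $F$ is a diffeomorphism of a neighborhood of $0_{p_0}$ onto a neighborhood of $(p_0,p_0)$ in $M\times M$. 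Fixing, via Theorem~\ref{riemannian}, a positive-definite Riemannian metric $h$ on the paracompact manifold $M$ to measure sizes of tangent vectors, one reads off: there are an open $W\ni p_0$ and $\delta>0$ such that for each $q\in W$ the map $\exp_q$ is a diffeomorphism of the $h$-ball $B_\delta(0_q)\subset T_qM$ onto an open set containing $W$, so any two points of $W$ are joined by a geodesic of the form $t\mapsto\exp_q(tv)$, $v\in B_\delta(0_q)$, with $v$ uniquely determined among vectors of $h$-norm $<\delta$.

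\textbf{Step 3 (convexity).} Fix normal coordinates $x=(x^0,x^1,x^2,x^3)$ around $p_0$ determined by $\exp_{p_0}$, and put $N(x):=\sum_\mu(x^\mu)^2$. Since all Christoffel symbols vanish at $p_0$ in normal coordinates, the covariant Hessian $\nabla^2N$ equals the ordinary Hessian there, namely the positive-definite form $2\delta_{\mu\nu}$; by continuity $\nabla^2N$ stays positive definite on a coordinate ball $B'=\{N<\rho^2\}$ which may be taken inside $W$. Along any geodesic $\gamma$ with image in $B'$ one has $\frac{d^2}{dt^2}N(\gamma(t))=\nabla^2N(\dot\gamma,\dot\gamma)>0$ wherever $\dot\gamma\neq0$, so $N\circ\gamma$ is strictly convex and, by the extreme value theorem (Theorem~\ref{thm:attainsmaximum}), on any closed parameter interval attains its maximum only at an endpoint. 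Now choose $r<\rho$ small enough that $V:=\{N<r^2\}\subset W$ and that the short geodesic joining any two points of $V$ has image inside $B'$ — legitimate because, by the smooth dependence of Step 1, for $v$ near $0_{p_0}$ the curve $\gamma_v|_{[0,1]}$ stays arbitrarily close to the constant curve $p_0\in B'$. Given $p,q\in V$, let $\gamma$ be the short geodesic joining them: its image lies in $B'$ and $N$ at its endpoints is $<r^2$, so strict convexity forces $N(\gamma(t))<r^2$ throughout, i.e. $\gamma$ stays in $V$. For uniqueness, any geodesic contained in $V$ and traced from one endpoint $q$ must, by a continuation argument, have initial velocity in $B_\delta(0_q)$ — it cannot reach the boundary of $\exp_q(B_\delta(0_q))$ without leaving $V$ — so it coincides with the geodesic of Step 2. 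Hence $V$ is geodesically convex and may be taken as the desired neighborhood of $p_0$.

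\textbf{Main obstacle.} The delicate part is Step 3: the slogan ``$N\circ\gamma$ is convex, hence a geodesic between points of a sublevel set cannot leave it'' is circular unless one has first secured that the geodesic remains in the region $B'$ where $\nabla^2N>0$, so the real technical content is arranging the nested choice $V\subset B'\subset W$ with the quantifiers in the right order (first $W,\delta$, then $B'$, then $r$). A related subtlety is that, because $g$ has Lorentzian signature $(-,+,+,+)$, one cannot use the squared Lorentzian distance as the convex function; the Euclidean coordinate function $N$ is used instead, the auxiliary Riemannian metric of Theorem~\ref{riemannian} entering only to make sense of ``balls'' of tangent vectors in Step 2.
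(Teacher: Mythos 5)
Your proposal is correct, and it is essentially the same route as the source the paper relies on: the paper gives no proof of its own but only cites \cite{Oneil}, and your argument (geodesic spray and exponential map, the local diffeomorphism $F(v)=\bigl(\pi(v),\exp_{\pi(v)}(v)\bigr)$ onto a neighborhood of $(p_{0},p_{0})$, then convexity of $N$ in normal coordinates to make the sublevel set $V=\{N<r^{2}\}$ geodesically convex) is precisely the standard Whitehead-type proof found there, with the correct observation that time orientation is irrelevant. The only compressed point is the uniqueness step, where the phrase ``continuation argument'' should be the open-and-closed argument for the lift $t\mapsto\exp_{q}^{-1}(\gamma(t))\in B_{\delta}(0_{q})$, using continuity of $\exp_{q}^{-1}$ on the open set containing $V$; with that spelled out the proof is complete.
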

\begin{proof}
Shown in\cite{Oneil} as proposition 31 on p. 72. 
\end{proof}
We now have to make some definitions that link the curves in a spacetime
manifold to the causality of events that happen in that spacetime
\begin{defn}
Causality
\begin{itemize}
\item A differentiable curve $\gamma$ is a future (past) directed causal
curve, if $\dot{\gamma}$ is non-spacelike and future (past) directed.
\item A continuous curve $\lambda$ is future directed, if for each $p\in\lambda$
there exists a geodesically convex neighborhood $U$ such that if
$\lambda(\zeta_{1}),\lambda(\zeta_{2})\in U$ with $\zeta_{1}<\zeta_{2}$,
then there exists a future directed differentiable curve connecting
$\lambda(\zeta_{1})$ to $\lambda(\zeta_{2})$. 
\item The chronological future of a set $p\subset M$ is the set $I^{+}(p)$
of all points in $M$ that can be connected from $p$ by a future
directed timelike curve.
\item The causal future of a set $p\subset M$ is the set $J^{+}(p)$ of
all points in $M$ that can be connected from $p$ by a future directed
causal curve.
\item The chronological future of a set $p\subset M$ relative to $U\subset M$
is the set $I^{+}(p,U)\equiv U\cap I^{+}(p).$
\item The causal future of a set $p\subset M$ relative to $U\subset M$
is the union $J^{+}(p,U)\equiv\left(p\cap U\right)\cup(U\cap J^{+}(p))$
\end{itemize}
\end{defn}
In the following, we often use an important theorem on the lengths
of timelike geodesics in geodesically convex neighborhoods:
\begin{thm}
\label{twin} Let $(M,g)$ be a time oriented spacetime and $p_{0}\in M$.
Then there exists a geodesically convex open neighborhood $V\subset M$
of $p_{0}$ such that the spacetime $(V,g)$ obtained by restricting
$g$ to $V$ satisfies the following property: If$q\in I^{+}(p),$
$c$ is the unique timelike geodesic connecting $p$ to $q$ and $\gamma$
is any piecewise smooth timelike curve connecting $p$ to $q$ then
$\tau(\gamma)\leq\tau(c)$, with equality if and only if $\gamma$
is the unique geodesic.\end{thm}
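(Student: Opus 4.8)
The plan is to run the classical argument that geodesics locally maximize proper time, via Gauss's lemma in a convex normal neighborhood. Starting from the geodesically convex neighborhood $V$ of $p_{0}$ produced by the preceding theorem, I would shrink $V$ if necessary so that for every $p\in V$ the exponential map $\exp_{p}$ restricts to a diffeomorphism of a star-shaped open set in $T_{p}M$ onto $V$ --- this is exactly what the convexity construction of \cite{Oneil} provides. Fix $p\in V$, let $P=\exp_{p}^{-1}:V\to T_{p}M$ be the position vector field based at $p$, and recall that within a convex normal neighborhood one has $q\in I^{+}(p,V)$ if and only if $P(q)$ is future-directed timelike. Hence $\rho(q):=\bigl(-g_{p}(P(q),P(q))\bigr)^{1/2}$ is a smooth positive function on $I^{+}(p,V)$, and $\rho(q)$ is precisely $\tau(c)$ for the unique radial geodesic $c(s)=\exp_{p}(sP(q))$, $s\in[0,1]$, joining $p$ to $q$.

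The core input is Gauss's lemma, which I would cite from \cite{Oneil} in the form: on $I^{+}(p,V)$ the gradient $\nabla\rho$ is, at each point $q$, a scalar multiple of the tangent at $q$ to the radial geodesic through $q$; consequently $g(\nabla\rho,\nabla\rho)=-1$, so $T:=\nabla\rho$ is a timelike unit vector field on $I^{+}(p,V)$, and fixing its sign by the time orientation one gets $\tfrac{d}{dt}(\rho\circ\gamma)>0$ along every future-directed timelike curve $\gamma$ there. Now take an arbitrary piecewise smooth future-directed timelike curve $\gamma:[a,b]\to V$ with $\gamma(a)=p$, $\gamma(b)=q$. For each $t>a$ the restriction $\gamma|_{[a,t]}$ is a future-directed timelike curve from $p$ to $\gamma(t)$, so $\gamma((a,b])\subset I^{+}(p)\cap V=I^{+}(p,V)$; thus $\rho\circ\gamma$ is defined on $(a,b]$ and extends continuously to $[a,b]$ with $(\rho\circ\gamma)(a)=0$. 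On each smooth piece, decompose $\dot\gamma=\lambda T+W$ with $g(W,T)=0$, so that $W$ is spacelike and $\lambda=-g(\dot\gamma,T)=-\tfrac{d}{dt}(\rho\circ\gamma)$; from $g(\dot\gamma,\dot\gamma)=-\lambda^{2}+g(W,W)<0$ we get $g(W,W)\le\lambda^{2}$ and hence $\bigl(-g(\dot\gamma,\dot\gamma)\bigr)^{1/2}=\bigl(\lambda^{2}-g(W,W)\bigr)^{1/2}\le|\lambda|=\tfrac{d}{dt}(\rho\circ\gamma)$, with equality if and only if $W=0$.

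Integrating over $[a,b]$ (summing over the finitely many smooth pieces and using continuity of $\rho\circ\gamma$) gives $\tau(\gamma)=\int_{a}^{b}\bigl(-g(\dot\gamma,\dot\gamma)\bigr)^{1/2}\,dt\le\int_{a}^{b}\tfrac{d}{dt}(\rho\circ\gamma)\,dt=\rho(q)-0=\tau(c)$. If $\tau(\gamma)=\tau(c)$ then $W\equiv0$ on every smooth piece, i.e. $\dot\gamma$ is everywhere parallel to $\nabla\rho$; each piece is then an unparametrized integral curve of $\nabla\rho$, hence a radial geodesic from $p$, and by the uniqueness of geodesics in the convex neighborhood the pieces must all coincide with $c$ up to reparametrization, ruling out genuine corners. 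This establishes the claim.

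The step I expect to require genuine care, as opposed to bookkeeping, is Gauss's lemma together with the assertion that $\nabla\rho$ is timelike with a well-defined time orientation on all of $I^{+}(p,V)$: this is where the geometry --- the radial-isometry property of $\exp_{p}$, equivalently the vanishing of the boundary-free term in the first variation of length --- actually enters, and I would either reproduce that short computation or invoke it from \cite{Oneil}. The remaining ingredients (that a future-directed timelike curve from $p$ cannot leave $I^{+}(p,V)$ while staying in $V$, that $\rho\circ\gamma$ extends continuously to the endpoint $p$, and the corner analysis in the equality case) are routine given the convex-neighborhood machinery already set up.
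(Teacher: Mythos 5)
Your argument is correct and is essentially the standard proof from O'Neill (the Gauss-lemma/radius-function argument of his Lemma 14, pp.~133--134, together with the piecewise-smooth refinement), which is exactly the reference the paper cites in place of giving its own proof. The only blemish is a sign bookkeeping slip: with signature $(-,+,+,+)$ the gradient $\nabla\rho$ is past-directed along the radial geodesics, so one should take $T=-\nabla\rho$ (equivalently, note that $\lambda=+\tfrac{d}{dt}(\rho\circ\gamma)>0$ for future-directed $\dot\gamma$); this does not affect the inequality $\tau(\gamma)\leq\tau(c)$ or the equality analysis.
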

\begin{proof}
Shown in \cite{Oneil} on p. 133-134 as Lemma 14, note also the refinement,
on p. 134, which shows that Lemma 14 holds also for piecewise smooth
curves.
\end{proof}
On a manifold, one can have entire families of geodesics. The following
provides some useful definitions for them.
\begin{defn}
Definitions for geodesic congruences

\label{Definitions2}
\begin{itemize}
\item Let $\gamma_{s}(t)$, $t,s\in\mathbb{R}$ denote a smooth one parameter
family of geodesics on a manifold $(g,M)$. The collection of these
geodesics defines a smooth two-dimensional surface. The coordinates
of this surface may be chosen to be s and t, provided that the geodesics
do not cross. Such a family of geodesics is called geodesic congruence. 
\item Denote the entire surface generated by the geodesic congruence as
the set of coordinates $x^{\mu}(t,s)\in M$. Then, the tangent vectors
to the geodesic congruence are defined by $T^{\mu}=\frac{\partial x^{\mu}}{\partial\tau}$,
and the geodesic deviation vectors are $S^{\mu}=\frac{\partial x^{\mu}}{\partial s}$. 
\item The vectorfield $v^{\mu}=T^{\nu}\nabla_{\nu}S^{\mu}$defines the relative
velocity of the geodesic congruence and 
\begin{equation}
a^{\mu}=T^{\lambda}\nabla_{\lambda}v^{\mu}=T^{\lambda}\nabla_{\lambda}(T^{\nu}\nabla_{\nu}S^{\mu})
\end{equation}
 is the relative acceleration 
\item For a tensor $T_{\mu_{1},\ldots\mu_{l}}$ we define the symmetric
part $T_{(\mu_{1},\ldots\mu_{l})}=\frac{1}{l!}\sum_{\sigma}T_{\mu_{\sigma(1),\ldots\sigma(l)}}$
where the sum goes over all permutations $\sigma$ of $1,\ldots l$
. The antisymmetric part $T_{[\mu_{1},\ldots\mu_{l}]}=\frac{1}{l!}\sum_{\sigma}sign(\sigma)T_{\mu_{\sigma(1),\ldots\sigma(l)}}.$where
$sign(\sigma)$ is 1 for even and $-1$ for odd permutations.
\item Let $T^{\mu}$ be the tangent vector field to a timelike geodesic
congruence parametrized by proper time $\tau$. We define the projection
tensor 
\begin{equation}
P_{\mu\nu}=g_{\mu\nu}+T_{\mu}T_{\nu}.
\end{equation}

\item Any vector in the tangent space of a point $p\in M$ can be projected
by $P_{\mu\nu}$ into the subspace of vectors normal to $T^{\mu}$.
We now can define a tensorfield 
\begin{equation}
B_{\mu\nu}=\nabla_{\nu}T_{\mu}
\end{equation}
where we call $\theta=B^{\mu\nu}P_{\mu\nu}$ the expansion
\item The shear is defined by the symmetric tensorfield 
\begin{equation}
\sigma_{\mu\nu}=B_{(\mu\nu)}-\frac{1}{3}\theta P_{\mu\nu}
\end{equation}

\item The twist is an antisymmetric tensorfield defined by $\omega_{\mu\nu}=B_{[\mu\nu]}$. 
\item Let $(M,g)$ be a manifold on which a Christoffel connection is defined
and let $\gamma$ be a geodesic with tangent $T$. A solution $S$
of the geodesic deviation equation 
\begin{equation}
T^{\lambda}\nabla_{\lambda}(T^{\nu}\nabla_{\nu}S^{\mu})=R_{\;\nu\rho\sigma}^{\mu}T^{\nu}T^{\rho}S^{\sigma},
\end{equation}
is called Jacobi field. 
\end{itemize}
\end{defn}
We can use these definitions to derive equations that describe whether
the geodesics in a given congruence are converging to each other.
The following equations are described in most relativity books, e.g.
\cite{Carroll,Wald} . They will be used during the proofs of the
singularity theorem.
\begin{lem}
Let the tangent vector of a geodesic congruence be parametrized by
proper time $t=\tau$. Then, the tangent field fulfills $T_{\mu}T^{\mu}=-1$
and $T^{\lambda}\nabla_{\lambda}T^{\mu}=0$. The geodesic acceleration
vector fulfills the geodesic deviation equation: 
\begin{equation}
a^{\mu}=R_{\;\nu\rho\sigma}^{\mu}S^{\nu}T^{\rho}T^{\sigma}
\end{equation}
\end{lem}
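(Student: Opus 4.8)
The plan is to check the three assertions separately; the first two are immediate unwindings of the definitions and only the geodesic deviation equation carries content. For $T_{\mu}T^{\mu}=-1$: parametrizing the timelike geodesic by its own proper time means, by the definition of $\tau(\gamma)$ given above, that $1=\frac{d\tau}{d\tau}=(-g_{\mu\nu}\dot{\gamma}^{\mu}\dot{\gamma}^{\nu})^{1/2}$ along the curve, so $g_{\mu\nu}T^{\mu}T^{\nu}=-1$. For $T^{\lambda}\nabla_{\lambda}T^{\mu}=0$: since proper time is an affine parameter, the geodesic equation holds in the form $\ddot{\gamma}^{\sigma}+\Gamma_{\mu\nu}^{\sigma}\dot{\gamma}^{\mu}\dot{\gamma}^{\nu}=0$, and expanding $T^{\lambda}\nabla_{\lambda}T^{\mu}=\frac{dx^{\lambda}}{d\tau}\partial_{\lambda}T^{\mu}+\Gamma_{\lambda\nu}^{\mu}T^{\lambda}T^{\nu}=\frac{d^{2}x^{\mu}}{d\tau^{2}}+\Gamma_{\lambda\nu}^{\mu}\dot{x}^{\lambda}\dot{x}^{\nu}$ reproduces exactly the left-hand side of the geodesic equation, hence vanishes.

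For the geodesic deviation equation, the one crucial structural input is that $T^{\mu}=\partial x^{\mu}/\partial\tau$ and $S^{\mu}=\partial x^{\mu}/\partial s$ come from a coordinate system $(\tau,s)$ on the two-surface swept out by the congruence, so their Lie bracket vanishes: $T^{\lambda}\partial_{\lambda}S^{\mu}-S^{\lambda}\partial_{\lambda}T^{\mu}=\partial^{2}x^{\mu}/\partial\tau\partial s-\partial^{2}x^{\mu}/\partial s\partial\tau=0$ by equality of mixed partials. Because the Christoffel connection (\ref{eq:christoffel}) is symmetric in its lower indices, the partial derivatives can be promoted to covariant ones without changing this, giving the identity $T^{\nu}\nabla_{\nu}S^{\mu}=S^{\nu}\nabla_{\nu}T^{\mu}$; in particular the relative velocity is $v^{\mu}=T^{\nu}\nabla_{\nu}S^{\mu}=S^{\nu}\nabla_{\nu}T^{\mu}$.

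I would then compute $a^{\mu}=T^{\lambda}\nabla_{\lambda}v^{\mu}=T^{\lambda}\nabla_{\lambda}(S^{\nu}\nabla_{\nu}T^{\mu})$ by the Leibniz rule, apply the identity above once more to rewrite the factor $T^{\lambda}\nabla_{\lambda}S^{\nu}$ as $S^{\lambda}\nabla_{\lambda}T^{\nu}$, and commute the two covariant derivatives in the remaining term using the Ricci identity $\nabla_{\lambda}\nabla_{\nu}T^{\mu}-\nabla_{\nu}\nabla_{\lambda}T^{\mu}=R_{\;\rho\lambda\nu}^{\mu}T^{\rho}$, which follows directly from the definition (\ref{eq:riemanntensor}) of the Riemann tensor. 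The term that still contains $\nabla_{\nu}\nabla_{\lambda}T^{\mu}$ is then written as $\nabla_{\nu}(T^{\lambda}\nabla_{\lambda}T^{\mu})-(\nabla_{\nu}T^{\lambda})(\nabla_{\lambda}T^{\mu})$, and the first piece dies by the geodesic equation just proved; relabelling dummy indices shows the leftover product $(\nabla_{\nu}T^{\lambda})(\nabla_{\lambda}T^{\mu})$ cancels exactly against the product term generated by the Leibniz step, leaving $a^{\mu}=R_{\;\rho\lambda\nu}^{\mu}T^{\rho}T^{\lambda}S^{\nu}$, which is the geodesic deviation (Jacobi) equation asserted, up to renaming of indices and the antisymmetry of $R_{\;\rho\lambda\nu}^{\mu}$ in its last two slots.

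The only delicate point is bookkeeping: one must keep track of which slot of the Riemann tensor each of $S$ and $T$ sits in, and one must remember that $\nabla_{\nu}T^{\mu}$ does not vanish on its own — only its contraction $T^{\nu}\nabla_{\nu}T^{\mu}$ does — so the two $(\nabla T)(\nabla T)$ terms drop only after the relabelling, not individually. Beyond that there is no genuine obstacle, since torsion-freeness, the Ricci identity, and the existence of the adapted coordinates $(\tau,s)$ on a congruence are all standard facts already built into the definitions above.
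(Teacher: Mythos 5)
Your proposal is correct and follows essentially the same route as the paper's proof: use $T^{\nu}\nabla_{\nu}S^{\mu}=S^{\nu}\nabla_{\nu}T^{\mu}$ (which you justify via vanishing mixed partials and torsion-freeness, a step the paper merely quotes), then Leibniz, the Ricci identity to commute the covariant derivatives, the geodesic equation, and cancellation of the two $(\nabla T)(\nabla T)$ terms after relabelling. The index bookkeeping you end with matches the paper's own final line, so no further comment is needed.
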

\begin{proof}
Because of 
\begin{equation}
d\tau^{2}=-g_{\mu\nu}dx^{\mu}dx^{\nu}
\end{equation}
and $T^{\mu}=\frac{\partial x^{\mu}}{\partial\tau}$, we have 
\begin{equation}
T_{\mu}T^{\mu}=g_{\mu\nu}T^{\mu}T^{\nu}=-1.
\end{equation}
The expression $T^{\lambda}\nabla_{\lambda}T^{\mu}=0$ follows from
the geodesic equation for $x^{\mu}$. For the relative acceleration
vector, we have: 
\begin{eqnarray}
a^{\mu} & = & T^{\rho}\nabla_{\rho}(T^{\sigma}\nabla_{\sigma}S^{\mu})\nonumber \\
 & = & T^{\rho}\nabla_{\rho}(S^{\sigma}\nabla_{\sigma}T^{\mu})\nonumber \\
 & = & (T^{\rho}\nabla_{\rho}S^{\sigma})(\nabla_{\sigma}T^{\mu})+T^{\rho}S^{\sigma}\nabla_{\rho}\nabla_{\sigma}T^{\mu}\nonumber \\
 & = & (S^{\rho}\nabla_{\rho}T^{\sigma})(\nabla_{\sigma}T^{\mu})+T^{\rho}S^{\sigma}\left(\nabla_{\sigma}\nabla_{\rho}T^{\mu}+R_{\;\nu\rho\sigma}^{\mu}T^{\nu}\right)\nonumber \\
 & = & (S^{\rho}\nabla_{\rho}T^{\sigma})(\nabla_{\sigma}T^{\mu})+T^{\sigma}\nabla_{\sigma}(T^{\rho}\nabla_{\rho}T^{\mu})-(S^{\sigma}\nabla_{\sigma}T^{\rho})\nabla_{\rho}T^{\mu}+R_{\;\nu\rho\sigma}^{\mu}T^{\nu}T^{\rho}S^{\sigma}\nonumber \\
 & = & R_{\;\nu\rho\sigma}^{\mu}T^{\nu}T^{\rho}S^{\sigma}
\end{eqnarray}
where in the second line it was used that 
\begin{equation}
T^{a}\nabla_{a}X^{b}=X^{a}\nabla_{a}T^{b},
\end{equation}
the third line uses Leibniz's rule, the fourth line replaces a double
covariant derivative by the derivative in the opposite order plus
the Riemann tensor. Finally, in the fifth line, $T^{\rho}\nabla_{\rho}T^{\mu}=0$
is used together with the Leibniz's rule for canceling two identical
terms.
\end{proof}
Furthermore, one can establish the following theorem: 
\begin{thm}
Given the definitions \ref{Definitions2} then 
\begin{equation}
B_{\mu\nu}T^{\mu}=B_{\mu\nu}T^{\nu}=\sigma_{\mu\nu}T^{\mu}=\sigma_{\mu\nu}T^{\nu}=\omega_{\mu\nu}T^{\mu}=\omega_{\mu\nu}T^{\nu}=0
\end{equation}
 and Raychaudhuri's equation 
\begin{equation}
\frac{d\theta}{d\tau}=-\frac{1}{3}\theta^{2}-\sigma_{\mu\nu}\sigma^{\mu\nu}+\omega_{\mu\nu}\omega^{\mu\nu}-R_{\mu\nu}T^{\mu}T^{\nu}\label{eq:raychaudhuri}
\end{equation}
 holds.\end{thm}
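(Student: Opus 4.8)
The plan is to handle the six vanishing identities, which are purely algebraic, before turning to the evolution equation for $\theta$. For the identities I use only, from the preceding lemma, that $T_{\mu}T^{\mu}=-1$ is constant along each geodesic and that $T^{\lambda}\nabla_{\lambda}T^{\mu}=0$. Contracting $B_{\mu\nu}=\nabla_{\nu}T_{\mu}$ with $T^{\nu}$ gives $B_{\mu\nu}T^{\nu}=T^{\nu}\nabla_{\nu}T_{\mu}=0$; differentiating the normalization, $0=\nabla_{\nu}(T_{\mu}T^{\mu})=2T^{\mu}\nabla_{\nu}T_{\mu}$, gives $B_{\mu\nu}T^{\mu}=0$; and $P_{\mu\nu}T^{\nu}=T_{\mu}+T_{\mu}(T_{\nu}T^{\nu})=T_{\mu}-T_{\mu}=0$. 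Since $\sigma_{\mu\nu}$ and $\omega_{\mu\nu}$ are linear combinations of $B_{\mu\nu}$, $B_{\nu\mu}$ and $P_{\mu\nu}$, each of which $T$ annihilates on either slot, the contractions $\sigma_{\mu\nu}T^{\mu}$ and $\omega_{\mu\nu}T^{\mu}$ vanish, and a single slot suffices by the (anti)symmetry of $\sigma$ and $\omega$.

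Before the evolution equation I record two bookkeeping facts. Using $B_{\mu\nu}T^{\mu}=0$ one has $\theta=B^{\mu\nu}P_{\mu\nu}=B^{\mu\nu}g_{\mu\nu}+B^{\mu\nu}T_{\mu}T_{\nu}=B^{\mu}{}_{\mu}=\nabla_{\mu}T^{\mu}$, and in four dimensions $P^{\mu}{}_{\mu}=g^{\mu}{}_{\mu}+T^{\mu}T_{\mu}=3$, whence the trace of the shear is $\sigma^{\mu}{}_{\mu}=\theta-\tfrac{1}{3}\theta\cdot3=0$. Now differentiate $B^{\mu}{}_{\nu}=\nabla_{\nu}T^{\mu}$ along the congruence: $\tfrac{d}{d\tau}B^{\mu}{}_{\nu}=T^{\lambda}\nabla_{\lambda}\nabla_{\nu}T^{\mu}$. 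Commuting the two covariant derivatives introduces the Riemann tensor of \eqref{eq:riemanntensor}, and $T^{\lambda}\nabla_{\lambda}T^{\mu}=0$ lets me rewrite $T^{\lambda}\nabla_{\nu}\nabla_{\lambda}T^{\mu}=\nabla_{\nu}(T^{\lambda}\nabla_{\lambda}T^{\mu})-(\nabla_{\nu}T^{\lambda})(\nabla_{\lambda}T^{\mu})=-B^{\mu}{}_{\lambda}B^{\lambda}{}_{\nu}$. Taking the trace $\mu=\nu$, which passes through $\nabla_{\lambda}$ by metric compatibility, yields $\tfrac{d\theta}{d\tau}=-B^{\mu}{}_{\lambda}B^{\lambda}{}_{\mu}-R_{\mu\nu}T^{\mu}T^{\nu}$ once the $g^{\mu\nu}$-trace of the curvature term is expressed through the Ricci tensor; with the conventions fixed by \eqref{eq:riemanntensor} this is the $-R_{\mu\nu}T^{\mu}T^{\nu}$ appearing in \eqref{eq:raychaudhuri}.

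It then remains to insert $B_{\mu\nu}=\tfrac{1}{3}\theta P_{\mu\nu}+\sigma_{\mu\nu}+\omega_{\mu\nu}$ into the quadratic term $B^{\mu\nu}B_{\nu\mu}$. The contracted products of the symmetric part $\tfrac{1}{3}\theta P_{\mu\nu}+\sigma_{\mu\nu}$ with the antisymmetric part $\omega_{\mu\nu}$ vanish; using $\sigma^{\mu\nu}P_{\mu\nu}=\sigma^{\mu}{}_{\mu}=0$, $P^{\mu\nu}P_{\mu\nu}=P^{\mu}{}_{\mu}=3$ and $\omega^{\mu\nu}\omega_{\nu\mu}=-\omega^{\mu\nu}\omega_{\mu\nu}$ one obtains $B^{\mu\nu}B_{\nu\mu}=\tfrac{1}{3}\theta^{2}+\sigma_{\mu\nu}\sigma^{\mu\nu}-\omega_{\mu\nu}\omega^{\mu\nu}$, and substituting this into the previous display gives \eqref{eq:raychaudhuri}.

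The one step needing genuine care is the commutator: getting the index order right in $[\nabla_{\lambda},\nabla_{\nu}]T^{\mu}$ and then contracting the resulting Riemann term against $g^{\mu\nu}T^{\lambda}$ so that the curvature contribution comes out as exactly $-R_{\mu\nu}T^{\mu}T^{\nu}$ in the convention of \eqref{eq:riemanntensor} — a sign slip here is the standard way to mangle Raychaudhuri's equation. Everything else is a one-line contraction or the elementary count $P^{\mu}{}_{\mu}=3$.
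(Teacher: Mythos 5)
Your proof is correct and follows essentially the same route as the paper: the algebraic identities come from the normalization $T_{\mu}T^{\mu}=-1$ and the geodesic equation, and the evolution equation comes from commuting covariant derivatives on $B_{\mu\nu}=\nabla_{\nu}T_{\mu}$, using $T^{\lambda}\nabla_{\lambda}T^{\mu}=0$, and tracing. You are in fact slightly more explicit than the paper at the last step, where you substitute the decomposition $B_{\mu\nu}=\tfrac{1}{3}\theta P_{\mu\nu}+\sigma_{\mu\nu}+\omega_{\mu\nu}$ and use $P^{\mu}{}_{\mu}=3$, $\sigma^{\mu}{}_{\mu}=0$, while the paper simply states that taking the trace yields the result.
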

\begin{proof}
Because of 
\begin{equation}
\nabla_{\nu}(T^{\mu}T_{\mu})=\nabla_{\nu}(-1)=0,
\end{equation}
we get, using 
\begin{equation}
T^{\mu}B_{\mu\nu}=T^{\mu}\nabla_{\nu}T_{\mu}
\end{equation}
 the result $T^{\mu}B_{\mu\nu}=0$. With 
\begin{equation}
T^{\nu}B_{\mu\nu}=T^{\nu}\nabla_{\nu}T_{\mu}
\end{equation}
 and the geodesic equation $T^{\nu}\nabla_{\nu}T_{\mu}=0$, we arrive
at $T^{\nu}B_{\mu\nu}=0$. Since we have 
\begin{equation}
\sigma_{\mu\nu}=B_{(\mu\nu)}-\frac{1}{3}\theta P_{\mu\nu}
\end{equation}
 and 
\begin{equation}
\theta=B^{\mu\nu}P_{\mu\nu},
\end{equation}
it follows from 
\begin{equation}
B_{\mu\nu}T^{\mu}=B_{\mu\nu}T^{\nu}=0
\end{equation}
that 
\begin{equation}
\sigma_{\mu\nu}T^{\mu}=\sigma_{\mu\nu}T^{\nu}=0
\end{equation}
 and similarly $\omega_{\mu\nu}T^{\nu}=\omega_{\mu\nu}T^{\mu}=0$.
Furthermore 
\begin{eqnarray}
T^{\rho}\nabla_{\rho}B_{\mu\nu} & = & T^{\rho}\nabla_{\rho}\nabla_{\nu}T_{\mu}\nonumber \\
 & = & T^{\rho}\nabla_{\nu}\nabla_{\rho}T_{\mu}+T^{\rho}R_{\;\mu\nu\rho}^{\lambda}T_{\lambda}\label{eq:rhayhau}\\
 & = & \nabla_{\nu}(T^{\rho}\nabla_{\rho}T_{\mu})-(\nabla_{\nu}T^{\rho})(\nabla_{\rho}T_{\mu})-R_{\lambda\mu\nu\sigma}T^{\sigma}T^{\lambda}\nonumber \\
 & = & -B_{\;\nu}^{\sigma}B_{\mu\sigma}-R_{\lambda\mu\nu\sigma}T^{\sigma}T^{\lambda}
\end{eqnarray}
taking the trace of this equation yields: 
\begin{equation}
\frac{d\theta}{d\tau}=-\frac{1}{3}\theta^{2}-\sigma_{ab}\sigma^{ab}+\omega_{ab}\omega^{ab}-R_{cd}T^{c}T^{d}
\end{equation}
\end{proof}
\begin{thm}
Given the definitions \ref{Definitions2}, then $\sigma_{\mu\nu}\sigma^{\mu\nu}\geq0$,
$\omega_{\mu\nu}\omega^{\mu\nu}\geq0$ . Furthermore $\omega_{\mu\nu}=0$
if and only if $T^{\mu}$ is orthogonal to a family of hypersurfaces\end{thm}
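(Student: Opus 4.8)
The plan is to establish the two inequalities first and then to treat the characterisation of vanishing twist as two implications.

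\textbf{The inequalities.} By the preceding theorem, $\sigma_{\mu\nu}$ and $\omega_{\mu\nu}$ are annihilated by $T^{\mu}$ on both indices. Since $g$ has signature $(-,+,+,+)$ and $T^{\mu}$ is a unit timelike vector, $T_{\mu}T^{\mu}=-1$, its orthogonal complement $T^{\perp}=\{X:T_{\mu}X^{\mu}=0\}$ at each point is a $3$-dimensional subspace on which $g$ restricts to a positive definite form; indeed for $X\in T^{\perp}$ one has $P_{\mu\nu}X^{\mu}X^{\nu}=g_{\mu\nu}X^{\mu}X^{\nu}+(T_{\mu}X^{\mu})^{2}=g_{\mu\nu}X^{\mu}X^{\nu}$, and a Lorentzian metric is positive definite on the orthogonal complement of a timelike direction. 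Concretely I would choose, at a point $p$, an orthonormal frame $\{e_{0}=T,e_{1},e_{2},e_{3}\}$, so that $g(e_{a},e_{b})=\eta_{ab}$ with $\eta_{ab}=\mathrm{diag}(-1,1,1,1)$. Any tensor $A_{\mu\nu}$ with $A_{\mu\nu}T^{\mu}=A_{\mu\nu}T^{\nu}=0$ has, in this frame, only spatial components $A_{ij}$, $i,j\in\{1,2,3\}$, and there raising indices with $g$ coincides with raising with the Euclidean $\delta_{ij}$; hence the scalar $A_{\mu\nu}A^{\mu\nu}$ equals $\sum_{i,j=1}^{3}(A_{ij})^{2}\geq0$. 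Taking $A=\sigma$ and $A=\omega$ yields $\sigma_{\mu\nu}\sigma^{\mu\nu}\geq0$ and $\omega_{\mu\nu}\omega^{\mu\nu}\geq0$.

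\textbf{Vanishing twist implies hypersurface orthogonality.} Suppose $\omega_{\mu\nu}=0$. Then $B_{\mu\nu}=\nabla_{\nu}T_{\mu}$ is symmetric, so the $1$-form $T_{\mu}$ obeys $\nabla_{\mu}T_{\nu}-\nabla_{\nu}T_{\mu}=0$ (the connection is torsion free), i.e.\ $dT=0$. By the Poincar\'e lemma, in a neighbourhood of each point there is a function $f$ with $T_{\mu}=\nabla_{\mu}f$; since $T$ is nowhere zero we have $\nabla f\neq0$, the level sets $\{f=\mathrm{const}\}$ are hypersurfaces, and $T^{\mu}$ is everywhere orthogonal to them.

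\textbf{Hypersurface orthogonality implies vanishing twist.} For the converse I would assume $T^{\mu}$ is orthogonal to a family of hypersurfaces; locally these are the level sets of a submersion $f$, and a vector is tangent to such a level set exactly when it annihilates $\nabla_{\mu}f$, so $T_{\mu}$ and $\nabla_{\mu}f$ annihilate the same hyperplane at each point and are therefore proportional, $T_{\mu}=\lambda\nabla_{\mu}f$ with $\lambda$ nowhere zero. Using symmetry of the Hessian $\nabla_{\mu}\nabla_{\nu}f=\nabla_{\nu}\nabla_{\mu}f$,
\[
\omega_{\mu\nu}=\nabla_{[\nu}T_{\mu]}=(\nabla_{[\nu}\lambda)(\nabla_{\mu]}f)=\frac{1}{2\lambda}\left((\nabla_{\nu}\lambda)T_{\mu}-(\nabla_{\mu}\lambda)T_{\nu}\right).
\]
Contracting with $T^{\nu}$ and using $\omega_{\mu\nu}T^{\nu}=0$ from the preceding theorem together with $T_{\nu}T^{\nu}=-1$ gives $\nabla_{\mu}\lambda=-(T^{\nu}\nabla_{\nu}\lambda)T_{\mu}$, so $\nabla_{\mu}\lambda$ is proportional to $T_{\mu}$; substituting this back into the displayed identity makes the two terms cancel, so $\omega_{\mu\nu}=0$.

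\textbf{Main obstacle.} The inequalities and the first implication are essentially bookkeeping once the orthogonality relations of the preceding theorem are in hand. The delicate point is the converse: the expression for $\omega_{\mu\nu}$ obtained directly from $T_{\mu}=\lambda\nabla_{\mu}f$ does not visibly vanish, and one has to feed in the fact that the congruence is geodesic, encoded in $\omega_{\mu\nu}T^{\nu}=0$, to eliminate the surviving terms. Equivalently one may run the whole equivalence through the Frobenius integrability theorem: $T_{\mu}$ is hypersurface orthogonal if and only if $T_{[\mu}\nabla_{\nu}T_{\rho]}=0$, and contracting this with $T^{\mu}$ and again using $\omega_{\mu\nu}T^{\mu}=0$ and $T_{\mu}T^{\mu}=-1$ shows it is equivalent to $\omega_{\mu\nu}=0$; compare \cite{Wald}.
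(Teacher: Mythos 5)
Your proposal is correct. For the two inequalities you argue exactly as the paper does (the tensors are purely spatial because they are annihilated by the timelike $T^{\mu}$ on both slots, hence their squares are sums of squares in an orthonormal frame adapted to $T$); you merely spell out the frame computation that the paper leaves implicit. For the characterisation of vanishing twist you take a genuinely different route: the paper simply cites Frobenius' theorem from Wald, $T^{\mu}$ hypersurface orthogonal $\Leftrightarrow T_{[\mu}\nabla_{\nu}T_{\lambda]}=0$, and then asserts that together with $\omega_{\mu\nu}T^{\mu}=\omega_{\mu\nu}T^{\nu}=0$ this is equivalent to $\omega_{\mu\nu}=0$, whereas you prove the equivalence directly: torsion-freeness plus symmetry of $B_{\mu\nu}$ gives $dT_{\flat}=0$, the Poincar\'e lemma gives a local potential $f$ with $T_{\mu}=\nabla_{\mu}f$, and for the converse you write $T_{\mu}=\lambda\nabla_{\mu}f$, kill the Hessian by symmetry, and use $\omega_{\mu\nu}T^{\nu}=0$ together with $T_{\mu}T^{\mu}=-1$ to force $\nabla_{\mu}\lambda\propto T_{\mu}$, whence $\omega_{\mu\nu}=0$. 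In effect you reprove the relevant special case of Frobenius by hand; what this buys is a self-contained argument that makes explicit where the geodesic normalisation enters (only through $\omega_{\mu\nu}T^{\nu}=0$ in the converse), while the paper's citation is shorter but leaves the algebraic step from $T_{[\mu}\nabla_{\nu}T_{\lambda]}=0$ to $\omega_{\mu\nu}=0$ to the reader — a step you also sketch at the end as the alternative route. The only cosmetic point is the smoothness and nonvanishing of $\lambda$, which follows from contracting $T_{\mu}=\lambda\nabla_{\mu}f$ with $T^{\mu}$, but this is standard and does not affect the argument.
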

\begin{proof}
$\sigma_{\mu\nu}$ is a spatial tensorfield that is $\sigma_{\mu\nu}T^{\nu}=0$,
where $T^{\nu}$ is timelike. Therefore, we have $\sigma_{\mu\nu}\sigma^{\mu\nu}\geq0$
and similarly $\omega_{\mu\nu}\omega^{\mu\nu}\geq0$ . By Frobenius'
theorem, see \cite{Wald} p. 434-436, a vector field $T^{\mu}$ is
hypersurface orthogonal, if and only if $T_{[\mu}\nabla_{\mu}T_{\lambda]}=0$.
With $B_{\mu\nu}=\nabla_{\nu}T_{\mu}$,  $\omega_{\mu\nu}=B_{[\mu\nu]}$
and $T^{\mu}\omega_{\mu\nu}=T^{\nu}\omega_{\mu\nu}=0$, it follows
that $\omega_{\mu\nu}=0$ if and only if $T^{\nu}$ is orthogonal
to a family of hypersurfaces.
\end{proof}
Having defined the equations for families of geodesics, we now give
some further definitions related to the causality structure of the
Lorentzian manifold under study.
\begin{defn}
Basic definitions for curvature and causality structures
\begin{itemize}
\item Let $(M,g)$ be a stably causal manifold and $\lambda$ a differentiable
future directed causal curve. We say $p$ is a future endpoint of
$\lambda$ if for every neighborhood $U$ of $p$ there exists a $t_{0}$
such that $\lambda(t)\in U\forall t>t_{0}$. A past endpoint is defined
similarly, with $\lambda(t)\in U\forall t<t_{0}$
\item The curve $\lambda$ is said to be future (past) inextendible if it
has no future (past) endpoint. 
\item A subset $\Sigma\subset M$ is called achronal, if there do not exist
$p,q\in\Sigma$ such that $q\in I^{+}(p)$ or: 
\begin{equation}
I^{+}(\Sigma)\cap\Sigma=\emptyset.
\end{equation}

\item The past (future) domain of dependence of the achronal set$\Sigma\subset M$
is the set $D^{-(+)}(\Sigma)$ of all points $p\in M$ such that any
future (past) inextendible curve starting at $p$ intersects $\Sigma$.
\item The domain of dependence is the set 
\begin{equation}
D(\Sigma)=D^{+}(\Sigma)\cup D^{-}(\Sigma)
\end{equation}

\item A spacetime $(M,g)$ is strongly causal if $\forall p\in M$ and every
open neighborhood $O$ of $p$ there is a neighborhood $V\subset O$
of p such that no causal curve intersects $V$ more than once . 
\item An achronal subset $\Sigma\subset M$ is called a Cauchy surface,
if $D(\Sigma)=M$.
\item A Lorentzian manifold is globally hyperbolic, if it is stably causal
and possesses a Cauchy surface $\Sigma$ 
\item Let $(M,g)$ be a globally hyperbolic manifold with a Cauchy surface
$\Sigma$, A point $p$ on a geodesic $\gamma$ of the geodesic congruence
orthogonal to $\Sigma$ is called a conjugate point to $\Sigma$ along
$\gamma$ if there exists a Jacobi-field $S^{\mu}$ which is non-vanishing
except on $p$. 
\item We define the extrinsic curvature $K_{ab}$of $\Sigma$ as 
\begin{equation}
K_{ab}=B_{ba}
\end{equation}
 and its trace as 
\begin{equation}
K=P^{ab}K_{ab}.
\end{equation}

\end{itemize}
\end{defn}
A standard proof from differential geometry lectures shows that a
timelike geodesic of maximal length that starts on a Cauchy surface
must be orthogonal to it:
\begin{thm}
\label{orthogonal}Let $(M,g)$ be a globally hyperbolic manifold
with a Cauchy surface $\Sigma$ and let $\gamma$ be a timelike geodesic
starting at $p\in\Sigma$ to a point $q\in I^{+}(\Sigma$). Then,
$\gamma$ only maximizes length if and only if it is orthogonal to
$\Sigma$ \end{thm}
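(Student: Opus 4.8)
The plan is to derive the result from the first variation formula for the proper-time functional, with the future endpoint $q$ held fixed and the past endpoint allowed to slide along $\Sigma$. Parametrize $\gamma\colon[0,1]\to M$ so that $g(\dot\gamma,\dot\gamma)$ is a negative constant (this is consistent for a geodesic), with $\gamma(0)=p\in\Sigma$ and $\gamma(1)=q$. Call a smooth map $\alpha\colon(-\varepsilon,\varepsilon)\times[0,1]\to M$ an \emph{admissible variation} if $\alpha(0,\cdot)=\gamma$, $\alpha(s,0)\in\Sigma$ for all $s$, $\alpha(s,1)=q$ for all $s$, and every longitudinal curve $t\mapsto\alpha(s,t)$ is timelike; its variation vector field is $V(t)=\partial_s\alpha(s,t)\big|_{s=0}$, so that $V(1)=0$ and $V(0)\in T_p\Sigma$.

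First I would compute $\tfrac{d}{ds}\tau(\alpha(s,\cdot))\big|_{s=0}$ in the standard way: differentiate $\int_0^1\sqrt{-g(\partial_t\alpha,\partial_t\alpha)}\,dt$ under the integral sign, use the torsion-free identity $\nabla_s\partial_t\alpha=\nabla_t\partial_s\alpha$, and integrate by parts in $t$. Since $\gamma$ is a geodesic the integral term containing $\nabla_{\dot\gamma}\dot\gamma$ drops out, and since $V(1)=0$ the boundary term at $t=1$ vanishes, leaving
\[
\frac{d}{ds}\,\tau(\alpha(s,\cdot))\Big|_{s=0}\;=\;\kappa\cdot g\!\left(V(0),\dot\gamma(0)\right),
\]
where $\kappa>0$ depends only on the chosen parametrization. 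Because every $V(0)\in T_p\Sigma$ is realized by some admissible variation, this identity is the whole story. If $\gamma$ is orthogonal to $\Sigma$, then $\dot\gamma(0)$ is $g$-orthogonal to $T_p\Sigma$, so the right-hand side vanishes for every admissible variation; $\gamma$ is then a stationary point of proper time, which every length maximizer must be, and this gives the ``maximizes $\Rightarrow$ orthogonal'' direction by contraposition. Conversely, if $\gamma$ is not orthogonal, decompose $\dot\gamma(0)=a\,n+w$ with $n$ the future unit normal to $\Sigma$ at $p$ and $0\neq w\in T_p\Sigma$; choosing an admissible variation with $V(0)=w$ makes the first variation equal to $\kappa\,g(w,w)>0$, so for small $s>0$ the curve $\alpha(s,\cdot)$ is a timelike curve from a point of $\Sigma$ to $q$ with strictly larger proper time than $\gamma$, and hence $\gamma$ does not maximize length.

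The step that actually needs care is the construction of an admissible variation with a prescribed $V(0)\in T_p\Sigma$: one picks a curve $\beta$ in $\Sigma$ with $\beta(0)=p$ and $\dot\beta(0)=V(0)$, joins $\beta(s)$ to $q$ by a fixed smooth family of curves (for instance geodesics, using that $q\in I^{+}(\Sigma)$ and that, after restricting to a convenient neighborhood, endpoints determine geodesics smoothly), and checks that these longitudinal curves stay timelike for $|s|$ small, which follows from openness of the timelike condition together with compactness of $[0,1]$. I expect this technical construction to be the only real obstacle; the variational identity itself is routine, and global hyperbolicity enters only to guarantee that the geodesic congruence normal to $\Sigma$ and the competing curves are well defined. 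I would also flag that the clean statement obtained this way is ``$\gamma$ is stationary for proper time $\iff$ $\gamma$ meets $\Sigma$ orthogonally''; the implication actually used later in the singularity theorem is the forward one, while a genuine converse upgrading ``orthogonal'' to ``globally maximizing'' additionally requires that no point of $\Sigma$ be conjugate to a point of $\gamma$ before $q$, which is dealt with separately.
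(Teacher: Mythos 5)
Your argument is correct, and it is essentially the standard proof of this fact: the paper itself gives no argument here, deferring entirely to O'Neill (p.~280), and what the cited reference does is precisely the first-variation-with-free-endpoint-on-$\Sigma$ computation you carry out. Your formula $\frac{d}{ds}\tau(\alpha(s,\cdot))\big|_{s=0}=\kappa\,g(V(0),\dot\gamma(0))$ with $\kappa>0$ is right (with the constant-speed parametrization the boundary term at $t=1$ dies because $V(1)=0$ and the integral term dies because $\gamma$ is geodesic), and the second paragraph, ``not orthogonal $\Rightarrow$ not maximizing,'' is the contrapositive actually needed in Theorem~\ref{thm:maximal }; the logical gloss at the end of your first paragraph (deducing this ``by contraposition'' from orthogonal $\Rightarrow$ stationary) is garbled, but harmlessly so, since the converse paragraph does the real work. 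The one construction detail to tighten is the admissible variation: joining $\beta(s)$ directly to $q$ by geodesics is not available in general, since $q$ need not lie in a normal neighborhood of points of $\Sigma$ near $p$. The routine fix is to vary only an initial segment of $\gamma$: pick $t_{0}$ small enough that $\gamma([0,t_{0}])$ lies in a geodesically convex neighborhood (as provided by the paper's convexity theorem), connect $\beta(s)$ to the fixed point $\gamma(t_{0})$ by the unique geodesics of that neighborhood, and leave $\gamma|_{[t_{0},1]}$ unchanged; the resulting variation is only piecewise smooth, but since the base curve $\gamma$ has no corner at $t_{0}$ the corner term in the first variation vanishes and your identity survives unchanged, with timelikeness of the nearby longitudinal curves for small $|s|$ following from openness as you say. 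Your closing caveat is also well taken: the theorem's ``if and only if'' is loose, since orthogonality alone does not give global maximality (focal/conjugate points must be excluded), and only the ``maximizing $\Rightarrow$ orthogonal'' direction is used later in the paper.
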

\begin{proof}
Proof is given in \cite{Oneil} on p. 280.
\end{proof}
The following proof is taken from Wald's book \cite{Wald}. It relates
the conditions for stable and strong causality:
\begin{thm}
\label{stablystronglycausal} Let $(M,g)$ be a spacetime which is
stably causal. Then, it is also strongly causal.\end{thm}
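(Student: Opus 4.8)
The plan is to unpack ``stably causal'' as the existence of a global time function $t$ with $\nabla^{a}t$ timelike everywhere, to note that this forces $t$ to increase strictly along causal curves, and then to kill every possible failure of strong causality with a limit-curve compactness argument. First I would record the monotonicity: along any future-directed causal curve $\gamma$,
\begin{equation}
\frac{d}{d\lambda}\,t(\gamma(\lambda))=\nabla_{a}t\,\dot\gamma^{a}=g_{ab}\nabla^{a}t\,\dot\gamma^{b},
\end{equation}
and since the timelike covector $\nabla_{a}t$ cannot annihilate the nonzero causal vector $\dot\gamma$ (a timelike vector is $g$-orthogonal only to spacelike vectors), the right-hand side never vanishes; by continuity and the fixed time orientation its sign is constant over $M$, so $t$ strictly increases along every future-directed causal curve. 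In particular $(M,g)$ admits no closed causal curve, and no causal curve along which $t$ is constant.

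Next I would reduce strong causality at a point $p$ to a statement about causal diamonds. For arbitrary $q,r\in M$ the set $I^{+}(r)\cap I^{-}(q)$ is open (both $I^{\pm}$ are open) and causally convex: if a causal curve has both endpoints in it, then for any point $z$ on the curve between them one has $r\ll z$ (a chronological relation followed by a causal arc remains chronological) and $z\ll q$, so $z$ lies in it too; hence no causal curve meets $I^{+}(r)\cap I^{-}(q)$ in more than one connected piece. So, given a neighbourhood $O$ of $p$, it suffices to produce $q\in I^{+}(p)$ and $r\in I^{-}(p)$ with $I^{+}(r)\cap I^{-}(q)\subseteq O$: then $V:=I^{+}(r)\cap I^{-}(q)$ is the required neighbourhood.

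Then I would argue by contradiction. Suppose no such diamond lies inside $O$. Since the integral curves of the timelike field $\mp\nabla^{a}t$ are timelike curves through $p$, one has $p\in\overline{I^{\pm}(p)}$, so one can choose $q_{n}\in I^{+}(p)$ and $r_{n}\in I^{-}(p)$ with $q_{n}\to p$, $r_{n}\to p$; by hypothesis each diamond $I^{+}(r_{n})\cap I^{-}(q_{n})$ contains a point $x_{n}\notin O$, and because $r_{n}\ll x_{n}\ll q_{n}$ there is a future-directed causal curve $\gamma_{n}$ running from $r_{n}$ through $x_{n}$ to $q_{n}$; the endpoints of $\gamma_{n}$ converge to $p$ while every $\gamma_{n}$ leaves $O$, and $t(r_{n})<t(x_{n})<t(q_{n})$ squeezes $t(x_{n})\to t(p)$. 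Now I invoke the standard limit-curve machinery --- Arzela--Ascoli applied to the $\gamma_{n}$ parametrised by arclength in a complete auxiliary Riemannian metric (the paracompact manifold $M$ carries a Riemannian metric by Theorem~\ref{riemannian}, which may be conformally rescaled to be complete), using that a uniform limit of causal curves is causal: the failure of strong causality then produces either a closed causal curve through $p$, or a future-inextendible causal curve $\lambda$ totally imprisoned in a compact set $K$ (see \cite{Hawking,Wald,Oneil}). The first alternative contradicts the strict monotonicity of $t$. In the second, $t$ is bounded on $\lambda$ (being continuous on the compact $K$) and strictly increasing, hence tends to its supremum $T^{*}$; the set of future limit points of $\lambda$ is nonempty, compact, carries $t\equiv T^{*}$, and contains an inextendible causal curve, along which $t$ would then be constant --- impossible. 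Either way we reach a contradiction, so some diamond fits inside $O$, strong causality holds at every point, and $(M,g)$ is strongly causal.

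The step I expect to be the real obstacle is this last limit-curve argument and, within it, eliminating the imprisoned case: a causal curve can wind around indefinitely inside a compact region while $t$ stays bounded, so the time function by itself does not localise it; one genuinely needs the topological compactness of the set of causal curves, together with the structural fact that the limit set of an inextendible imprisoned causal curve again carries an inextendible causal curve, which the time function then forbids. (An alternative packaging replaces $g$ by the strictly wider-cone metric $\tilde g_{ab}=g_{ab}-\nabla_{a}t\,\nabla_{b}t$: every $g$-causal vector is strictly $\tilde g$-timelike, $(M,\tilde g)$ is still causal because $t$ stays strictly monotone along $\tilde g$-causal curves, and then a closed limit curve is a closed $\tilde g$-causal curve, directly contradicting causality of $(M,\tilde g)$.)
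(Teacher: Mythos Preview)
Your route diverges from the paper's. After establishing the same monotonicity of $t$ along causal curves, the paper argues directly: given $O$, it asserts one can choose $V\subset O$ shaped so that every future-directed causal curve leaving $V$ does so at a $t$-value strictly greater than that at which any future-directed causal curve enters $V$; monotonicity of $t$ then forbids a second entry. There are no diamonds, no limit curves, no contradiction --- the time function is used constructively to manufacture $V$.

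Your reduction to chronological diamonds $I^{+}(r)\cap I^{-}(q)$ is clean and the causal-convexity step is correct, but the heart of your contradiction has a gap. You assert that the limit-curve machinery yields ``either a closed causal curve through $p$, or a future-inextendible causal curve $\lambda$ totally imprisoned in a compact set $K$'', citing \cite{Hawking,Wald,Oneil}; that dichotomy is not a theorem in those references, and nothing you have set up forces it --- the points $x_n\notin O$ may lie anywhere in $M$, the $\gamma_n$ have endpoints so are not inextendible, and their Riemannian lengths are not bounded. The repair is short and bypasses the dichotomy entirely: take a compact geodesically convex $\overline N\subset O$ about $p$; for large $n$, $\gamma_n$ begins and ends in $N$ but must exit to reach $x_n$, so has a first exit point $y_n\in\partial N$ with $t(r_n)<t(y_n)<t(q_n)$, hence $t(y_n)\to t(p)$. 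Passing to a subsequence, $y_n\to y\in\partial N$ with $t(y)=t(p)$; since $J^{+}$ is closed in a convex neighbourhood and $r_n\to p$, one has $y\in J^{+}(p,N)\setminus\{p\}$, and the causal geodesic in $N$ from $p$ to $y$ forces $t(y)>t(p)$ --- contradiction. (Your parenthetical widened-metric $\tilde g$ alternative is also a legitimate, self-contained proof and is closer in spirit to the original meaning of stable causality.)
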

\begin{proof}
Let $t$ be the global time function on $M$. Since $g_{\mu\nu}\dot{\gamma}^{\nu}\nabla^{\mu}t(p)>0$
for all $p\in\gamma$ where $\gamma$ is a future directed timelike
curve, $t$ strictly increases along every future directed timelike
curve. Given any $p\in M$ and any open neighborhood $O$ of $p$,
we can choose an open neighborhood $V\subset O$ shaped such that
the limiting value of $t$ along every future directed causal curve
leaving $V$ is greater than the limiting value of $t$ on every future
directed causal curve entering $V$. Since $t$ increases along every
future directed causal curve, no causal curve can enter $V$ twice.\end{proof}
\begin{thm}
\label{conjugate}Let $(M,g)$ be a globally hyperbolic spacetime
satisfying $R_{\mu\nu}\zeta^{\mu}\zeta^{\nu}\geq0$ for all timelike
$\zeta$, Let $\Sigma$ be a Cauchy surface with $K=\theta<0$. Then
within proper time $\tau\leq3/|K|$ there exists a conjugate point
$q$ along the timelike geodesic $\gamma$ orthogonal to $\Sigma$,assuming
that $\gamma$ extends that far. 
\end{thm}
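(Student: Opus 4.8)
The plan is to apply Raychaudhuri's equation \eqref{eq:raychaudhuri} to the timelike geodesic congruence leaving $\Sigma$ orthogonally, derive from it a differential inequality for the expansion $\theta$, integrate that inequality to force $\theta\to-\infty$ within proper time $3/|K|$, and then translate the blow-up of $\theta$ into the existence of a conjugate point of $\Sigma$ along $\gamma$.

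First I would construct the congruence: through each point of $\Sigma$ send the unique timelike geodesic orthogonal to $\Sigma$ (global hyperbolicity guarantees these exist and sweep out a neighbourhood of $\Sigma$), parametrised by proper time $\tau$ with $\tau=0$ on $\Sigma$, and let $T^\mu$ be the unit tangent field, $T_\mu T^\mu=-1$. This congruence is orthogonal to the family of hypersurfaces obtained by flowing $\Sigma$ along the geodesics, so the twist vanishes, $\omega_{\mu\nu}=0$, by the theorem characterising hypersurface orthogonality; and $\theta|_{\tau=0}=K<0$ by the definition of the extrinsic curvature and its trace. Raychaudhuri's equation \eqref{eq:raychaudhuri} then reduces to $\frac{d\theta}{d\tau}=-\frac13\theta^2-\sigma_{\mu\nu}\sigma^{\mu\nu}-R_{\mu\nu}T^\mu T^\nu$, and since $\sigma_{\mu\nu}\sigma^{\mu\nu}\ge0$ and $R_{\mu\nu}T^\mu T^\nu\ge0$ (the hypothesis applied to the timelike vector $T^\mu$), this yields the inequality $\frac{d\theta}{d\tau}\le-\frac13\theta^2$.

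Next I would integrate. Since $d\theta/d\tau\le0$ and $\theta(0)=K<0$, the expansion is non-increasing and stays strictly negative wherever it is finite, so $w:=1/\theta$ is well defined and negative on any interval where $\theta$ does not blow up; dividing the inequality by $-\theta^2>0$ gives $\frac{dw}{d\tau}\ge\frac13$, hence $w(\tau)\ge w(0)+\tau/3=-\frac{1}{|K|}+\frac{\tau}{3}$. If $\theta$ stayed finite on all of $[0,3/|K|]$, this bound would give $w(3/|K|)\ge0$, contradicting $w<0$; hence $\theta\to-\infty$ at some first value $\tau_0\le3/|K|$, which $\gamma$ reaches by assumption.

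Finally I would identify $\gamma(\tau_0)$ as the conjugate point: representing the Jacobi fields along $\gamma$ with the standard initial data on $\Sigma$ by a matrix $A(\tau)$ with $A(0)$ the identity and $\dot A(0)$ fixed by the extrinsic curvature, one has $B^\mu{}_\nu=\dot A^\mu{}_\lambda(A^{-1})^\lambda{}_\nu$ and therefore $\theta=\frac{d}{d\tau}\ln\det A$; since $\theta$ stays finite (so $\det A\neq0$) on $[0,\tau_0)$ while $\theta\to-\infty$ at $\tau_0$, one must have $\det A(\tau_0)=0$, i.e. there is a nontrivial Jacobi field that is non-vanishing on $[0,\tau_0)$ and vanishes at $\gamma(\tau_0)$, which is precisely a conjugate point of $\Sigma$ along $\gamma$ (see \cite{Wald,Oneil}). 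I expect this last step to be the main obstacle: one must justify carefully that the normal congruence and hence $\theta$ remain smooth and finite on $[0,\tau_0)$ so the ODE argument is legitimate, and then match $\det A(\tau_0)=0$ to the paper's definition of a conjugate point in terms of a Jacobi field that vanishes only at $q$. Once the geometric setup is in place, the Raychaudhuri estimate itself is routine.
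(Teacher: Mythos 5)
Your proposal is correct and follows essentially the same route as the paper's proof: the hypersurface-orthogonal congruence with $\omega_{\mu\nu}=0$ and $\theta_0=K$, the Raychaudhuri inequality $d\theta/d\tau\le-\tfrac13\theta^2$ integrated to $\theta^{-1}(\tau)\ge\theta_0^{-1}+\tau/3$, and the Jacobi-matrix identity $\theta=\frac{d}{d\tau}\ln\det A$ to convert $\theta\to-\infty$ into $\det A=0$, i.e.\ a conjugate point. The only cosmetic difference is that the paper justifies the persistence of $\omega_{\mu\nu}=0$ along $\gamma$ via the antisymmetric part of the evolution equation (\ref{eq:rhayhau}), whereas you assert continued hypersurface orthogonality directly; both amount to the same standard fact.
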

The proof is adapted from Wald's book\cite{Wald}.where one can find
a proof for the situation where a point is conjugate to another point
on a geodesic. Only minor changes were necessary to adapt this proof
for the situation of a geodesic orthogonal to a hypersurface.
\begin{proof}
Let $\gamma$ be a timelike geodesic orthogonal to $\Sigma$ with
tangent $T^{\mu}=\frac{\partial\gamma^{\mu}}{\partial\tau}$. We consider
the congruence of all timelike geodesics passing through $\Sigma$
that are orthogonal to $\Sigma$. Since the deviation vector $S^{\mu}=\frac{\partial x^{\mu}}{\partial s}$
of this geodesic congruence is orthogonal to $\gamma$, we have $S^{\mu}T_{\mu}=0.$
Therefore, we can introduce an orthonormal basis $e_{a}^{\mu}(\tau)$,
$a=1,\ldots3$ which is orthogonal to $\gamma,$ i.e. where $e_{a}^{\mu}e_{b\mu}=\delta_{ab}$,
and $e_{a}^{\mu}T_{\mu}=0$ and is parallel transported along $\gamma,$
that is 
\begin{equation}
\frac{d\gamma^{\nu}}{d\tau}\nabla_{\nu}e_{a}^{\mu}:=\frac{D}{d\tau}e_{a}^{\mu}=0.
\end{equation}
holds. Inserting this with $S^{\mu}=(e_{\mu})_{a}S^{a}$ in the geodesic
deviation equation 
\begin{equation}
T^{\lambda}\nabla_{\lambda}(T^{\nu}\nabla_{\nu}S^{\mu})=R_{\;\nu\rho\sigma}^{\mu}T^{\nu}T^{\rho}S^{\sigma},
\end{equation}
the term on the left hand side simplifies to 
\begin{equation}
\frac{dS^{\mu}}{d\tau^{2}}=R_{\;\nu\rho\sigma}^{\mu}T^{\nu}T^{\rho}S^{\sigma}.
\end{equation}
Since this is an ordinary linear differential equation, $S^{\mu}$
must depend linearly on the initial $S^{\mu}(0)$ and $\frac{dS^{\mu}(0)}{d\tau}$
. Since we are searching for conjugate points, the congruence $S^{\mu}(0)$
should not be vanishing on $\Sigma$ but all geodesics of the congruence
should be hypersurface-orthogonal to $\Sigma$. Therefore, $\frac{d}{d\tau}S^{\mu}(0)=0$,
and noting that $S^{\mu}=(e_{\mu})_{a}S^{a}$with $a=1,\ldots3$,
we can make an ansatz, 
\begin{equation}
S^{\sigma}(\tau)=A_{\; a}^{\sigma}S^{a}(0).
\end{equation}
Plugging this in the coordinate form of the geodesic deviation equation
gives 
\begin{equation}
\frac{dA_{\; a}^{\mu}}{d\tau^{2}}=R_{\;\nu\rho\sigma}^{\mu}T^{\nu}T^{\rho}A_{\; a}^{\sigma}\label{eq:diffa}
\end{equation}
A point p that is conjugate to $\Sigma$ develops by definition if
and only if $S^{\mu}(\tau)=0$. With our ansatz, this condition becomes
$det(A_{\; a}^{\mu})=0$. Between $\Sigma$ and $p$, the Jacobi-field
should be non vanishing. Therefore $det(A_{\; a}^{\mu})\neq0$ for
this region and an inverse of $A_{\; a}^{\mu}$ exists. We have
\begin{eqnarray}
\frac{dS^{\nu}}{d\tau} & = & T^{\mu}\nabla_{\mu}S^{\nu}\nonumber \\
 & = & T^{\mu}\nabla_{\mu}((e_{\nu})_{a}S^{a})\nonumber \\
 & = & (e_{\nu})_{a}T^{\mu}\nabla_{\mu}S^{a}\nonumber \\
 & = & (e_{\nu})_{a}B_{\;\mu}^{a}S^{\mu}\nonumber \\
 & = & B_{\; a}^{\nu}S^{a}\nonumber \\
 & = & B_{\; a}^{\nu}A_{\; b}^{\sigma}S^{b}(0).
\end{eqnarray}
with $b=1,\ldots3$. From our ansatz, we get
\begin{eqnarray}
\frac{dS^{\nu}}{d\tau} & = & \frac{dA_{\; a}^{\nu}}{d\tau}S^{a}(0)\nonumber \\
 & = & B_{\; a}^{\nu}A_{\; b}^{\sigma}S^{b}(0)
\end{eqnarray}
 or in matrix notation $\frac{dA}{d\tau}=BA$ that is 
\begin{equation}
B=\frac{dA}{d\tau}A^{-1}
\end{equation}
Since $\theta=tr(B)$ and for any non-singular matrix $A$: 
\begin{equation}
tr\left(\frac{dA}{d\tau}A^{-1}\right)=\frac{1}{det(A)}\frac{d}{d\tau}(det(A))
\end{equation}
 it follows that 
\begin{equation}
\theta=\frac{1}{det(A)}\frac{d}{d\tau}(det(A)).
\end{equation}
Since $A$ satisfies the ordinary linear differential equation (\ref{eq:diffa}),
$\frac{d}{d\tau}(det(A)$ can not become infinite. Hence if and only
if $\theta=-\infty$ at a point $q$ on $\gamma$, then $A\rightarrow0$
and $det(A)\rightarrow0$ which implies that $q$ will be a conjugate
point. 

Our congruence is hypersurface orthogonal on $\Sigma$ by construction.
Therefore, $\omega_{bc}=0$ on $\Sigma$. The antisymmetric part of
eq. (\ref{eq:rhayhau}) is
\begin{equation}
T^{c}\nabla_{c}\omega_{ab}=-\frac{2}{3}\theta\omega_{ab}-2\sigma_{[b}^{c}\omega_{a]c}
\end{equation}
see \cite{Wald} p. 218, or \cite{Carroll} p. 461. Therefore, if
$\omega$ is orthogonal to a family of hypersurfaces at at one time,
it remains so, along the geodesic $\gamma$ which implies $\omega_{ab}=0$
for all the time. Since $\sigma_{ab}\sigma^{ab}\geq0$ and by proposition
\begin{equation}
R_{ab}T^{a}T^{b}\geq0,
\end{equation}
we have from Raychaudhuri's equation (\ref{eq:rhayhau}): 
\begin{equation}
\frac{d\theta}{d\tau}+\frac{1}{3}\theta^{2}\leq0
\end{equation}
and hence 
\begin{equation}
\theta^{-1}(\tau)\geq\theta_{0}^{-1}+\frac{1}{3}\tau,
\end{equation}
where $\theta_{0}$is the initial value of $\theta$ on $\Sigma$.
As a result, if $\theta_{0}\leq0$ on $\Sigma$, the expansion will
converge to $\theta\rightarrow-\infty$ at $\tau\leq3/|\theta_{0}|$,
and there will be a conjugate point $p$ on $\gamma$. 

The intrinsic curvature $K_{ab}$ is $K_{ab}=B_{ba}=B_{ab}$. Since
the antisymmetric part $B_{[ab]}=\omega_{ab}=0$ only the symmetric
part
\begin{equation}
\sigma_{ab}+\frac{1}{3}\theta P_{ab}=B_{(ab)}
\end{equation}
 is non-vanishing. With 
\begin{equation}
\theta=P^{ab}B_{ab},
\end{equation}
we get 
\begin{equation}
K=P^{ab}K_{ab}=\theta.
\end{equation}
Therefore, if $K<0$ on $\Sigma$ and 
\begin{equation}
R_{ab}T^{a}T^{b}\geq0,
\end{equation}
there will be a conjugate point to $\Sigma$ on $\gamma$ within proper
time $\tau\leq3/|\theta_{0}|$\end{proof}
\begin{thm}
\label{maximizelength}Let $(M,g)$ be a globally hyperbolic Lorentzian
manifold with a Cauchy surface $\Sigma$, and $\gamma$ a timelike
geodesic orthogonal to $\Sigma$, and $p$ a point on $\gamma$ .
If there exists a conjugate point between $\Sigma$ and $p$ then
$\gamma$ does not maximize length among the timelike curves connecting
to $\Sigma$ to $p$.\end{thm}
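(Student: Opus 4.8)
The plan is to argue by contradiction using the standard variational "if you can deform a geodesic past a conjugate point you can gain length" technique. Suppose $\gamma$ does maximize the proper time (length) among timelike curves from $\Sigma$ to $p$, yet there is a point $r$ on $\gamma$ strictly between $\Sigma$ and $p$ that is conjugate to $\Sigma$ along $\gamma$. By the definition given above, there is a nontrivial Jacobi field $S^{\mu}$ along $\gamma$ that arises from the orthogonal congruence (so $S$ is hypersurface-orthogonal at $\Sigma$, meaning it is tangent to $\Sigma$ there and its "velocity" satisfies the appropriate initial condition coming from $K_{ab}$), with $S^{\mu}$ vanishing at $r$. First I would record that $S$ is orthogonal to $T$ everywhere along $\gamma$ (this was essentially noted in the proof of Theorem \ref{conjugate}: $S^{\mu}T_{\mu}=0$), and that $S$ is not identically zero, so in particular $\dot S(r)\neq 0$ by uniqueness for the linear Jacobi equation.

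The key step is to build a one-parameter variation of $\gamma$ with fixed endpoints on $\Sigma$ and at $p$ whose second variation of arc length is strictly positive in some direction. Concretely, define a vector field $W^{\mu}$ along $\gamma$ by setting $W=S$ on the segment from $\Sigma$ to $r$ and $W=0$ on the segment from $r$ to $p$; this $W$ is continuous, piecewise smooth, orthogonal to $T$, tangent to $\Sigma$ at the initial point, and vanishes at $p$, so it is an admissible variation field for curves from $\Sigma$ to $p$. The index-form (second variation of $\tau$) evaluated on $W$ vanishes — because on each piece $W$ is either a Jacobi field with the right boundary behavior or identically zero, and the index form of a Jacobi field reduces to a boundary term which here is killed by $S(r)=0$ and the hypersurface-orthogonality condition at $\Sigma$. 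But $W$ has a "corner" at $r$ where $\dot W$ jumps (from $\dot S(r)\neq 0$ to $0$), and a standard argument shows one can then choose a small smooth correction $Z$ supported near $r$ so that the index form on $W+\epsilon Z$ becomes strictly positive for small $\epsilon$. Integrating this second-variation statement against a suitable family of curves $\gamma_{s}$ with $\gamma_{0}=\gamma$, $\partial_{s}\gamma_{s}|_{0}=W+\epsilon Z$, keeping endpoints on $\Sigma$ and at $p$, produces a timelike curve from $\Sigma$ to $p$ with $\tau>\tau(\gamma)$, contradicting maximality.

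I would lean on Theorem \ref{orthogonal} and Theorem \ref{twin} to guarantee the comparison curves stay timelike and that in a geodesically convex neighborhood the genuine maximizer is the orthogonal geodesic, and I would cite O'Neill (as in the references already used for \ref{twin} and \ref{orthogonal}) for the precise form of the index form and the "broken Jacobi field has positive index" lemma rather than rederiving it. The main obstacle — and the only place real work is needed — is the corner argument: showing that the jump in $\dot W$ at the conjugate point forces the index form to be negative on a nearby admissible field (equivalently, that $\gamma$ fails to be a local maximum). This is the hypersurface-orthogonal analogue of the classical Morse index result for geodesics between two points, and the adaptation amounts to checking that the initial boundary term of the index form, $\left. (\nabla_{T}W)\cdot W\right|_{\Sigma}$ minus the extrinsic-curvature term $K_{ab}W^{a}W^{b}$, vanishes for variation fields tangent to $\Sigma$ whose normal derivative is fixed by the shape operator — exactly the condition built into the orthogonal congruence defining conjugate points to $\Sigma$. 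Everything else is bookkeeping.
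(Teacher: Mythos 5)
Your proposal is correct, but it takes a genuinely different route from the paper. The paper argues geometrically: given the conjugate point $q$, it invokes a nearby geodesic $\overline{\gamma}$ orthogonal to $\Sigma$ meeting $\gamma$ at $q$, gains length by cutting the corner inside a geodesically convex neighborhood via the twin-paradox lemma (Theorem \ref{twin}), and then smooths the broken curve; the paper itself calls this heuristic and defers the rigorous version to Penrose's Theorem 7.27, the delicate point being that a Jacobi field vanishing at $q$ produces a second geodesic through $q$ only to first order. You instead work at the level of the second variation: the broken field $W$ (the focal Jacobi field up to the conjugate point, zero afterwards) has vanishing index, the hypersurface boundary term involving the extrinsic curvature is precisely what reduces the index of the Jacobi piece to the boundary term at $r$, which dies because $S(r)=0$, and the jump $\nabla_{T}W$ at the corner lets you perturb to an admissible field with strictly positive second variation of proper time, so $\gamma$ is not even a local maximum. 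This is the standard focal-point/Morse-index argument as in O'Neill, and it is easier to make fully rigorous than the corner-cutting picture (no need to exponentiate the Jacobi field into an actual second geodesic), at the price of importing the index-form machinery; the paper's route uses only Theorem \ref{twin} as a tool but outsources its rigor to Penrose. One small slip to fix: in your last paragraph you say the corner forces the index form to be \emph{negative} on a nearby field, whereas for the timelike maximization problem you want (as you correctly state earlier) the second variation of $\tau$ to be strictly \emph{positive}; the conclusion that $\gamma$ fails to be a local maximum is the right one in either phrasing, but the sign convention should be stated consistently.
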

\begin{proof}
A heuristic argument would be the following: Let $q$ be a conjugate
point along $\gamma$ between $\Sigma$ and $p$. Then, there exists
another geodesic $\overline{\gamma}$ orthogonal to $S$ with the
same approximate length. This geodesic must intersect $\gamma$ at
$q$. Let $V$ be a geodesically convex neighborhood of $q$ and $r\in V$
a point along $\overline{\gamma}$ between $\Sigma$ and $q$ and
let $s\in V$ a point along $\gamma$ between $q$ and $p$. Then,
we get a piecewise smooth timelike curve $\tilde{\gamma}$ obtained
by following:1) $\overline{\gamma}$ between $\Sigma$ and $r$ then
2) the unique geodesic between $r$ and $s$ and, finally, 3) the
geodesic $\gamma$ between $s$ and $p$. This curve $\tilde{\gamma}$
connects $\Sigma$ to $p$ and has strictly bigger length than $\gamma$
by the twin paradox \ref{twin}. Moreover, $\tilde{\gamma}$ can be
smoothed while retaining bigger length. A rigorous version of this
statement is provided by Penrose in \cite{penrose} on p. 65 as theorem
7.27. Penrose uses in his didactically excellent introduction a so
called synchronous coordinate system that simplifies his calculation. 
\end{proof}

\subsubsection{The topological space $C(p,q)$ of continuous curves from $p$ to
$q$ in $M$}

Now we define the topological space of the set of continuous curves
$C(p,q)$ from points $p$ to $q\in M$. We will use this in the proof
of the singularity theorem. 
\begin{defn}
The topological space $C(p,q)$ of continuous curves from $p$ to
$q$ in $M$
\begin{itemize}
\item Let $(M,g)$ be a strongly causal spacetime. The set $C(p,q)$ denotes
the set of continuous future directed causal curves from $p$ to $q\in I^{+}(p)$. 
\item Similarly, The set $\tilde{C}(p,q)$ denotes the set of smooth future
directed causal curves from $p$ to $q$. 
\item A trip from $p$ to $q$ where $q\in I^{+}(p)$ is a piecewise future
directed timelike geodesic with past endpoint $p$ and future endpoint
$q$. The set of trips from $p$ to $q$ is denoted by $\kappa(p,q)$.
It is obviously a subset of $C(p,q)$
\item Following Geroch's article \cite{Geroch}, we will define a topology
$\mathcal{T}$ on $C(p,q)$. First we define sets 
\begin{equation}
O(U)\equiv\left\{ \lambda\in C(p,q)|\lambda\subset U\right\} 
\end{equation}
and a set $O$ is called open if it can be expressed as $O=\cup O(U)\in\mathcal{T}$,
 where $U\subset M$ is an open set.
\item Let $\left\{ \lambda_{n}\right\} \in C(p,q)$ be a sequence of curves
with $n\in\Lambda\subseteq\mathbb{N}$ and $\Lambda$as some index
set. A point $x\in M$ is a convergence point of $\left\{ \lambda_{n}\right\} $if,
given any open neighborhood $U$ of $x,$ there exist an 
\begin{equation}
N\in\mathbb{N}:\lambda_{n}\cap U\neq\emptyset,\forall n>N
\end{equation}

\item A curve $\lambda\in C(p,q)$ is a convergence curve of $\left\{ \lambda_{n}\right\} $
if each point $x\in\lambda$ is a convergence point of $\left\{ \lambda_{n}\right\} $. 
\item A point $x\in M$ is a limit point of $\left\{ \lambda_{n}\right\} $
if for every $U$ which is a neighborhood of $x$ the set 
\begin{equation}
\left\{ n\in\mathbb{N}:\lambda_{n}\cap U\neq\emptyset\right\} 
\end{equation}
 is infinite. 
\item A curve $\lambda\in C(\Sigma,q)$ is a limit curve of $\left\{ \lambda_{n}\right\} $
if there exists a sub sequence $\left\{ \lambda_{n'}\right\} \subset\left\{ \lambda_{n}\right\} _{n=1}^{\infty}$with
$n'\in\Lambda'\subset\mathbb{N}$ and $\Lambda'$ as some index set,
for which $\lambda$ is a convergence curve.
\item Let $(M,g)$ be a globally hyperbolic spacetime with a Cauchy surface
$\Sigma$ and $q\in I^{+}(\Sigma)$. Then, $C(\Sigma,q)$ is the set
\begin{equation}
C(\Sigma,q)=\cup_{p\in\Sigma}C(p,q)
\end{equation}

\item Similarly, the set $\tilde{C}(p,q)$ is the set 
\begin{equation}
\tilde{C}(\Sigma,q)=\cup_{p\in\Sigma}\tilde{C}(p,q)
\end{equation}

\end{itemize}
\end{defn}
\begin{thm}
\label{thm:(Geroch)3}Let$(M,g)$ be strongly causal spacetime. Then,$(C(p,q),\mathcal{T})$
defines a topological space.\end{thm}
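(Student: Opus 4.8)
The plan is to verify directly the three axioms of a topology for the family $\mathcal{T}$, which by construction consists of all sets that arise as arbitrary unions of the basic sets $O(U)=\{\lambda\in C(p,q):\lambda\subset U\}$ with $U\subset M$ open. Equivalently, one checks that $\mathcal{B}=\{O(U):U\subset M\text{ open}\}$ is a basis for a topology and notes that $\mathcal{T}$ is exactly the collection of arbitrary unions of members of $\mathcal{B}$. First, nonemptiness of the underlying set: since $q\in I^{+}(p)$ there is a future directed timelike curve from $p$ to $q$, so $C(p,q)\neq\emptyset$. (Strong causality is not logically needed for the topology axioms themselves; it is recorded here for consistency with the companion statements about convergence and limit curves.)

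Next, that $\mathcal{B}$ covers $C(p,q)$ and yields the trivial open sets: $\emptyset=O(\emptyset)\in\mathcal{T}$, because no curve can be a subset of $\emptyset$, and $C(p,q)=O(M)\in\mathcal{T}$, because every curve in $C(p,q)$ lies in $M$. Closure under arbitrary unions is then immediate from the definition: a union of sets, each of which is a union of basic sets $O(U)$, is again a union of basic sets $O(U)$, hence an element of $\mathcal{T}$.

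The only step with genuine content is closure under finite intersections, and it rests on the set-theoretic identity
\[
O(U)\cap O(V)=\{\lambda\in C(p,q):\lambda\subset U\text{ and }\lambda\subset V\}=\{\lambda\in C(p,q):\lambda\subset U\cap V\}=O(U\cap V),
\]
together with the fact that $U\cap V$ is open in $M$. Given $O_{1}=\bigcup_{\alpha}O(U_{\alpha})$ and $O_{2}=\bigcup_{\beta}O(V_{\beta})$ in $\mathcal{T}$, distributivity of intersection over union gives
\[
O_{1}\cap O_{2}=\bigcup_{\alpha,\beta}\bigl(O(U_{\alpha})\cap O(V_{\beta})\bigr)=\bigcup_{\alpha,\beta}O(U_{\alpha}\cap V_{\beta}),
\]
which is again a union of basic sets and hence lies in $\mathcal{T}$; an obvious induction extends this to any finite intersection. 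This establishes all three axioms, so $(C(p,q),\mathcal{T})$ is a topological space.

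I do not expect a real obstacle: the argument is essentially bookkeeping. The single point to be careful about is that $O(U)\cap O(V)$ is itself a basic open set $O(U\cap V)$, rather than merely expressible as a union of basic sets — it is precisely this fact (which in the language of bases says $\mathcal{B}$ is closed under finite intersections) that makes the intersection axiom drop out so cleanly.
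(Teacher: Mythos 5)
Your proof is correct and follows essentially the same route as the paper: the key step in both is the identity $O(U_{1})\cap O(U_{2})=O(U_{1}\cap U_{2})$ together with closure of $\mathcal{T}$ under arbitrary unions by construction. Your write-up is in fact somewhat more careful than the paper's, since you explicitly handle $\emptyset=O(\emptyset)$, $C(p,q)=O(M)$, and the distributivity step needed to intersect general unions of basic sets rather than only basic sets themselves.
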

\begin{proof}
For all $O(U_{1}),O(U_{2})\in\mathcal{T},$ we have 
\begin{equation}
O(U_{1})\cap O(U_{2})=O(U_{1}\cap U_{2})\in\mathcal{T}
\end{equation}
and $\forall O(U_{i})\in\mathcal{T}$ we have 
\begin{equation}
\left(\cup_{i}O(U_{i})\right)\in\mathcal{T}
\end{equation}
 Therefore, $(C(p,q),\mathcal{T})$ describes a topological space. \end{proof}
\begin{thm}
Let$(M,g)$ be strongly causal spacetime. Then,$(C(p,q),\mathcal{T})$
is Hausdorff.\end{thm}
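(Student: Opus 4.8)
The plan is to produce, for any two distinct $\lambda_{1},\lambda_{2}\in C(p,q)$, disjoint basic open sets $O(U_{1})\ni\lambda_{1}$ and $O(U_{2})\ni\lambda_{2}$. Since $O(U_{1})\cap O(U_{2})=O(U_{1}\cap U_{2})$ exactly as in the proof of Theorem \ref{thm:(Geroch)3}, it is enough to find open sets $U_{1}\supseteq\lambda_{1}$ and $U_{2}\supseteq\lambda_{2}$ in $M$ such that no curve of $C(p,q)$ is contained in $U_{1}\cap U_{2}$. I will use throughout that $M$ carries a Riemannian distance $d$ by Theorem \ref{riemannian}, so that $M$ is metric and in particular regular; that each $\lambda_{i}$, being the continuous image of a compact interval in the Hausdorff space $M$, is compact and closed; that a strongly causal spacetime is causal, so each $\lambda_{i}$ is injective, a self-intersection producing a closed causal curve; and that $p\neq q$ since $q\in I^{+}(p)$.

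First I would choose a separating point. Because $\lambda_{1}\neq\lambda_{2}$ as subsets of $M$, after possibly interchanging the curves there is a point $r\in\lambda_{1}\setminus\lambda_{2}$, and $r\neq p,q$ since $p$ and $q$ lie on both curves. Applying strong causality to the open neighbourhood $(M\setminus\lambda_{2})\cap(M\setminus\{p,q\})$ of $r$ gives a neighbourhood $W$ of $r$ inside it that no causal curve meets more than once; hence $\overline{W}\cap\lambda_{2}=\emptyset$ and $p,q\notin\overline{W}$. By regularity fix an open $W'$ with $r\in W'\subseteq\overline{W'}\subseteq W$. Since $\lambda_{1}$ crosses $W$ in a single pass and $p,q\notin W$, the set $\lambda_{1}^{-1}(W)$ is an open interval $(\alpha,\beta)$ with $0<\alpha<\beta<1$; write $I:=\lambda_{1}([0,\alpha])$ and $F:=\lambda_{1}([\beta,1])$, so $\lambda_{1}=I\cup\lambda_{1}((\alpha,\beta))\cup F$ with $\lambda_{1}((\alpha,\beta))\subseteq W$, while $I$ and $F$ are disjoint compact sets (injectivity of $\lambda_{1}$), each disjoint from $\overline{W'}$ because $\overline{W'}\subseteq W$ whereas $\lambda_{1}^{-1}(W)=(\alpha,\beta)$ meets neither $[0,\alpha]$ nor $[\beta,1]$.

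Next I would thicken these pieces into the required neighbourhoods. Put $\delta:=\tfrac{1}{2}\min\{d(I,F),d(I,\overline{W'}),d(F,\overline{W'})\}>0$ and set $A:=\{x:d(x,I)<\delta\}$, $B:=\{x:d(x,F)<\delta\}$; these are open, $A\cap B=\emptyset$, and both miss $\overline{W'}$. Define $U_{1}:=A\cup W\cup B$, which is open and contains $\lambda_{1}$ (initial part in $A$, middle arc in $W$, final part in $B$). For the second curve put $U_{2}:=\{x:d(x,\lambda_{2})<d(\lambda_{2},\overline{W})\}$, which is open, contains $\lambda_{2}$, and satisfies $U_{2}\cap\overline{W}=\emptyset$; note $d(\lambda_{2},\overline{W})>0$ because $\lambda_{2}$ is compact, $\overline{W}$ closed, and they are disjoint.

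The crux is to verify $O(U_{1})\cap O(U_{2})=\emptyset$. Suppose $\mu\in C(p,q)$ with $\mu\subseteq U_{1}\cap U_{2}$. Since $U_{2}\cap\overline{W}=\emptyset$ the curve $\mu$ avoids $W$, so $\mu\subseteq U_{1}\setminus W\subseteq A\cup B$; as $A,B$ are disjoint open sets and $\mu$, a continuous image of an interval, is connected, $\mu$ must lie entirely in $A$ or entirely in $B$. But its past endpoint $p$ lies in $I\subseteq A$ and its future endpoint $q$ lies in $F\subseteq B$, and $A\cap B=\emptyset$; so $\mu$ cannot lie in $A$ (it contains $q\notin A$) nor in $B$ (it contains $p\notin B$), a contradiction. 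Hence $O(U_{1})\cap O(U_{2})=O(U_{1}\cap U_{2})=\emptyset$, and since $\lambda_{1}\in O(U_{1})$, $\lambda_{2}\in O(U_{2})$, the space $(C(p,q),\mathcal{T})$ is Hausdorff. The step I expect to be the real obstacle is exactly this disjointness: it works only because the strongly causal neighbourhood $W$ cuts $\lambda_{1}$ into a single initial and a single final piece, which can then be separated by mutually disjoint tubes avoiding the central set $\overline{W'}$; without strong causality, or with a curve crossing $W$ several times, the construction would fail.
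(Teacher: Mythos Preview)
Your argument is correct and actually more careful than the paper's. The paper's proof tries to separate the two curves by disjoint open sets $X\supset\lambda$ and $Y\supset\lambda'$ in $M$, invoking only that $M$ is Hausdorff; but since both curves contain $p$ and $q$, such disjoint $X,Y$ cannot exist, and the argument as written does not go through. Your route avoids this trap: rather than separating the curves globally, you isolate a single point $r\in\lambda_{1}\setminus\lambda_{2}$, use strong causality there to obtain a neighbourhood $W$ that $\lambda_{1}$ crosses in one connected pass while $\lambda_{2}$ misses it entirely, and then split $\lambda_{1}$ into an initial arc $I$, the passage through $W$, and a final arc $F$. The decisive observation---that any curve in $U_{1}\cap U_{2}$ must avoid $W$ and hence lie in the disconnected union $A\cup B$, contradicting connectedness since $p\in A$ and $q\in B$---is exactly where strong causality does the real work, and this is the standard correct mechanism.

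Two minor repairs. The claim $\overline{W}\cap\lambda_{2}=\emptyset$ does not follow just from $W\subset M\setminus\lambda_{2}$; first shrink by regularity to an open set whose closure lies inside $(M\setminus\lambda_{2})\cap(M\setminus\{p,q\})$, and only then apply strong causality inside that, so that $\overline{W}$ inherits the required disjointness and $d(\lambda_{2},\overline{W})>0$. Second, the auxiliary set $W'$ and the distances $d(I,\overline{W'})$, $d(F,\overline{W'})$ are never used in the crux; taking $\delta=\tfrac{1}{2}d(I,F)$ already gives $A\cap B=\emptyset$, which together with $U_{2}\cap W=\emptyset$, $p\in A$ and $q\in B$ is all you need.
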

\begin{proof}
$(M,g)$ is Hausdorff. This means that for each of the two curves
$\lambda$ and $\lambda'$, with $\lambda\neq\lambda'$, we can find
two open neighborhoods $X,Y$ where $\lambda\subset X$ and $\lambda'\subset Y$
with $X\cap Y=\emptyset$. Since $(M,g)$ is strongly causal, we can
find two neighborhoods $U_{\lambda},U_{\lambda'}$ where $\lambda\subset U_{\lambda}\subset X$
and $\lambda'\subset U_{\lambda'}\subset Y$ where $\lambda$ intersects
$U_{\lambda}$ only twice, and similarly $\lambda'$ intersects $U_{\lambda'}$
only twice. Since $U_{\lambda}\cap U_{\lambda'}=\emptyset$, it follows
that, $U_{\lambda}\cap\lambda'=\emptyset$ and $U_{2}\cap\lambda=\emptyset$
and we have 
\begin{equation}
O(U_{\lambda})=:\left\{ \lambda\in C(p,q)|\lambda\subset U_{\lambda}\right\} 
\end{equation}
and 
\begin{equation}
O(U_{\lambda'})=:\left\{ \lambda'\in C(p,q)|\lambda'\subset U_{\lambda'}\right\} 
\end{equation}
such that 
\begin{equation}
O(U_{\lambda})\cap O(U_{\lambda'})=\emptyset
\end{equation}
which implies that $(C(p,q),\mathcal{T})$ is Hausdorff.\end{proof}
\begin{thm}
\label{thm:separable}Let$(M,g)$ be a strongly causal spacetime,
Then $C(p,q)$ is separable
\end{thm}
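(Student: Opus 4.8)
The plan is to construct an explicit countable dense subset of $C(p,q)$ using the trips $\kappa(p,q)$, i.e.\ piecewise timelike geodesics, together with a countable family of geodesically convex neighborhoods. First I would invoke paracompactness of $M$ (it is a Lorentzian manifold, hence paracompact by definition) and second countability to extract a countable basis $\{W_i\}$ of $M$ consisting of geodesically convex open sets — every point has a geodesically convex neighborhood by the theorem cited earlier, and one can shrink these to basis elements while retaining convexity on a cofinal subfamily. Since in a geodesically convex set any two points are joined by a \emph{unique} geodesic confined to that set, a trip is determined by its finite ordered list of ``vertices'', and one only needs these vertices to range over a countable dense subset of $M$ (again available by second countability, via Theorem~\ref{thm:metrizable} applied to the Riemannian metrization of $M$ from Theorem~\ref{riemannian}). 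Thus the set $\mathcal{D}$ of trips from $p$ to $q$ whose vertices lie in a fixed countable dense set $Q\subset M$, and which at each step stay inside one of the $W_i$, is countable.

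Next I would show $\mathcal{D}$ is dense in $(C(p,q),\mathcal{T})$. By definition of $\mathcal{T}$, every nonempty open set contains a basic open set $O(U)=\{\lambda : \lambda\subset U\}$ for some open $U\subset M$, so it suffices to show that whenever some curve of $C(p,q)$ lies inside an open $U$, there is a trip in $\mathcal{D}$ also lying inside $U$. Given $\lambda\subset U$ from $p$ to $q$, I would cover the compact image $\lambda([0,1])$ by finitely many of the convex basis sets $W_{i_1},\dots,W_{i_k}$, each contained in $U$ (possible since $\lambda$ is compact and $U$ open, refining the cover), choose parameters $0=t_0<t_1<\dots<t_m=1$ so that each consecutive pair $\lambda(t_{j-1}),\lambda(t_j)$ lies in a common $W_{i_{\ell}}$, then perturb the points $\lambda(t_j)$ slightly to nearby points of $Q$ still lying in the appropriate convex sets and still with consecutive points timelike-related (here one uses that $I^+$ is open, so timelike relatedness is stable under small perturbations, and that $p,q$ themselves can be taken as the fixed endpoints). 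Joining these perturbed vertices by the unique confined geodesics inside the $W_{i_\ell}$ produces a trip $\mu\in\mathcal{D}$ with $\mu\subset\bigcup W_{i_\ell}\subset U$, hence $\mu\in O(U)$. This shows every nonempty open set of $\mathcal{T}$ meets $\mathcal{D}$, so $\mathcal{D}$ is dense and $C(p,q)$ is separable.

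The main obstacle I anticipate is the bookkeeping in the perturbation step: one must simultaneously keep each perturbed vertex inside the convex patch containing the corresponding geodesic segment, keep consecutive perturbed vertices chronologically ordered (so the result really is a future-directed trip), and keep the whole broken geodesic inside $U$. The openness of the chronology relation $I^+$ and the fact that ``lying in a geodesically convex neighborhood'' is an open condition are what make this possible, but care is needed because the convex patch $W_{i_\ell}$ must contain the \emph{whole} geodesic segment between the perturbed endpoints, not just the endpoints — this is exactly why geodesic convexity (rather than mere smallness) of the $W_i$ is essential, since in a convex set the geodesic between two interior points stays inside. A secondary minor point is ensuring the countable family of convex basis sets is genuinely a basis: one gets this by intersecting an arbitrary countable basis with the (open) convex neighborhoods, noting countable unions and the regularity of the metrizable $M$ let one refine down to convex basis elements. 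Once these openness arguments are in place, the rest is routine, and the statement follows.
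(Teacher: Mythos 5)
Your strategy is essentially the one the paper (following Geroch's footnote) tries to implement: approximate an arbitrary element of $C(p,q)$ by broken geodesics whose segments lie in geodesically convex neighbourhoods, and obtain countability by forcing the vertices into a fixed countable dense set $Q\subset M$. On the countability side you are in fact more careful than the paper: the paper's construction takes as vertices the intersection points of small balls with the very curve $\lambda$ being approximated, so the approximating trips depend on $\lambda$ and their vertices range over a continuum, and a single countable dense family is never actually exhibited; pinning the vertices to $Q$ is the right repair. (Your remark about getting a countable basis of convex sets by ``intersecting'' a countable basis with convex neighbourhoods is sloppy, since such intersections need not be convex, but a countable convex basis does exist by a standard sandwiching argument, so this point is cosmetic.)

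The genuine gap is in the perturbation step. You assume that consecutive subdivision points $\lambda(t_{j-1}),\lambda(t_j)$ are timelike related and invoke openness of $I^{+}$ only to \emph{preserve} that relation when moving the points into $Q$. But elements of $C(p,q)$ are merely causal curves: along a null geodesic segment consecutive points are null related, openness of $I^{+}$ cannot create a chronological relation that is absent, and inside a convex neighbourhood there is no point chronologically between two null separated points, so no ``slight'' relocation of vertices yields a future directed trip there. Worse, the defect cannot be repaired within your scheme, because the set of trips is simply not dense in $C(p,q)$ for a general strongly causal $(M,g)$: on the flat spacetime cylinder $S^{1}\times\mathbb{R}$ (crossed with a flat $\mathbb{R}^{2}$ factor to have dimension four), let $q$ be the first return of a null generator emanating from $p$ to the same spatial position; then $q\in I^{+}(p)$, so this winding null geodesic belongs to $C(p,q)$, yet any causal curve from $p$ to $q$ contained in a thin tube $U$ around it is forced to coincide with it, so the nonempty basic open set $O(U)$ contains no trip at all and your family $\mathcal{D}$ misses it. (The paper's own proof makes the same tacit assumption when it asserts $p_{i+1,0}\in I^{+}(p_{i,0})$ and joins successive points by the timelike geodesic of theorem \ref{twin}, so the flaw is inherited rather than introduced by you, but it is a flaw nonetheless.) To close it you must either enlarge $\mathcal{D}$ to broken \emph{causal} geodesics---and then confront the separate difficulty that causal relatedness, unlike chronological relatedness, is not an open condition, so vertices cannot simply be nudged into $Q$---or bypass the approximation by trips altogether, for instance by observing that each $O(U)$ is open for the Hausdorff distance (built from the Riemannian metric of theorem \ref{riemannian}) on compact subsets of $M$, so Geroch's topology is coarser than a separable metrizable one and separability of $C(p,q)$ follows without any explicit geodesic construction.
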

In his book \cite{Wald}, Wald writes on p. 206: ``See Geroch \cite{Geroch}
for a sketch of the proof of this result''. Unfortunately, Geroch
just writes the following on this in a footnote at p. 445: 
\begin{quotation}
To construct a countable dense set in $C(p,q)$, choose a countable
dense set $p_{i}$ including the points $p$ and $q$, in $M$, {[}That
one can find such a countable dense set was shown by Geroch in \cite{Geroch2}{]}.
Consider curves $\gamma\in C(p,q)$ which consist of geodesic segments,
each of which lies in a normal neighborhood and joins two $p_{i}$.
\end{quotation}
By our basic definitions, let $X,\mathcal{T}$ be a topological space.
A subset $Y\subset X$ is called dense in $X$ if for every point
$x\in X,$ $x$ is either in $Y$ or is a limit point of $Y$. The
``proof'' of Geroch above merely states that one can construct a
dense set of points in $M$, But by simply connecting these points
to construct a set of curves, it does not become clear that $C(p,q)$
contains a subset $\kappa(p,q)\subset C(p,q)$ of curves such an arbitrary
curve $\gamma\in C(p,q)$ is either in $\kappa(p,q)$ or a limit curve
of $\kappa(p,q)$. 

Fortunately, in his article \cite{penrose}, Penrose provides on p.
50 a picture, which seems to indicate a proof idea. The proof below
is an own attempt to realize the proof idea that is indicated by Penrose
in that picture.
\begin{proof}
Let the curve $\lambda\in C(p,q)$ and $U_{\lambda1}\subset M$ be
an open set where $\lambda\subset U_{\lambda1}$. Since $M$ is a
paracompact, by theorem \ref{riemannian}, one can always define assign
a positive definite Riemannian metric to $M$ and define an open ball
$B(p,a)$ as in eq.\ref{eq:ball} around each point in $p\in M$ with
radius $a$ with respect to the Riemannian metric. 

Without loss of generality, we can define $O_{1,0}=B(p_{0,0},r_{1,0})\subset U_{\lambda1}$
to be an open neighborhood of $p_{0,0}=p$ with radius $r_{1,0}$
that is geodesically convex. Since $M,g$ is strongly causal, one
can then define a closed neighborhood$\overline{X}_{1,0}\subset O_{1,0}$
with such that $\overline{X}_{1,0}$ is intersected by any curve only
twice. 

We now denote the intersection point of $\overline{X}_{1,0}$ and
$\lambda$ as $p_{1,0}\in I^{+}(p_{0,0})$. For $p_{1,0}$ we can
then find a geodesically convex open neighborhood $O_{20}=B(p_{1,0},r_{2,0})$
with radius $r_{20}$ . We can find a closed neighborhood $\overline{X}_{2,0}\subset O_{2,0}$
that is intersected by any curve only twice. We then go to the intersection
point $p_{2,0}\in I^{+}(p_{1,0})$ of $\overline{X}_{2,0}$ and $\lambda$.
We can continue inductively, constructing $n\in\mathbb{N}$ balls
with radii $r_{i,0}$ and $n$ intersection points $p_{i,0}$ until
we reach a neighborhood $X_{n,0}$ that contains $q=p_{n+1,0}$. So
we get a countable series of points in geodesically convex neighborhoods.
We connect the pairs of points $p_{i,0},p_{i+1,0}$ with the unique
timelike geodesic given by theorem \ref{twin} for $i=0,\ldots,n$
and get a trip $\lambda_{1}$ from $p$ to $q$ which is in an open
set $O_{\lambda_{1}}=\cup_{i}O_{i,0}\subset U_{\lambda_{1}}$. 

Now we begin the sequence again, but this time, we let the open balls
$O_{i,1}$have radii of $r_{i,1}=r_{i,0}/2$. By connecting the points
$p_{i,1},p_{i+1,1}$ for all $i=0,\ldots n$, we get another trip
$\lambda_{2}$ from $p$ to $q$ which is in a neighborhood $O_{\lambda_{2}}=\cup_{i}O_{i,2}\subset O_{\lambda_{1}}$.
We continue inductively, with $O_{i,j}=B(p_{i,j},r_{i,0}/2^{j})$
and we get a countable series of trips $\lambda_{j}$ in open neighborhoods
\begin{equation}
O_{\lambda_{j+1}}=\cup_{i}O_{i,j+1}\subset O_{\lambda_{j}}=\cup_{i}O_{i,j}
\end{equation}
where $O_{\lambda_{1}}\subset U_{\lambda_{1}}$. Since the radii of
the balls $O_{i,j}$ around the curve converge to zero with increasing
$j$, there exists for any open set $U_{\lambda}\in M$ where $\lambda\subset U_{\lambda}$
an $N\in\mathbb{N}$ such that the trip $\lambda_{j}\subset U$ for
all $j>N$. Therefore, if $\lambda\in C(p,q)$ is not itself a trip,
it will be the limit curve of a countable sequence of trips. Thereby,
$C(p,q)$ contains a countable dense subset, which implies that $C(p,q)$
is separable.\end{proof}
\begin{thm}
\label{thm:(Geroch)} Let$(M,g)$ be a strongly causal spacetime.
Then, $\left(C(p,q),\mathcal{T}\right)$ is second countable\end{thm}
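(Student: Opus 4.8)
The plan is to produce a countable basis for $\mathcal{T}$ directly, exploiting the fact that $M$, being a differentiable manifold, is by definition second countable. Recall that the sets $O(U)=\{\lambda\in C(p,q)\mid\lambda\subset U\}$, with $U\subset M$ open, form a basis of $(C(p,q),\mathcal{T})$ by construction (cf.\ Theorem~\ref{thm:(Geroch)3}). The idea is to show that it already suffices to let $U$ range over a countable sub-collection of open sets of $M$ --- namely the finite unions of members of a fixed countable basis of $M$ --- and the point that makes this work is that the image of any $\lambda\in C(p,q)$ is a compact subset of $M$.

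In detail, I would fix a countable basis $\mathcal{B}=\{B_{1},B_{2},\ldots\}$ of the manifold topology of $M$ and let $\mathcal{B}^{\ast}$ denote the collection of all finite unions $B_{i_{1}}\cup\cdots\cup B_{i_{k}}$ of members of $\mathcal{B}$, which is again countable. I claim $\{O(W)\mid W\in\mathcal{B}^{\ast}\}$ is a countable basis of $\mathcal{T}$. Each $O(W)$ lies in $\mathcal{T}$ since $W$ is open in $M$. Conversely, given $O\in\mathcal{T}$ and $\lambda\in O$, writing $O$ as a union of basic sets produces an open $U\subset M$ with $\lambda\in O(U)\subset O$, i.e.\ $\lambda\subset U$. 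Since an element of $C(p,q)$ is a continuous causal curve from $p$ to $q$ on a closed parameter interval, its image $\lambda$ is the continuous image of a compact interval, hence a compact subset of the Hausdorff space $M$. For each $x\in\lambda$ choose $B_{i(x)}\in\mathcal{B}$ with $x\in B_{i(x)}\subset U$; compactness yields finitely many of them covering $\lambda$, and their union $W:=B_{i(x_{1})}\cup\cdots\cup B_{i(x_{m})}\in\mathcal{B}^{\ast}$ satisfies $\lambda\subset W\subset U$, so that $\lambda\in O(W)\subset O(U)\subset O$. This establishes the claim, and hence $(C(p,q),\mathcal{T})$ is second countable.

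The only point needing care in this route is the compactness of the image of a curve in $C(p,q)$, which is immediate once one fixes the convention that such curves are parametrised on closed intervals (otherwise one works with the closure of the image, again compact in a strongly causal spacetime). An alternative, more in keeping with the theorems assembled above, would be to prove first that $(C(p,q),\mathcal{T})$ is metrizable --- e.g.\ by equipping $M$ with a positive definite Riemannian metric via Theorem~\ref{riemannian}, letting $\rho$ be the induced distance, and taking $d_{H}$ to be the Hausdorff distance between the compact images of curves --- and then combining metrizability with separability (Theorem~\ref{thm:separable}) and Theorem~\ref{thm:metrizable}. There the easy half is that each $O(U)$ is $d_{H}$-open, since $\rho(\lambda,M\setminus U)>0$ whenever $\lambda$ is compact and $U\neq M$; the genuinely hard half, and the place where strong causality becomes indispensable, is to show that every $d_{H}$-ball is a union of basic sets $O(U)$, i.e.\ that a future directed causal curve confined to a thin Riemannian tube around a given curve must in fact remain Hausdorff-close to all of it. One proves this by covering the curve with finitely many strongly causal neighbourhoods arranged in a chain and arguing that a confined causal curve cannot skip a link. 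I expect that chain argument to be the main obstacle in the metrization approach; the direct countable-basis construction above avoids it entirely.
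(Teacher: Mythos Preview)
Your direct construction of a countable basis is correct and is a genuinely different --- and more elementary --- route than the one in the paper. The paper proceeds exactly along the lines of your ``alternative'' paragraph: it invokes the separability of $C(p,q)$ established in Theorem~\ref{thm:separable}, asserts that $C(p,q)$ is metrizable via the maximal Riemannian distance between curve images (using Theorem~\ref{riemannian} to put a Riemannian metric on $M$), and then applies Theorem~\ref{thm:metrizable}. Your approach bypasses both separability and metrizability entirely: second countability of $M$ plus compactness of the image of each $\lambda\in C(p,q)$ suffices, and strong causality is not actually used anywhere in the argument. What the paper's route buys is metrizability of $(C(p,q),\mathcal{T})$ as a byproduct, though as you correctly point out the delicate direction there --- showing every metric ball is $\mathcal{T}$-open --- is precisely where strong causality enters and is not spelled out in the paper either. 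For the purpose at hand (feeding into the Bolzano--Weierstrass argument for compactness of $C(\Sigma,q)$), your direct proof is cleaner and carries less baggage.
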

\begin{proof}
The proof idea is taken from Geroch's article. Since the set of trips
$\kappa(p,q)$ is dense in $C(p,q)$, $C(p,q)$ must be separable.
A metric on $C(p,q)$ is obtained using the maximal distance between
curves with respect to a positive definite metric that we can always
assign to $M$ by theorem \ref{riemannian}. As $C(p,q)$is a separable
metrizable space, by theorem \ref{thm:metrizable}, it is then second
countable.
\end{proof}
The proof is taken from Geroch's article\cite{Geroch}. I have added
some comments and changes to avoid some potential problems. In the
proof of Geroch, $t$ lies in the interval $[0,1]$. However, it is
not guaranteed that the curve will have a parametrization which, if
$t$ lies in this interval, leads $\lambda$ to move out of the geodesically
convex neighborhood. Then, the existence of a curve $\gamma(t)$ in
this interval could not be guaranteed. 
\begin{thm}
\label{thm:(Geroch)2}(Geroch) Let $(M,g)$ be stably causal and let
$\lambda$ be a past inextendible curve starting at point $p\in M$.
Then, there exists a past inextendible timelike curve $\gamma$ such
that $\gamma(t)\in I^{+}(\lambda(t))$ .\end{thm}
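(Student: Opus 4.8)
The plan is to reduce to strong causality and then assemble $\gamma$ out of short future directed timelike geodesic arcs, one lying inside each member of a chain of small convex neighbourhoods covering $\lambda$, each arc being the corresponding piece of $\lambda$ nudged slightly into its own chronological future. First, by Theorem~\ref{stablystronglycausal} the spacetime is strongly causal, and by Theorem~\ref{riemannian} the paracompact manifold $M$ carries an auxiliary positive definite Riemannian metric $h$. Reparametrise $\lambda$ on a half-open interval $I$ with $\lambda(0)=p$ at the closed end, past inextendibility meaning that $\lambda$ has no endpoint at the open (past) end. Using strong causality together with second countability of $M$, choose a strictly decreasing sequence of parameters $0=t_{0}>t_{1}>t_{2}>\dots$ running to the open end of $I$ together with open sets $V_{0},V_{1},V_{2},\dots$ such that each $V_{i}$ is geodesically convex and is met by every causal curve of $M$ in a connected set --- so that the chronology relations of $(M,g)$ and of the spacetime $(V_{i},g)$ coincide on $V_{i}$ --- and such that $\lambda([t_{i+1},t_{i}])\subset V_{i}$ while the $h$-diameter of $V_{i}$ tends to $0$. (That the $t_{i}$ can be arranged to exhaust $I$ is a routine consequence of local compactness and of the fact that $\lambda$ restricted beyond any of its points is still past inextendible.)

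Next I build the vertices $r_{i}:=\gamma(t_{i})$ by induction. Let $r_{0}$ be any point of $I^{+}(p)\cap V_{0}$, and suppose inductively that $r_{i}\in I^{+}(\lambda(t_{i}))$ has been chosen lying in $V_{i-1}\cap V_{i}$ (the condition involving $V_{-1}$ being vacuous). Since $\lambda$ is a future directed causal curve and $t_{i+1}<t_{i}$, we have $\lambda(t_{i})\in J^{+}(\lambda(t_{i+1}))$, hence $r_{i}\in I^{+}(\lambda(t_{i}))\subset I^{+}(\lambda(t_{i+1}))$, that is $\lambda(t_{i+1})\in I^{-}(r_{i})$. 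Thus $\lambda(t_{i+1})$ lies in the open set $I^{-}(r_{i})\cap V_{i}\cap V_{i+1}$ and also in the closure of $I^{+}(\lambda(t_{i+1}))$, so those two sets meet; I pick $r_{i+1}$ in their intersection and with the additional property $h\bigl(r_{i+1},\lambda(t_{i+1})\bigr)<2^{-i}$. Then $r_{i+1}$ and $r_{i}$ both lie in $V_{i}$ and $r_{i}\in I^{+}(r_{i+1})$; since the chronologies of $M$ and of $(V_{i},g)$ agree on $V_{i}$, the unique geodesic of $V_{i}$ joining $r_{i+1}$ to $r_{i}$ is future directed timelike (the convex-neighbourhood statement already used in Theorem~\ref{twin}). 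Call this arc $c_{i}$.

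Concatenating the $c_{i}$ and reparametrising the $i$-th of them over $[t_{i+1},t_{i}]$ produces a piecewise smooth future directed timelike curve $\gamma\colon I\to M$ with $\gamma(t_{i})=r_{i}$. The step I expect to be the real obstacle --- and precisely the point on which the preamble promises a correction to Geroch's argument --- is to choose these reparametrisations so that $\gamma(t)\in I^{+}(\lambda(t))$ holds for \emph{every} $t$, not merely at the sample parameters $t_{i}$, while $\gamma|_{[t_{i+1},t_{i}]}$ stays inside $V_{i}$ so that it genuinely coincides with the arc $c_{i}$. I would deal with this by shrinking the $V_{i}$ far enough that the chronology of $(V_{i},g)$ is, in the $h$-metric, uniformly close to that of a small flat diamond, and then verifying that under the evident monotone reparametrisation $c_{i}$ stays in the chronological future of $\lambda|_{[t_{i+1},t_{i}]}$; equivalently, one first displaces $\lambda|_{[t_{i+1},t_{i}]}$ by a uniform small amount along a fixed globally defined future directed timelike vector field, obtained from the time function of the stably causal spacetime, without leaving $V_{i}$, and then routes $c_{i}$ through $r_{i}$ and $r_{i+1}$. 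Finally $\gamma$ is past inextendible: if it had a past endpoint $x$, then $r_{i}\to x$, hence $\lambda(t_{i})\to x$ because $h(r_{i},\lambda(t_{i}))\to 0$, and then $\lambda(t)\to x$ as $t$ runs to the open end of $I$ because $\lambda([t_{i+1},t_{i}])\subset V_{i}$ and the $h$-diameters of the $V_{i}$ tend to $0$; so $x$ would be a past endpoint of $\lambda$, contrary to hypothesis. This proves the theorem.
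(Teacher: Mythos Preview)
Your proposal is correct and follows essentially the same strategy as the paper's proof: endow $M$ with an auxiliary Riemannian metric, cover $\lambda$ by a chain of convex neighbourhoods, build $\gamma$ inductively out of short timelike arcs that stay in $I^{+}(\lambda)$, and force the Riemannian distance between $\gamma$ and $\lambda$ to zero so that any past endpoint of $\gamma$ would also be one of $\lambda$. The only cosmetic differences are that the paper controls the distance continuously by the bound $d(\gamma(t),\lambda(t))<C/(1+t)$ rather than your discrete $h(r_{i},\lambda(t_{i}))<2^{-i}$ together with shrinking diameters of the $V_{i}$, and the paper does not explicitly invoke strong causality; your honest flagging of the reparametrisation issue (that $\gamma(t)\in I^{+}(\lambda(t))$ must hold for all $t$, not just at the nodes) is exactly the point the paper singles out as the defect in Geroch's original sketch.
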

\begin{proof}
Since $M$ is paracompact, one can, by theorem \ref{riemannian},
assign a positive definite Riemannian metric to $M$. Without loss
of generality, we assume the curve parameter to be in $[0,\infty)$
and $\lambda(0)=p$. Let $d(p,q)$ denote the infimum of the length
of all curves connecting $p$ with $q$ measured with respect the
positive definite Riemannian metric that we can always assign to $M$.
We may choose a point $q\in I^{+}(p)$, where $p,q\in U_{1}$ with
$U_{1}$ as a geodesically convex neighborhood of $p$ and 
\begin{equation}
d(p,q)<C,
\end{equation}
 where $C$ is some positive constant. 

Starting at $q$ we can, using theorem \ref{twin}, construct a timelike
curve $\gamma(t)$ consisting of segments that connect two points
in geodesically convex neighborhoods. Then, one can find an $\epsilon_{1}>0$
such that this curve can be defined for all $t\in[0,\epsilon_{1}]$
where $\epsilon_{1}>0$ and 
\begin{equation}
\gamma(t)\in I^{+}(\lambda(t))\text{ where }\gamma(t)\in U_{1}\text{ and }d(\gamma(t),\lambda(t))<\frac{C}{1+t}\label{eq:cond}
\end{equation}
Now we can find an $\epsilon_{2}>\epsilon_{1}$ such that 
\begin{equation}
\gamma(\epsilon_{1})\in I^{+}(\lambda(\epsilon_{2})),\gamma(t)\in U_{2}
\end{equation}
 and $U_{2}$ is a convex neighborhood with $\gamma(\epsilon_{1})\in U_{2}$
and$\lambda(\epsilon_{2})\in U_{2}$. With theorem \ref{twin}, we
may extend $\gamma$ in $U_{2}$ such that it is timelike and $\forall t\in[\epsilon_{1},\epsilon_{2}]$
the curve $\gamma$ is subject to eq. (\ref{eq:cond}) with $U_{1}$
replaced by $U_{2}$ By induction, we continue to extend $\gamma$
to a curve, defined for $t\in[0,\infty)$. Since 
\begin{equation}
d(\gamma(\infty),\lambda(\infty))=0,
\end{equation}
 The resulting curve will be past inextendible because any endpoint
of $\gamma$ would also be an endpoint of $\lambda$.\end{proof}
\begin{thm}
(Hawking-Ellis) \label{thm:(Hawking-Ellis)} Let $(M,g)$ be a time
ordered manifold and let $\left\{ \lambda_{n}\right\} $be a sequence
of past inextendible causal curves which have a limit point $p$ Then
there exists a past inextendible causal curve $\lambda$ passing through
$p$ which is a limit curve of $\left\{ \lambda_{n}\right\} $. 
\end{thm}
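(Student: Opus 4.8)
The plan is to realize $\lambda$ as a piecewise-geodesic causal curve extracted from a diagonal subsequence of the $\lambda_n$, following the classical limit-curve argument of Hawking and Ellis \cite{Hawking}. First I would fix, using Theorem \ref{riemannian}, a positive definite Riemannian metric $h$ on the paracompact manifold $M$; after a conformal rescaling one may assume $h$ complete, so that $h$-closed bounded sets are compact and $h$-Cauchy sequences converge. I parametrize every causal curve by $h$-arclength; then past inextendibility of each $\lambda_n$ forces its domain, read from a base point into the past, to be $[0,\infty)$ (a finite total $h$-length into the past would, by completeness, produce a past endpoint), and the curves become $1$-Lipschitz, hence equicontinuous, for the distance $d_h$. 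Since $p$ is a limit point of $\{\lambda_n\}$, every ball $B(p,1/j)$ meets infinitely many $\lambda_n$; diagonalizing over $j$ gives a subsequence (not relabelled) and a choice of base point along each curve with $\lambda_n(0)\to p$, each $\lambda_n$ still past inextendible and causal.

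Next I would build $\lambda$ inductively through a chain of geodesically convex neighborhoods. Cover $M$ by geodesically convex neighborhoods (the convex-neighborhood theorem quoted above) with compact closures, chosen small enough that inside each one the causal future and past of any point are closed (a standard feature of convex normal neighborhoods, cf.\ \cite{Oneil}). Pick $N_{0}\ni p$ of this type and a small $\delta_{0}>0$; for large $n$ the point $x_{1,n}$ at $h$-arclength $\delta_{0}$ to the past of $\lambda_n(0)$ is defined (past inextendibility) and lies in $N_{0}$ ($1$-Lipschitz continuity and $\lambda_n(0)\to p$). By Bolzano--Weierstrass (Theorem \ref{thm:(Bolzano-Weierstrass)}) a sub-subsequence has $x_{1,n}\to x_{1}\in\overline{N_{0}}$; the arc of $\lambda_n$ from $\lambda_n(0)$ to $x_{1,n}$ is a causal curve inside $N_{0}$, so its endpoints are causally related, and closedness of the causal relation in $N_{0}$ makes $x_{1}$ causally related to $p$, joined to it by a causal geodesic segment $\gamma_{0}$. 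Iterating from $x_{1}$ with $N_{1}\ni x_{1}$ of radius $\delta_{1}$, and diagonalizing over the steps, yields one subsequence $\lambda_{n_k}$ and a curve $\lambda=\gamma_{0}\cup\gamma_{1}\cup\cdots$ of causal geodesic segments joined end to end.

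It then remains to check three things. That $\lambda$ is a causal curve through $p$ is clear, being continuous, through $p$, and causal on each segment. That $\lambda$ is a limit curve of $\{\lambda_{n_k}\}$ in the sense of the definitions (every point a convergence point) follows once the radii $\delta_i$ and the $N_i$ are chosen to shrink fast enough that $\gamma_i$ stays within $O(\delta_i)$ of the arc of $\lambda_{n_k}$ between $x_{i,n_k}$ and $x_{i+1,n_k}$, which itself lies in the $\delta_i$-ball about $x_{i,n_k}\to x_i$; hence any neighborhood of a point of $\gamma_i$ is eventually met by $\lambda_{n_k}$. For past inextendibility I split into cases: if $\lambda$ escapes every compact set it can have no past endpoint, since an endpoint would confine a tail of $\lambda$ to a compact neighborhood; if instead $\lambda$ lies in a compact set $K$, I take each $\delta_i$ to be a fixed fraction of the convex-neighborhood radius at $x_i$, which by the extreme value theorem is bounded below on $K$, so $\sum_i\delta_i=\infty$ and the construction never halts, and I exclude convergence $\lambda(t)\to q$ by observing that from a neighborhood of such a $q$ the inductive step could still be performed, contradicting that $q$ is the past end.

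The main obstacle is the bookkeeping that makes these three requirements mutually compatible. Causality of the limit needs genuinely convex neighborhoods with closed $J^{\pm}$; the limit-curve property needs the segments $\gamma_i$ to be short, so $\delta_i$ small and the $N_i$ shrinking; yet past inextendibility in the confined case needs $\sum_i\delta_i=\infty$, forbidding the $\delta_i$ from shrinking too fast. Reconciling "small but not summably small" by tying $\delta_i$ to the local convex radius --- bounded below on compacta --- is precisely the delicate step, together with excluding that the extracted limit degenerates to a constant, which is prevented because each step advances by a definite geodesic segment.
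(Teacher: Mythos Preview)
Your proposal is correct and follows essentially the same classical Hawking--Ellis limit-curve argument as the paper: fix an auxiliary Riemannian metric, work in a chain of geodesically convex neighborhoods, extract limit points on compact balls via Bolzano--Weierstrass, diagonalize, and iterate to push the curve further into the past. The only differences are cosmetic: you parametrize by $h$-arclength and step by increments $\delta_i$ along the curves, joining consecutive limit points by causal geodesic segments, whereas the paper (following \cite{HawkingEllis} verbatim) subdivides a single ball $\overline{B}(p,b)$ into shells of radii $i^{-1}jb$, extracts a limit point $x_{ij}$ on each shell, and obtains the first piece of $\lambda$ as the closure of $\bigcup_{i,j}\{x_{ij}\}$ before restarting from $x_{11}$ in a new convex neighborhood. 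Your explicit case analysis for past inextendibility and the ``small but not summable'' bookkeeping for the $\delta_i$ are exactly the points the paper's terser version leaves implicit.
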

The proof is taken from the book of Hawking and Ellis\cite{HawkingEllis}.
I have added some minor adaptions that are necessary for our situation.
In his book Hawking considers a sequence of future inextendible curves.
\begin{proof}
Since $M$ is a paracompact, by theorem \ref{riemannian}, one can
always assign a positive definite Riemannian metric to $M$ and define
an open ball $B(p,a)$ as in eq. \ref{eq:ball} around each point
in $p\in M$ with radius $a$ with respect to the Riemannian metric.
Let $U_{1}$ be a geodesically convex neighborhood as in theorem \ref{twin}
around $p$. Let $b>0$ be such that $B(p,b)$ is defined on $M$
and let $\left\{ \lambda_{1,0,n}\right\} $ be a subsequence of $\lambda_{n}\cap U_{1}$
which converges to $p$ Since $\overline{B}(p,b)$ is compact, by
theorem \ref{thm:(Bolzano-Weierstrass)} it will contain limit points
of $\left\{ \lambda_{1,0,n}\right\} $. Any such limit point $y$
must either lie in $J^{-}(p,U_{1})$ or $J^{+}(p,U_{1})$ since otherwise,
there would be a neighborhood $V_{1}$of $y$ and $V_{2}$ of $p$
between which there would be no spacelike curve in $U_{1}$. Let 
\begin{equation}
x_{11}\in J^{-}(p,U_{1})\cap\overline{B}(p,b)
\end{equation}
be one of these limit points, and let $\left\{ \lambda_{1,1,n}\right\} $
be a subsequence of $\left\{ \lambda_{1,0,n}\right\} $ which converges
to $x_{11}$. Then $x_{11}$ will be a point of our limit curve $\lambda$. 

Define 
\begin{equation}
x_{ij}\in J^{-}(p,U_{1})\cap\overline{B}(p,i^{-1}jb)
\end{equation}
as a limit point of the subsequence $\left\{ \lambda_{i-1,i-1,n}\right\} $for
$j=0$ and of $\left\{ \lambda_{i,j-1,n}\right\} $ for $i\geq j\geq1$
and define $\left\{ \lambda_{i,j,0}\right\} $as subsequence of the
above subsequence which converges to $x_{ij}$. Any two of the $x_{ij}$
will have non spacelike separation. Therefore, the closure of the
union of all $x_{ij}$for $j\geq i$ will give a non spacelike curve
$\lambda$ from $p=x_{i0}$ to $x_{11}=x_{ii}$ . 

The subsequence 
\begin{equation}
\left\{ \lambda_{n}'\right\} =\left\{ \lambda_{m,m,n}\right\} 
\end{equation}
of $\left\{ \lambda_{n}\right\} $ intersects each of the balls 
\begin{equation}
B(x_{mj},m^{-1}b)
\end{equation}
 for $0\leq j\leq m$ and therefore $\lambda$ will be a limit curve
of $\left\{ \lambda_{n}\right\} $ from $p$ to $x_{11}$. Now we
let $U_{2}$ be a convex neighborhood about $x_{11}$ and repeat this
construction, where we begin this time with the sequence $\left\{ \lambda'_{n}\right\} $. \end{proof}
\begin{thm}
\label{thm:intersect}Let $(M,g)$ be a globally hyperbolic Lorentzian
manifold with a Cauchy surface $\Sigma$ and $\lambda$ be a past
inextendible curve. Then $\lambda$ intersects $I^{-}(\Sigma)$.\end{thm}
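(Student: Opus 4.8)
The plan is to deduce everything from the defining property $D(\Sigma)=M$ of a Cauchy surface, a couple of elementary facts about achronal sets, and Geroch's Theorem \ref{thm:(Geroch)2}. Here ``curve'' means (future or past directed) causal curve, as that is what past-inextendibility refers to in the definitions above; fix a point $p$ on $\lambda$ and replace $\lambda$ by its past-directed portion starting at $p$, which is still past inextendible, parametrised as $\lambda(s)$, $s\ge 0$, with $\lambda(0)=p$.

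I would first record three routine facts. (a) Since $\Sigma$ is achronal, $I^{+}(\Sigma)\cap\Sigma=I^{-}(\Sigma)\cap\Sigma=\emptyset$; and since a stably causal spacetime has no closed causal curve, the push-up rule $x\ll y\le z\Rightarrow x\ll z$ forces $I^{+}(\Sigma)\cap J^{-}(\Sigma)=\emptyset$, hence $I^{+}(\Sigma)\cap I^{-}(\Sigma)=\emptyset$. (b) If $q\in D^{-}(\Sigma)\setminus\Sigma$ then $q\in I^{-}(\Sigma)$: take a future-inextendible timelike curve issuing from $q$; by the definition of $D^{-}(\Sigma)$ it meets $\Sigma$ at some $r\ne q$ lying in the timelike future of $q$, so $q\in I^{-}(r)\subset I^{-}(\Sigma)$. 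The time-reversed argument gives $D^{+}(\Sigma)\setminus\Sigma\subset I^{+}(\Sigma)$. (c) Hence $I^{+}(\Sigma)\subset D^{+}(\Sigma)$: a point of $I^{+}(\Sigma)$ lies in $D^{+}(\Sigma)\cup D^{-}(\Sigma)=M$, and it cannot lie in $D^{-}(\Sigma)$, for then by (b) it would also lie in $I^{-}(\Sigma)$, contradicting (a).

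The crucial step --- and the one I expect to be the real obstacle, since everything else is bookkeeping with domains of dependence --- is the lemma that \emph{no past-inextendible causal curve is contained in $\Sigma$}. Suppose $\mu\subset\Sigma$ were one. By Theorem \ref{thm:(Geroch)2} there is a past-inextendible timelike curve $\gamma$ with $\gamma(t)\in I^{+}(\mu(t))\subset I^{+}(\Sigma)$ for all $t$. But $\gamma(0)\in I^{+}(\Sigma)\subset D^{+}(\Sigma)$ by (c), so by the very definition of $D^{+}(\Sigma)$ the past-inextendible curve $\gamma$ must intersect $\Sigma$, contradicting $\gamma\subset I^{+}(\Sigma)$ and (a). The point of the argument is that one needs an auxiliary timelike curve lying just to the future of $\mu$, which is precisely what Geroch's theorem provides.

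With these in hand the conclusion follows quickly. Since $p\in D^{+}(\Sigma)\cup D^{-}(\Sigma)$: if $p\in D^{-}(\Sigma)\setminus\Sigma$ then $p\in I^{-}(\Sigma)$ by (b) and we are done; otherwise $p\in\Sigma$ (in which case put $s_{0}=0$) or $p\in D^{+}(\Sigma)$ (in which case, $\lambda$ being past inextendible and starting at $p$, it meets $\Sigma$ at some $\lambda(s_{0})$, $s_{0}\ge 0$). So in the remaining cases $\lambda(s_{0})\in\Sigma$ for some $s_{0}\ge 0$. The restriction $\lambda|_{[s_{0},\infty)}$ is a past-inextendible causal curve, so by the lemma it is not contained in $\Sigma$; choose $s_{1}>s_{0}$ with $\lambda(s_{1})\notin\Sigma$. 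Because $\lambda$ is past directed and $s_{1}>s_{0}$ we have $\lambda(s_{1})\in J^{-}(\lambda(s_{0}))\subset J^{-}(\Sigma)$, so $\lambda(s_{1})\notin I^{+}(\Sigma)$ by (a); combined with $\lambda(s_{1})\notin\Sigma$ and $D^{+}(\Sigma)\setminus\Sigma\subset I^{+}(\Sigma)$, this rules out $\lambda(s_{1})\in D^{+}(\Sigma)$, so $\lambda(s_{1})\in D^{-}(\Sigma)\setminus\Sigma\subset I^{-}(\Sigma)$. Thus $\lambda$ meets $I^{-}(\Sigma)$. (In a full writeup I would also note the minor point, used in (b), that future-inextendible timelike curves through any point exist, which follows from time-orientability together with the standard extension procedure for causal curves.)
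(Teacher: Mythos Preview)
Your proof is correct, and it uses the same essential ingredient as the paper --- Geroch's Theorem~\ref{thm:(Geroch)2} --- but the paper's argument is more direct. Rather than first establishing that $\lambda$ must hit $\Sigma$, then invoking a separate lemma to escape $\Sigma$, and then tracking the escape point into $I^{-}(\Sigma)$, the paper simply argues by contradiction on $\lambda$ itself: if $\lambda$ never entered $I^{-}(\Sigma)$, then (by your fact (b)) $\lambda\subset\Sigma\cup I^{+}(\Sigma)$, so the timelike companion $\gamma$ produced by Theorem~\ref{thm:(Geroch)2} lies entirely in $I^{+}(\Sigma\cup I^{+}(\Sigma))=I^{+}(\Sigma)$; but $\gamma$ is past inextendible with a point in $I^{+}(\Sigma)\subset D^{+}(\Sigma)$ (your (c)), so $\gamma$ must meet $\Sigma$, contradicting achronality. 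Your intermediate lemma is really just this argument specialised to the case $\mu\subset\Sigma$; once you observe that the same reasoning works for any past-inextendible $\lambda\subset\Sigma\cup I^{+}(\Sigma)$, the case analysis and the separate passage through $\Sigma$ become unnecessary. What your version buys is that the auxiliary facts (a)--(c) are made explicit, whereas the paper leaves them largely implicit.
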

\begin{proof}
Suppose $\lambda$ did not intersect $I^{-}(\Sigma)$. By theorem
\ref{stablystronglycausal}, $(M,g)$ is stably causal and then by
theorem \ref{thm:(Geroch)3}, we could find a past inextendible curve
$\gamma$, such that 
\begin{equation}
\forall t:\gamma(t)\in I^{+}(\lambda(t)),
\end{equation}
where 
\begin{equation}
I^{+}(\lambda)\subset I^{+}(\Sigma\cup I^{+}(\Sigma))=I^{+}(\Sigma)
\end{equation}
that is $\gamma(t)\subset I^{+}(\Sigma)$. Even if we extend $\gamma$
indefinitely into the future, it can not intersect $I^{-}(\Sigma)$
because otherwise the achronality of $\Sigma$ would be violated.
Since by definition all past inextendible curves must intersect $\Sigma$,
no such $\gamma$ can exist and $\lambda$ must enter $I^{-}(\Sigma)$.\end{proof}
\begin{thm}
\label{thm:compact-1}Let $(M,g)$ be a globally hyperbolic spacetime
$\Sigma$ a Cauchy surface and $q\in I^{+}(\Sigma)$, then $C(\Sigma,q),$
is compact.
\end{thm}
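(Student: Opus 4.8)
The plan is to deduce compactness from the Bolzano--Weierstrass characterisation (Theorem~\ref{thm:(Bolzano-Weierstrass)}): once $C(\Sigma,q)$ is known to be second countable, it suffices to show that every sequence $\{\lambda_n\}\subset C(\Sigma,q)$ has an accumulation point lying in $C(\Sigma,q)$, which, in Geroch's set-up, amounts to exhibiting a subsequence with a limit curve that is again an element of $C(\Sigma,q)$. For second countability I would argue, exactly as in Theorems~\ref{thm:separable} and~\ref{thm:(Geroch)} but now also letting the past endpoint range over a countable dense subset of $\Sigma$ (which exists because $\Sigma$, being a submanifold of the second countable $M$, is itself separable), that the trips with endpoints in these dense sets form a dense subset of $C(\Sigma,q)$; since $C(\Sigma,q)$ carries the metric of maximal Riemannian distance supplied by Theorem~\ref{riemannian}, Theorem~\ref{thm:metrizable} then yields that it is second countable.

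For the limit curve, take $\lambda_n\in C(\Sigma,q)$, say $\lambda_n$ running from $p_n\in\Sigma$ to $q$. Since $\Sigma$ is achronal, each $\lambda_n$ stays in $\overline{I^{+}(\Sigma)}$: a future directed causal curve issuing from $\Sigma$ that reached $I^{-}(\Sigma)$ would produce a chronological relation between two points of $\Sigma$. Extend each $\lambda_n$ beyond $p_n$ into the past to a past inextendible causal curve $\hat\lambda_n$ (possible in the globally hyperbolic $M$); this extension lies in $I^{-}(\Sigma)$, so $\hat\lambda_n\cap\overline{I^{+}(\Sigma)}=\lambda_n$. All the $\hat\lambda_n$ pass through the fixed point $q$, which is therefore a limit point of the family, so the limit curve theorem of Hawking and Ellis (Theorem~\ref{thm:(Hawking-Ellis)}) produces a past inextendible causal curve $\lambda$ through $q$ that is a limit curve of $\{\hat\lambda_n\}$.

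It remains to truncate $\lambda$ at $\Sigma$. Since $q\in I^{+}(\Sigma)$ and $M=D(\Sigma)$, the past inextendible portion of $\lambda$ starting at $q$ meets $\Sigma$; this can also be seen from Theorem~\ref{thm:intersect} together with the fact that $I^{-}(\Sigma)$ and $I^{+}(\Sigma)$ are open with common boundary $\Sigma$, so the connected image of $\lambda$, meeting $I^{+}(\Sigma)$ at $q$ and meeting $I^{-}(\Sigma)$, must cross $\Sigma$, say at $p$. Let $\lambda'$ be the segment of $\lambda$ from $p$ to $q$; then $\lambda'\in C(p,q)\subset C(\Sigma,q)$. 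Because $\lambda'\subset\overline{I^{+}(\Sigma)}$ and $\hat\lambda_n$ coincides there with the original $\lambda_n$, every point of $\lambda'$ is a convergence point of a subsequence of $\{\lambda_n\}$, so $\lambda'$ is a limit curve, hence an accumulation point, of $\{\lambda_n\}$ in $C(\Sigma,q)$. By Theorem~\ref{thm:(Bolzano-Weierstrass)} this proves $C(\Sigma,q)$ compact.

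The hard part will be the two gluing-type steps: first, verifying rigorously that the inextendible limit curve $\lambda$ must genuinely cross the Cauchy surface --- the intermediate-value argument using achronality of $\Sigma$ and openness of $I^{\pm}(\Sigma)$, together with the claim that the past extensions stay strictly below $\Sigma$; and second, checking that truncating at $\Sigma$ does not destroy the limit-curve property, i.e.\ that near points of $\lambda'$ the extended curves $\hat\lambda_n$ are still the original $\lambda_n$, which again rests on achronality of $\Sigma$. A subsidiary subtlety, already implicit in Geroch's treatment, is the precise equivalence between being a limit curve of a subsequence and being an accumulation point in the topology $\mathcal{T}$; it is exactly here that the second countability and metrizability of $C(\Sigma,q)$, combined with the compactness of a region such as $J^{-}(q)\cap\overline{I^{+}(\Sigma)}$, are needed.
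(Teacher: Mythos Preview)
Your approach is correct in outline and takes a genuinely different route from the paper's. The paper first fixes a single $p\in\Sigma$ and proves $C(p,q)$ compact by \emph{deleting} the point $p$ from $M$: in $M\setminus\{p\}$ the curves $\lambda_n$ become past inextendible, Theorem~\ref{thm:(Hawking-Ellis)} supplies a limit curve, and restoring $p$ forces that limit curve to end at $p$ via Theorem~\ref{thm:intersect}. Having shown each $C(p,q)$ compact, the paper then concludes that $C(\Sigma,q)=\bigcup_{p\in\Sigma}C(p,q)$ is compact by invoking Theorem~\ref{thm:compact}. You instead work directly with sequences in $C(\Sigma,q)$ whose starting points $p_n$ vary, and you achieve inextendibility not by deleting a point but by \emph{extending} each $\lambda_n$ past $p_n$ into $I^{-}(\Sigma)$.

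Your route is arguably cleaner for this particular statement: the paper's final appeal to Theorem~\ref{thm:compact} is problematic because an \emph{infinite} union of compact sets need not be compact (and $\Sigma$ is certainly not finite), so the reduction to fixed-$p$ compactness does not by itself finish the job. Your direct argument avoids this gap entirely. On the other hand, the paper's point-deletion trick keeps the analysis inside the original curve family with no need to control extensions, whereas your method requires the two gluing checks you correctly flagged: that the inextendible limit curve genuinely crosses $\Sigma$, and that truncation at $\Sigma$ preserves the limit-curve property for the \emph{original} $\lambda_n$ (not just the $\hat\lambda_n$). The latter is delicate precisely at the endpoint $p\in\Sigma$, where a small neighbourhood may meet $\hat\lambda_n$ only in its extension below $\Sigma$; you will need the achronality of $\Sigma$ and an argument that the subsequence's starting points $p_n$ accumulate at $p$ to close this. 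Both proofs share (and both leave somewhat implicit) the identification of ``limit curve of a subsequence'' with ``accumulation point in $\mathcal T$''.
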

This proof is adapted from Wald \cite{Wald} on p. 206, who took it
from Hawking and Ellis\cite{HawkingEllis}. I have merely added some
additional explanations.
\begin{proof}
By theorem \ref{stablystronglycausal}, $(M,g)$ is strongly causal.
By theorem\ref{thm:(Geroch)}, $C(p,q)$ is then second countable
for all $p,q\in M$. We begin by showing that $C(p,q)$ with $p\in\Sigma,q\in D^{+}(p)$
is compact. By theorem \ref{thm:(Bolzano-Weierstrass)}, one only
has to show that every sequence of curves $\left\{ \lambda_{n}\right\} _{n=1}^{\infty}\in C(p,q)$
has a limit curve $\lambda\in C(p,q)$. Then $C(p,q)$ is compact.
Let $\left\{ \lambda_{n}\right\} _{n=1}^{\infty}$ be a sequence of
future directed causal curves from $p\in\Sigma$ to $q\in I^{+}(p)$.
Hence, $\left\{ \lambda_{n}\right\} $ is a sequence of causal curves,
all passing through $q$ by construction. Therefore, $q$ is a limit
point of this sequence. 

Now, we temporarily remove $p$. The spacetime $(M,g)$ is Hausdorff.
Hence, we can choose an open neighborhood $V_{i}$ of a point $x_{i}\neq p$
with $p\notin V_{i}$. We can do this for all points $x\in M\backslash p$.
The union $\cup V_{i}$ is then open and $\cup V_{i}=M/p$. With $M\backslash p$
open and $g$, $(M\backslash p,g)$ is still a Lorentzian manifold.
Furthermore, $(M\backslash p,g)$ is still strongly causal since we
can define a global time function $\tilde{t}=M\backslash p\rightarrow\mathbb{R}$
with $\tilde{t}=t\forall x\in M\backslash p$ where $t$ is the time
function of $(M,g)$. Strong causality of $(M\backslash p,g)$ implies
stable causality by theorem \ref{stablystronglycausal}. Hence, the
removal of $p$ does not change the definitions of past an future
directions of the curves. However, the resulting curves $\left\{ \lambda_{n}\right\} \in M\backslash p$
do not have a past endpoint $p\in M$ anymore. Since all curves $\left\{ \lambda_{n}\right\} $
now end before $\Sigma,$ the removal of $p$ from $M$ implies that
$\Sigma$ is not a Cauchy surface anymore. 

Since $\left\{ \lambda_{n}\right\} $ are past inextendible in $M\backslash p$
and $(M\backslash p,g)$ is stably causal, we can apply the proof
in theorem \ref{thm:(Hawking-Ellis)} with some modification and get
a limit curve $\lambda\in M\backslash p$ of$\left\{ \lambda_{n}\right\} $.
One has, however, to be cautious with the compact balls $\overline{B}(q,b_{i})$
that are used with theorem \ref{thm:(Hawking-Ellis)}. The radiuses
$b_{i}$ of these balls have to be such that the removed point $p$
is never contained in the interior of any one these balls or intersected
by their closure. That is, the balls must lie entirely within $M\backslash p$.
This can be easily achieved by choosing the $\overline{B}(q,b)$ sufficiently
small. The limit curve $\lambda$ obtained from theorem \ref{thm:(Hawking-Ellis)}
is passing through $q$ and comes infinitely close to $p$. 

By theorem \ref{thm:intersect}, there can not be a past inextendible
curve in a globally hyperbolic spacetime that does not intersect $I^{-}(\Sigma)$
. Since none of the curves $\lambda_{n}$ enter $I^{-}(\Sigma)$,
the spacetime $(M\backslash p,g)$ can not a globally hyperbolic spacetime.
If we add $p$ again to $M$, $(M,g)$ becomes globally hyperbolic
be definition. All$\left\{ \lambda_{n}\right\} _{n=1}^{\infty}$ are
future directed causal curves beginning in $\Sigma$. Hence, they
do not enter $I^{-}(\Sigma).$ With $p\in\Sigma$ restored, either
$\lambda$ will remain past inextendible, or it will end in $p\in\Sigma$.
Suppose, $\lambda$ will remain past-inextendible. Then, it must enter
$\Sigma$ since in a globally hyperbolic spacetime, all past inextendible
curves enter $\Sigma$. Suppose $\lambda$ also enters $I^{-}(\Sigma)$.
$(M,g)$ is Hausdorff, therefore, we can find for any point $u\in\lambda\cap I^{-}(\Sigma)$
a neighborhood that does not lie in $p\in\Sigma$. Hence, if $\lambda$were
past inextendible,$\lambda$ could not be a limit curve of $\left\{ \lambda_{n}\right\} _{n=1}^{\infty}\in C(p,q)$,
where $p\in\Sigma$. Therefore, $\lambda$ must end in $p.$ As a
result, $\lambda\in C(p,q),$ where $p\in\Sigma,q\in I^{+}(p)$. So,
every sequence $\left\{ \lambda_{n}\right\} _{n=1}^{\infty}\in C(p,q)$
has an accumulation curve $\lambda\in C(p,q)$ and by theorem \ref{thm:(Bolzano-Weierstrass)}
$C(p,q)$ then must be compact. By definition, $C(\Sigma,q)$ is the
union of compact sets. Therefore, by theorem \ref{thm:compact-1},
$C(\Sigma,q)$ it is also compact.\end{proof}
\begin{defn}
Finally, we need some definitions for continuous, non differentiable
curves
\begin{itemize}
\item We call the length functional $\tau$ of a curve in $\tilde{C}(p,q)$
upper semicontinuous in $\tilde{C}(p,q)$ if for each $\lambda\in\tilde{C}(p,q)$
given $\epsilon>0$ there exists an open neighborhood $O\subset\tilde{C}(p,q)$
of $\lambda$ such that for all $\lambda'\in O$, we have 
\begin{equation}
\tau(\lambda')\geq\tau(\lambda)+\epsilon
\end{equation}
 where $\tau$ is the length functional. 
\item For a function $\gamma\in C(p,q)$ and an open neighborhood $O\subset C(p,q)$
of $\gamma$ we define 
\begin{equation}
T(O)=sup\left\{ \tau(\lambda)|\lambda\in O,\lambda\in\tilde{C}(p,q)\right\} 
\end{equation}
and 
\begin{equation}
\tau(\gamma)=inf\left\{ T(O)|O\text{ is an open neighborhood of }\mu\right\} 
\end{equation}

\end{itemize}
\end{defn}
\begin{thm}
(Hawking-Ellis) Let $(M,g)$ be a strongly causal spacetime and $p,q\in M,q\in I^{+}(p)$
then, $\tau(\lambda)$ with $\lambda\in\tilde{C}(p,q)$ is upper semicontinuous.\end{thm}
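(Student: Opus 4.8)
The plan is to prove upper semicontinuity in the standard sense; the displayed inequality in the definition above should read $\tau(\lambda')\le\tau(\lambda)+\epsilon$, since the length functional can only jump downward in the limit. So, given $\lambda\in\tilde{C}(p,q)$ parametrised on $[a,b]$ and $\epsilon>0$, the goal is to produce an open set $O\subset\tilde{C}(p,q)$ with $\lambda\in O$ and $\tau(\lambda')\le\tau(\lambda)+\epsilon$ for all $\lambda'\in O$. By the definition of $\mathcal{T}$ it suffices to exhibit an open $U\subseteq M$ containing the image of $\lambda$ which is a thin enough ``tube'' around $\lambda$ and then take $O=O(U)\cap\tilde{C}(p,q)$; measuring thickness with the auxiliary positive definite Riemannian metric supplied by theorem \ref{riemannian} makes the word ``thin'' precise.

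Two local facts about a geodesically convex neighbourhood $W$ are needed, both consequences of theorem \ref{twin} together with standard properties of the local Lorentzian distance function $d_{W}$ (see \cite{Oneil}): (i) every causal curve in $W$ from $y$ to $z$ has proper time at most $d_{W}(y,z)$, the proper time of the unique geodesic from $y$ to $z$ in $W$; (ii) $d_{W}$ is continuous on $W\times W$ and obeys the reverse triangle inequality $d_{W}(y,z)\ge d_{W}(y,w)+d_{W}(w,z)$ whenever $y\le w\le z$ in $W$ (concatenate the two geodesics and apply (i)). Since the image of $\lambda$ is compact, cover it by finitely many such $W$; using strong causality, shrink them so that every causal curve meets each in a connected set, and arrange a finite subcover whose members are ordered consecutively along $\lambda$.

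Now set up the broken-geodesic approximation. For a partition $a=u_{0}<\dots<u_{N}=b$ fine enough that each arc $\lambda([u_{j-1},u_{j}])$ lies in one convex neighbourhood $W_{j}$, put $L=\sum_{j=1}^{N}d_{W_{j}}(\lambda(u_{j-1}),\lambda(u_{j}))$. By (i), $L\ge\tau(\lambda)$; by (ii), $L$ does not increase under refinement; and a standard Taylor estimate for $d_{W_{j}}$ along the smooth curve gives $\sum_{j}(d_{W_{j}}(\lambda(u_{j-1}),\lambda(u_{j}))-\tau(\lambda|_{[u_{j-1},u_{j}]}))\to0$ as the mesh tends to $0$, hence $L\to\tau(\lambda)$. (This, and only this, uses smoothness of $\lambda$; it also covers the case $\tau(\lambda)=0$.) Fix such a partition with $L<\tau(\lambda)+\epsilon/2$. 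Then, using continuity of $d_{W_{j}}$, choose small neighbourhoods $V_{j}$ of $\lambda(u_{j})$ with $V_{j}\subset W_{j}\cap W_{j+1}$ and so small that $d_{W_{j}}(y,z)<d_{W_{j}}(\lambda(u_{j-1}),\lambda(u_{j}))+\epsilon/(2N)$ for all $y\in V_{j-1}$, $z\in V_{j}$ (with the convention $V_{0}=\{p\}$, $V_{N}=\{q\}$); finally pick the tube $U$ thin enough that every causal curve $\lambda'\subset U$ from $p$ to $q$ threads $V_{1},\dots,V_{N-1}$ in order, with the arc of $\lambda'$ between two consecutive passages confined to the corresponding $W_{j}$. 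Set $O=O(U)\cap\tilde{C}(p,q)$, which is open and contains $\lambda$.

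The estimate is then immediate: for $\lambda'\in O$ write $\lambda'$ as a concatenation of arcs $\lambda'_{j}$ from $y_{j-1}\in V_{j-1}$ to $y_{j}\in V_{j}$ ($y_{0}=p$, $y_{N}=q$), each inside $W_{j}$; by (i), $\tau(\lambda'_{j})\le d_{W_{j}}(y_{j-1},y_{j})\le d_{W_{j}}(\lambda(u_{j-1}),\lambda(u_{j}))+\epsilon/(2N)$, and summing over $j$ gives $\tau(\lambda')\le L+\epsilon/2<\tau(\lambda)+\epsilon$. The main obstacle is the tube construction in the previous paragraph: one must be sure that once $U$ is thin enough, every causal curve inside it really is forced to pass through the slices $V_{j}$ in the right order and to keep the intervening arcs inside the $W_{j}$. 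This is precisely where strong causality is indispensable, since it prevents a causal curve trapped in the thin tube from doubling back or jumping between non-adjacent neighbourhoods, and I would carry it out along the lines of Hawking and Ellis \cite{HawkingEllis}, using the consecutively ordered finite subcover of convex neighbourhoods that meet causal curves in connected sets.
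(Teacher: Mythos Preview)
Your proof is correct but follows a genuinely different route from the paper's. The paper reproduces the Hawking--Ellis argument: parametrise $\lambda$ by proper time, foliate a tubular neighbourhood of $\lambda$ by the spacelike hypersurfaces generated by geodesics orthogonal to $\lambda$, and let $F$ be the resulting local time function. For a nearby curve $\gamma$ with tangent $l^{\mu}$, decomposing $l^{\mu}$ along $\nabla^{\mu}F$ and its orthogonal complement yields the pointwise bound $\sqrt{-l_{\mu}l^{\mu}}\le(-\nabla^{\mu}F\nabla_{\mu}F)^{-1/2}$; since the right-hand side is continuous and equals $1$ on $\lambda$, one shrinks the tube so that it is at most $1+\epsilon/\tau(\lambda)$ and integrates from $0$ to $\tau(\lambda)$.

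Your argument instead uses broken-geodesic approximation: bound $\tau(\lambda)$ from above by a finite sum $L$ of local Lorentzian distances $d_{W_{j}}$, perturb the division points using continuity of each $d_{W_{j}}$, and then bound any nearby causal curve segment-by-segment via the local maximality of geodesics (theorem \ref{twin}). The paper's method is slicker once the foliation exists, but it tacitly assumes $\tau(\lambda)>0$ (proper-time parametrisation and the division by $\tau(\lambda)$), so null $\lambda$ needs a separate remark; your approach handles $\tau(\lambda)=0$ uniformly, as you observe. Conversely, the paper's pointwise integrand bound gives explicit control of the tube width in terms of $\epsilon$, whereas your estimate buries this in the continuity of finitely many $d_{W_{j}}$. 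Your identification of the tube-threading step as the place where strong causality is actually used is accurate; in the paper's version strong causality is only invoked implicitly, to guarantee that a curve $\gamma\subset U'$ is parametrised monotonically by $F$.
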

\begin{proof}
The proof is taken from Wald\cite{Wald}, who took it from the book
of Hawking and Ellis\cite{HawkingEllis}. I only have added some additional
explanations. Let $\lambda\in\tilde{C}(p,q)$ be parametrized by proper
time 
\begin{equation}
t=\int_{t=0}^{t(p)}\sqrt{-T^{\mu}T_{\mu}}dt\label{eq:time}
\end{equation}
 with a tangent $T$. Within a geodesically convex neighborhood of
each point $r\in\lambda$, the spacelike geodesics orthogonal to $T$
form a three dimensional spacelike hypersurface.Within a small open
neighborhood $U$ of $\lambda$, a unique hypersurface will pass through
each point of $U$. On $U$ we can define a function $F(p)=t$ which
corresponds to the proper time value of$\lambda$ at the hypersurface
in which $p$ lies. From eq. (\ref{eq:time}), we get$-\nabla^{\mu}F=T^{\mu}$
on $\lambda$. Since $T^{\mu}$ is timelike, so is $\nabla^{\mu}F$
on $\lambda$. Furthermore, the tangent vector is normalized, and
so 
\begin{equation}
T^{\mu}T_{\mu}=\nabla^{\mu}F\nabla_{\mu}F=-1
\end{equation}
on $\lambda$.

Let $\gamma\in\tilde{C}(p,q)$ with $\gamma\in U$. We parametrize
$\gamma$ by $F$ and let it have a tangent $l^{\mu}$. Then $l^{\mu}\nabla_{\mu}F=1$.
We can compose $l^{\mu}$ in a spacelike and timelike part 
\begin{equation}
l^{\mu}=\alpha\nabla^{\mu}F+m^{\mu}
\end{equation}
where $m^{\mu}\nabla_{\mu}F=0$ in order to fulfill $l^{\mu}\nabla_{\mu}F=1$,
and thus $m^{\mu}$ is spacelike. Then 
\begin{equation}
l^{\mu}\nabla_{\mu}F=1=\alpha\nabla^{\mu}F\nabla_{\mu}F
\end{equation}
and therefore 
\begin{equation}
l^{\mu}=\frac{\nabla^{\mu}F}{\nabla^{\mu}F\nabla_{\mu}F}+m^{\mu}
\end{equation}
and 
\begin{equation}
l_{\mu}l^{\mu}=\frac{1}{\nabla^{\mu}F\nabla_{\mu}F}+m_{\mu}m^{\mu}
\end{equation}
since $m_{\mu}m^{\mu}$is spacelike it is $\geq0$ therefore 
\begin{eqnarray}
\sqrt{-l_{\mu}l^{\mu}} & \leq & \sqrt{-\frac{1}{\nabla^{\mu}F\nabla_{\mu}F}}
\end{eqnarray}

The function $\nabla^{\mu}F$ is continuous with 
\begin{equation}
\nabla^{\mu}F\nabla_{\mu}F=-1
\end{equation}
on $\lambda$ . Therefore, given an $\epsilon>0$, we can choose a
neighborhood $U'\subset U$ of $\lambda$ on which 
\begin{equation}
\sqrt{-\frac{1}{\nabla^{\mu}F\nabla_{\mu}F}}\leq1+\frac{\epsilon}{\tau(\lambda)}
\end{equation}
Hence, for any $\gamma\in\tilde{C}(p,q)$ contained in $U',$ after
we switch the parametrization to the arc length where $\frac{d\tau}{dt}=1$:
\begin{eqnarray}
\tau(\gamma) & = & \int_{0}^{t(p)}\sqrt{-l^{a}l_{a}}dF\nonumber \\
 & = & \int_{0}^{t(p)}\sqrt{-l^{a}l_{a}}\frac{d\tau}{dt}dt=\int_{0}^{\tau(\lambda)}\sqrt{-l^{a}l_{a}}d\tau\nonumber \\
 & \leq & \left(1+\frac{\epsilon}{\tau(\lambda)}\right)\tau(\lambda)=\tau(\lambda)+\epsilon
\end{eqnarray}
As a result $\tau(\gamma)\leq\tau(\lambda)+\epsilon$ for all $\lambda\in\tilde{C}(p,q)$
and $\gamma\in\tilde{C}(p,q)$ where $\gamma\in U$,  with $U$ as
a neighborhood of $\lambda$. This argument holds for all curves in
$\tilde{C}(p,q)$. Therefore, it also holds for curves in $\tilde{C}(\Sigma,q)$. \end{proof}
\begin{thm}
\label{thm:maximal }Let $(M,g)$ be a global hyperbolic spacetime
with Cauchy surface $\Sigma$ and a point $p\in D^{+}(\Sigma)$. Then
among the timelike curves connecting $p$ to $\Sigma$ there exists
a cube $\gamma$ with maximal length. This is a timelike geodesic
orthogonal to $\Sigma$. 
\end{thm}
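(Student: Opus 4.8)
The plan is to produce the maximizing curve as the point at which an upper semicontinuous functional attains its supremum on a compact space, and then to upgrade its regularity by purely local arguments. We may assume $p\in I^{+}(\Sigma)$, for otherwise no timelike curve joins $p$ to $\Sigma$ and there is nothing to prove; in that case $C(\Sigma,p)$ is nonempty and contains timelike curves of positive length. By Theorem \ref{thm:compact-1} the topological space $(C(\Sigma,p),\mathcal{T})$ is compact. By the preceding Hawking--Ellis theorem the length functional $\tau$ is upper semicontinuous on $\tilde{C}(\Sigma,p)$, and since $\tau$ is defined on continuous curves by $\tau(\gamma)=\inf_{O}T(O)$ with $T(O)$ monotone in $O$, this upper semicontinuity automatically extends to all of $C(\Sigma,p)$.

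Next I would apply the upper semicontinuous variant of the extreme value theorem (Theorem \ref{thm:attainsmaximum}): an upper semicontinuous function on a compact topological space is bounded above and attains its supremum. The short argument is that if $L:=\sup_{C(\Sigma,p)}\tau$ were not attained, then the open sets $\{\gamma:\tau(\gamma)<c\}$ with $c<L$ would form an open cover of $C(\Sigma,p)$; a finite subcover would give some $c_{0}<L$ with $\tau(\gamma)<c_{0}$ for all $\gamma$, contradicting the definition of $L$ (and the same cover argument with $c\in\mathbb{N}$ shows $L<\infty$). Hence there is a curve $\gamma\in C(\Sigma,p)$ with $\tau(\gamma)=L>0$.

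It remains to identify $\gamma$. For regularity I would argue locally: given any point $r$ in the interior of $\gamma$, choose a geodesically convex neighborhood $V$ of $r$ as in Theorem \ref{twin} and two parameter values giving points $r',r''\in\gamma\cap V$ with $r'$ preceding $r''$. By Theorem \ref{twin}, among piecewise smooth timelike curves in $V$ from $r'$ to $r''$ the unique timelike geodesic has the greatest length, with equality only for that geodesic. If the portion of $\gamma$ between $r'$ and $r''$ were not this geodesic, replacing it by the geodesic would strictly increase $\tau(\gamma)$ after an arbitrarily small smoothing, contradicting maximality of $\tau(\gamma)$; hence $\gamma$ agrees locally with an unbroken timelike geodesic. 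Therefore $\gamma$ is a timelike geodesic (in particular it has no corners and no null segments, consistent with $\tau(\gamma)>0$). By Theorem \ref{maximizelength} there is no conjugate point to $\Sigma$ strictly between $\Sigma$ and $p$ along $\gamma$, and finally Theorem \ref{orthogonal} shows that a length-maximizing timelike geodesic from $\Sigma$ to $p\in I^{+}(\Sigma)$ must be orthogonal to $\Sigma$. This gives the claim.

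The step I expect to be the main obstacle is the passage from the abstract maximizer in $C(\Sigma,p)$ to a smooth timelike geodesic. One must check that the continuous-curve functional $\tau$ really is increased by the local geodesic replacement --- i.e. that on piecewise smooth curves $\tau$ coincides with the proper-time integral, that the smoothed replacement curve still lies in $C(\Sigma,p)$ and beats $\gamma$ under $\tau$, and that a locally length-maximizing causal curve cannot contain null pieces. The compactness and upper-semicontinuity inputs are already available from the earlier results; the regularity bookkeeping (and the harmless reduction to the case $p\in I^{+}(\Sigma)$) is where the remaining care is needed.
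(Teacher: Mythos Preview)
Your proposal is correct and follows essentially the same approach as the paper: compactness of $C(\Sigma,p)$, upper semicontinuity of $\tau$, an extreme-value argument to produce a maximizer, local geodesic replacement in convex neighborhoods via Theorem~\ref{twin} to force the maximizer to be a timelike geodesic, and Theorem~\ref{orthogonal} for orthogonality to $\Sigma$. Two minor remarks: your explicit upper-semicontinuous version of the extreme-value theorem is in fact more careful than the paper, which simply invokes Theorem~\ref{thm:attainsmaximum} as if $\tau$ were continuous; and your appeal to Theorem~\ref{maximizelength} is superfluous here, since that result runs in the opposite direction (a conjugate point destroys maximality) and is not needed to establish existence of the maximizing geodesic.
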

The first paragraph of this proof is from Wald\cite{Wald}, the rest
is an own attempt. 
\begin{proof}
The length functional $\tau(\gamma)$ of a curve $\gamma\in C(\Sigma,q)$
is a continuous function. Because of theorem \ref{thm:compact-1}
$C(\Sigma,q)$ is compact and due to theorem \ref{thm:attainsmaximum}
the length functional then becomes maximal for a curve in $C(\Sigma,q)$.
We have to show that this maximum is attained for a timelike geodesic.
We begin by proving that the timelike geodesic maximizes length compared
to any other continuous curve in a geodesically convex neighborhood.
Given a geodesically convex neighborhood $O,$ and two points $p,q$
in $O$, by theorem \ref{twin} the unique geodesic $\gamma$ connecting
$p,q$ has greater than or equal length than any piecewise smooth
curve $\mu$ connecting $p,q$. Therefore, by upper semicontinuity,
we must have 
\begin{equation}
\tau(\mu)\leq\tau(\gamma)
\end{equation}
for any continuous curve $\mu$. If equality held with $\mu\neq\gamma$
define a point $u\in\mu,u\notin\gamma$. Let $\gamma_{1}$ be the
geodesic segment connecting $p,u$ and $\gamma_{2}$ be the geodesic
segment connecting $u,q$. Since by theorem \ref{twin}, $\gamma_{1}$
maximizes length from $p$ to $u$ and $\gamma_{2}$maximizes length
from $u$ to $q$, we have 
\begin{equation}
\tau(\gamma_{1})+\tau(\gamma_{2})\geq\tau(\mu)=\tau(\gamma)
\end{equation}
which contradicts theorem \ref{twin} that implies $\gamma$ having
greater length than any other piecewise smooth curve connecting $p,q$.
Therefore we have $\tau(\mu)<\tau(\gamma)$ on $O$.

Let $\gamma$ be an arbitrary curve in $C(p,q)$. With the construction
in theorem \ref{thm:separable}, we can cover $\gamma$ with a countable
set of points $\{p_{ij}\}$ in geodesically convex open neighborhoods
$O_{i,j}$, where $\{p_{i,j}\}\subset\{p_{i,j+1}\}$. The pairs of
points $p_{i,j},p_{i+1,j}$ can be piecewise connected by unique timelike
geodesics $\lambda_{i,j}$ of theorem \ref{twin}. The piecewise curve
$\lambda_{j}=\cup\lambda_{ij}$is then a trip in $\kappa(p,q)\subset C(p,q)$
with a length 
\begin{equation}
\tau(\lambda_{j})=\sum_{i}\tau(\lambda_{ij})
\end{equation}
The length fiction $\tau(\gamma)$ of the original curve $\gamma$
can also be described by a sum of lengths of its own segments $\gamma_{ij}$
that connect the points $p_{i,j},p_{i+1,j}\in O_{i,j}$: 
\begin{equation}
\tau(\gamma)=\sum_{i}\tau(\gamma_{ij})
\end{equation}
By the first paragraph, we have 
\begin{equation}
\tau(\lambda_{ij})\geq\tau(\gamma_{ij})\forall i,j\label{eq:curvelength2}
\end{equation}
and therefore 
\begin{equation}
\tau(\lambda_{j})=\sum_{i}\tau(\lambda_{ij})\geq\sum_{i}\tau(\gamma_{ij})=\tau(\gamma)\forall j,\label{eq:curvelength}
\end{equation}
with equality in eqs (\ref{eq:curvelength2}) and (\ref{eq:curvelength})
if and only if $\gamma$ is a timelike geodesic. In this case, the
segments $\lambda_{ij}$ would all be lying on $\gamma_{ij}$ for
all $i,j$, resulting in the equality of the length functions $\tau(\lambda_{j})$
and $\tau(\gamma)$. 

Since $\{p_{i,j}\}\subset\{p_{i,j+1}\}$ are in convex neighborhoods,
repeated application of theorem \ref{twin} leads for a continuous
curve $\gamma$ to the following expression for the approximating
trips: 
\begin{equation}
\tau(\lambda_{j})\geq\tau(\lambda_{j+1})\label{eq:curvelength1}
\end{equation}
Again, in eq. (\ref{eq:curvelength1}), we have equality in if and
only if $\gamma$ is a timelike geodesic. In the latter case, the
segments $\lambda_{ij}$ and $\lambda_{j+1}$ would be just a description
of the same curve $\gamma$ and thereby have equal length. 

For a continuous curve $\gamma$ eqs. (\ref{eq:curvelength1}) and
(\ref{eq:curvelength}) imply that the length

\begin{equation}
\tau(\lambda)=\lim_{j\rightarrow\infty}\sum_{i}\tau(\lambda_{ij})
\end{equation}
of the limit curve 
\begin{equation}
\lambda=\lim_{j=\infty}\lambda_{j}
\end{equation}
is approximating the length of $\gamma$ from above, i.e. 
\begin{equation}
\tau(\lambda)\rightarrow\tau(\gamma),
\end{equation}
 where $\tau(\lambda)=\tau(\gamma)$ if and only if $\gamma$ is a
timelike geodesic. 

Assume $\gamma$ were, on some segment $\gamma_{ij}$ connecting $p_{i,j}$
and $p_{i+1,j}$, not a timelike geodesic. Then, one could get a longer
curve by replacing the segment $\gamma_{ij}$ by the corresponding
segment $\lambda_{ij}$ of our construction. Hence $\tau$ would not
reach its maximum on $\gamma$.

Now assume $\gamma$ would be a timelike geodesic. Then, we have equality
in eqs (\ref{eq:curvelength1}) and (\ref{eq:curvelength}),  which
yields 
\begin{equation}
\tau(\lambda)=\lim_{j\rightarrow\infty}\sum_{i}\tau(\lambda_{ij})=\sum_{i}\tau(\lambda_{i1})=\tau(\lambda_{1})=\tau(\gamma)
\end{equation}
for the entire curve. Hence, $\tau(\gamma)$ would have reached its
maximum, since on each segment $\lambda_{ij},$ the length $\tau(\lambda_{ij})$
is at its maximum. 

Finally, a length maximizing geodesic must be orthogonal to $\Sigma$,
since by theorem \ref{orthogonal}, a length maximizing geodesic starting
on $\Sigma$ must be orthogonal to $\Sigma$ or would otherwise be
possible to increase its length. One can choose some geodesically
convex open neighborhood $U,$ where $U\cap\Sigma\neq\emptyset$ and
$U\cap\gamma\neq\emptyset$. Then one can connect a point $u\in\gamma$
where $u\in U$ to a point $p\in\Sigma$ with a geodesic segment orthogonal
to $\Sigma$. The curve obtained by following $\gamma$ from $q$
to $u$ and the segment connecting $u$ to $\Sigma$ with a geodesic
segment orthogonal to $\Sigma$, would by theorem \ref{orthogonal}
have bigger length than $\gamma$. 
\end{proof}

\subsubsection{The Penrose and Penrose-Hawking singularity theorems of General Relativity}
\begin{prop}
Let $(M,g)$ be a global hyperbolic spacetime with Cauchy surface
where $R_{\mu\nu}\zeta^{\mu}\zeta^{\nu}\geq0$ for all timelike $\zeta$
Suppose there exists a Cauchy surface for which the trace of the extrinsic
curvature $K<0$. Then no future directed timelike curve can be extended
beyond proper time $\tau_{0}=3/|K|$\end{prop}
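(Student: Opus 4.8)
The plan is to prove this by contradiction, concatenating the three preceding results: Theorem~\ref{thm:maximal } on the existence of a maximal-length timelike curve to a Cauchy surface, Theorem~\ref{conjugate} on the appearance of a point conjugate to $\Sigma$ within proper time $3/|K|$, and Theorem~\ref{maximizelength} saying a timelike geodesic with an interior conjugate point fails to maximize length. So suppose, contrary to the claim, that there is a future directed timelike curve $\gamma$ issuing from $\Sigma$ and reaching proper time beyond $\tau_{0}=3/|K|$; say $\gamma$ runs from $p\in\Sigma$ to $q\in I^{+}(\Sigma)$ with $\tau(\gamma)>\tau_{0}$. Since $\Sigma$ is a Cauchy surface we have $q\in D^{+}(\Sigma)$, so Theorem~\ref{thm:maximal } provides a timelike curve $\lambda$ from $q$ to $\Sigma$ of maximal length, which is moreover a timelike geodesic orthogonal to $\Sigma$. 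In particular $\tau(\lambda)\geq\tau(\gamma)>\tau_{0}$, so, parametrized by proper time measured from $\Sigma$, $\lambda$ is defined past $\tau_{0}$.

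Next I would feed $\lambda$ into Theorem~\ref{conjugate}, whose hypotheses are exactly those assumed here --- global hyperbolicity, $R_{\mu\nu}\zeta^{\mu}\zeta^{\nu}\geq0$ for all timelike $\zeta$, and $K=\theta<0$ on $\Sigma$. Since $\lambda$ extends past proper time $\tau_{0}$, that theorem yields a point $r$ on $\lambda$ conjugate to $\Sigma$ at some proper time $\tau_{r}\leq 3/|K|=\tau_{0}$. As $\tau_{r}\leq\tau_{0}<\tau(\lambda)$, the point $r$ lies strictly between $\Sigma$ and $q$ along $\lambda$, so Theorem~\ref{maximizelength} applies and tells us that $\lambda$ does not maximize length among the timelike curves connecting $\Sigma$ to $q$ --- contradicting the choice of $\lambda$. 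Therefore no such $\gamma$ exists: every future directed timelike curve issuing from $\Sigma$ has proper length at most $\tau_{0}=3/|K|$, which is the assertion (and in particular every future directed timelike geodesic orthogonal to $\Sigma$ is incomplete).

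The heavy lifting --- Raychaudhuri's equation forcing the expansion to $-\infty$, the existence and orthogonality of the maximizing geodesic, and the compactness of $C(\Sigma,q)$ behind it --- is already packaged in Theorems~\ref{conjugate}, \ref{maximizelength} and \ref{thm:maximal }, so the proof itself is essentially just their assembly. The one point that needs care is the meaning of ``$K<0$'': for $\tau_{0}=3/|K|$ to be a single number one must really assume $K\leq -c<0$ for a constant $c$ on $\Sigma$ (equivalently $\inf_{\Sigma}|K|=c>0$), with $\tau_{0}=3/c$; then the initial value $\theta_{0}=K(p)\leq -c$ at the foot of every orthogonal geodesic is what lets Theorem~\ref{conjugate} produce a conjugate point uniformly by proper time $\tau_{0}$. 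I expect this bookkeeping, together with the minor check that $r$ is genuinely an interior conjugate point so that Theorem~\ref{maximizelength} is applicable, to be the only obstacles; no new geometric input is needed beyond the preceding lemmas.
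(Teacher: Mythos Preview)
Your proof is correct and follows essentially the same route as the paper: assume a timelike curve from $\Sigma$ exceeds proper time $\tau_{0}$, invoke Theorem~\ref{thm:maximal } to replace it by a length-maximizing orthogonal geodesic, then use Theorem~\ref{conjugate} and Theorem~\ref{maximizelength} to reach a contradiction. Your remark that $K<0$ should really mean a uniform bound $K\leq -c<0$ for $\tau_{0}$ to be well-defined is a valid clarification that the paper leaves implicit.
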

\begin{proof}
Suppose there exists a future directed timelike curve $\lambda$ from
$\Sigma$ that can be defined for some proper time greater than $\tau_{0}$.
Then, one can define the curve up to a point $p=\lambda(\tau_{0}+\epsilon)$
with $\epsilon>0$. According to theorem \ref{thm:maximal }, there
would exist a timelike geodesic $\gamma$ with maximal length that
connects $\Sigma$ to $p,$ with $\gamma$ being orthogonal to $\Sigma$.
Because $\tau(\lambda(p))=\tau_{0}+\epsilon$, we would have $\tau(\gamma(p))\geq\tau_{0}+\epsilon$.
Theorem \ref{conjugate} implies that $\gamma$ would develop conjugate
points at $\tau_{0}$. Finally, theorem \ref{maximizelength} implies
that then $\gamma$ then fails to maximize length. Hence, no timelike
curve can be extended beyond $\tau_{0}$. 
\end{proof}
The most problematic assumption in this theorem is that the spacetime
should be globally hyperbolic. However, Hawking and Penrose were able
to eliminate this condition in \cite{Hawking}.
\begin{defn}
To state the Penrose-Hawking singularity theorem, one needs the following
definitions: 
\begin{itemize}
\item An edge of a closed achronal set $\Sigma$ is the set of all points
$p\in\Sigma$ such that every open neighborhood $O$ of $p$ contains
a point $q\in I^{+}(p)$. 
\item A spacetime $(M,g)$ satisfies the timelike generic condition if each
timelike geodesic with tangent $\zeta^{\mu}$possesses at least on
point where $R_{\mu\nu}\zeta^{\mu}\zeta^{\nu}\neq0$ . 
\item $(M,g)$ satisfies the null generic condition if every null geodesic
with tangent $k^{\mu}$possesses at least one point where either $R_{\mu\nu}k^{\mu}k^{\nu}\neq0$
or 
\[
k_{[\mu}R_{\nu]\rho\sigma[\lambda}k_{\gamma]}k^{\rho}k^{\sigma}\neq0
\]
. 
\item A compact two dimensional smooth spacelike submanifold $\Omega$ where
the expansion $\theta$ along all ingoing and outgoing future directed
null geodesics orthogonal to $\Omega$ is negative, is called a trapped
surface. 
\end{itemize}
\end{defn}
Then, the Penrose-Hawking singularity theorem \cite{Hawking} reads:
\begin{prop}
Let $(M,g)$ be a spacetime where 1) $R_{\mu\nu}\zeta^{\mu}\zeta^{\nu}\geq0$
for all timelike and null $\zeta$ 2) The timelike and null generic
conditions are satisfied, 3) no closed timelike curves exist, 4) one
of the following a) $(M,g)$ possesses a compact achronal set without
edge, b) $(M,g)$ possesses a trapped null surface, c) there exists
a point $p\in M$ such that the expansion of future or past directed
null geodesics emanating from p becomes negative along each geodesic
in this congruence. Then $(M,g)$ must contain at least one incomplete
timelike or null geodesic. 
\end{prop}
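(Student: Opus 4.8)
The plan is to argue by contradiction: assume $(M,g)$ is causally geodesically complete --- every timelike and every null geodesic extends to all values of an affine parameter --- and contradict hypotheses 1)--4). Everything rests on a \emph{conjugate point lemma}: under 1) and 2), every complete causal geodesic contains a pair of conjugate points. For a timelike geodesic one reruns the Raychaudhuri argument behind Theorem~\ref{conjugate}, but with the generic condition in 2) replacing the hypothesis $K<0$: were there no conjugate pair, one could choose along the geodesic, for all proper time, a vorticity-free Jacobi congruence, and 2) forces the shear term to become strictly positive past the generic point, so that $d\theta/d\tau\le-\tfrac13\theta^{2}-\sigma_{\mu\nu}\sigma^{\mu\nu}$ still drives $\theta\to-\infty$ in finite proper time, i.e.\ $\det A\to0$ exactly as in the proof of Theorem~\ref{conjugate}. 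For a null geodesic one first needs the analogue of Definition~\ref{Definitions2} in the two-dimensional screen transverse to the null tangent $k^{\mu}$, the null Raychaudhuri equation $d\theta/d\lambda=-\tfrac12\theta^{2}-\hat\sigma_{ab}\hat\sigma^{ab}-R_{\mu\nu}k^{\mu}k^{\nu}$, and the null generic condition $k_{[\mu}R_{\nu]\rho\sigma[\lambda}k_{\gamma]}k^{\rho}k^{\sigma}\neq0$ to make the null shear eventually nonzero; completeness then again yields a conjugate pair within bounded affine parameter.

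Next I would record the \emph{maximality lemma}: a causal geodesic with a conjugate point strictly interior to a segment does not maximize length on that segment. For timelike geodesics this is Theorem~\ref{maximizelength}; for null geodesics the statement ``a point conjugate to $p$ before $q$ implies $q\in I^{+}(p)$'' is proved by the broken-curve deformation sketched under Theorem~\ref{maximizelength}, so in particular an \emph{achronal} null geodesic has no interior conjugate point. Then I would introduce trapped sets: a nonempty achronal set $S$ is \emph{future-trapped} if $E^{+}(S):=\overline{J^{+}(S)}\setminus I^{+}(S)$ is compact (and \emph{past-trapped} symmetrically); here $E^{+}(S)=\dot I^{+}(S)$ is an achronal boundary, hence a closed edgeless achronal hypersurface, so $\mathrm{int}\,D^{+}(E^{+}(S))$ is globally hyperbolic and is a neighbourhood to the future of $E^{+}(S)$. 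Each alternative in 4) furnishes a trapped set: 4a) a compact achronal edgeless set is itself future- and past-trapped; 4b) the trapped surface $\Omega$ has $E^{+}(\Omega)$ inside the image of its two future null-normal congruences run only as far as the affine time at which the conjugate point lemma produces caustics --- a continuous image of a compact set, and closed, hence compact; 4c) is the same with $\Omega$ replaced by the point $p$, giving $E^{+}(p)$ or $E^{-}(p)$ compact.

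The topological core is that a causally geodesically complete spacetime satisfying 1), 2) and the chronology condition 3) admits \emph{no} future-trapped set (and, by time reversal, none past-trapped), contradicting the previous step. Let $S$ be future-trapped. If $H^{+}(E^{+}(S))\neq\emptyset$, one of its null generators is a past-inextendible --- hence, by completeness, affinely complete --- achronal null geodesic, so by the conjugate point lemma it has an interior conjugate point, contradicting the maximality lemma. If $H^{+}(E^{+}(S))=\emptyset$, then every future-inextendible causal curve meeting $D^{+}(E^{+}(S))$ stays in it, so a future-directed timelike curve $\gamma$ issuing from a point of $E^{+}(S)$, extended to be future-inextendible, remains in $\mathrm{int}\,D^{+}(E^{+}(S))$. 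Applying the maximal-curve construction of Theorem~\ref{thm:maximal } with $E^{+}(S)$ in the role of $\Sigma$ to the points $\gamma(n)$ yields maximal timelike geodesics $\gamma_{n}$ orthogonal to $E^{+}(S)$ emanating from the compact set $E^{+}(S)$, and the limit-curve argument of Theorem~\ref{thm:(Hawking-Ellis)} (time-reversed) extracts a future-inextendible limit geodesic $\gamma_{\infty}$ issuing orthogonally from $E^{+}(S)$. By completeness $\gamma_{\infty}$ has unbounded proper time; by upper semicontinuity of length it maximizes proper time from $E^{+}(S)$ to each of its points; and by the conjugate point lemma it carries an interior conjugate pair --- once more contradicting the maximality lemma. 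Hence $(M,g)$ is not causally geodesically complete, which is the assertion.

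The main obstacle is this last step, and within it the weakness of hypothesis 3): one is given only the chronology condition, whereas the domain-of-dependence, Cauchy-horizon and limit-curve tools invoked above were developed under strong causality or global hyperbolicity. A genuine sub-argument is needed showing that 1), 2), 3) together with a trapped set already force enough causality --- strong causality on the compact set $E^{+}(S)$ and on $H^{+}(E^{+}(S))$ --- to exclude causal curves imprisoned in $E^{+}(S)$; without it neither the global hyperbolicity of $\mathrm{int}\,D^{+}(E^{+}(S))$ nor the extraction of $\gamma_{\infty}$ is licensed, and one must also cope with $E^{+}(S)$ being only a $C^{0}$ hypersurface when speaking of geodesics orthogonal to it. By comparison, the null-congruence version of Raychaudhuri in the conjugate point lemma is routine, though it does use transverse-screen machinery not set up in the excerpt.
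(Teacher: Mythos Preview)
The paper does not actually prove this proposition: it is stated without proof as the Hawking--Penrose theorem and attributed to \cite{Hawking}, after which the text moves on to discuss the strong energy condition and trapped surfaces. So there is no ``paper's own proof'' to compare against.

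That said, your sketch is a faithful outline of the standard argument in \cite{Hawking} and \cite{HawkingEllis}: the reduction to a conjugate-point lemma via the timelike and null Raychaudhuri equations with the generic conditions, the maximality lemma, the unification of cases 4a)--4c) under the notion of a future-trapped set with compact $E^{+}(S)$, and the dichotomy on whether $H^{+}(E^{+}(S))$ is empty. You have also put your finger on exactly the right weak spot. The step where chronology (no closed timelike curves) is promoted to strong causality on the relevant compact sets is the genuine technical heart of the Hawking--Penrose proof and is \emph{not} obtainable from the tools developed in this paper, which all assume stable or strong causality from the outset (Theorems~\ref{thm:(Geroch)2}, \ref{thm:(Hawking-Ellis)}, \ref{thm:compact-1}, \ref{thm:maximal }). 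In \cite{HawkingEllis} this is handled by a separate lemma showing that if strong causality failed on $E^{+}(S)$ one could construct an inextendible achronal null geodesic imprisoned in a compact set, which the conjugate-point lemma forbids; you would need to supply that argument explicitly rather than just flag it. The other caveat you raise --- that $E^{+}(S)$ is only a $C^{0}$ achronal hypersurface, so ``orthogonal geodesic'' needs care --- is real but minor: one works instead with causal geodesics from points of $E^{+}(S)$ that maximize Lorentzian distance, and the limit-curve extraction goes through as you describe.
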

The generic conditions can be assumed to hold in any reasonable model
of the universe. Similarly, closed timelike curves must reasonably
be excluded for a physical spacetime. 

Contracting Einstein's equation with matter, 
\begin{equation}
R_{\mu\nu}-\frac{1}{2}Rg_{\mu\nu}=8\pi T_{\mu\nu}
\end{equation}
we get 
\begin{equation}
R=-8\pi T,
\end{equation}
or
\begin{equation}
R_{\mu\nu}=8\pi(T_{\mu\nu}-\frac{1}{2}Tg_{\mu\nu})
\end{equation}
we get from $R_{\mu\nu}\zeta^{\mu}\zeta^{\nu}\geq0$
\begin{equation}
8\pi\left(T_{\mu\nu}\zeta^{\mu}\zeta^{\nu}-\frac{1}{2}T\zeta_{\nu}\zeta^{\nu}\right)\geq0
\end{equation}
which is called strong energy condition. In general, the energy momentum
tensor $T_{\mu\nu}$ is not diagonalizable. However, it seems that
in most cases an energy momentum tensor for a perfect fluid
\begin{eqnarray*}
T_{\mu\nu} & = & (\rho+p)\zeta_{\mu}\zeta_{\nu}+pg_{\mu\nu}
\end{eqnarray*}
is a reasonable assumption, where $\zeta$ is the fluid's four velocity,
$\rho$ the energy density and $p$ the pressure, the latter being
assumed to be equal in every direction. Then, $T_{\mu\nu}\zeta^{\mu}\zeta^{\nu}=\rho$
and $T_{\mu\nu}g^{\mu\nu}\zeta_{\nu}\zeta^{\nu}=-(-(\rho+p)+4p)=+\rho-3p$.
Therefore, the strong energy condition becomes:
\begin{equation}
(\rho-\frac{1}{2}\rho+\frac{1}{2}3p)\geq0
\end{equation}
or 
\begin{equation}
\rho+3p\geq0
\end{equation}
If we neglect shear and rotation, the Raychaudhuri equation is then
equivalent to 
\begin{equation}
\frac{d\theta}{d\tau}=-8\pi(\rho+3p)\leq0.
\end{equation}
So, the strong energy condition is responsible for gravity being attractive.
This is a condition that must have been violated during the inflation
era of the universe, however, it seems to hold for the universe that
we observe today

A trapped surface is formed in the interior of the solutions for black
holes. The latter can be indirectly observed by their orbiting stars
\cite{MTW}. The trajectory of those stars reveal a super massive
black hole with large Schwarzschild radius in the center of most galaxies.
By the singularity theorems above, it seems inevitable that by classical
general relativity, a singularity must form. One can show, see e.g.
\cite{MTW} p. 862 that all matter inside a Schwarzschild black hole
falls inevitably towards the singularity, and the length scales of
the matter coming close to the singularity are getting proportional
to 
\begin{equation}
V\propto\tau_{singularity}-\tau
\end{equation}
where $\tau$ is the proper time since the objects have crossed the
event horizon and $\tau_{singularity}$ is the proper time where the
matter has arrived at the singularity. Since no quantum mechanical
object can be pressed in lengths shorter than its Compton wavelength,
one has to seek for an alternative theory that describes the behavior
of matter and the gravitational field close to a singularity. Because
the curvature in the vicinity of a singularity is extremely large
and even becomes infinite at the singularity, a first attempt would
be to use quantum field theory in curved spacetimes. In this framework,
one unfortunately finds that close to the singularity, one is not
able to define ordinary particle propagators, see \cite{DeWit4}.
In order to find a resolution of this problem, one has to define a
consistent theory of quantum gravity. This theory should replace classical
general relativity on the Planck scale and yield expressions that
allow a consistent description of matter inside of black holes.

\section{Covariant quantization of general relativity }

For computing scattering amplitudes, the covariant quantization based
on the perturbative evaluation of the path integral provides an excellent
tool. In the first sub section, we will derive the Feynman rules for
gravitational amplitudes. In the second sub section we will review
the problems associated with loop calculations.

\subsection{The Feynman rules of gravitation}

This section is merely a review of the excellent introductions \cite{Velt,popov,Kiefer,Hamber,thooft}.
With the Lagrangian density $\mathcal{L}=\sqrt{-g}R$, and $L=\int d^{3}x\mathcal{L}$,
the action of the gravitational field 
\begin{equation}
S=\int dtL=\int\sqrt{-g}Rd^{4}x
\end{equation}
is invariant under infinitesimal gauge transformations of the form
\begin{equation}
g_{\mu\nu}\rightarrow\overline{g}_{\mu\nu}\equiv g_{\mu\nu}+g_{\alpha\nu}\partial_{\mu}\eta^{\alpha}+g_{\mu\alpha}\partial_{\nu}\eta^{\alpha}+\eta^{\alpha}\partial_{\alpha}g_{\mu\nu}\label{eq:gaugetrans1}
\end{equation}
where $\eta^{\mu}$ are the components of an arbitrary infinitesimal
vector field. In general relativity, one can show a local flatness
theorem. This theorem says that around a given point $p$ in the spacetime
manifold described by Einstein's equation, there is a local neighborhood
where $g_{\mu\nu}=\eta_{\mu\nu}$ with $\eta_{\mu\nu}$ as the Minkowski
metric. As we are developing a local field theory, we now choose such
a neighborhood and decompose the metric as follows: 
\begin{equation}
g_{\mu\nu}\rightarrow\overline{g}_{\mu\nu}=\eta_{\mu\nu}+h_{\mu\nu}\label{eq:backgroundmetric}
\end{equation}
The symmetric tensor $g_{\mu\nu}$ has the function of a classical
background metric and the symmetric tensor $h_{\mu\nu}$ is supposed
to be described by some quantum field theory. It is invariant under
gauge transformations 
\begin{equation}
h_{\mu\nu}\rightarrow h_{\mu\nu}^{\eta}\equiv h_{\mu\nu}+\nabla_{\mu}\eta_{\nu}+\nabla_{\nu}\eta_{\mu}\label{eq:gauge2}
\end{equation}
Although a decomposition of the metric as in eq. (\ref{eq:backgroundmetric})
can always can be made as a kind of gauge, one should note that this
composition of the metric is in fact highly problematic. The idea
that $h_{\mu\nu}$ is subject to quantum processes implies that a
finite change of $h_{\mu\nu}$ through some quantum process could,
according to Faddeev and Popov (see \cite{popov} on p. 780), for
example, change the signature of the metric $\overline{g}_{\mu\nu}$,
if $h_{\mu\nu}$is not sufficiently small. A change of the signature
of the metric may not be problematic, since in the path integral over
metrics, one would have to sum over all possible metrics, including
ones with different signature. However, Wald notes in \cite{Wald}
on p. 384 that the perturbation theory we will obtain below satisfies,
``for all orders, the causality relations with respect to the background
metric $g_{\mu\nu}$and not with respect to $\overline{g}_{\mu\nu}$''.
Wald writes that the entire summed series may still obey the correct
causality relation with respect to $\overline{g}_{\mu\nu}$, if it
were to converge. Later, we will see, however that it does not converge.
This would effectively imply that the theory makes only sense for
small $h_{\mu\nu}$.

Here, however, we will not be bothered by such problems, and postpone
the descriptions of quantum mechanical black holes. Instead, we derive
the Feynmanrules for perturbative quantum gravity. We can derive them
from the path integrals like 
\begin{equation}
Z=\int\mathcal{D}g_{\mu\nu}e^{iS}.\label{eq:Pathintegral}
\end{equation}
where $\mathcal{D}g_{\mu\nu}$ is understood as the usual summation
over all possible metrics. Using the expansion of the metric in eq.
(\ref{eq:backgroundmetric}), we can expand the action around the
background $\eta_{\mu\nu}$ as 
\begin{equation}
S=\int d^{4}x\left(\sqrt{-g}R+\underline{\mathcal{L}}+\underline{\underline{\mathcal{L}}}\right)
\end{equation}
where $g$ and $R$ are constructed from the background metric $g_{\mu\nu}=\eta_{\mu\nu}$
and $\underline{\mathcal{L}}$ is linear in $h_{\mu\nu}$ and $\underline{\underline{\mathcal{L}}}$
is quadratic in $h_{\mu\nu}$. The path integral of eq (\ref{eq:Pathintegral})
will then become an expression like 
\begin{equation}
Z=e^{i\int d^{4}x\sqrt{-g}R}\int\mathcal{D}h_{\mu\nu}e^{i\int d^{4}x\left(\underline{\mathcal{L}}+\underline{\underline{\mathcal{L}}}+\ldots\right)}
\end{equation}
where $\int d^{4}x\sqrt{-g}R=0$ if $g_{\mu\nu}=\eta_{\mu\nu}$.

The part of the action quadratic in $h_{\mu\nu}$ can be used to derive
the propagator. In the following, we will adopt the notation of `t
Hooft and Veltman for the metric tensor $g^{\alpha\gamma}h_{\gamma\beta}=h_{\beta}^{\alpha}$.
Since the metric is a symmetric tensor, we do not need to worry on
which index comes first. Following Veltman and `t Hooft, we begin
the calculation by setting 
\begin{equation}
\overline{g}_{\mu\nu}=g_{\mu\nu}+h_{\mu\nu}=g_{\mu\alpha}(\delta_{\nu}^{\alpha}+h_{\nu}^{\alpha})\label{eq:metric}
\end{equation}
where $g_{\mu\nu}$ is at first arbitrary, and indices are raised
and lowered by $g_{\mu\nu}$. `t Hooft and Veltman get 
\begin{equation}
\overline{g}^{\mu\nu}=g^{\mu\alpha}(\delta_{\alpha}^{\nu}-h_{\alpha}^{\nu}+h_{\alpha}^{\beta}h_{\beta}^{\nu})=g^{\mu\nu}-h^{\mu\nu}+h^{\mu\beta}h_{\beta}^{\nu}\label{eq:metric1}
\end{equation}
and using 
\[
\sqrt{-\overline{g}}=\sqrt{-det(\overline{g})}=e^{\frac{1}{2}tr(ln(-det(\overline{g})))}=\sqrt{-det(g)}e^{\frac{1}{2}tr(ln(\delta_{\nu}^{\alpha}+h_{\nu}^{\alpha}))}
\]
they arrive at 
\begin{equation}
\sqrt{-\overline{g}}=\sqrt{-g}(1+\frac{1}{2}h_{\alpha}^{\alpha}-\frac{1}{4}h_{\alpha\beta}h^{\alpha\beta}+\frac{1}{8}(h_{\alpha}^{\alpha})^{2}).\label{eq:determinant1}
\end{equation}
With eq.(\ref{eq:christoffel}), `t Hooft and Veltman decompose the
Christoffel connection as 
\begin{equation}
\overline{\Gamma}_{\mu\nu}^{\alpha}=+\Gamma_{\mu\nu}^{\alpha}+\underline{\Gamma}_{\mu\nu}^{\alpha}+\underline{\underline{\Gamma}}_{\mu\nu}^{\alpha}\label{eq:christoffel1}
\end{equation}
with
\begin{equation}
\underline{\Gamma}_{\mu\nu}^{\alpha}=\frac{1}{2}(\nabla_{\mu}h_{\nu}^{\alpha}+\nabla_{\nu}h_{\mu}^{\alpha}-\nabla^{\alpha}h_{\mu\nu})\label{eq:christoffel2}
\end{equation}
containing terms linear in $h_{\mu\nu}$ and
\begin{equation}
\underline{\underline{\Gamma}}_{\mu\nu}^{\alpha}=-\frac{1}{2}h^{\alpha\gamma}(\nabla_{\mu}h_{\gamma\nu}+\nabla_{\nu}h_{\mu\gamma}-\nabla_{\gamma}h_{\mu\nu})
\end{equation}
containing terms quadratic in $h_{\mu\nu}$. Following \cite{Velt},
we can, using eq. (\ref{eq:riemanntensor}), then decompose the Riemann
tensor in separate parts $R_{\nu\alpha\beta}^{\mu},\underline{R}_{\nu\alpha\beta}^{\mu},\underline{\underline{R}}_{\nu\alpha\beta}^{\mu}$,
each depending on $\Gamma_{\mu\nu}^{\alpha},\underline{\Gamma}_{\mu\nu}^{\alpha},$
or $\underline{\underline{\Gamma}}_{\mu\nu}^{\alpha}$: 
\begin{equation}
\overline{R}_{\nu\alpha\beta}^{\mu}=R_{\;\nu\alpha\beta}^{\mu}+\underline{R}_{\;\nu\alpha\beta}^{\mu}+\underline{\underline{R}}_{\;\nu\alpha\beta}^{\mu}
\end{equation}
\begin{equation}
\underline{R}_{\;\nu\alpha\beta}^{\mu}=\frac{1}{2}(\nabla_{\alpha}\nabla_{\nu}h_{\beta}^{\mu}-\nabla^{\mu}\nabla_{\alpha}h_{\nu\beta}-\nabla_{\beta}\nabla_{\nu}h_{\alpha}^{\mu}+\nabla^{\beta}\nabla_{\mu}h_{\nu\alpha})+\frac{1}{2}R_{\;\gamma\alpha\beta}^{\mu}h_{\nu}^{\gamma}+\frac{1}{2}R_{\;\nu\beta\alpha}^{\gamma}h_{\gamma}^{\mu}
\end{equation}

\begin{equation}
\underline{\underline{R}}_{\;\sigma\alpha\beta}^{\mu}=\partial_{\alpha}\underline{\underline{\Gamma}}_{\sigma\beta}^{\mu}-\partial_{\beta}\underline{\underline{\Gamma}}_{\sigma\alpha}^{\mu}+\underline{\Gamma}_{\nu\alpha}^{\text{\ensuremath{\mu}}}\underline{\Gamma}_{\sigma\beta}^{\nu}-\underline{\Gamma}_{\nu\beta}^{\mu}\underline{\Gamma}_{\sigma\alpha}^{\nu}
\end{equation}
one then can construct $\overline{R}_{\nu\alpha}=\overline{R}_{\nu\lambda\alpha}^{\lambda}$
and $\overline{R}=\overline{g}^{\mu\nu}\overline{R}_{\mu\nu}$. 

Veltman and `t Hooft find for the Lagrangian density up to second
order the expression:
\begin{eqnarray}
\mathcal{L} & = & \sqrt{-\overline{g}}\overline{R}\nonumber \\
 & = & \sqrt{-g}R+\underline{\mathcal{L}}+\underline{\underline{\mathcal{L}}}\label{eq:lagerangians}
\end{eqnarray}
where 
\begin{equation}
\underline{\mathcal{L}}=\sqrt{-g}(h_{\alpha}^{\beta}R_{\beta}^{\alpha}-\frac{1}{2}h_{\alpha}^{\alpha}R)\label{eq:linearpart}
\end{equation}
 is the part linear in $h_{\mu\nu}$ and 
\begin{eqnarray}
\underline{\underline{\mathcal{L}}} & =\sqrt{-g} & \left(-R\left(\frac{1}{8}(h_{\alpha}^{\alpha})^{2}-\frac{1}{4}h_{\beta}^{\alpha}h_{\alpha}^{\beta}\right)-h_{\beta}^{\nu}h_{\alpha}^{\text{\ensuremath{\beta}}}R_{\nu}^{\alpha}+\frac{1}{2}h_{\alpha}^{\alpha}h_{\beta}^{\nu}R_{\nu}^{\beta}-\frac{1}{4}\nabla_{\nu}h_{\alpha}^{\beta}\nabla^{\nu}h_{\beta}^{\alpha}\right.\nonumber \\
 &  & \left.+\frac{1}{4}\nabla_{\mu}h_{\alpha}^{\alpha}\nabla^{\mu}h_{\beta}^{\beta}-\frac{1}{2}\nabla_{\beta}h_{\alpha}^{\alpha}\nabla^{\mu}h_{\mu}^{\beta}+\frac{1}{2}\nabla^{\alpha}h_{\beta}^{\nu}\nabla_{\nu}h_{\alpha}^{\beta}\right)\label{eq:quadratic}
\end{eqnarray}
is part of the Lagrangian that is quadratic in $h_{\mu\nu}$ (with
total derivatives being ommitted).

We now choose the background $g_{\mu\nu}=\eta_{\mu\nu}$. The scalar
$R$ and the tensor $R_{\nu}^{\alpha}$ contain sums of derivatives
of the metric, which must vanish for $g_{\mu\nu}=\eta_{\mu\nu}$.
Therefore $R$ and $R_{\nu}^{\alpha}$ must vanish too. Furthermore
$-g=1$ and in the second line, the covariant derivatives become partial
ones. The result is: 
\begin{eqnarray}
\underline{\underline{\mathcal{L}}} & = & -\frac{1}{4}\partial_{\nu}h_{\alpha}^{\beta}\partial^{\nu}h_{\beta}^{\alpha}+\frac{1}{4}\partial_{\mu}h_{\alpha}^{\alpha}\partial^{\mu}h_{\beta}^{\beta}-\frac{1}{2}\partial_{\beta}h_{\alpha}^{\alpha}\partial^{\mu}h_{\mu}^{\beta}+\frac{1}{2}\partial^{\alpha}h_{\beta}^{\nu}\partial_{\nu}h_{\alpha}^{\beta}-\frac{1}{4}\partial_{\nu}h_{\alpha\beta}\partial^{\nu}h^{\alpha\beta}\nonumber \\
 & = & +\frac{1}{4}\partial_{\mu}h_{\alpha}^{\alpha}\partial^{\mu}h_{\beta}^{\beta}-\frac{1}{2}\partial^{\beta}h_{\alpha}^{\alpha}\partial^{\mu}h_{\beta\mu}+\frac{1}{2}\partial^{\nu}h_{\mu\nu}\partial_{\beta}h^{\mu\beta}\label{eq:quadratic-1}
\end{eqnarray}
where partial integration was used for the last term in the sum.

Historically, this was first obtained by Pauli and Fierz \cite{PauliFierz}.
Feynman noted in \cite{Feynman} that this expression does not have
a generalized inverse, and one must add a gauge breaking term, in
order to define a propagator. This was similar to the electromagnetic
field. However, Feynman's group also found that when computing loop
amplitudes, the theory with this propagator still failed to be unitary.
For one loop amplitudes, Feynman was able to find a method around
this problem. In a loop ``fictious particles'' or ``ghosts'' have
to be inserted in order to compensate overcounting. This happens in
the path integral because of the summation over fields that are physically
equivalent by a gauge transformation. In 1967, deWitt \cite{Dewitt2}
was able to generalize this procedure to all orders of perturbation
theory. Finally, Fadeev and Popov \cite{Fadeev} found a simple way
to describe this procedure, which we will outline here. This is merely
a summary of \cite{Kiefer}, p. 51. 

The path integral from which the Feynman diagrams of the S-matrix
are constructed is formally an expression 
\begin{equation}
Z=\int\mathcal{D}h_{\mu\nu}e^{iS}
\end{equation}
with $S=\int d^{4}x\sqrt{-\overline{g}}\overline{R}.$ We described
the relevant gauge transformations $\eta_{\mu}$ in eq. (\ref{eq:gaugetrans1}).
Now we choose a gauge constraint to fix a gauge $\eta_{\mu}$. This
corresponds to four conditions $G_{\alpha}(h_{\mu\nu})=0$. Then,
we define a functional 
\begin{equation}
\Delta_{G}^{-1}(h_{\mu\nu})=\int\mathcal{D}\eta_{\mu}\prod_{\alpha}\delta(G_{\alpha}(h_{\mu\nu}^{\eta}))\label{eq:fadeevpopovdet}
\end{equation}
where $h_{\mu\nu}^{\eta}$ is the gauge transformed metric and $\eta_{\mu}$
is the gauge transformation from eq. (\ref{eq:gaugetrans1}). For
compact gauge groups, the integration measure $\mathcal{D}\eta_{\mu}$
is left-invariant. Therefore, we have for the gravitational field
\begin{equation}
D(\eta_{\mu}'\eta_{\mu})=D\eta_{\mu},
\end{equation}
and it follows that 
\begin{eqnarray}
\Delta_{G}^{-1}(h_{\mu\nu}^{\eta}) & = & \int\mathcal{D}\eta_{\mu}\prod_{\alpha}\delta(G_{\alpha}(h_{\mu\nu}^{\eta'\eta}))\nonumber \\
 & = & \int\mathcal{D}\eta_{\mu}'\eta_{\mu}\prod_{\alpha}\delta(G_{\alpha}(h_{\mu\nu}^{\eta'\eta}))\nonumber \\
 & = & \Delta_{G}^{-1}(h_{\mu\nu})\label{eq:faddeeevpopov}
\end{eqnarray}
hence $\Delta_{G}^{-1}(h_{\mu\nu}^{\eta})$ is gauge invariant. This
can be inserted into the path integral as 
\begin{eqnarray}
Z & = & \int\mathcal{D}h_{\mu\nu}^{\eta}\Delta_{G}(h_{\mu\nu}^{\eta})\int\mathcal{D}\eta_{\mu}\prod_{\alpha}\delta(G_{\alpha}(h_{\mu\nu}^{\eta}))e^{iS}\nonumber \\
 & = & \int\mathcal{D}h_{\mu\nu}\Delta_{G}(h_{\mu\nu})\int\mathcal{D}\eta_{\mu}\prod_{\alpha}\delta(G_{\alpha}(h_{\mu\nu}))e^{iS}\nonumber \\
 & = & \int\mathcal{D}\eta_{\mu}\int\mathcal{D}h_{\mu\nu}\prod_{\alpha}\delta(G_{\alpha}(h_{\mu\nu}))\Delta_{G}(h_{\mu\nu})e^{iS}
\end{eqnarray}
where we have used the gauge invariance of $S$, $G_{\alpha}(h_{\mu\nu})$
and $\mathcal{D}h_{\mu\nu}^{\eta}$ . The factor $\int\mathcal{D}\eta_{\mu}$
only contributes a multiplicative constant, since the rest of the
integrand does not depend on $\eta$. Therefore, $\int\mathcal{D}\eta_{\mu}$
may be dropped. 

Because of the deltafunction in the integrand, we can expand $G_{\alpha}(h_{\mu\nu}^{\eta})$
in eq. (\ref{eq:fadeevpopovdet}) around $\eta_{\mu}=0$, since for
other gauges, the contribution of the integrand is zero. We get 
\begin{equation}
G_{\alpha}(h_{\mu\nu}^{\eta})=G_{\alpha}(h_{\mu\nu})+(A\eta)_{\alpha}
\end{equation}
where 
\begin{equation}
G_{\alpha}(h_{\mu\nu})=0
\end{equation}
from the gauge condition, and 
\begin{equation}
A_{\alpha\beta}=\frac{\delta G_{\alpha}(h_{\mu\nu}^{\eta})}{\delta\eta^{\beta}}
\end{equation}
is a matrix of the derivatives of $G_{\alpha}(h_{\mu\nu}^{\eta})$
with respect to $\eta^{\beta}$. 

For n-dimensional vectors $\vec{a}$ and some vectorfield $\vec{g}(\vec{a})$
we have the identity 
\begin{equation}
1=det\left(\frac{\partial g_{i}}{\partial a_{j}}\right)\prod_{i}^{n}\int da_{i}\delta^{n}(\vec{g}(\vec{a}))\label{eq:continuum}
\end{equation}
From eq. (\ref{eq:fadeevpopovdet}), we have 
\begin{equation}
1=\Delta_{G}(h_{\mu\nu})\int\mathcal{D}\eta_{\mu}\prod_{\alpha}\delta(G_{\alpha}(h_{\mu\nu}^{\eta}))
\end{equation}
which is the continuous analogue of eq. (\ref{eq:continuum}). Thereby,
:
\begin{equation}
\Delta_{G}(h_{\mu\nu})=det(A_{\alpha\beta}).
\end{equation}
This determinant is conventially expressed in terms of an integral
over anticommuting grassman variables $(\tilde{\eta}^{\alpha})^{*}$and
$\tilde{\eta}^{\beta}$ which are called ghost fields:
\begin{equation}
det(A)=\int\prod_{\alpha}\mathcal{D}(\tilde{\eta}^{\alpha})^{*}\mathcal{D\tilde{\eta}}^{\alpha}e^{-i\int d^{4}x(\tilde{\eta}^{\alpha})^{*}A_{\alpha\beta}\tilde{\eta}^{\beta}}\label{eq:fadeevpopovdeterminant}
\end{equation}
inserting this into the path integral, we get
\begin{eqnarray}
Z & = & \int\mathcal{D}h_{\mu\nu}\prod_{\alpha}\delta(G_{\alpha}(h_{\mu\nu}))\Delta_{G}(h_{\mu\nu})e^{iS}\nonumber \\
 & = & \int\mathcal{D}h_{\mu\nu}\int\prod_{\alpha}\mathcal{D}(\tilde{\eta}^{\alpha})^{*}\mathcal{D\tilde{\eta}}^{\alpha}\nonumber \\
 &  & \delta(G_{\alpha}(h_{\mu\nu}))e^{iS+i\int d^{4}x(\tilde{\eta}^{\alpha})^{*}A_{\alpha\beta}\tilde{\eta}^{\beta}}\label{eq:functional}
\end{eqnarray}

We can also choose $G_{\alpha}(h_{\mu\nu})=c_{\alpha}$ as gauge condition.
This leads to an integral 
\begin{eqnarray}
Z & = & \int\mathcal{D}h_{\mu\nu}\int\prod_{\alpha}D(\tilde{\eta}^{\alpha})^{*}\mathcal{D\tilde{\eta}}^{\alpha}\nonumber \\
 &  & \delta(G_{\alpha}(h_{\mu\nu})-c_{\alpha})e^{iS+i\int d^{4}x(\tilde{\eta}^{\alpha})^{*}A_{\alpha\beta}\tilde{\eta}^{\beta}}
\end{eqnarray}
Since $c_{\alpha}$ does not depend on $h_{\mu\nu}$, $\Delta_{G}(h_{\mu\nu})$
and also $Z$ are not changed by $c_{\alpha}$. Since $Z$ is independent
of $c_{\alpha},$ we may integrate over $c_{\alpha}$. Inserting the
normalization factor 
\[
\int\mathcal{D}c_{\alpha}e^{-\frac{i}{4\zeta}\int d^{4}xc_{\alpha}c^{\alpha}}
\]
we get 
\begin{eqnarray}
Z & = & \int\mathcal{D}c_{\alpha}e^{-\frac{i}{4\zeta}\int d^{4}xc_{\alpha}c^{\alpha}}\int\mathcal{D}h_{\mu\nu}\int\prod_{\alpha}D(\tilde{\eta}^{\alpha})^{*}\mathcal{D\tilde{\eta}}^{\alpha}\nonumber \\
 &  & \delta(G_{\alpha}(h_{\mu\nu})-c_{\alpha})e^{iS+i\int d^{4}x(\tilde{\eta}^{\alpha})^{*}A_{\alpha\beta}\tilde{\eta}^{\beta}}\nonumber \\
 & = & \int\mathcal{D}h_{\mu\nu}\int\prod_{\alpha}D(\tilde{\eta}^{\alpha})^{*}\mathcal{D\tilde{\eta}}^{\alpha}e^{iS-\frac{i}{4\zeta}\int d^{4}xG_{\alpha}G^{\alpha}+i\int d^{4}x(\tilde{\eta}^{\alpha})^{*}A_{\alpha\beta}\tilde{\eta}^{\beta}}.\label{eq:functional1}
\end{eqnarray}
The expression 
\[
\frac{i}{4\zeta}\int d^{4}xG_{\alpha}G^{\alpha}
\]
is called gauge fixing term. 

We can derive the propagator of the gravitational field, if we take
the part of $S=\int d^{4}x\sqrt{-\overline{g}}\overline{R}$ which
is quadratic in $h_{\mu\nu}$ together with some suitable gauge fixing.
Choosing, for example the DeDonder gauge, which is equal to 
\[
G_{\mu}=\partial_{\nu}h_{\mu\nu}-\frac{1}{2}\partial_{\mu}h_{\nu\nu}
\]
and subtracting $\frac{1}{2}G_{\mu}^{2}$from eq (\ref{eq:quadratic-1}),
we obtain, using in the 6.th line that $h_{\mu\nu}$ must be symmetric:
\begin{eqnarray}
\underline{\underline{\mathcal{L}}}-\frac{1}{2}C_{\mu}^{2} & = & -\frac{1}{4}\partial_{\nu}h_{\alpha\beta}\partial^{\nu}h^{\alpha\beta}+\frac{1}{4}\partial_{\mu}h_{\alpha}^{\alpha}\partial^{\mu}h_{\beta}^{\beta}-\frac{1}{2}\partial^{\beta}h_{\alpha}^{\alpha}\partial^{\mu}h_{\beta\mu}+\frac{1}{2}\partial^{\nu}h_{\mu\nu}\partial_{\beta}h^{\mu\beta}\nonumber \\
 &  & -\frac{1}{2}\partial_{\nu}h_{\mu\nu}\partial^{\beta}h^{\mu\beta}-\frac{1}{8}\partial_{\mu}h_{\nu\nu}\partial^{\mu}h_{\beta\beta}+\frac{1}{4}\partial_{\beta}h_{\alpha\alpha}\partial^{\mu}h^{\mu\beta}+\frac{1}{4}\partial^{\beta}h^{\alpha\alpha}\partial_{\mu}h_{\mu\beta}\nonumber \\
 & = & -\frac{1}{4}\partial_{\nu}h_{\alpha\beta}\partial^{\nu}h^{\alpha\beta}+\frac{1}{4}\partial_{\mu}h_{\alpha}^{\alpha}\partial^{\mu}h_{\beta}^{\beta}-\frac{1}{2}\partial^{\beta}h_{\alpha}^{\alpha}\partial^{\mu}h_{\beta\mu}+\frac{1}{2}\partial^{\nu}h_{\mu\nu}\partial_{\beta}h^{\mu\beta}\nonumber \\
 &  & -\frac{1}{2}\partial^{\nu}h_{\mu\nu}\partial_{\beta}h^{\mu\beta}-\frac{1}{8}\partial_{\mu}h_{\nu}^{\nu}\partial^{\mu}h_{\beta}^{\beta}+\frac{1}{2}\partial^{\beta}h_{\alpha}^{\alpha}\partial^{\mu}h_{\mu\beta}\nonumber \\
 & = & -\frac{1}{4}\partial_{\nu}h_{\alpha\beta}\partial^{\nu}h^{\alpha\beta}+\frac{1}{4}\partial_{\mu}h_{\alpha}^{\alpha}\partial^{\mu}h_{\beta}^{\beta}-\frac{1}{2}\partial^{\beta}h_{\alpha}^{\alpha}\partial^{\mu}h_{\beta\mu}+\frac{1}{2}\partial^{\nu}h_{\mu\nu}\partial_{\beta}h^{\mu\beta}\nonumber \\
 &  & -\frac{1}{2}\partial^{\nu}h_{\mu\nu}\partial_{\beta}h^{\mu\beta}-\frac{1}{8}\partial_{\mu}h_{\nu}^{\nu}\partial^{\mu}h_{\beta}^{\beta}+\frac{1}{2}\partial^{\beta}h_{\alpha}^{\alpha}\partial^{\mu}h_{\beta\mu}\nonumber \\
 & = & -\frac{1}{4}\partial_{\nu}h_{\alpha\beta}\partial^{\nu}h^{\alpha\beta}+\frac{1}{8}\partial_{\lambda}h_{\alpha}^{\alpha}\partial_{\lambda}h_{\beta}^{\beta}\nonumber \\
 & = & -\frac{1}{2}\partial_{\lambda}h_{\alpha\beta}V^{\alpha\beta\mu\nu}\partial^{\lambda}h_{\mu\nu}
\end{eqnarray}
with 
\begin{equation}
V_{\alpha\beta\mu\nu}=\frac{1}{2}\eta_{\alpha\mu}\eta_{\beta\nu}-\frac{1}{4}\eta_{\alpha\beta}\eta_{\mu\nu}
\end{equation}
Inversion of the matrix $V_{\alpha\beta\mu\nu}$ yields $\eta_{\mu\alpha}\eta_{\nu\beta}+\eta_{\mu\beta}\eta_{\nu\alpha}-\eta_{\mu\nu}\eta_{\alpha\beta}$
and leads to the graviton propagator

\begin{fmffile}{fmfdefault2abc}\raisebox{-0.5\height}{\begin{fmfgraph*}(70,40) \fmfright{f1}\fmfleft{f2} \fmf{photon,label=$k$}{f1,f2}  \fmfv{l.a=90,lab=$\mu\nu$}{f2}\fmfv{l.a=90,lab=$\alpha\beta$}{f1} \end{fmfgraph*}}$\;\;\;=D_{\mu\nu\alpha\beta}(k)=\frac{1}{k^{2}-i\epsilon}(\eta_{\mu\alpha}\eta_{\nu\beta}+\eta_{\mu\beta}\eta_{\nu\alpha}-\eta_{\mu\nu}\eta_{\alpha\beta})$ \end{fmffile}
Note that similar to the situation in electrodynamics, the form of
the graviton propagator depends crucially on the gauge. 

In their article \cite{thooft},  `t Hooft and Veltman also derive
the propagator for the Prentki gauge which we will not consider here.
The ghost Lagrangian is obtained by subjecting $C_{\mu}$ to the gauge
transformation of eq. (\ref{eq:gauge2}). We get, 
\begin{eqnarray}
G_{\mu} & = & \partial_{\nu}\left(h_{\mu\nu}+\nabla_{\mu}\eta_{\nu}+\nabla_{\nu}\eta_{\mu}\right)-\frac{1}{2}\partial_{\mu}\left(h_{\nu\nu}+\nabla_{\nu}\eta_{\nu}+\nabla_{\nu}\eta_{\nu}\right)\nonumber \\
 & = & \partial_{\nu}h_{\mu\nu}-\frac{1}{2}\partial_{\mu}h_{\nu\nu}+\partial_{\nu}\left(\nabla_{\mu}\eta_{\nu}+\nabla_{\nu}\eta_{\mu}\right)-\partial_{\mu}\nabla_{\nu}\eta_{\nu}\nonumber \\
 & = & \partial_{\nu}h_{\mu\nu}-\frac{1}{2}\partial_{\mu}h_{\nu\nu}+\partial_{\nu}\partial_{\mu}\eta_{\nu}-\partial_{\nu}\Gamma_{\mu\nu}^{\lambda}\eta_{\lambda}+\partial_{\nu}\partial_{\nu}\eta_{\mu}-\partial_{\nu}\Gamma_{\nu\mu}^{\lambda}\eta_{\lambda}-\left(\partial_{\mu}\partial_{\nu}\eta_{\nu}-\partial_{\mu}\Gamma_{\nu\nu}^{\lambda}\eta_{\lambda}\right)\nonumber \\
 & = & \partial_{\nu}h_{\mu\nu}-\frac{1}{2}\partial_{\mu}h_{\nu\nu}+\partial_{\nu}\partial_{\nu}\eta_{\mu}-\partial_{\nu}\Gamma_{\mu\nu}^{\lambda}\eta_{\lambda}-\partial_{\nu}\Gamma_{\nu\mu}^{\lambda}\eta_{\lambda}+\partial_{\mu}\Gamma_{\nu\nu}^{\lambda}\eta_{\lambda}\label{eq:ghost-1}
\end{eqnarray}
We note that in the expansion of the Christoffel symbols, eq. (\ref{eq:christoffel1}),
we have $\Gamma_{\mu\nu}^{\alpha}=0$ with $g_{\mu\nu}=\eta_{\mu\nu}$,
and the part $\underline{\Gamma}_{\mu\nu}^{\alpha}$ depends linearly
on $h_{\mu\nu}$ . Taking the derivative of $G_{\mu}$ with respect
to $\eta_{\mu}$, we observe that the part of $G_{\mu}$ which does
not include any order of $h_{\mu\nu}$ is $\partial_{\nu}^{2}$. Therefore,
the lowest order ghost Lagrangian is 
\begin{equation}
\mathcal{L}_{ghost}=-\partial_{\nu}\tilde{\eta}_{\alpha}\partial_{\nu}\eta^{\alpha}+\mathcal{O}(h_{\mu\nu})=-\partial_{\nu}\tilde{\eta}_{\alpha}\eta_{\mu\nu}\partial^{\mu}\eta^{\alpha}+\mathcal{O}(h_{\mu\nu})
\end{equation}
 In diagrammatic language, this corresponds to a ghost particle with
a propagator given by, see \cite{Hamber}:

\begin{fmffile}{fmfdefault2ab}\begin{fmfgraph*}(70,25) \fmfright{f1}\fmfleft{f2} \fmf{fermion,label=$k$}{f1,f2}  \end{fmfgraph*}$=D_{\mu\nu}(k)=\frac{\eta_{\mu\nu}}{k^{2}-i\epsilon}$ \end{fmffile} 

The graviton couples to itself. The three graviton vertex was first
given in the third article on quantum gravity by deWitt \cite{DeWitt3}.
It is a quite complicated expression. Nevertheless, deWitt notes that
it could be ``straightforwardly'' computed from the cubic part of
the action. I have however, not tried to derive it and I merely state
this result here. In his article, deWitt also describes the four graviton
vertex coming from the quartic part of the action, which I do not
present here, simply because of the length of that formula. We have 

\begin{fmffile}{fmfdefault1}       \begin{fmfgraph*}(120,120)    \fmfleft{i1}        \fmfright{o1,o2}        \fmf{photon,label=$k^{2}$}{i1,w1}        \fmf{photon,label=$k^{1}$}{w1,o1}        \fmf{photon,label=$k^{3}$}{w1,o2}        \fmfv{lab=$V_{\alpha_{1}\beta_{2}\alpha_{2}\beta_{2}\alpha_{3}\beta_{3}}$,lab.dist=0.05w}{w1}   
\end{fmfgraph*}  \end{fmffile} 
\begin{eqnarray}
V(k^{1},k^{2},k^{3})_{\alpha_{1}\beta_{1}\alpha_{2}\beta_{2}\alpha_{3}\beta_{3}} & = & -\left(k_{(\alpha_{1}}^{2}k_{\beta_{1})}^{3}\left(2\eta_{\alpha_{2}(\alpha_{3}}\eta_{\beta_{3})\beta_{2}}-\eta_{\alpha_{2}\beta_{2}}\eta_{\alpha_{3}\beta_{3}}\right)\right.\nonumber \\
 &  & +k_{(\alpha_{2}}^{1}k_{\beta_{2})}^{3}\left(2\eta_{\alpha_{1}(\alpha_{3}}\eta_{\beta_{3})\beta_{1}}-\eta_{\alpha_{1}\beta_{1}}\eta_{\alpha_{3}\beta_{2}}\right)\nonumber \\
 &  & +k_{(\alpha{}_{3}}^{1}k_{\beta_{3})}^{2}\left(2\eta_{\alpha_{1}(\alpha_{2}}\eta_{\beta_{2})\beta_{1}}-\eta_{\alpha_{1}\beta_{1}}\eta_{\alpha_{2}\beta_{2}}\right)\nonumber \\
 &  & +2k_{(\alpha_{2}}^{3}\eta_{\beta_{2})}(_{\alpha_{1}}\eta_{\beta_{1}})(_{\alpha_{3}}k_{\beta_{3}}^{2})+2q_{(\alpha_{3}}^{1}\eta_{\beta_{3})(\alpha_{2}}\eta_{\beta_{2})(\alpha_{1}}k_{\beta_{1})}^{3}\nonumber \\
 &  & +2k_{(\alpha_{1}}^{2}\eta_{\beta_{1})(\alpha_{3}}\eta_{\beta_{3})(\alpha_{2}}k_{\beta_{2})}^{1}\nonumber \\
 &  & +k^{2}k^{3}\left(\eta_{\alpha_{1}(\alpha_{2}}\eta_{\beta_{2})\beta_{1}}\eta_{\alpha_{3}\beta_{3}}+\eta_{\alpha_{1}(\alpha_{3}}\eta_{\beta_{3})\beta_{1}}\eta_{\alpha_{2}\beta_{2}}-2\eta_{\alpha_{1}(\alpha_{2}}\eta_{\beta_{2})(\alpha_{3}}\eta_{\beta_{3})\beta_{1}}\right)\nonumber \\
 &  & +k^{1}k^{3}\left(\eta_{\alpha_{2}(\alpha_{1}}\eta_{\beta_{1})\beta_{2}}\eta_{\alpha_{3}\beta_{3}}+\eta_{\alpha_{2}(\alpha_{3}}\eta_{\beta_{3})\beta_{2}}\eta_{\alpha_{1}\beta_{1}}-2\eta_{\alpha_{2}(\alpha_{1}}\eta_{\beta_{2})(\alpha_{3}}\eta_{\beta_{3})\beta_{2}}\right)\nonumber \\
 &  & +k^{1}k^{2}\left(\eta_{\alpha_{3}(\alpha_{1}}\eta_{\beta_{1})\beta_{3}}\eta_{\alpha_{2}\beta_{2}}+\eta_{\alpha_{3}(\alpha_{2}}\eta_{\beta_{2})\beta_{3}}\eta_{\alpha_{1}\beta_{1}}\right.\nonumber \\
 &  & \left.\left.-2\eta_{\alpha_{3}(\alpha_{1}}\eta_{\beta_{1})(\alpha_{2}}\eta_{\beta_{2})\beta_{3}}\right)\right)\label{eq:vertex2}
\end{eqnarray}
where have choosen the notation of \cite{Hamber}.

Furthermore, the graviton couples to the ghosts with the vertex $V(k^{1},k^{2},k^{3})_{\alpha\beta\gamma\mu}$
that is given by

\begin{fmffile}{fmfdefault1v}      \begin{fmfgraph*}(120,120)  \fmfleft{i1}       \fmfright{o1,o2}        \fmf{fermion,label=$k^{1}$}{i1,w1}        	\fmf{photon,label=$k^{3}$}{w1,o1}        	\fmf{fermion,label=$k^{2}$}{w1,o2}          \fmfv{lab=$V_{\alpha\beta\gamma\mu}$,lab.dist=0.05w}{w1}   \end{fmfgraph*} \end{fmffile}It
should be derivable from eq. (\ref{eq:ghost-1}) with eq. (\ref{eq:christoffel2})
and is, see \cite{Hamber}, eq. (70), or \cite{popov}: 
\begin{equation}
V(k^{1},k^{2},k^{3})_{\alpha,\beta,\gamma\mu}=-\eta_{\gamma(\alpha}k_{\beta)}^{1}k_{\mu}^{2}+\eta_{\gamma\mu}k_{(\alpha}^{2}k_{\beta)}^{3}\label{eq:ghostvertex}
\end{equation}

Gravity does not only couple to itself but also to ordinary matter.
For example, we can add to the Lagrangian density $\mathcal{L}=\sqrt{-\overline{g}}\overline{R}$
a massive scalar field with a Lagrangian density 
\begin{equation}
\mathcal{L}_{m}=\sqrt{-\overline{g}}\left(-\frac{1}{2}\nabla_{\mu}\phi\nabla^{\mu}\phi-\frac{1}{2}m^{2}\phi^{2}\right)
\end{equation}
and a well known propagator:

\begin{fmffile}{fmfdefault2a}\raisebox{-0.5\height}{\begin{fmfgraph*}(70,25) \fmfright{f1}\fmfleft{f2} \fmf{fermion,label=$k$}{f1,f2}  \end{fmfgraph*}}$\;\;=\frac{1}{(2\pi)^{4}i}\frac{1}{k^{2}+m^{2}-i\epsilon}$ \end{fmffile}

Using $\nabla_{\mu}\phi=\partial_{\mu}\phi$, $\nabla^{\mu}\phi=\overline{g}^{\mu\nu}\partial_{\nu}\phi$
and eqs. (\ref{eq:metric1})-(\ref{eq:determinant1}) with $g_{\mu\nu}=\eta_{\mu\nu}$,
we get, neglecting terms of quadratic or higher order 
\begin{eqnarray}
\mathcal{L}_{m} & = & \sqrt{-\overline{g}}\left(-\frac{1}{2}\partial_{\mu}\phi\overline{g}^{\mu\nu}\partial_{\nu}\phi-\frac{1}{2}m^{2}\phi^{2}\right)\nonumber \\
 & = & (1+\frac{1}{2}h_{\alpha}^{\alpha})\left(-\frac{1}{2}\partial_{\mu}\phi\left(\eta^{\mu\nu}-h^{\mu\nu}\right)\partial_{\nu}\phi-\frac{1}{2}m^{2}\phi^{2}\right)\nonumber \\
 & = & (1+\frac{1}{2}h_{\alpha}^{\alpha})\left(-\frac{1}{2}\partial_{\mu}\phi\eta^{\mu\nu}\partial_{\nu}\phi+\frac{1}{2}\partial_{\mu}\phi h^{\mu\nu}\partial_{\nu}\phi-\frac{1}{2}m^{2}\phi^{2}\right)\nonumber \\
 & \approx & -\frac{1}{2}\partial_{\mu}\phi\eta^{\mu\nu}\partial_{\nu}\phi+\frac{1}{2}\partial_{\mu}\phi h^{\mu\nu}\partial_{\nu}\phi-\frac{1}{2}m^{2}\phi^{2}\nonumber \\
 &  & -\frac{1}{4}h_{\alpha}^{\alpha}\partial_{\mu}\phi\eta^{\mu\nu}\partial_{\nu}\phi+\frac{1}{4}h_{\alpha}^{\alpha}\partial_{\mu}\phi h^{\mu\nu}\partial_{\nu}\phi-\frac{1}{4}h_{\alpha}^{\alpha}m^{2}\phi^{2}\nonumber \\
 & \approx & -\frac{1}{2}\partial_{\mu}\phi\partial^{\mu}\phi-\frac{1}{2}m^{2}\phi^{2}-\frac{1}{4}h_{\alpha}^{\alpha}(\partial_{\mu}\phi\eta^{\mu\nu}\partial_{\nu}\phi+m^{2}\phi^{2})+\frac{1}{2}\partial_{\mu}\phi h^{\mu\nu}\partial_{\nu}\phi\nonumber \\
 & = & -\frac{1}{2}\partial_{\mu}\phi\partial^{\mu}\phi-\frac{1}{2}m^{2}\phi^{2}\nonumber \\
 &  & -\frac{1}{2}h^{\mu\nu}\left(\frac{1}{2}\eta_{\mu\nu}\left(\partial_{\mu}\phi\partial^{\mu}\phi+m^{2}\phi^{2}\right)-\partial_{\mu}\phi\partial_{\nu}\phi\right)
\end{eqnarray}
 The vertex function between the scalar field and the graviton is
then in momentum space:

\begin{fmffile}{fmfdefault3a}  
\begin{fmfgraph*}(90,90)    
\fmfleft{i1}       
\fmfright{o2}  
\fmfbottom{o1}        
\fmf{fermion,label=$k^{1}$}{i1,w1}        
\fmf{photon,label=$k^{3}$}{w1,o1}        
\fmf{fermion,label=$k^{2}$}{o2,w1}        
\fmfv{l.a=90, lab=$V$,lab.dist=0.05w}{w1}   
\end{fmfgraph*} 
\end{fmffile} 
\begin{equation}
V(k^{1},k^{2},k^{3})_{\mu\nu}=(2\pi^{4})\left(\frac{1}{2}\eta_{\mu\nu}(k^{1}k^{2})-\frac{1}{2}\eta_{\mu\nu}m^{2}-k_{\mu}^{1}k_{\nu}^{2}\right)\label{eq:vertex1}
\end{equation}

Having the Feynman rules defined, one can use them to describe scattering
processes. In his article, deWitt reports the result of the cross
secction for gravitational scattering of scalar particles in the center
of mass frame. deWitt gets, with $v=p/E$ 
\begin{eqnarray}
\frac{d\sigma}{d\Omega} & \propto & \left(\frac{(1+3v^{2})(1-v^{2})+4v^{2}(1+v^{3})cos^{2}(\theta/2)}{v^{2}sin^{2}(\theta/2)}\right.\nonumber \\
 &  & +\frac{(1+3v^{2})(1-v^{2})+4v^{2}(1+v^{2})sin^{2}(\theta/2)}{v^{2}cos^{2}(\theta/2)}\nonumber \\
 &  & +\left.(3-v^{2})(1+v^{2})+2v^{2}sin^{2}\theta\right)^{2}
\end{eqnarray}

From the vertices of eqs. (\ref{eq:vertex1}) and (\ref{eq:vertex2})
it also follows that the theory is non-renormalizable. By counting
momentum powers one observes that the superficial degree of divergence
is 
\begin{equation}
D=-2I+2V+4L
\end{equation}
where $I$ is the number of internal lines, $V$ is the number of
vertices, and $L$ the number of loops. Because the graviton propagators
are $\propto1/k^{2}$, each graviton propagator lowers the degree
of divergence by 2. On the other hand, the vertices have terms $\propto k^{\mu}k^{\nu}$
which implies that each vertex adds 2 to the degree of divergence.
However, the number of loops is 
\begin{equation}
L=1+I-V
\end{equation}
 or 
\begin{equation}
2(-L+1)=-2I+2V
\end{equation}
 that is 
\begin{equation}
D=2-2L+4L=2+2L=2(L+1)
\end{equation}
which is independent of the number of external lines and increases
with every loop. This estimation should, however, not be taken as
a definite result. It may be that because of some symmetry principle,
or an invariant, divergences cancel and the theory is still finite.
In the next sub section, we will see that in fact this happens at
the first loop level for pure gravity. Unfortunately, the divergences
appear on the first loop level if matter fields are coupled to gravity.
This stimulated the hope that with the addition of the right matter
fields, e.g within a supersymmetric theory, one could get a finite
theory of quantum gravity for all orders. Unfortunately, the divergences
appear on the two loop level even for pure gravitation.

\subsection{Loops and divergencies in the framework of covariant quantization
of gravity }

The investigation of loops in the covariant quantization of gravity
began with early computations by Feynman \cite{Feynman} and deWitt
\cite{Dewitt2}, but the complete understanding of perturbative quantum
gravity at one loop level was only achived by the articles of `t Hoofl
and Veltman. \cite{thooft,Velt}. By computeing the counterterm for
all divergent one loop graviton amplitudes, they found that pure gravity
is actually one loop finite. In the following, we will sketch their
derivation. 

In his famous article \cite{Dewitt2} deWitt introduced the so called
background field method. The introduction of the background field
method given here follows the excellent articles \cite{Background1,Background2,Background3}.
A usual generating functional of a non gauge field theory with quantum
field $\phi$ is 
\begin{equation}
Z(J)=\int\mathcal{D}\phi e^{i\int d^{4}x\mathcal{L}(\phi)+J\phi}\label{eq:generating}
\end{equation}
where $J$ is called source and $J\phi=\int d^{4}xJ\phi$. The Green's
functions are defined by 
\[
\frac{1}{i^{n}}\frac{\delta^{n}Z(J)}{\delta^{n}J(x_{1})\ldots J(x_{n})}|_{J=0}=\langle0|T(\phi(x_{1}),\ldots\phi(x_{n}))|0\rangle
\]
For a non-gauge theory, it is found that the two point Green's functions,
i.e the propagators, are simply the inverse of the part of $\mathcal{L}$
that is quadratic in $\phi$. For a gauge theory, the gauge invariance
has to be broken by the addition of a gauge fixing term and a ghost
Lagrangian, before one can define a propagator as a generalized inverse
of the quadratic part of $\mathcal{L}$, as we saw in the sub section
above. In general, the Green's functions obtained from $Z$ contain
connected graphs like

\begin{fmffile}{fmfdefault5a} \begin{fmfgraph*}(30,30)   \fmftop{i1,i2}       \fmfbottom{o1,o2}       \fmf{fermion}{i1,w1,o2}       \fmf{fermion}{i2,w1,o1}        \end{fmfgraph*} \end{fmffile}
as well as disconnected graphs like \begin{fmffile}{fmfdefault6a} \begin{fmfgraph*}(30,30)   \fmftop{i1,i2}       \fmfbottom{o1,o2}       \fmf{fermion}{i1,i2}       \fmf{fermion}{o1,o2}            \end{fmfgraph*} \end{fmffile} 

Disconnected graphs do not contribute to the s-matrix. The so called
energy-functional 
\begin{equation}
W(J)=-i\ln(Z(J))
\end{equation}
generates only connected graphs. A diagram is called one particle
irreducible (1PI), if it can be split into two disjoined pieces by
cutting a single internal line. It is simpler to compute the entire
collection of Feynman diagramms by calculating first 1PI diagrams
and then connecting these graphs together. The 1PI graphs are generated
by the so called effective action, it is the Legendre transform 
\begin{equation}
\Gamma(\hat{\phi})=W(J)-J\hat{\phi}
\end{equation}
 where 
\begin{equation}
\hat{\phi}=\frac{\delta W(J)}{\delta J}
\end{equation}
In the background field method, instead of using eq. (\ref{eq:generating})
one uses 
\begin{equation}
\tilde{Z}(J,\phi_{B})=\int\mathcal{D}\phi e^{i\int d^{4}x\mathcal{L}(\phi+\phi_{B})+J\phi}\label{eq:generating-1}
\end{equation}
where we have split the original field into $\overline{\phi}=\phi_{B}+\phi$.
Only the part $\phi$, which is called the quantum part of the field,
appears in the measure of the path integral and is coupled to the
source. The part $\phi_{B}$ is called the background field, for which
the classical equations of motion 
\[
\frac{\delta\mathcal{L}(\overline{\phi})}{\delta\overline{\phi}}|_{\overline{\phi}=\phi_{B}}=0
\]
are assumed to hold. One defines the so called Background energy functional
\[
\tilde{W}(J,\phi_{B})=-i\ln(\tilde{Z}(J,\phi_{B}))
\]
and with 
\begin{equation}
\tilde{\phi}=\frac{\delta\tilde{W}(J,\phi_{B})}{\delta J}\label{eq:differentiate}
\end{equation}
we define the background effective action
\begin{equation}
\tilde{\Gamma}(\tilde{\phi},\phi_{B})=\tilde{W}(J,\phi_{B})-J\tilde{\phi}\label{eq:gdsfgd}
\end{equation}
By the variable shift $\phi\rightarrow\phi-\phi_{B}$ in eq (\ref{eq:generating-1}),
we observe that 
\begin{equation}
\tilde{Z}(J,\phi_{B})=Z(J)e^{-iJ\phi_{B}}
\end{equation}
and 
\begin{equation}
\tilde{W}(J,\phi_{B})=W(J)-J\phi_{B}\label{eq:dsjfdsf}
\end{equation}
Using eq. (\ref{eq:differentiate}), we get 
\begin{equation}
\tilde{\phi}=\frac{\delta\tilde{W}(J,\phi_{B})}{\delta J}=\frac{\delta W(J)}{\delta J}-\phi_{B}=\hat{\phi}-\phi_{B}\label{eq:dsdfsf}
\end{equation}
With help of eqs. (\ref{eq:dsjfdsf}) and (\ref{eq:dsdfsf}), we get
from eq.(\ref{eq:gdsfgd}) 
\begin{eqnarray}
\tilde{\Gamma}(\tilde{\phi},\phi_{B}) & = & W(J)-J\phi_{B}-J\tilde{\phi}\nonumber \\
 & = & W(J)-J\hat{\phi}\nonumber \\
 & = & \Gamma(\hat{\phi})\nonumber \\
 & = & \Gamma(\tilde{\phi}+\phi_{B})\label{eq:dsfsdgf}
\end{eqnarray}
eq. (\ref{eq:dsfsdgf}) implies that the background effective action
$\tilde{\Gamma}(\tilde{\phi},\phi_{B})$ is equal to the ordinary
effective action in the presence of a background field $\phi_{B}$.
Therefore $\tilde{\Gamma}(\tilde{\phi},\phi_{B})$ generates the same
1PI Diagrams than the conventional approach. Hence, the background
field method therefore produces the same S-Matrix as usual field theory. 

As in eq. (\ref{eq:metric}), we define the metric as 
\begin{equation}
\overline{g}_{\mu\nu}=g_{\mu\nu}+h_{\mu\nu}
\end{equation}
 with the arbitrary background field $g_{\mu\nu}$ and the quantum
field $h_{\mu\nu}$. Using the expression for the path integral, eq.
(\ref{eq:functional1}), with this metric and a covariant source term
$J^{\mu\nu}$, we can write the generating functional for the gravitational
field as 
\begin{equation}
Z(J^{\mu\nu})=\int\mathcal{D}h_{\mu\nu}\int\prod_{\alpha}D(\tilde{\eta}^{\alpha})^{*}\mathcal{D\tilde{\eta}}^{\alpha}e^{iS-\frac{i}{4\zeta}\int G_{\alpha}G^{\alpha}+i\int d^{4}x(\tilde{\eta}^{\alpha})^{*}A_{\alpha\beta}\tilde{\eta}^{\beta}+J^{\mu\nu}h_{\mu\nu}}
\end{equation}
where $\overline{R}$ and $\overline{g}$ are constructed from the
metric in eq. (\ref{eq:metric}). In his famous second article \cite{DeWitt3}
on quantum gravity, deWitt has shown this action functional is independent
of the gauge fixing term, provided that the gauge fixing term is background
invariant and that the classical equations of motion hold for the
background field. It is for this reason, why the background method
can be applied to gauge theories.

We can now expand the Lagrangian as in eq. (\ref{eq:lagerangians}).
The part of the Lagrangian that is linear in $h_{\mu\nu}$ can be
written as 
\begin{equation}
\underline{\mathcal{L}}_{B}=\sqrt{-g}(-\frac{1}{2}h_{\alpha}^{\alpha}R+h_{\alpha}^{\beta}R_{\beta}^{\alpha})=h_{\mu\nu}\sqrt{-g}(R^{\mu\nu}-\frac{1}{2}g^{\mu\nu}R)\label{eq:dgtfdfvgffcwfrsy}
\end{equation}
In $\underline{\mathcal{L}}_{B}$,  $R$ and $R^{\mu\nu}$are constructed
with the background field $g_{\mu\nu}$ for which the classical equations
of motion are assumed to hold. For $g_{\mu\nu}$ these are are the
Einstein equations in vacuum 
\begin{equation}
R^{\mu\nu}-\frac{1}{2}g^{\mu\nu}R=0
\end{equation}
and therefore eq. (\ref{eq:dgtfdfvgffcwfrsy}) implies that $\underline{\mathcal{L}}_{B}$
must vanish. The Lagrangian in eq. (\ref{eq:lagerangians}) is invariant
with respect to the following gauge transformation 
\begin{equation}
h_{\mu\nu}\rightarrow h_{\mu\nu}+(g_{\alpha\nu}+h_{\alpha\nu})\nabla_{\mu}\eta^{\alpha}+(g_{\mu\alpha}+h_{\mu\alpha})\nabla_{\nu}\eta^{\alpha}+\eta^{\alpha}\nabla_{\alpha}h_{\mu\nu}\label{eq:gaugetrans}
\end{equation}
This is a gauge invariance of the quantum field $h_{\mu\nu}$. Of
cource the Lagrangian is also a gauge invariant with respect to a
gauge transformation of the background field, but for now this invariance
will not interest us further, and we use a gauge fixing term.that
only breakes the gauge invariance with respect to $h_{\mu\nu}$. `t
Hooft and Veltman have chosen 
\begin{equation}
C_{\mu}=\sqrt[4]{-g}(\nabla_{\nu}h_{\mu}^{\nu}-\frac{1}{2}\nabla_{\mu}h_{\nu}^{\nu})t^{\mu\alpha}
\end{equation}
where $t^{\mu\alpha}t^{\alpha\nu}=g^{\mu\nu}$. Subjecting this gauge
fixing term to the gauge transformation of eq. (\ref{eq:gaugetrans}),
`t Hooft and Veltman find a ghost Lagrangian : 
\begin{equation}
\mathcal{L}_{g}=\sqrt{-g}\eta_{\mu}^{*}(g_{\mu\nu}\square-R_{\mu\nu})\eta^{\nu}
\end{equation}
where terms containing $h{}_{\mu\nu}$ have been omitted. For the
full ghost Lagrangian, see Goroff and Sagnotti \cite{Goroff2} on
p 215.

As the gauge invariance with respect to the background field is not
broken, the entire generating functional in the background field method
is gauge invariant with respect to $g_{\mu\nu}$, Therefore, the counterterms
must be solely composed of expressions that are invariant with respect
to gauge transformations of the $g_{\mu\nu}$. The counter Lagrangian
$\Delta\mathcal{L}$ must be dimensionless. From the integration $\int dx^{4}$
at the one loop level, we deduce that $\Delta\mathcal{L}$ must have
dimension $dx^{-4}$ in order for $\int d^{4}x\Delta\mathcal{L}$
to be dimensionless. This is achived by invariant scalars involving
4 derivatives. Hence the counterterm must be a linear combination
of invariant quadratic functions of the curvature. It can therefore
only be composed of terms like 
\begin{equation}
\Delta\mathcal{L}=\alpha R_{\mu\nu\alpha\beta}^{2}+\beta R_{\mu\nu}^{2}+\gamma R^{2}
\end{equation}
the term $R_{\mu\nu\alpha\beta}^{2}$ needs not to be considered since
one can show that 
\begin{equation}
R_{\mu\nu\alpha\beta}^{2}-4R_{\mu\nu}^{2}+R^{2}
\end{equation}
is a total derivative in 4 dimensions, see Appendix B of \cite{thooft}
for a proof. Therefore, we end up with a counter Lagrangian 
\begin{equation}
\Delta\mathcal{L}=\beta R_{\mu\nu}^{2}+\gamma R^{2}
\end{equation}
In \cite{thooft,Velt}, `t Hooft and Veltman succeeded in computing
the coefficients $\beta,\gamma$. First, they showed that to a Lagrangian
\begin{equation}
\mathcal{L}=\sqrt{-g}\left(-\partial_{\mu}\phi^{*}g^{\mu\nu}\partial_{\nu}\phi+2\phi_{i}^{*}N^{\mu}\partial_{\mu}\phi+\phi^{*}M\phi\right)\label{eq:masterformula}
\end{equation}
where $N^{\mu}$and $M$ are c-number functions of spacetime, there
corresponds a counterterm 
\begin{eqnarray}
\mathcal{\Delta L} & = & \frac{\sqrt{-g}}{2\epsilon}tr\left(\frac{1}{12}Y^{\mu\nu}Y_{\mu\nu}+\frac{1}{2}\left(M-N^{\mu}N_{\mu}-\nabla_{\mu}N^{\mu}-\frac{1}{6}R\right)^{2}\right.\label{eq:counterterm1}\\
 &  & \left.+\frac{1}{60}(R_{\mu\nu}R^{\mu\nu}-\frac{1}{3}R^{2})\right)
\end{eqnarray}
where 
\begin{equation}
Y_{\mu\nu}=\nabla_{\mu}N_{\nu}-\nabla_{\nu}N_{\mu}+[N_{\mu},N_{\nu}]
\end{equation}
Then, `t Hooft and Veltman consider the Lagrangian of gravity coupled
to a massless scalar field 
\begin{equation}
\mathcal{L}=\sqrt{-\overline{g}}(\overline{R}-\frac{1}{2}\partial_{\mu}\overline{\phi}g^{\mu\nu}\partial_{\nu}\overline{\phi})
\end{equation}
where $\overline{\phi}=\phi_{B}+\phi$ with $\phi_{B}$ as a background
field. After expanding this Lagrangian to second order in the quantum
fields $h_{\mu\nu}$and $\phi_{B}$ and finding the appropriate gauge
fixing terms, `t Hooft and Veltmann bring their Lagrangian into a
similar form as eq. (\ref{eq:masterformula}). After some arithmetic
manipulation, they get a counterterm 
\begin{eqnarray}
\Delta\mathcal{L} & = & \frac{\sqrt{-g}}{\epsilon}\left(\frac{9}{720}R^{2}+\frac{43}{120}R_{\alpha\beta}R^{\alpha\beta}+\frac{1}{2}(\partial_{\mu}\phi_{B}g^{\mu\nu}\phi_{B})^{2}\right.\label{eq:counterterm}\\
 &  & \left.-\frac{1}{12}R(\partial_{\mu}\phi_{B}g^{\mu\nu}\partial_{\nu}\phi_{B})+2(\nabla_{\mu}\nabla^{\mu}\phi_{B})^{2}\right)\nonumber 
\end{eqnarray}
this counterterm still contains contributions from closed loops of
$\phi$ particles. The contribution of these closed loops can be obtained
from eq (\ref{eq:counterterm1}) by setting $M=N=0$ and is
\begin{equation}
\frac{1}{2}\sqrt{-g}\left(\frac{1}{144}R^{2}+\frac{1}{120}(R_{\mu\nu}R^{\mu\nu})-\frac{1}{3}R^{2}\right)
\end{equation}
Subtracting this contribution and setting $\phi_{B}=0$ gives the
counterterm for pure gravitation 
\begin{equation}
\mathcal{\Delta L}=\sqrt{-g}\left(\frac{1}{120}R^{2}+\frac{7}{20}R_{\mu\nu}R^{\mu\nu}\right)\label{eq:counterterm1-1}
\end{equation}
The equations of motion for the background field can be inferred from
setting $\underline{\mathcal{L}}$ to zero, similar as in eq. (\ref{eq:dgtfdfvgffcwfrsy}).
From these equations, one gets 
\begin{equation}
\nabla_{\mu}\nabla^{\mu}\phi_{B}=0,\label{eq:formel12}
\end{equation}
 
\begin{equation}
R_{\mu\nu}=-\frac{1}{2}\nabla_{\mu}\phi_{B}\nabla_{\nu}\phi_{B},\label{eq:form2}
\end{equation}
and 
\begin{equation}
R=-\frac{1}{2}\nabla_{\mu}\phi_{B}\nabla^{\mu}\phi_{B}.\label{eq:form1}
\end{equation}
For pure gravitation, we have $\phi_{B}=0$ and therefore $R=R_{\mu\nu}=0$
and the counterterm vanishes. On the other hand, inserting eqs (\ref{eq:formel12}),
(\ref{eq:form2}) and (\ref{eq:form1}) into eq. (\ref{eq:counterterm}),
gives 
\begin{equation}
\Delta\mathcal{L}=\frac{203\sqrt{-g}}{80\epsilon}R^{2}.
\end{equation}
In case of pute gravity, the counterterm vanishes on shell if the
equations of motion are fulfilled. It therefore can be written as
\begin{equation}
\Delta\mathcal{L}=\frac{1}{\epsilon}F(g_{\mu\nu})\frac{\mathcal{\delta L}}{\delta g_{\mu\nu}}
\end{equation}
where $F$ is a covariant combination of $g_{\mu\nu}$ such that $\Delta\mathcal{L}$
consists of invariant quadratic functions of curvature. This actually
corresponds to a to a simple field redefinition 
\begin{equation}
\mathcal{L}(g_{\mu\nu}+\frac{1}{\epsilon}F(g_{\mu\nu}))=\mathcal{L}(g_{\mu\nu})+\frac{1}{\epsilon}F(g_{\mu\nu})\frac{\mathcal{\delta L}}{\delta g_{\mu\nu}}
\end{equation}
Such a redefinition of the quantum fields can not have any observable
effect, as it does not lead to different Feynman diagrams. Therefore,
the vanishing of $\Delta\mathcal{L}$ in case of pure gravity implies
that there are no divergences at the one loop level. With the inclusion
of matter, the scalar $R^{2}$ appears in the countertern that does
not vanish. Moreover, a term like $R^{2}$ is absent in the original
Lagrangian and therefore, the theory of gravity coupled to a massless
scalar has nonrenormalizable divergences at the one loop level. 

One may ask whether the massless scalar is just the wrong matter field.
However, using similar techniques than 't Hooft and Veltman, Deser,
Tsao, and Nieuwenhuizen found non renormalizable divergences at one
loop level if the Lagrangian of gravitation was coupled to Yang Mills
\cite{Einstein YM}, or Maxwell Fields\cite{EinsteinMaxwell-1}. Finally,
Deser and Nieuwenhuizen also computed non-renormalizable divergences
in one loop amplitudes if the gravitational field is coupled to a
fermion field \cite{Einsteindirac}. The latter is more difficult
to compute since the Lagrangian of gravitation can only be coupled
to Dirac fermions if one describes the gravitational field by using
tedrads tetrads. 

All this leads to the assumption that perhaps an additional symmetry
is needed that cancels the divergences. One candidate for such a symmetry
is supersymmetry, where for each fermion there exists a supersymmetric
partner boson. In 1986, however, Goroff and Sagnotti \cite{Goroff1,Goroff2},
using the help of computeralgebra, calculated counterterms for the
second loop level. They found nonrenormalizable divergences even in
pure gravity. 

At two loop order, one has an additional integration and the counterterm
therefore must be composed out of three Riemann tensors or out of
two Riemann tensors and two derivatives. This leaves the combinations
$\nabla^{\mu}R\nabla_{\mu}R$, $R^{3}$, $\nabla_{\mu}R_{\alpha\beta}\nabla^{\mu}R^{\alpha\beta}$,
$RR_{\alpha\beta}R^{\alpha\beta}$, $R_{\alpha\gamma}R_{\beta\delta}R^{\alpha\beta\gamma\delta}$,
$R_{\alpha}^{\;\;\beta}R_{\beta}^{\;\;\gamma}R_{\gamma}^{\;\;\alpha}$,
$RR_{\alpha\beta\gamma\delta}R^{\alpha\beta\gamma\delta}$, $R^{\alpha\eta\gamma\delta}R_{\text{\ensuremath{\alpha\beta\gamma\epsilon}}}R_{\delta}^{\;\;\epsilon}$,
$R_{\;\;\;\gamma\delta}^{^{\alpha\beta}}R_{\;\;\;\epsilon\zeta}^{\gamma\delta}R_{\;\;\;\alpha\beta}^{\epsilon\zeta}$,
$R_{\alpha\beta\gamma\delta}R_{\;\;\epsilon\;\;\zeta}^{\alpha\;\;\gamma}R^{\beta\epsilon\delta\zeta}$.
The first six terms vanish on mass shell, due to eq.(\ref{eq:form2})
and (\ref{eq:form1}). This leaves the last two expressions- According
to Goroff and Sagnotti, they are linearly dependent, and so we are
left with a two loop counternetn in form of 
\begin{equation}
\Delta\mathcal{L}=\frac{\alpha}{\epsilon}\sqrt{-g}R_{\;\;\;\gamma\delta}^{^{\alpha\beta}}R_{\;\;\;\epsilon\zeta}^{\gamma\delta}R_{\;\;\;\alpha\beta}^{\epsilon\zeta}.
\end{equation}
Goroff and Sagnotti found that the one shell diagrams contributing
to this invariant are the following two loop vertex corrections:

\begin{fmffile}{fmfdefykj}
\begin{fmfgraph}(100,100)\fmfsurroundn{e}{8}\begin{fmffor}{n}{1}{1}{8}\fmf{phantom}{e[n],i[n]} \end{fmffor} \fmfcyclen{phantom,tension=8/8}{i}{8} \fmf{plain,left,tension=1/3,tag=1}{i5,i1} \fmf{plain,left,tension=1/3,tag=2}{i1,i5} \fmf{plain,tension=1/3,tag=3}{i1,a1,i5} \fmfposition \fmffreeze \fmfipath{p[]} \fmfiset{p1}{vpath1(__i5,__i1)} \fmfiset{p2}{vpath2(__i1,__i5)} \fmfi{photon}{point 1length(p1)/4 of p1 -- vloc(__e4)} \fmfi{photon}{point 2length(p1)/4 of p1 -- vloc(__e3)} \fmfi{photon}{point 3length(p1)/4 of p1 -- vloc(__e2)}\fmfiv{d.sh=circle,d.f=1,d.siz=2thick}{point 1length(p1)/4 of p1} \fmfiv{d.sh=circle,d.f=1,d.siz=2thick}{point 2length(p1)/4 of p1} \fmfiv{d.sh=circle,d.f=1,d.siz=2thick}{point 3length(p1)/4 of p1}
\end{fmfgraph}  
\begin{fmfgraph}(100,100)
\fmfsurroundn{e}{8} \begin{fmffor}{n}{1}{1}{8} \fmf{phantom}{e[n],i[n]} \end{fmffor} \fmfcyclen{phantom,tension=8/8}{i}{8} \fmf{plain,left,tension=1/3,tag=1}{i5,i1} \fmf{plain,left,tension=1/3,tag=2}{i1,i5} \fmf{plain,tension=1/3,tag=3}{i1,a1,i5} \fmfposition \fmffreeze \fmfipath{p[]} \fmfiset{p1}{vpath1(__i5,__i1)} \fmfiset{p2}{vpath2(__i1,__i5)} \fmfi{photon}{point 1length(p1)/4 of p1 -- vloc(__e4)} \fmfi{photon}{point 3length(p1)/4 of p1 -- vloc(__e2)} \fmfi{photon}{point 2length(p1)/4 of p2 -- vloc(__e7)}
\fmfiv{d.sh=circle,d.f=1,d.siz=2thick}{point 1length(p1)/4 of p1} \fmfiv{d.sh=circle,d.f=1,d.siz=2thick}{point 3length(p1)/4 of p1} \fmfiv{d.sh=circle,d.f=1,d.siz=2thick}{point 2length(p1)/4 of p2}
\end{fmfgraph}
\begin{fmfgraph}(100,100)
\fmfsurroundn{e}{8} \begin{fmffor}{n}{1}{1}{8} \fmf{phantom}{e[n],i[n]} \end{fmffor} \fmfcyclen{phantom,tension=8/8}{i}{8} \fmf{plain,left,tension=1/3,tag=1}{i5,i1} \fmf{plain,left,tension=1/3,tag=2}{i1,i5} \fmf{plain,tension=1/3,tag=3}{i1,a1,i5} \fmfposition \fmffreeze \fmfipath{p[]} \fmfiset{p1}{vpath1(__i5,__i1)} \fmfiset{p2}{vpath2(__i1,__i5)} \fmfi{photon}{point 1length(p1)/4 of p1 -- vloc(__e4)} \fmfi{photon}{point 3length(p1)/4 of p1 -- vloc(__e2)} \fmfi{photon}{point 4length(p1)/4 of p1 -- vloc(__e1)}
\fmfiv{d.sh=circle,d.f=1,d.siz=2thick}{point 1length(p1)/4 of p1} \fmfiv{d.sh=circle,d.f=1,d.siz=2thick}{point 3length(p1)/4 of p1} \fmfiv{d.sh=circle,d.f=1,d.siz=2thick}{point 4length(p1)/4 of p1}
\end{fmfgraph} \\
\begin{fmfgraph}(100,100)
\fmfsurroundn{e}{8} \begin{fmffor}{n}{1}{1}{8} \fmf{phantom}{e[n],i[n]} \end{fmffor} \fmfcyclen{phantom,tension=8/8}{i}{8} \fmf{plain,left,tension=1/3,tag=1}{i5,i1} \fmf{plain,left,tension=1/3,tag=2}{i1,i5} \fmf{plain,tension=1/3,tag=3}{i1,a1,i5} \fmfposition \fmffreeze \fmfipath{p[]} \fmfiset{p1}{vpath1(__i5,__i1)} \fmfiset{p2}{vpath2(__i1,__i5)} \fmfi{photon}{point 0length(p1)/4 of p1 -- vloc(__e5)} \fmfi{photon}{point 2length(p1)/4 of p1 -- vloc(__e3)} \fmfi{photon}{point 4length(p1)/4 of p1 -- vloc(__e1)}
\fmfiv{d.sh=circle,d.f=1,d.siz=2thick}{point 0length(p1)/4 of p1} \fmfiv{d.sh=circle,d.f=1,d.siz=2thick}{point 2length(p1)/4 of p1} \fmfiv{d.sh=circle,d.f=1,d.siz=2thick}{point 4length(p1)/4 of p1}
\end{fmfgraph}
\begin{fmfgraph}(100,100)
\fmfsurroundn{e}{8} \begin{fmffor}{n}{1}{1}{8} \fmf{phantom}{e[n],i[n]} \end{fmffor} \fmfcyclen{phantom,tension=8/8}{i}{8} \fmf{plain,left,tension=1/3,tag=1}{i5,i1} \fmf{plain,left,tension=1/3,tag=2}{i1,i5} \fmf{plain,tag=3}{i1,a1,i5} \fmf{phantom,tension=0.0,tag=3}{a1,i3}
\fmfposition \fmffreeze \fmfipath{p[]} \fmfiset{p1}{vpath1(__i5,__i1)} \fmfiset{p2}{vpath2(__i1,__i5)} \fmfiset{p3}{vpath3(__a1,__i3)} \fmfi{photon}{subpath (0,length(p1)/6) of p3}
\fmfi{photon}{point 2length(p1)/4 of p1 -- vloc(__e3)} \fmfi{photon}{point 2length(p1)/4 of p2 -- vloc(__e7)}
\fmfiv{d.sh=circle,d.f=1,d.siz=2thick}{point 2length(p1)/4 of p1} \fmfiv{d.sh=circle,d.f=1,d.siz=2thick}{point 2length(p1)/4 of p2} \fmfiv{d.sh=circle,d.f=1,d.siz=2thick}{point 0length(p1)/4 of p3}
\end{fmfgraph} 
\begin{fmfgraph}(100,100)
\fmfsurroundn{e}{8} \begin{fmffor}{n}{1}{1}{8} \fmf{phantom}{e[n],i[n]} \end{fmffor} \fmfcyclen{phantom,tension=8/8}{i}{8} \fmf{plain,left,tension=1/3,tag=1}{i5,i1} \fmf{plain,left,tension=1/3,tag=2}{i1,i5} \fmf{plain,tension=1/3,tag=3}{i1,a1,i5} \fmfposition \fmffreeze \fmfipath{p[]} \fmfiset{p1}{vpath1(__i5,__i1)} \fmfiset{p2}{vpath2(__i1,__i5)} \fmfi{photon}{point 2length(p1)/4 of p1 -- vloc(__e3)} \fmfi{photon}{point 4length(p1)/4 of p1 -- vloc(__e1)} \fmfi{photon}{point 2length(p1)/4 of p2 -- vloc(__e7)}
\fmfiv{d.sh=circle,d.f=1,d.siz=2thick}{point 2length(p1)/4 of p1} \fmfiv{d.sh=circle,d.f=1,d.siz=2thick}{point 4length(p1)/4 of p1} \fmfiv{d.sh=circle,d.f=1,d.siz=2thick}{point 2length(p1)/4 of p2}
\end{fmfgraph}\\
\begin{fmfgraph}(100,100)
\fmfsurroundn{e}{8} \begin{fmffor}{n}{1}{1}{8} \fmf{phantom}{e[n],i[n]} \end{fmffor} \fmfcyclen{phantom,tension=8/8}{i}{8} \fmf{plain,left,tension=1/3,tag=1}{i5,i1} \fmf{plain,left,tension=1/3,tag=2}{i1,i5} \fmf{plain,tension=1/3,tag=3}{i1,a1,i5} \fmfposition \fmffreeze \fmfipath{p[]} \fmfiset{p1}{vpath1(__i5,__i1)} \fmfiset{p2}{vpath2(__i1,__i5)} \fmfi{photon}{point 1length(p1)/4 of p1 -- vloc(__e4)} \fmfi{photon}{point 1length(p1)/4 of p1 -- vloc(__e3)} \fmfi{photon}{point 3length(p1)/4 of p1 -- vloc(__e2)}
\fmfiv{d.sh=circle,d.f=1,d.siz=2thick}{point 1length(p1)/4 of p1} \fmfiv{d.sh=circle,d.f=1,d.siz=2thick}{point 3length(p1)/4 of p1}
\end{fmfgraph}
\begin{fmfgraph}(100,100)
\fmfsurroundn{e}{8} \begin{fmffor}{n}{1}{1}{8} \fmf{phantom}{e[n],i[n]} \end{fmffor} \fmfcyclen{phantom,tension=8/8}{i}{8} \fmf{plain,left,tension=1/3,tag=1}{i5,i1} \fmf{plain,left,tension=1/3,tag=2}{i1,i5} \fmf{plain,tension=1/3,tag=3}{i1,a1,i5} \fmfposition \fmffreeze \fmfipath{p[]} \fmfiset{p1}{vpath1(__i5,__i1)} \fmfiset{p2}{vpath2(__i1,__i5)} \fmfi{photon}{point 2length(p1)/4 of p1 -- vloc(__e2)} \fmfi{photon}{point 2length(p1)/4 of p1 -- vloc(__e4)} \fmfi{photon}{point 2length(p1)/4 of p2 -- vloc(__e7)}
\fmfiv{d.sh=circle,d.f=1,d.siz=2thick}{point 2length(p1)/4 of p1} \fmfiv{d.sh=circle,d.f=1,d.siz=2thick}{point 2length(p1)/4 of p2}
\end{fmfgraph}
\begin{fmfgraph}(100,100)
\fmfsurroundn{e}{8} \begin{fmffor}{n}{1}{1}{8} \fmf{phantom}{e[n],i[n]} \end{fmffor} \fmfcyclen{phantom,tension=8/8}{i}{8} \fmf{plain,left,tension=1/3,tag=1}{i5,i1} \fmf{plain,left,tension=1/3,tag=2}{i1,i5} \fmf{plain,tension=1/3,tag=3}{i1,a1,i5} \fmfposition \fmffreeze \fmfipath{p[]} \fmfiset{p1}{vpath1(__i5,__i1)} \fmfiset{p2}{vpath2(__i1,__i5)} \fmfi{photon}{point 2length(p1)/4 of p1 -- vloc(__e2)} \fmfi{photon}{point 2length(p1)/4 of p1 -- vloc(__e4)} \fmfi{photon}{point 4length(p1)/4 of p1 -- vloc(__e1)}
\fmfiv{d.sh=circle,d.f=1,d.siz=2thick}{point 2length(p1)/4 of p1} \fmfiv{d.sh=circle,d.f=1,d.siz=2thick}{point 4length(p1)/4 of p1}
\end{fmfgraph}\\
\begin{fmfgraph}(100,100)
\fmfsurroundn{e}{8} \begin{fmffor}{n}{1}{1}{8} \fmf{phantom}{e[n],i[n]} \end{fmffor} \fmfcyclen{phantom,tension=8/8}{i}{8} \fmf{plain,left,tension=1/3,tag=1}{i5,i1} \fmf{plain,left,tension=1/3,tag=2}{i1,i5} \fmf{plain,tension=1/3,tag=3}{i1,a1,i5} \fmfposition \fmffreeze \fmfipath{p[]} \fmfiset{p1}{vpath1(__i5,__i1)} \fmfiset{p2}{vpath2(__i1,__i5)} \fmfi{photon}{point 2length(p1)/4 of p1 -- vloc(__e3)} \fmfi{photon}{point 4length(p1)/4 of p1 -- vloc(__e2)} \fmfi{photon}{point 4length(p1)/4 of p1 -- vloc(__e8)}
\fmfiv{d.sh=circle,d.f=1,d.siz=2thick}{point 2length(p1)/4 of p1} \fmfiv{d.sh=circle,d.f=1,d.siz=2thick}{point 4length(p1)/4 of p1}
\end{fmfgraph}
\begin{fmfgraph}(100,100)
\fmfsurroundn{e}{8} \begin{fmffor}{n}{1}{1}{8} \fmf{phantom}{e[n],i[n]} \end{fmffor} \fmfcyclen{phantom,tension=8/8}{i}{8} \fmf{plain,left,tension=1/3,tag=1}{i5,i1} \fmf{plain,left,tension=1/3,tag=2}{i1,i5} \fmf{plain,tension=1/3,tag=3}{i1,a1,i5} \fmfposition \fmffreeze \fmfipath{p[]} \fmfiset{p1}{vpath1(__i5,__i1)} \fmfiset{p2}{vpath2(__i1,__i5)} \fmfi{photon}{point 0length(p1)/4 of p1 -- vloc(__e5)} \fmfi{photon}{point 4length(p1)/4 of p1 -- vloc(__e2)} \fmfi{photon}{point 4length(p1)/4 of p1 -- vloc(__e8)}
\fmfiv{d.sh=circle,d.f=1,d.siz=2thick}{point 0length(p1)/4 of p1} \fmfiv{d.sh=circle,d.f=1,d.siz=2thick}{point 4length(p1)/4 of p1}
\end{fmfgraph}
\begin{fmfgraph}(100,100) \fmfsurroundn{e}{8} \begin{fmffor}{n}{1}{1}{8} \fmf{phantom}{e[n],i[n]} \end{fmffor} \fmfcyclen{phantom,tension=8/8}{i}{8}
\fmf{plain,left,tension=1/2,tag=1}{i3,v1} \fmf{plain,left,tension=1/2,tag=2}{v1,i3} \fmf{plain,left,tension=1/2,tag=3}{v1,i7} \fmf{plain,left,tension=1/2,tag=4}{i7,v1} \fmf{photon}{i7,e7} \fmffreeze \fmfipath{p[]} \fmfiset{p1}{vpath1(__i3,__v1)} \fmfiset{p2}{vpath2(__v1,__i3)} \fmfiset{p3}{vpath4(__v1,__i7)} \fmfi{photon}{point 1length(p1)/4 of p1 -- vloc(__e2)} \fmfi{photon}{point 3length(p1)/4 of p2 -- vloc(__e4)}
\fmfiv{d.sh=circle,d.f=1,d.siz=2thick}{point 1length(p1)/4 of p1} \fmfiv{d.sh=circle,d.f=1,d.siz=2thick}{point 3length(p1)/4 of p2} \fmfiv{d.sh=circle,d.f=1,d.siz=2thick}{point 0length(p1)/4 of p3}
\end{fmfgraph}\\
\begin{fmfgraph}(100,100) \fmfsurroundn{e}{8} \begin{fmffor}{n}{1}{1}{8} \fmf{phantom}{e[n],i[n]} \end{fmffor} \fmfcyclen{phantom,tension=8/8}{i}{8}
\fmf{plain,left,tension=1/2,tag=1}{i3,v1} \fmf{plain,left,tension=1/2,tag=2}{v1,i3} \fmf{plain,left,tension=1/2,tag=3}{v1,i7} \fmf{plain,left,tension=1/2,tag=4}{i7,v1} \fmf{phantom}{i5,v1} \fmf{photon}{v1,i1} \fmf{photon}{i3,e3} \fmf{photon}{i7,e7} \fmffreeze \fmfipath{p[]} \fmfiset{p1}{vpath1(__i3,__v1)} \fmfiset{p2}{vpath2(__v1,__i3)} \fmfiset{p3}{vpath3(__v1,__i7)} \fmfiset{p4}{vpath4(__i7,__v1)} \fmfiv{d.sh=circle,d.f=1,d.siz=2thick}{point 0length(p1)/4 of p1} \fmfiv{d.sh=circle,d.f=1,d.siz=2thick}{point 0length(p3)/4 of p3} \fmfiv{d.sh=circle,d.f=1,d.siz=2thick}{point 4length(p3)/4 of p3}
\end{fmfgraph}
\begin{fmfgraph}(100,100) \fmfsurroundn{e}{8} \begin{fmffor}{n}{1}{1}{8} \fmf{phantom}{e[n],i[n]} \end{fmffor} \fmfcyclen{phantom,tension=8/8}{i}{8}
\fmf{plain,left,tension=1/2,tag=1}{i3,v1} \fmf{plain,left,tension=1/2,tag=2}{v1,i3} \fmf{plain,left,tension=1/2,tag=3}{v1,i7} \fmf{plain,left,tension=1/2,tag=4}{i7,v1} \fmf{photon}{i7,e7} \fmffreeze \fmfipath{p[]} \fmfiset{p1}{vpath1(__i3,__v1)} \fmfiset{p2}{vpath2(__v1,__i3)} \fmfiset{p3}{vpath3(__v1,__i7)} \fmfiset{p4}{vpath4(__i7,__v1)} \fmfi{photon}{point 0length(p1)/4 of p1 -- vloc(__e2)} \fmfi{photon}{point 0length(p1)/4 of p1 -- vloc(__e4)}
\fmfiv{d.sh=circle,d.f=1,d.siz=2thick}{point 0length(p1)/4 of p1} \fmfiv{d.sh=circle,d.f=1,d.siz=2thick}{point 4length(p3)/4 of p3}
\end{fmfgraph}
\end{fmffile}

In the computation, the one loop counterterms where treated as a field
redefinition, in order to prevent problems with terms mixed of background
and quantum field that arise in loop computations with the background
field method. Furthermore, a combined notation for gravitons and ghosts
was used that made it possible to implement the calculation more easily
on a computer (the solid circles in the diagrams actually represent
these combined internal ghost graviton lines, while the whiggled external
lines are the gravitons). The calculation took three days on a VAX
11/780 gave the result
\begin{equation}
\Delta\mathcal{L}=\frac{209}{2880(4\pi)^{4}}\frac{1}{\epsilon}\int d^{4}x\sqrt{-g}R_{\;\;\gamma\delta}^{^{\alpha\beta}}R_{\;\;\epsilon\zeta}^{\gamma\delta}R_{;\;\;\alpha\beta}^{\epsilon\zeta}
\end{equation}

Unfortunately, the descriptions in the article of Goroff and Sagnotti
are not very detailed. More details of a different approach were given
by deVen \cite{Vandeven}, who used a different technique to reproduce
the result of Goroff and Sagnotti, However, he also had to resort
to a computer implementation.

\subsection{Considerations on the non perturbative evaluation of the gravitational
path integral }

One can also try to evaluate the path integral without using a perturbation
expansion of the action. In the following we make some comments on
attempts to evaluate the path integral non-perturbativly by the methods
of Euclidean quantum gravity. Usually, the action of the gravitational
field is taken to be $S=\int d^{4}x\sqrt{-g}R$. However, in this
action a boundary term is omitted that will be crucial for the following.
If we vary the action, we get 
\begin{eqnarray}
\delta S & = & \int d^{4}x\left(\sqrt{-g}g^{\mu\nu}\delta R_{\mu\nu}+\sqrt{-g}R_{\mu\nu}\delta g^{\mu\nu}+R\delta\sqrt{-g}\right)\nonumber \\
 & = & \int d^{4}x\left(\sqrt{-g}g^{\mu\nu}\delta R_{\mu\nu}+\sqrt{-g}R_{\mu\nu}\delta g^{\mu\nu}-\frac{1}{2}R\sqrt{-g}g_{\mu\nu}\delta g^{\mu\nu}\right)\nonumber \\
 & = & \int d^{4}x\left(\sqrt{-g}g^{\mu\nu}\delta R_{\mu\nu}+\sqrt{-g}\left(R_{\mu\nu}-\frac{1}{2}Rg_{\mu\nu}\right)\delta g^{\mu\nu}\right)
\end{eqnarray}

From the last term on the right hand side, Einstein's equations can
be derived. The first term yields with . 
\begin{equation}
\delta R_{\mu\nu}=\nabla_{\lambda}\delta\Gamma_{\nu\mu}^{\lambda}-\nabla_{\nu}\delta\Gamma_{\lambda\mu}^{\lambda}
\end{equation}
the result
\begin{equation}
\int d^{4}x\sqrt{-g}\nabla_{\sigma}\left(g^{\mu\nu}\delta\Gamma_{\mu\nu}^{\sigma}-g^{\mu\nu}\delta\Gamma_{\lambda\mu}^{\lambda}\right)=\int d^{4}x\sqrt{-g}\nabla_{\sigma}\left(g_{\mu\nu}\nabla^{\sigma}\delta g^{\mu\nu}-\nabla_{\lambda}\delta g^{\sigma\lambda}\right)
\end{equation}
By Stokes theorem, the volume integral over the divergence can be
converted into a surface integral around the boundary $\partial M$
of the spacetime manifold. Further evaluation of the integrand, see
\cite{Wald}, shows that it is ecactly equal to $-2\sqrt{-\gamma}K$
where $K$ is the trace of the extrinsic curvature and $\gamma$ is
the three metric on the boundary. Accordingly, we have to add this
term to the action for the gravitational field. Our new action now
reads 
\begin{equation}
S=\int d^{4}x\sqrt{-g}R+2\int_{\partial M}d^{3}x\sqrt{-\gamma}K-C
\end{equation}
where $C$ is a constant that is independent of $g$. In asymptotically
flat space a natural choice for $C$ is such that $S=0$ for the Minkowski
metric $\eta_{\mu\nu}$, or 
\begin{equation}
C=-2\int_{\partial M}d^{3}x\sqrt{-\gamma}K^{0}
\end{equation}
with $K^{0}$ as the extrinsic curvature at the boundary embedded
in flat space. Thereby, the action becomes 
\begin{equation}
S=\int d^{4}x\sqrt{-g}R+2\int_{\partial M}d^{3}x\sqrt{-\gamma}K-2\int_{\partial M}d^{3}x\sqrt{-\gamma}K^{0}
\end{equation}
For a quantum theory with a scalar field $\varphi$, the path integral
\begin{equation}
Z=\int\mathcal{D}x(t)e^{iS(\varphi)}
\end{equation}
oscillates and does not converge. However, the integral can be made
to converge by a Wick rotation. One replaces $t$ by $-i\tau$, which
introduces a factor of $-i$ in the volume integral of the action.
The path integral then becomes 
\begin{equation}
Z_{eu}=\int\mathcal{D}\varphi(t)e^{-I(\varphi)}
\end{equation}
where 
\begin{equation}
I(\varphi)=-iS(\varphi)
\end{equation}
is the Euclidean action. Since $I(\varphi)\in\mathrm{\mathbb{R}}$
and $I(\varphi)\geq0$ for fields that are real on the Euclidean space
$(\tau,x^{1},x^{2},x^{3})$, the path integral converges, which makes
its evaluation possible.

Unfortunately, this is not the case with the gravitational field.
After a Wick rotation, the volume element $d^{4}x\sqrt{-g}$ becomes
$-id^{4}x\sqrt{g}$ and Euclidean action for the gravitational field
is: 
\begin{equation}
I=-\int d^{4}x\sqrt{g}R-2\int_{\partial M}d^{3}x\sqrt{\gamma}(K-K^{0})\label{eq:fulleuclideanaction}
\end{equation}
The minus sign comes from the fact that the direction of rotation
into the complex plane must be the same as that for the matter fields
that might be included into the action. 

According to Gibbons, Hawking and Perry, \cite{Zetafunction}, a conformal
transformation $\tilde{g}_{\mu\nu}=\Omega^{2}g_{\mu\nu}$,  with $\Omega$
as a positive conformal factor that is equal to one on $\partial M$
changes this action into 
\begin{equation}
I(\Omega^{2},g)=-\int d^{4}x\sqrt{g}\left(\Omega^{2}R+6g^{\mu\nu}\partial_{\mu}\Omega\partial_{\nu}\Omega\right)-2\int_{\partial M}d^{3}x\sqrt{\gamma}\Omega^{2}(K-K^{0})
\end{equation}
Because of the covariant derivatives of the conformal factor, $I(\Omega,g)$
can be arbitrarily negative in case a rapidly varying conformal factor
is chosen. 

The Euclidean path integral can be formulated with a quantity 
\begin{equation}
Y(g)=\int\mathcal{D}\Omega e^{-I(\Omega^{2},g)}
\end{equation}
as 
\begin{equation}
Z_{eu}=\int\mathcal{D}gY(g)
\end{equation}
where one might as well include the necessary ghost and gauge fixing
terms.

The path integral in this formulation contains a summation over all
possible conformal factors $\Omega$. Some of these factors lead to
a negative action, which result in a divergent $Y(g)$. Thereby, even
when evaluated without the appeal to a perturbation series in terms
of Feynman diagrams, the path integral of gravitation is highly divergent.
However, Schoen and Yau \cite{Posact} were able to prove that at
least for asymptotically Euclidean spacetimes with $R=0$, the so
called positive action conjecture holds, which states that $I\geq0$,
with $I=0$ if $g$ is flat.

\section{Hawking's articles on the topology of spacetime at Planck-scale}

It was Stephen Hawking who gave a more rigorous outline of the topology
of spacetime that follows from the quantum gravitational path integral
\cite{HawkingSpacetimefoam}. For understanding this work, we will
need some basic definitions of algebraic topology.

\subsection{Some definitions from algebraic topology}

For what follows, a detailed description of the mathematics involved
would lead too far. We therefore redirect the interested reader to
the literature. For example the expository article of Eguchi, Gilkey
and Hanson \cite{EguchiGilkeyHanson} will provide a nice entry. Also
the book of Nakahara \cite{Nakahara} will cover most of the relevant
mathematics, and many of the mathematical definitions and proofs below
can be found in similar, and sometimes much more detailed form in
Nakahara's book. Also, the books from R. C. Kirby \cite{Kirby} as
well as A.Scorpan on four manifolds \cite{Wilde World} were used.
Sometimes, the book of Bredon \cite{Bredon} was of help. We confine
us here to give a few: definitions on topology.
\begin{defn}
Topology definitions
\begin{itemize}
\item A topological space $X$ is said to be disconnected if $X=T_{1}\cup T_{2},$
where $T_{i}\subset X,T_{i}\neq\emptyset$ are open sets with $T_{1}\cap T_{2}=\emptyset$.
If $X$ is not disconnected, $X$ is called connected.
\item We say $T\subset X$ is a connected component of $X$ if $T$ is a
maximal connected subset of $X$ (the connected subsets are ordered
by inclusion).
\item A connected manifold is a connected topological space that is also
a manifold. 
\item Let $M$ be a connected manifold covered by charts $\{U_{j}\}$.We
say that $M$ is orientable if for any overlapping charts $U_{i},U_{j}$
there exists a local coordinate system $\{x^{\mu}\}$ for $U_{i}$
and $\{y^{\nu}\}$ for $U_{j}$such that $det(\partial x^{\mu}/\partial y^{\nu})>0$
\item An n-manifold with boundary is a topological space $X$ in which each
point $x\in X$ has a neighborhood $U_{x}$ homeomorphic to either
$\mathbb{R}^{n}$ or to $H^{n}:=\left\{ (x_{1},\ldots x_{n})\in\mathbb{R}^{n}|x_{n}\geq0\right\} $.
\item The set of all points $x\in X$ of an n-dimensional manifold with
boundary that only have neighborhoods homeomorphic to $H^{n}$ is
called the boundary of $X$ and denoted by $\partial X$. If $\partial X=\emptyset$,
then the manifold is called closed.
\item A continuous function $f:[0,1]\rightarrow X$ with $f(0)=x\in X$
and $f(1)=y\in X$ is called path. 
\item If there is a path joining any two points $x,y\in X$ the topological
space $X$ is called path connected.
\item A homotopy between two continuous functions $f,g:X\rightarrow Y$
is a continuous function $H:X\times[0,1]\rightarrow Y$ where, $\forall x\in X:H(x,0)=f(x)$
and $H(x,1)=g(x)$. Continuous functions $f,g:X\rightarrow Y$ are
said homotopic, or $f\thicksim g$,  if and only if there exists a
homotopy $H$ between them. This describes an equivalence class of
homotopic functions, see, e.g. Nakahara \cite{Nakahara}, p. 124 for
a proof.
\item If $f$ and $g$ are continuous maps from $X$ to $Y$ and $K$ is
a subset of $X$, then we say that $f$ and $g$ are homotopic relative
to $K$ if there exists a homotopy $H:X\times[0,1]\mapsto Y$ between
$f$ and $g$ such that $\forall k\in K,\forall t\in[0,1]:H(k,t)=f(k)=g(k)$. 
\item A topological space $X$ is simply connected if and only if it is
path-connected, and whenever two continuous maps $u:[0,1]\rightarrow X$
and $w:[0,1]\rightarrow X$ with the same starting and ending points
$u(0)=w(0)$ and $u(1)=w(1)$, are homotopic relative to $[0,1]$. 
\item Two topological spaces $X$ and $Y$ are homotopy equivalent if $\exists$
continuous maps $f:X\mapsto Y$ and $g:Y\mapsto X$ such that $g\circ f$
is homotopic to the identity map of $X$ and $f\circ g$ homotopic
to the identity map of $Y$. 
\item Let $e_{0},e_{1},\ldots e_{r}$ linearly independent vectors in $\mathbb{R}^{m}$
where. $m\geq r$. Then an $r$-simplex is defined by 
\begin{equation}
\sigma_{r}\equiv(e_{0},e_{1},\ldots e_{r})\equiv\left\{ x\in\mathbb{R}^{m}|x=\sum_{i=0}^{r}\lambda_{i}e_{i}\wedge\sum_{i=0}^{r}\lambda_{i}=1\wedge0\leq\lambda_{i}\leq1\right\} 
\end{equation}
with the $\lambda_{i}$ called barycentric coordinates. 
\item Given an r simplex $\sigma_{r}\equiv(e_{0},e_{1},\ldots e_{r})$ we
define the $r-1$ simplex 
\begin{equation}
(e_{0},e_{1},\hat{e}_{i}\ldots e_{r})\equiv(e_{0},e_{1},e_{i-1},e_{i+1}\ldots e_{r})
\end{equation}
which we call the $i$-th $r-1$ face of $\sigma_{r}$
\item Let $M$ be an m dimensional manifold and $\sigma_{r}$ an r-simplex.
We define a smooth map $f:\sigma_{r}\rightarrow M$ where $f(\sigma_{r})=\emptyset$
if $\sigma_{r}=\emptyset$. We denote the image of $\sigma_{r}$ by
$f(\sigma_{r})=s_{r}$, and call $s_{r}$ a singular $r$ simplex
in $M$. 
\item If $\{s_{r,i}\}$ is a set of singular r-simplexes in $M$, we define
a singular r-chain in $M$ by a formal sum 
\begin{equation}
c_{r}=\sum_{i}a_{i}s_{r,i}
\end{equation}
which goes over all non-empty $s_{r,i}$ and where $a_{i}\in\mathbb{R}$.
These singular r-chains form a group $C_{r}(M)$ with multiplication
\begin{equation}
c_{r}*c_{r}'=\sum_{i}(a_{i}+a'_{i})s_{r,i}
\end{equation}
unit element 
\begin{equation}
e=\sum_{i}0s_{r,i}
\end{equation}
 and inverse 
\begin{equation}
c_{r}^{-1}=-c_{r}=\sum_{i}(-a_{i})s_{r,i}
\end{equation}
.
\item Given two groups $(A,*)$ and $(B,\cdot)$, a group homomorphism is
a map $h:A\mapsto B$ where$\forall u,v\in A:$ $h(u*v)=h(u)\cdot h(v)$.
\item If $h$ is a group homomorphism, we define $im(h)=\{x|x\in h(A)\subset B)$
and $ker(h)=(x|x\in A:h(x)=e)$ where $e$ is the identity element
of $B$
\item The boundary operator on a singular r-chain is a map $\partial_{r}:C_{r}(M)\rightarrow C_{r-1}(M)$
which is defined by 
\begin{equation}
\partial_{r}c_{r}\equiv\sum_{i}a_{i}\partial_{r}s_{r,i}
\end{equation}
where 
\begin{equation}
\partial_{r}s_{r,i}=f(\partial_{r}\sigma_{r,i})
\end{equation}
The boundary of a 0 simplex is defined to be zero:
\begin{equation}
\partial_{0}(e_{0})\equiv0
\end{equation}
 and we define the boundary of a general r-simplex as 
\begin{equation}
\partial_{r}\sigma_{r}\equiv\sum_{i=0}^{r}(-1)^{i}(e_{0},e_{1},\hat{e}_{i}\ldots e_{r})
\end{equation}

\end{itemize}
\end{defn}
Now we have the following theorems: 
\begin{lem}
$\partial_{r}$ is a group homomorphism\end{lem}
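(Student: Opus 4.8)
The plan is to unwind the definitions and observe that $\partial_{r}$ was introduced precisely as the $\mathbb{R}$-linear extension of the boundary map on singular simplices, so that the homomorphism property reduces to the fact that the group law $*$ on both $C_{r}(M)$ and $C_{r-1}(M)$ is nothing but coefficient-wise addition of barycentric coefficients. First I would fix two singular $r$-chains $c_{r}=\sum_{i}a_{i}s_{r,i}$ and $c_{r}'=\sum_{i}a_{i}'s_{r,i}$, written with respect to a common enumeration $\{s_{r,i}\}$ of singular $r$-simplices in $M$ (padding with zero coefficients where necessary), so that $c_{r}*c_{r}'=\sum_{i}(a_{i}+a_{i}')s_{r,i}$ holds by the very definition of the multiplication on $C_{r}(M)$.

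Then I would compute directly from $\partial_{r}c_{r}\equiv\sum_{i}a_{i}\partial_{r}s_{r,i}$ that $\partial_{r}(c_{r}*c_{r}')=\sum_{i}(a_{i}+a_{i}')\,\partial_{r}s_{r,i}$. The key step is to split this formal sum: each $\partial_{r}s_{r,i}=f(\partial_{r}\sigma_{r,i})$ is a fixed element of $C_{r-1}(M)$, namely the signed sum $\sum_{k=0}^{r}(-1)^{k}f(e_{0},\dots,\hat{e}_{k},\dots,e_{r})$ of its face simplices; since addition in $C_{r-1}(M)$ is again coefficient-wise, scalar multiplication by real numbers distributes over it, and hence $\sum_{i}(a_{i}+a_{i}')\,\partial_{r}s_{r,i}=\big(\sum_{i}a_{i}\partial_{r}s_{r,i}\big)*\big(\sum_{i}a_{i}'\partial_{r}s_{r,i}\big)=\partial_{r}c_{r}*\partial_{r}c_{r}'$. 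This is exactly the identity $\partial_{r}(c_{r}*c_{r}')=\partial_{r}(c_{r})*\partial_{r}(c_{r}')$ required of a group homomorphism.

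The only point that needs care — and it is a bookkeeping obstacle rather than a conceptual one — is making this splitting rigorous at the level of formal sums: one should expand every $\partial_{r}s_{r,i}$ into its constituent singular $(r-1)$-simplices, collect all terms attached to the same singular $(r-1)$-simplex $t$, and check that the coefficient of $t$ in $\partial_{r}(c_{r}*c_{r}')$ equals the sum of its coefficients in $\partial_{r}c_{r}$ and $\partial_{r}c_{r}'$; this is immediate from the distributive law $(a_{i}+a_{i}')(\pm 1)=a_{i}(\pm 1)+a_{i}'(\pm 1)$ in $\mathbb{R}$. I would also note in passing that the same computation shows $\partial_{r}$ is in fact $\mathbb{R}$-linear, not merely additive, which is the form in which it is used in the sequel when defining the homology groups.
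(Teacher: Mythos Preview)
Your proposal is correct and follows essentially the same approach as the paper: both compute $\partial_{r}(c_{r}*c_{r}')=\sum_{i}(a_{i}+a_{i}')\partial_{r}s_{r,i}$ directly from the definitions and then split the sum to obtain $\partial_{r}c_{r}*\partial_{r}c_{r}'$. Your version simply supplies more of the bookkeeping detail (expanding into $(r-1)$-faces and matching coefficients) and the observation about $\mathbb{R}$-linearity, whereas the paper compresses the whole argument into a single displayed equation.
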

\begin{proof}
We have 
\begin{equation}
\partial_{r}(c_{r}*c_{r}')=\partial_{r}\sum_{i}(a_{i}+a'_{i})s_{r,i}=\sum_{i}(a_{i}+a'_{i})\partial_{r}s_{r,i}=c_{r-1}*c_{r-1}
\end{equation}
\end{proof}
\begin{lem}
If $f:G_{1}\rightarrow G_{2}$ is a homomorphism, with $e_{i}$ as
unit element of $G_{i}$ and $x_{i}^{-1}$as inverse of $x_{i}\in G_{i}$,
then $ker(f)$ is a subgroup of $G_{1}$ and $im(f)$ is a subgroup
of $G_{2}$\end{lem}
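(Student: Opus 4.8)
The plan is to check the subgroup axioms for $\ker(f)$ and for $\operatorname{im}(f)$ separately, working directly from the defining property $f(u*v)=f(u)\cdot f(v)$. Before doing so I would first establish the two standard auxiliary identities that this property forces: $f(e_1)=e_2$ and $f(x_1^{-1})=f(x_1)^{-1}$ for every $x_1\in G_1$. The first comes from applying $f$ to $e_1=e_1*e_1$, which gives $f(e_1)=f(e_1)\cdot f(e_1)$, and then cancelling $f(e_1)$ in $G_2$. The second comes from $f(x_1)\cdot f(x_1^{-1})=f(x_1*x_1^{-1})=f(e_1)=e_2$, so that $f(x_1^{-1})$ is the unique inverse of $f(x_1)$ in $G_2$. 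These two facts are precisely what the bare definition of homomorphism in the excerpt does not hand us for free, so they are the only step that requires a moment's thought; everything after is mechanical.

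For $\ker(f)$: by the first identity $e_1\in\ker(f)$, so the kernel is nonempty. If $x,y\in\ker(f)$ then $f(x*y)=f(x)\cdot f(y)=e_2\cdot e_2=e_2$, so $x*y\in\ker(f)$; and if $x\in\ker(f)$ then $f(x^{-1})=f(x)^{-1}=e_2^{-1}=e_2$, so $x^{-1}\in\ker(f)$. Since $\ker(f)\subseteq G_1$ inherits associativity and the identity $e_1$ lies in it, it is a subgroup of $G_1$.

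For $\operatorname{im}(f)$: by the first identity $e_2=f(e_1)\in\operatorname{im}(f)$, so the image is nonempty. If $u,v\in\operatorname{im}(f)$, pick $a,b\in G_1$ with $u=f(a)$ and $v=f(b)$; then $u\cdot v=f(a)\cdot f(b)=f(a*b)\in\operatorname{im}(f)$. If $u=f(a)\in\operatorname{im}(f)$ then $u^{-1}=f(a)^{-1}=f(a^{-1})\in\operatorname{im}(f)$. Associativity is inherited from $G_2$, so $\operatorname{im}(f)$ is a subgroup of $G_2$.

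There is no genuine obstacle in this lemma; if one wanted to be economical one could collapse the three checks in each case into the single one-step subgroup test (nonempty, and closed under $(x,y)\mapsto x*y^{-1}$), but since the excerpt has not packaged the subgroup notion that way I would simply present the nonemptiness, closure, and inverse axioms explicitly as above.
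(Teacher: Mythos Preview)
Your proof is correct and matches the paper's approach: both verify the subgroup axioms for $\ker(f)$ and $\operatorname{im}(f)$ directly, using the identities $f(e_1)=e_2$ and $f(x^{-1})=f(x)^{-1}$. The only cosmetic difference is that you isolate these two identities up front, whereas the paper derives them inline as each axiom is checked.
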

\begin{proof}
If 
\begin{equation}
x,y\in ker(f)\Rightarrow x*y\in ker(f)
\end{equation}
since 
\begin{equation}
f(x*y)=f(x)*f(y)=e_{2}*e_{2}=e_{2}
\end{equation}
similarly, we have $e_{1}\in ker(f)$, because 
\begin{equation}
e_{2}*f(a)=f(a)=f(e_{1}*a)=f(e_{1})*f(a)\Rightarrow e_{2}=f(e_{1}).
\end{equation}
If $x\in ker(f)$ then we have $x^{-1}\in ker(f)$ since 
\begin{equation}
f(e_{1})=e_{2}=f(x^{-1}x)=f(x^{-1})*f(x)=f(x^{-1})*e_{2}.
\end{equation}
With $f(x_{1}),f(x_{2})\in im(f)$ we have 
\begin{equation}
f(x_{1})*f(x_{2})=f(x_{1}*x_{2})\in im(f)
\end{equation}
 and $e_{2}\in im(f)$ because $f(e_{1})=e_{2}$. Furthermore, $f(x)\in im(f)$,
therefore, $\left(f(x)\right)^{-1}\in im(f)$ since 
\begin{equation}
e_{2}=f(e_{1})=f(x*x^{-1})=f(x)*f(x^{-1})\Rightarrow\left(f(x)\right)^{-1}=f(x^{-1})\in im(f).
\end{equation}

\end{proof}
We can use this result in the following definition:
\begin{defn}
Definition
\begin{itemize}
\item Let $c\in C_{r}(M)$. If $\partial_{r}c=0$, then c is called r-cycle.
The set of $r$-cycles $Z_{r}(M)=ker\partial_{r}$ is called r-cycle
group. If $\exists d\in C_{r+1}(M):c=\partial_{r+1}d$ then, c is
called an $r$-boundary. The set of r-boundaries $B_{r}(M)=im\partial_{r+1}$
is called the r-boundary group. 
\end{itemize}
\end{defn}
And we can show an important theorem:
\begin{thm}
$B_{r}(M)\subset Z_{r}(M)$ and they are both subgroups of $C_{r}(M)$\end{thm}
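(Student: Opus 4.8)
The plan is to split the statement into two independent claims: (i) that $Z_r(M)$ and $B_r(M)$ are subgroups of $C_r(M)$, and (ii) the inclusion $B_r(M)\subset Z_r(M)$. The first claim is essentially free given what has been set up. Since $\partial_r\colon C_r(M)\to C_{r-1}(M)$ has already been shown to be a group homomorphism, and since we have already proven that the kernel and the image of a group homomorphism are subgroups of the domain and codomain respectively, we conclude immediately that $Z_r(M)=\ker\partial_r$ is a subgroup of $C_r(M)$ and, applying the image statement to $\partial_{r+1}\colon C_{r+1}(M)\to C_r(M)$, that $B_r(M)=\operatorname{im}\partial_{r+1}$ is a subgroup of $C_r(M)$.

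For claim (ii) the real content is the identity $\partial_r\circ\partial_{r+1}=0$. Given $c\in B_r(M)$, write $c=\partial_{r+1}d$ with $d\in C_{r+1}(M)$; I must show $\partial_r c=0$. Because $\partial_r$ and $\partial_{r+1}$ are homomorphisms and every chain is a finite formal sum of singular simplices, it suffices to check $\partial_r\partial_{r+1}s_{r+1}=0$ for a single singular $(r+1)$-simplex $s_{r+1}=f(\sigma_{r+1})$. Using $\partial_{r+1}s_{r+1}=f(\partial_{r+1}\sigma_{r+1})$ and extending the relation $\partial s_{r,i}=f(\partial\sigma_{r,i})$ linearly, this reduces to the purely combinatorial claim $\partial_r\partial_{r+1}\sigma_{r+1}=0$ for the standard simplex $\sigma_{r+1}=(e_0,\dots,e_{r+1})$.

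I would then carry out the computation by applying the defining formula for the boundary twice. We have
\[
\partial_{r+1}\sigma_{r+1}=\sum_{i=0}^{r+1}(-1)^i(e_0,\dots,\hat{e}_i,\dots,e_{r+1}),
\]
and applying $\partial_r$ to each face deletes a second vertex $e_j$. The sign picked up when removing $e_j$ from $(e_0,\dots,\hat{e}_i,\dots,e_{r+1})$ is $(-1)^j$ if $j<i$ and $(-1)^{j-1}$ if $j>i$, because deleting $e_i$ first shifts the position of every later vertex down by one. Collecting terms, a fixed $(r-1)$-face in which the pair $e_a,e_b$ with $a<b$ has been removed appears exactly twice: once with sign $(-1)^b(-1)^a$ (removing $e_b$ first, then $e_a$) and once with sign $(-1)^a(-1)^{b-1}$ (removing $e_a$ first, then $e_b$), and these two terms cancel. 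Hence $\partial_r\partial_{r+1}\sigma_{r+1}=0$, so $\partial_r c=0$, i.e. $c\in Z_r(M)$.

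The only genuinely delicate point is the sign bookkeeping in this double sum — correctly tracking the index shift when a vertex is deleted from an already truncated simplex, and matching up the two occurrences of each doubly truncated face so that the cancellation is manifest. Everything else is a routine appeal to the homomorphism property of $\partial_r$ and to the lemma on kernels and images already established above.
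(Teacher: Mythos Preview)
Your proposal is correct and follows essentially the same approach as the paper: both reduce the inclusion $B_r(M)\subset Z_r(M)$ to the combinatorial identity $\partial_r\partial_{r+1}\sigma_{r+1}=0$ on the standard simplex via the same sign-cancellation argument, push it through $f$ to singular simplices and then linearly to chains, and invoke the already-proven lemma that kernels and images of homomorphisms are subgroups. The only difference is the order of presentation (you handle the subgroup claim first, the paper last), which is immaterial.
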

\begin{proof}
We first prove that $\partial_{r}(\partial_{r+1}\sigma_{r+1})=\emptyset$
for an r-simplex $\sigma_{r+1}$. This can be done by simply repeated
application of the boundary operator:
\begin{eqnarray}
\partial_{r}(\partial_{r+1}\sigma_{r+1}) & = & \sum_{i=0}^{r+1}(-1)^{i}\partial_{r}(e_{0},e_{1},\hat{e}_{i}\ldots e_{r+1})\nonumber \\
 & = & \sum_{i=0}^{r+1}(-1)^{i}\left(\sum_{j=0}^{i-1}(-1)^{j}(e_{0},e_{1},\ldots,\hat{e}_{j},\ldots\hat{e}_{i}\ldots e_{r+1})\right.\nonumber \\
 &  & \left.+\sum_{j=i+1}^{r+1}(-1)^{j-1}(e_{0},e_{1},\ldots,\hat{e}_{i},\ldots\hat{e}_{j}\ldots e_{r+1})\right)\nonumber \\
 & = & \sum_{j<i}(-1)^{i+j}(e_{0},e_{1},\ldots,\hat{e}_{j},\ldots\hat{e}_{i}\ldots e_{r+1})\nonumber \\
 &  & -\sum_{j>i}(-1)^{i+j}(e_{0},e_{1},\ldots\hat{e}_{i},\ldots\hat{e}_{j}\ldots e_{r+1})\nonumber \\
 & = & \emptyset
\end{eqnarray}
Using our definitions, we get 
\begin{equation}
\partial_{r}(\partial_{r+1}s_{r+1,i})=f(\partial_{r}(\partial_{r+1}\sigma_{r+1,i}))=f(\emptyset)=\emptyset
\end{equation}
But 
\begin{equation}
\partial_{r}(\partial_{r+1}c_{r+1})=\sum_{i}a_{i}\partial_{r}(\partial_{r+1}s_{r+1,i})
\end{equation}
 where the sum goes over all non empty $s_{r,i}$. Therefore, $\partial_{r}(\partial_{r+1}c_{r+1})=0$. 

If $c\in B_{r}(M)$, then $c=\partial_{r+1}d$ for a $d\in C_{r+1}(M)$
and 
\begin{equation}
\partial_{r}c=\partial_{r}(\partial_{r+1}d)=0
\end{equation}
implies that $c\in Z_{r}(M)$ so $B_{r}(M)\subset Z_{r}(M)$. Since
the boundary operator is a homomorphism, and because the image and
kernel of a homomorphism is a subgroup of the image and inverse image,
both $Z_{r}(M)$ and $B_{r}(M)$ are subgroups of $C_{r}$.
\end{proof}
These results can now be used to define the singular cohomology group.
\begin{defn}
Singular cohomology group
\begin{itemize}
\item Two r-cycles $z,z'\in Z_{r}(M)$ are homologous if $z-z'\in B_{r}(M)$
and we say $z\sim z'$. The homologous r-cycles define an equivalence
class, $[z]$ were $[z]=[z']$.
\item We define $H_{r}(M)$ as the set of equivalence classes of r-cycles
\begin{equation}
H_{r}(M)=\{[z]|z\in Z_{r}(M)\}
\end{equation}
and call it a singular homology group. 
\item We define $dim(H_{r}(M))=b_{r}$as the r-th Betti number of $M$
\item The Euler characteristic, or Euler number of an n-dimensional manifold
$M$ is 
\begin{equation}
\chi=\sum_{i=0}^{n}(-1)^{i}b_{i}
\end{equation}
 where $b_{i}$is the i-th Betti number
\end{itemize}
\end{defn}
Heuristically, the r-th Betti number of a manifold $M$ is the number
of independent closed r-surfaces that are not boundaries of some $r+1$
surface. This means, for example $b_{0}$ is the number of connected
components of $M$, $b_{1}$ is the number of one-dimensional or \textquotedbl{}circular\textquotedbl{}
holes in $M$ and $b_{2}$ is the number of two-dimensional \textquotedbl{}voids\textquotedbl{}
or \textquotedbl{}cavities\textquotedbl{} in $M$. 

We now need some definitions on r-forms:
\begin{defn}
r- forms
\begin{itemize}
\item A differential form $\omega_{r}$ of order $r$ or an $r$-form is
a totally antisymmetric tensor of type $(0,r)$ at $p\in M$, with
$M$ as a differentiable manifold. As a $(0,r)$ tensor, $\omega_{r}$
is a map $\omega_{r}:(V_{1},\ldots,V_{r})\rightarrow\mathbb{R}$ where
$(V_{1},\ldots V_{r})\in T_{p}(M)$
\item A wedge product of r one forms is an r-form, given by the totally
antisymmetric tensor product 
\begin{equation}
dx^{\mu_{1}}\wedge dx^{\mu_{2}}\wedge,\ldots\wedge dx^{\mu_{r}}=\sum_{\sigma}sign(\sigma(r))dx^{\mu_{\sigma(1)}}\otimes dx^{\mu_{\sigma(2)}}\otimes,\ldots\otimes dx^{\mu_{\sigma(r)}}
\end{equation}
where the sum goes over all permutations $\sigma$ of $1,2,\ldots,r$
and $sign(\sigma(r))=+1$ for an even permutation and $sign(\sigma(r))=-1$
for an odd permutation.
\item We denote the vector space of all $r$ forms at $p\in M$ by $\Omega_{p}^{r}(M)$
and expand an element $\omega\in\Omega_{p}^{r}(M)$ as 
\begin{equation}
\omega=\frac{1}{r!}\omega_{\mu_{1}\mu_{2},\ldots\mu_{r}}dx^{\mu_{1}}\wedge dx^{\mu_{2}}\wedge,\ldots\wedge dx^{\mu_{r}}
\end{equation}
 where $\omega_{\mu_{1}\mu_{2},\ldots\mu_{r}}$ are totally antisymmetric.
\item For a $p$ form $\omega$ and an $r$ form $\eta$ acting, we define
the wedge product acting on vectors $V_{1},\ldots V_{p+r}$ as follows:
\begin{equation}
\omega\wedge\eta(V_{1},\ldots V_{p+r})=\frac{1}{p!r!}\sum_{\sigma}sign(\sigma)\omega(V_{\sigma(1)},\ldots,V_{\sigma(p)})\eta(V_{\sigma(p+1)},\ldots,V_{\sigma(p+r)})
\end{equation}

\item We now assign an r form on every point in $M$ and denote the space
of all r forms on $M$ as $\Omega^{r}(M)$.
\item The exterior derivative of an r-form is a map $d_{r}:\Omega^{r}(M)\rightarrow\Omega^{r+1}(M)$,
where 
\begin{equation}
d_{r}\omega=\frac{1}{r!}\frac{\partial}{\partial x^{\nu}}\omega_{\mu_{1},\ldots\mu_{r}}dx^{\nu}\wedge dx^{\mu_{1}}\wedge dx^{\mu_{2}}\wedge,\ldots\wedge dx^{\mu_{r}}
\end{equation}

\end{itemize}
\end{defn}
\begin{thm}
We have $d_{r+1}(d_{r}\omega_{r})=0$\end{thm}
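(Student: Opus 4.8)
The plan is to prove $d_{r+1}(d_r\omega_r)=0$ by a direct coordinate computation, the whole point being the clash between the symmetry of second partial derivatives and the antisymmetry of the wedge product. First I would write $\omega_r$ in the standard local expansion
\begin{equation}
\omega_r=\frac{1}{r!}\omega_{\mu_1\cdots\mu_r}\,dx^{\mu_1}\wedge\cdots\wedge dx^{\mu_r},
\end{equation}
and apply the definition of the exterior derivative once to obtain
\begin{equation}
d_r\omega_r=\frac{1}{r!}\frac{\partial\omega_{\mu_1\cdots\mu_r}}{\partial x^{\nu}}\,dx^{\nu}\wedge dx^{\mu_1}\wedge\cdots\wedge dx^{\mu_r},
\end{equation}
which is an $(r+1)$-form with components (up to the antisymmetrization implicit in the wedge basis) given by $\partial_\nu\omega_{\mu_1\cdots\mu_r}$.

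Next I would apply $d_{r+1}$ to this $(r+1)$-form, using the definition again, to get
\begin{equation}
d_{r+1}(d_r\omega_r)=\frac{1}{r!}\frac{\partial^2\omega_{\mu_1\cdots\mu_r}}{\partial x^{\lambda}\partial x^{\nu}}\,dx^{\lambda}\wedge dx^{\nu}\wedge dx^{\mu_1}\wedge\cdots\wedge dx^{\mu_r}.
\end{equation}
The key observation is then that, since $\omega_{\mu_1\cdots\mu_r}$ is a smooth ($C^\infty$) function on the manifold — the forms in question are differential forms on a differentiable manifold — the mixed second partials are equal, i.e. $\partial_\lambda\partial_\nu\omega_{\mu_1\cdots\mu_r}=\partial_\nu\partial_\lambda\omega_{\mu_1\cdots\mu_r}$, so the coefficient array is symmetric under $\lambda\leftrightarrow\nu$. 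On the other hand $dx^{\lambda}\wedge dx^{\nu}=-dx^{\nu}\wedge dx^{\lambda}$, so the basis two-form factor is antisymmetric under the same exchange. The double sum over $\lambda,\nu$ of a symmetric quantity times an antisymmetric one vanishes term by term when the pair $(\lambda,\nu)$ is swapped, hence $d_{r+1}(d_r\omega_r)=0$.

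There is no real obstacle here; the only things worth being slightly careful about are (i) invoking the equality of mixed partial derivatives, which requires the coefficients to be at least $C^2$ (guaranteed by smoothness of the manifold structure fixed earlier), and (ii) that although the argument is carried out in a chart, the vanishing is a statement about the form itself and hence chart-independent. I would close by remarking that this identity is precisely what makes the de Rham complex $\Omega^{r}(M)\xrightarrow{d_r}\Omega^{r+1}(M)$ a cochain complex, the cohomology of which will be used later.
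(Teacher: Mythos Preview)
Your proof is correct and follows exactly the same approach as the paper: compute $d_{r+1}(d_r\omega_r)$ in local coordinates and observe that the symmetric second partials $\partial_\lambda\partial_\nu\omega_{\mu_1\cdots\mu_r}$ are contracted against the antisymmetric $dx^\lambda\wedge dx^\nu$, forcing the result to vanish. The only difference is that you spell out the smoothness hypothesis and the chart-independence remark, which the paper leaves implicit.
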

\begin{proof}
According to our definitions, 
\begin{equation}
d_{r+1}(d_{r}\omega_{r})=\frac{1}{r!}\frac{\partial^{2}}{\partial x^{\nu}\partial x^{\lambda}}\omega_{\mu_{1},\ldots\mu_{r}}dx^{\lambda}\wedge dx^{\nu}\wedge dx^{\mu_{1}},\ldots\wedge dx^{\mu_{r}}
\end{equation}
which is zero because $\frac{\partial^{2}}{\partial x^{\nu}\partial x^{\lambda}}\omega_{\mu_{1},\ldots\mu_{r}}$
is symmetric with respect to $\lambda,\nu$ while $dx^{\lambda}\wedge dx^{\nu}$
is antisymmetric.
\end{proof}
Similar as in the case with the singular homology, we can use this
result able to state the following theorem:
\begin{thm}
$im(d_{r})\subset ker(d_{r+1})$\end{thm}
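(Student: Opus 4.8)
The plan is to mirror exactly the argument already used for singular homology, where $B_r(M)\subset Z_r(M)$ was deduced from the identity $\partial_r\circ\partial_{r+1}=0$. Here the analogous nilpotency statement, $d_{r+1}(d_r\omega_r)=0$, has just been established in the preceding theorem, so only a one-line deduction is needed.

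First I would take an arbitrary element $\eta\in\operatorname{im}(d_r)\subset\Omega^{r+1}(M)$. By definition of the image of the exterior derivative, there exists an $r$-form $\omega_r\in\Omega^r(M)$ with $\eta=d_r\omega_r$. Then I would apply $d_{r+1}$ and invoke the previous theorem:
\begin{equation}
d_{r+1}\eta=d_{r+1}(d_r\omega_r)=0,
\end{equation}
so that $\eta\in\ker(d_{r+1})$. Since $\eta$ was arbitrary, this yields $\operatorname{im}(d_r)\subset\ker(d_{r+1})$.

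There is essentially no obstacle here: the only nontrivial ingredient is $d^2=0$, which is already proven, and the rest is the definition of image and kernel. If one wants to be thorough one could also remark that $\operatorname{im}(d_r)$ and $\ker(d_{r+1})$ are both linear subspaces (equivalently subgroups) of $\Omega^{r+1}(M)$, since $d_{r+1}$ is $\mathbb{R}$-linear, so that the inclusion is an inclusion of subgroups; this parallels the corresponding remark in the homology case and sets up the subsequent definition of the de Rham cohomology group $H^{r+1}_{\mathrm{dR}}(M)=\ker(d_{r+1})/\operatorname{im}(d_r)$.
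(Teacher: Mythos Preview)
Your proof is correct and essentially identical to the paper's own argument: both take an element of $\operatorname{im}(d_r)$, write it as $d_r\omega_r$, apply the previously established identity $d_{r+1}(d_r\omega_r)=0$, and conclude it lies in $\ker(d_{r+1})$. Your additional remark about linearity and the parallel with the homology case is a nice touch but not present in the paper.
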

\begin{proof}
if $\omega_{r}\in\Omega^{r}(M)$ then $d_{r}\omega_{r}\in im(d_{r})$
and $d_{r+1}(d_{r}\omega_{r})=0$, hence $d_{r}\omega\in ker(d_{r+1})$
and therefore $im(d_{r})\subset ker(d_{r+1})$\end{proof}
\begin{defn}
DeRahm cohomology group
\begin{itemize}
\item An r-form $\omega_{r}\in im(d_{r-1})$ is called exact, an r-form
$\omega_{r}\in ker(d_{r+1})$is called closed, i. e. if an r-form
is exact there exists an r-1 form $\psi_{r-1}:\omega_{r}=d_{r-1}\psi_{r-1}$and
if an r-form is closed, we have $d_{r}\omega_{r}=0$
\item The set of closed r-forms is called the r-th cocycle group $Z^{r}(M)=ker(d_{r+1})$
and the set of exact r-forms is called r-th coboundary group $B^{r}(M)=im(d_{r-1})$.
We say the r-th DeRahm cohomology group $H^{r}(M)$ is the set of
equivalence classes $[\omega]\in H^{r}(M)$, where
\begin{equation}
\left\{ \omega'\in Z^{r}(M)|\omega'=\omega+d\psi,\psi\in\Omega^{r-1}\right\} 
\end{equation}

\item for $[c]\in H_{r}(M)$ and $[\omega]\in H^{r}(M)$, we define 
\begin{eqnarray}
\Lambda_{r}:H_{r}(M)\times H^{r}(M)\rightarrow\mathbb{R},\Lambda_{r}([c],[\omega]) & = & \int_{c}\omega
\end{eqnarray}

\item A bilinear form $f(x,y)\mapsto\mathbb{R}$,  $x\in V,y\in V^{*}$
where $V,V^{*}$ are some finite dimensional vector spaces, defines
a pair of linear maps 
\begin{equation}
f_{1}:V\rightarrow V^{*}:x\mapsto(y\mapsto f(x,y))
\end{equation}
and 
\begin{equation}
f_{2}:V^{*}\rightarrow V,y\mapsto(x\mapsto f(x,y))
\end{equation}
If $V$ is finite dimensional and $f_{1}$ or $f_{2}$ is an isomorphism,
then both $f_{1}$ and $f_{2}$ are isomorphisms and the bilinearform
$f(x,y)$ is called non-degenerate. Clearly, $f_{1}$ can be an isomorphism,
if and only if 
\begin{equation}
f(x,y)=0\forall y\Rightarrow x=0
\end{equation}
and 
\begin{equation}
f(x,y)=0\forall x\Rightarrow y=0
\end{equation}
If V is finite dimensional, then $rank(f_{1})=rank(f_{2})$ and if
additionally $dim(V)=rank(f_{1})$ then $f_{1}$ and $f_{2}$ are
isomorphisms and the bilinearform is non-degenerate.
\item A bilinearform $f(x,y)\rightarrow\mathrm{\mathbb{R}}$, $x,y\in V$
is symmetric if $f(x,y)=f(y,x)$ and skew-symmetric if $f(x,y)=-f(y,x)$.
\end{itemize}
\end{defn}
\begin{thm}
$\Lambda_{r}([c],[\omega])$ is a bilinear form and it does not depend
on the representation of $\omega$ and $c$\end{thm}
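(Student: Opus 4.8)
The plan is to treat the two assertions in turn: bilinearity, which is purely formal, and independence of the chosen representatives, which is where the homological content enters through Stokes' theorem.

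I would begin by recalling that the pairing is defined on chains and forms by $\int_c\omega = \sum_i a_i \int_{s_{r,i}}\omega = \sum_i a_i \int_{\sigma_{r,i}} f_i^{*}\omega$ for a singular $r$-chain $c = \sum_i a_i s_{r,i}$ with representing maps $f_i : \sigma_{r,i} \to M$, i.e. by pulling $\omega$ back to each standard simplex and integrating there. Bilinearity on $C_r(M)\times\Omega^r(M)$ is then immediate: linearity in the chain argument follows from the group law $c_r * c_r' = \sum_i (a_i + a_i') s_{r,i}$ and from $a\,c_r = \sum_i (aa_i) s_{r,i}$, since both $\int_{c+c'}\omega = \int_c\omega + \int_{c'}\omega$ and $\int_{ac}\omega = a\int_c\omega$ then reduce to identical finite sums of reals; linearity in the form argument follows from $f_i^{*}(\omega+\omega') = f_i^{*}\omega + f_i^{*}\omega'$, $f_i^{*}(a\omega) = a\,f_i^{*}\omega$ together with linearity of the ordinary integral over $\sigma_{r,i}$.

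The substantive step is the descent to (co)homology, and it rests on Stokes' theorem for singular chains, $\int_{\partial_{r+1}d}\omega = \int_d d_r\omega$ for every $(r+1)$-chain $d$ and every $r$-form $\omega$. If $z, z'\in Z_r(M)$ are homologous, then $z - z' = \partial_{r+1}d$ for some $d\in C_{r+1}(M)$; taking $\omega$ closed, $d_r\omega = 0$, bilinearity in the chain variable and Stokes give $\int_z\omega - \int_{z'}\omega = \int_{\partial_{r+1}d}\omega = \int_d d_r\omega = 0$, so the value is unchanged under a change of cycle representative. Symmetrically, if $\omega' = \omega + d_{r-1}\psi$ with $\psi\in\Omega^{r-1}(M)$ and $c$ is a cycle, $\partial_r c = 0$, then $\int_c\omega' - \int_c\omega = \int_c d_{r-1}\psi = \int_{\partial_r c}\psi = 0$, so the value is unchanged under a change of form representative. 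Hence $\Lambda_r([c],[\omega]) = \int_c\omega$ is well defined on $H_r(M)\times H^r(M)$, and the bilinearity already established on chains and forms descends to a bilinear pairing $H_r(M)\times H^r(M)\to\mathbb{R}$.

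The only genuine obstacle is a careful justification of Stokes' theorem in the singular setting, in particular that the boundary operator $\partial_r$ defined above via the faces $(e_0,\dots,\hat{e}_i,\dots,e_r)$ of the standard simplex is compatible with integration so that $\int_{\partial d}\omega = \int_d d\omega$ holds chain by chain. This reduces to the classical Stokes theorem on each standard simplex together with the cancellation of the contributions from shared interior faces; it is a standard result (see e.g. \cite{Nakahara}), and I would cite it rather than reprove it. Everything else is bookkeeping with finite sums and with the linearity of pullback and of integration.
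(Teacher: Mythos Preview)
Your proof is correct and follows essentially the same approach as the paper: bilinearity by direct inspection, well-definedness via Stokes' theorem. Your argument for independence of the form representative is in fact cleaner than the paper's: you correctly use $\int_c d\psi = \int_{\partial c}\psi = 0$ because $c$ is a cycle, whereas the paper writes $\int_c dd\psi$ and invokes $d^2=0$, which is a slip (one is integrating $d\psi$, not $dd\psi$, so the vanishing comes from $\partial c = 0$, exactly as you have it).
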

\begin{proof}
$\Lambda_{r}([c+c'],[\omega])=\int_{c}\omega+\int_{c'}\omega$ and
\begin{equation}
\Lambda_{r}([c],[\omega+\omega'])=\int_{c}(\omega+\omega')=\int_{c}\omega+\int_{c}\omega'
\end{equation}
Furthermore, we have with Stokes theorem $\int_{c}d\omega=\int_{\partial c}\omega$
that 
\begin{equation}
\Lambda_{r}([c+\partial c'],[\omega])=\int_{c}\omega+\int_{\partial c'}\omega=\int_{c}\omega+\int_{c'}d\omega
\end{equation}
From $[\omega]\in H^{r}(M)\Rightarrow\omega\in Z^{r}(M)$ and thereby
$\omega$ is closed, and $d\omega=0$, thus $\int_{c'}d\omega=0$
and 
\begin{equation}
\Lambda_{r}([c+\partial c'],[\omega])=\int_{c}\omega.
\end{equation}
 Similarly 
\begin{equation}
\Lambda_{r}([c],[\omega+d\psi])=\int_{c}\omega+\int_{c}dd\psi
\end{equation}
but we have $d^{2}=0$, thereby 
\begin{equation}
\Lambda_{r}([c],[\omega+d\psi])=\int_{c}\omega
\end{equation}

\end{proof}
In 1931, DeRahm proved the following theorem:
\begin{thm}
DeRahm: Let M be a differentiable manifold. Then $H_{r}(M)$ and $H^{r}(M)$
are finite dimensional and the map $\Lambda_{r}:H_{r}(M)\times H^{r}(M)\rightarrow\mathbb{R}$
is bilinear and non degenerate\end{thm}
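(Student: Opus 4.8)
The plan is to prove de~Rham's theorem by reducing it, through a Mayer--Vietoris induction over a good cover, to the trivial case of a contractible open set. First I would reformulate the pairing $\Lambda_r$ as the statement that an explicit cochain map is a quasi-isomorphism. For a smooth singular $r$-simplex $s_r:\sigma_r\to M$ and an $r$-form $\omega$, integration $\int_{s_r}\omega:=\int_{\sigma_r}s_r^{*}\omega$ extends linearly over smooth singular $r$-chains and gives, for each $r$, a linear map $I:\Omega^{r}(M)\to C^{r}_{\infty}(M;\mathbb{R})$ into the smooth singular cochains. Stokes' theorem $\int_{s_r}d\omega=\int_{\partial s_r}\omega$ says precisely that $I$ is a cochain map, so it descends to $I_{*}:H^{r}(M)\to H^{r}_{\infty}(M;\mathbb{R})$, and $\Lambda_r$ is the composite of $I_{*}$ with the evaluation pairing between $H^{r}_{\infty}(M;\mathbb{R})$ and $H_r(M)$. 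Over $\mathbb{R}$ all groups are vector spaces, so $\mathrm{Ext}$ vanishes and the universal coefficient theorem gives $H^{r}_{\infty}(M;\mathbb{R})\cong\mathrm{Hom}(H_r(M),\mathbb{R})$ with the evaluation pairing non-degenerate; hence the whole assertion reduces to: $I_{*}$ is an isomorphism, and $H_r(M)$ (equivalently $H^{r}(M)$) is finite dimensional. I would also invoke the standard comparison theorem that smooth singular homology computes the same groups as continuous singular homology, so $H_r(M)$ in the statement may be read as the smooth version throughout.

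Next I would set up the two base ingredients. First, the Poincar\'e lemma: $H^{r}(U)=0$ for $r>0$ and $H^{0}(U)=\mathbb{R}$ when $U$ is diffeomorphic to a convex open subset of $\mathbb{R}^{n}$, while the smooth singular homology of such $U$ is that of a point; a direct check shows $I_{*}$ is an isomorphism in this case. Second, both sides satisfy Mayer--Vietoris: for $M=A\cup B$ with $A,B$ open there are long exact sequences relating the de~Rham cohomology of $A$, $B$, $A\cap B$, $A\cup B$ and, dually, the smooth singular cohomology, and the map $I_{*}$ commutes with every map in these sequences, including the connecting homomorphisms. This compatibility is the one genuinely delicate diagram chase, carried out by representing classes at chain level with a smooth partition of unity subordinate to $\{A,B\}$ on the de~Rham side and barycentric subdivision into $\{A,B\}$-small simplices on the singular side.

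Then I would run the induction. Assume first that $M$ admits a finite good cover $\{U_1,\dots,U_N\}$, i.e.\ one in which every finite intersection is contractible; every compact manifold, and more generally every manifold of finite type, has one. Induct on $N$: for $N=1$ this is the Poincar\'e lemma case; for the step write $M=(U_1\cup\cdots\cup U_{N-1})\cup U_N$, observe that $(U_1\cup\cdots\cup U_{N-1})\cap U_N$ inherits a good cover of size $N-1$, and apply the five lemma to the ladder of Mayer--Vietoris sequences whose vertical arrows are the maps $I_{*}$: four of the five verticals are isomorphisms by the inductive hypothesis, hence so is the middle one. Finite dimensionality falls out of the same induction, since in the exact Mayer--Vietoris sequence an extension of finite-dimensional spaces is finite dimensional. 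For a general second-countable manifold one passes to the (co)limit over an exhaustion by finite-type open sets, using that de~Rham cohomology and singular homology commute with the relevant (co)limits; I would note here that finite dimensionality of $H^{r}(M)$ in the stated generality in fact requires a hypothesis such as finite type or compactness, which is the implicit setting of the excerpt.

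The main obstacle I expect is the Mayer--Vietoris compatibility in the second step: one must choose the chain-level representatives on both sides so carefully that the square containing the connecting homomorphism commutes either on the nose or up to a controlled chain homotopy, and this is where the interplay between partitions of unity, Stokes' theorem, and small-simplex subdivision has to be handled with care. Everything after that is a formal five-lemma induction, and the finite-dimensionality claim is a free by-product of it.
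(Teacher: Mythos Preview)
Your proposal is correct and follows the standard Mayer--Vietoris/good-cover proof of de~Rham's theorem. The paper does not actually prove this statement: its entire proof reads ``Shown for example in Bredon \cite{Bredon} on p.~286ff.'' So you have supplied far more than the paper does, and the route you outline (integration map as a cochain map via Stokes, Poincar\'e lemma as base case, five-lemma induction over a finite good cover, passage to smooth singular chains) is precisely the argument one finds in Bredon and in the other standard references (Bott--Tu, Warner). In that sense your approach coincides with what the paper defers to.

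You are also right to flag that finite dimensionality of $H^{r}(M)$ and $H_{r}(M)$ is not true for arbitrary differentiable manifolds; it requires a hypothesis such as compactness or finite topological type. The paper's statement is sloppy on this point, and in its subsequent applications (Poincar\'e duality, the intersection form, Hodge theory) it silently restricts to compact or closed manifolds anyway, so your caveat is well placed.
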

\begin{proof}
Shown for example in Bredon \cite{Bredon} on p. 286ff.
\end{proof}
Because $\Lambda(\;,[\omega])$ and $\Lambda(\{c],\;)$ are isomorphisms,
$H_{r}(M)$ and $H^{r}(M)$ are dual to another, and we have for the
Betti numbers 
\begin{equation}
dim(H_{r}(M))=rank(\Lambda([c],\;))=rank(\Lambda(\;,[\omega]))=dim(H^{r}(M))=b_{r}
\end{equation}
Now we need some more definitions on one forms
\begin{defn}
Definitions on one forms
\begin{itemize}
\item With $M$ as a differentiable closed m-dimensional manifold, $r\leq m$,
$[\omega]\in H^{r}(M)$ and $[\eta]\in H^{m-r}$we define a map 
\begin{equation}
\sigma_{r,m}:H^{r}(M)\times H^{m-r}\rightarrow\mathbb{R},\sigma_{r,m}([\omega],[\eta])=\int_{M}\omega\wedge\eta
\end{equation}

\item If $m=4$, we call $\sigma_{2,2}$ the intersection form of the manifold.
For non differentiable manifolds, see the more general definitions
in \cite{Wilde World} on p. 115
\item We say that the parity of the intersection form $\sigma_{2,2}$is
even if $\sigma_{2,2}(\omega,\omega)$ is even for any $\omega$,
where $[\omega]\in H^{r}(M)$, and $\sigma_{2,2}(\omega,\omega)$
is odd otherwise.
\end{itemize}
\end{defn}
\begin{thm}
Poincare Duality: Let M be a closed manifold. There is an isomorphism
between $H^{r}(M)$ and $H^{m-r}(M)$ and for the Betti numbers, we
have $b_{r}=b_{m-r}$.\end{thm}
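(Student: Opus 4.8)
The plan is to exhibit the required isomorphism through the pairing $\sigma_{r,m}([\omega],[\eta])=\int_{M}\omega\wedge\eta$ already introduced above, reducing the whole statement to one essential fact: that $\sigma_{r,m}$ is \emph{non-degenerate} as a bilinear form on $H^{r}(M)\times H^{m-r}(M)$. Granting non-degeneracy, de Rham's theorem tells us $H^{r}(M)$ and $H^{m-r}(M)$ are finite dimensional, so the injectivity of $[\omega]\mapsto\sigma_{r,m}([\omega],\,\cdot\,)$ and of $[\eta]\mapsto\sigma_{r,m}(\,\cdot\,,[\eta])$ forces $\dim H^{r}(M)=\dim H^{m-r}(M)$ and makes the linear map $f_{1}:H^{r}(M)\to (H^{m-r}(M))^{*}$ an isomorphism, exactly as in the linear-algebra fact about non-degenerate bilinear forms recorded above. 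Hence $b_{r}=\dim H^{r}(M)=\dim H^{m-r}(M)=b_{m-r}$, and two finite dimensional real vector spaces of equal dimension are isomorphic; composing $f_{1}$ with the de Rham duality $\Lambda_{m-r}$ even furnishes a canonical isomorphism $H^{r}(M)\cong H_{m-r}(M)$, the homological form of the theorem.

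Before that, I would record that $\sigma_{r,m}$ actually descends to cohomology, by the same Stokes argument that was used for $\Lambda_{r}$. If $\omega'=\omega+d\alpha$ with $\alpha\in\Omega^{r-1}(M)$ and $[\eta]\in H^{m-r}(M)$, then using $d(\alpha\wedge\eta)=d\alpha\wedge\eta+(-1)^{r-1}\alpha\wedge d\eta=d\alpha\wedge\eta$ since $d\eta=0$,
\begin{equation}
\int_{M}\omega'\wedge\eta-\int_{M}\omega\wedge\eta=\int_{M}d(\alpha\wedge\eta)=\int_{\partial M}\alpha\wedge\eta=0,
\end{equation}
because $M$ is closed, so $\partial M=\emptyset$. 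The symmetric computation in the second argument gives independence of the representative of $[\eta]$, so $\sigma_{r,m}$ is a well-defined bilinear form on $H^{r}(M)\times H^{m-r}(M)$. Here one needs $M$ to be oriented, which is anyway implicit in writing $\int_{M}$; for a closed non-orientable $M$ the statement must be read with twisted coefficients and is not captured by this $\sigma_{r,m}$.

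The genuine content — and the main obstacle — is the non-degeneracy of $\sigma_{r,m}$. I would prove it by induction over a finite good cover of $M$, which exists because $M$ is compact: by theorem \ref{riemannian} fix a Riemannian metric, and using geodesically convex neighbourhoods cover $M$ by finitely many open sets all of whose non-empty finite intersections are contractible. One then proves, for every open $V\subset M$ admitting a finite good cover, that the pairing $H^{r}(V)\times H^{m-r}_{c}(V)\to\mathbb{R}$ given by $\int_{V}$ is non-degenerate, where $H^{\bullet}_{c}$ denotes compactly supported cohomology. The base case is the Poincar\'e-lemma computation $H^{r}(\mathbb{R}^{m})\cong H^{m-r}_{c}(\mathbb{R}^{m})$; the inductive step writes $V=V'\cup U$, uses the two Mayer--Vietoris sequences (the ordinary one and the compactly supported one, the latter running in the opposite direction), checks compatibility of the pairings with the connecting homomorphisms up to sign, and applies the five lemma. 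Since $M$ is compact, $H^{m-r}_{c}(M)=H^{m-r}(M)$, and the induction closes to yield non-degeneracy of $\sigma_{r,m}$ on $M$. A fully detailed version of this argument — or, alternatively, the route through a triangulation and its dual cell decomposition, or through Hodge theory and the positivity of $\int_{M}\omega\wedge\star\omega$ on harmonic forms — can be found in \cite{Nakahara} or \cite{Bredon}; everything outside this non-degeneracy step is formal and follows from de Rham's theorem together with the linear algebra of non-degenerate pairings already set out above.
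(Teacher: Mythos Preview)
Your proof is correct and follows the same route as the paper: both reduce the statement to non-degeneracy of the wedge-product pairing $\sigma_{r,m}$ on cohomology and then read off the equality of dimensions. The paper's version is far terser---it simply asserts non-degeneracy ``because of the properties of the wedge product'' and defers to \cite{Bredon} for the non-smooth case---whereas you additionally verify that $\sigma_{r,m}$ descends to cohomology, flag the orientability hypothesis, and sketch a genuine Mayer--Vietoris/good-cover proof of non-degeneracy, which is considerably more than the paper itself provides.
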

\begin{proof}
(The proof below is valid only for differentiable manifolds, since
our definition of the intersection form was only for these manifolds.
For non-differentiable manifolds, see \cite{Bredon} p. 352): The
form $\sigma_{r,m}([\omega],[\eta])$ is bilinear and because of the
properties of the wedge product, it is also non-degenerate, Therefore,
there is an isomorphism between $H^{r}(M)$ and $H^{m-r}(M)$ and
this implies 
\begin{equation}
b_{r}=dim(H^{r}(M))=dim(H^{m-r}(M))=b_{m-r}
\end{equation}

\end{proof}
We still need to define the Hodge star and the Laplacian of an r-form 
\begin{defn}
Hodge star and Laplacian
\begin{itemize}
\item A Hodge star $*$ is a linear map $*\Omega^{r}(M)\rightarrow\Omega^{m-r}(M)$
which transforms an r-form 
\begin{equation}
\omega=\frac{1}{r!}\omega_{\mu_{1}\mu_{2},\ldots\mu_{r}}dx^{\mu_{1}}\wedge dx^{\mu_{2}}\wedge,\ldots\wedge dx^{\mu_{r}}
\end{equation}
on a differentiable manifold with metric $g$ into 
\begin{equation}
*\omega_{r}=\frac{\sqrt{|g|}}{r!(m-r)!}\omega_{\mu_{1}\mu_{2},\ldots\mu_{r}}\epsilon_{\;\;\;\;\;\;\;\;\;\;\;\;\;\;\;\nu_{r+1},\ldots,\nu_{m}}^{\mu_{1}\mu_{2},\ldots\mu_{r}}:dx^{\mu_{r+1}}\wedge,\ldots\wedge dx^{\mu_{m}}
\end{equation}
where $\epsilon$ is the totally antisymmetric tensor.
\item Let $d:\Omega^{r-1}(M)\rightarrow\Omega^{r}(M)$ be the exterior derivative
for r-1 forms on a Lorentzian manifold $M$ with dimension $m$. We
define the adjoint $d^{\dagger}:\Omega^{r}(M)\rightarrow\Omega^{r-1}(M)$
of $d$ as 
\begin{equation}
d^{\dagger}=(-1)^{mr+m}*d*
\end{equation}
and if $M$ is Riemannian, we have 
\begin{equation}
d^{\dagger}=(-1)^{mr+m+1}*d*
\end{equation}

\item The Laplacian $\Delta:\Omega^{r}(M)\rightarrow\Omega^{r}(M)$ is then
defined by 
\begin{equation}
\Delta=dd^{\dagger}+d^{\dagger}d
\end{equation}

\item An r form $\omega_{r}$ is called harmonic, if $\Delta\omega_{r}=0$
and the space of harmonic r forms is denoted by $Harm^{r}(M)$
\item The inner product of two r-forms $\omega$ and $\eta$ is defined
by 
\begin{equation}
(\omega,\eta)=\int\omega\wedge*\eta=\frac{1}{r!}\int\omega_{\mu_{1},\ldots\mu_{r}}\eta^{\mu_{1},\ldots\mu_{r}}\sqrt{|g|}dx^{1},\ldots dx^{m}
\end{equation}
which implies that $(\omega,\eta)=(\eta,\omega)$ and $(\omega,\omega)\geq0$
\end{itemize}
\end{defn}
\begin{thm}
The Hodge star satisfies 
\begin{equation}
**\omega_{r}=(-1)^{r(m-r)}
\end{equation}
 if $M$ is an $m$ dimensional Riemannian manifold and 
\begin{equation}
**\omega_{r}=(-1)^{1+r(m-r)}\omega_{r}
\end{equation}
if $M$ is Lorentzian.\end{thm}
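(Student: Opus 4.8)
The plan is to reduce the identity to a pointwise statement about decomposable basis forms. Since both $**$ and the claimed scalar multiples are $C^{\infty}(M)$-linear operators on $\Omega^{r}(M)$ whose value at a point depends only on the value of $\omega_{r}$ there, and since every $p\in M$ admits an orthonormal coframe $\{e^{1},\dots,e^{m}\}$ with $g(e^{a},e^{b})=\epsilon_{a}\delta^{ab}$, $\epsilon_{a}=\pm1$, it suffices to verify $**\,\omega = s\,(-1)^{r(m-r)}\omega$ for $\omega = e^{i_{1}}\wedge\cdots\wedge e^{i_{r}}$ with $i_{1}<\cdots<i_{r}$, where $s:=\prod_{a=1}^{m}\epsilon_{a}$. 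In such a frame $\sqrt{|g|}=1$ and the Levi-Civita tensor coincides with the permutation symbol up to the fixed orientation sign, so the component definition of $*$ given above collapses to $*(e^{I})=c_{I}\,e^{I^{c}}$, where $I=(i_{1},\dots,i_{r})$, $I^{c}=(j_{1}<\cdots<j_{m-r})$ is the complementary multi-index, and $c_{I}=\bigl(\prod_{k=1}^{r}\epsilon_{i_{k}}\bigr)\,\mathrm{sgn}(I,I^{c})$ with $\mathrm{sgn}(I,I^{c})$ the sign of the permutation carrying $(1,\dots,m)$ to $(i_{1},\dots,i_{r},j_{1},\dots,j_{m-r})$. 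The $r!\,(m-r)!$ normalisation appearing in the definition is exactly what produces this clean coefficient; confirming the collapse is the one place where the index bookkeeping of the stated definition must be written out.

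Next I would apply $*$ a second time. Since $(I^{c})^{c}=I$, one gets $**(e^{I})=c_{I}\,c_{I^{c}}\,e^{I}$, so everything comes down to evaluating the product $c_{I}\,c_{I^{c}}$. The metric factors combine to $\bigl(\prod_{k}\epsilon_{i_{k}}\bigr)\bigl(\prod_{l}\epsilon_{j_{l}}\bigr)=\prod_{a=1}^{m}\epsilon_{a}=s$, which equals $\det(g_{ab})$ in the orthonormal frame. The two permutation signs combine to $\mathrm{sgn}(I,I^{c})\,\mathrm{sgn}(I^{c},I)$, which is the sign of the permutation sending $(I,I^{c})$ to $(I^{c},I)$ --- a block transposition moving a block of length $r$ past a block of length $m-r$ --- and hence equals $(-1)^{r(m-r)}$. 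Therefore
\[
**(e^{I}) = s\,(-1)^{r(m-r)}\,e^{I},
\]
and by linearity $** = s\,(-1)^{r(m-r)}$ on all of $\Omega^{r}(M)$.

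Finally I would read off the two cases. For a Riemannian metric the signature has no minus signs, so $s=+1$ and $**\,\omega_{r}=(-1)^{r(m-r)}\omega_{r}$; for a Lorentzian metric with signature $(-,+,+,+)$, or indeed any signature with an odd number of minus signs, $s=-1$, giving $**\,\omega_{r}=-(-1)^{r(m-r)}\omega_{r}=(-1)^{1+r(m-r)}\omega_{r}$. The only genuine work, and the step I expect to need the most care, is the sign accounting: establishing $\mathrm{sgn}(I,I^{c})\,\mathrm{sgn}(I^{c},I)=(-1)^{r(m-r)}$, checking that $\prod_{a}\epsilon_{a}$ is precisely the signature sign $s$, and making sure both applications of the Hodge star are taken with the same orientation so that the ambient orientation sign cancels in $c_{I}\,c_{I^{c}}$ rather than surviving.
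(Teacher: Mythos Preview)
Your argument is correct and is essentially the standard orthonormal-frame computation one finds in the reference the paper cites. The paper itself does not prove this statement at all: its entire ``proof'' is the single line ``Shown in Nakahara \cite{Nakahara}, p.~291.'' So there is nothing to compare your approach against within the paper; you have supplied precisely the kind of pointwise basis-form calculation that Nakahara carries out, with the two key ingredients being the block-transposition sign $(-1)^{r(m-r)}$ and the signature factor $s=\prod_a\epsilon_a$ distinguishing the Riemannian and Lorentzian cases.
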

\begin{proof}
Shown in Nakahara \cite{Nakahara}, p. 291
\end{proof}
We have an important theorem by Hodge:
\begin{thm}
Hodge: Let $M$ be a compact orientable differentiable manifold. Then,
there is an isomorphism between $H^{r}(M)$ and $Harm^{r}(M)$ \end{thm}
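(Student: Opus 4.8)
The plan is to derive the statement from the Hodge decomposition theorem, which splits $\Omega^{r}(M)$ into an orthogonal direct sum of harmonic, exact, and coexact forms, and then to verify that on a compact oriented manifold the natural map sending a harmonic form to its de Rham cohomology class is bijective. Throughout I would use that, since $M$ is compact and $\partial M=\emptyset$, Stokes' theorem gives no boundary terms, so $d^{\dagger}$ as defined above is genuinely the adjoint of $d$ with respect to the inner product $(\;,\;)$ introduced earlier, i.e. $(d\alpha,\beta)=(\alpha,d^{\dagger}\beta)$ for all forms of the appropriate degrees.

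First I would record that on such an $M$ an $r$-form $\omega$ is harmonic if and only if it is both closed and coclosed. Indeed, using adjointness,
\begin{equation}
(\Delta\omega,\omega)=(dd^{\dagger}\omega,\omega)+(d^{\dagger}d\omega,\omega)=(d^{\dagger}\omega,d^{\dagger}\omega)+(d\omega,d\omega),
\end{equation}
and since $(\;,\;)$ is positive semidefinite, $\Delta\omega=0$ forces $d\omega=0$ and $d^{\dagger}\omega=0$; the converse is immediate from $\Delta=dd^{\dagger}+d^{\dagger}d$. In particular every harmonic form is closed, so $\omega\mapsto[\omega]\in H^{r}(M)$ is a well defined linear map $Harm^{r}(M)\to H^{r}(M)$.

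The analytic core of the argument is the Hodge decomposition
\begin{equation}
\Omega^{r}(M)=Harm^{r}(M)\oplus d\,\Omega^{r-1}(M)\oplus d^{\dagger}\Omega^{r+1}(M),
\end{equation}
an orthogonal direct sum. This rests on the theory of elliptic operators on compact manifolds: $\Delta$ is a self-adjoint elliptic operator, hence has finite-dimensional kernel $Harm^{r}(M)$ and closed range, and the equation $\Delta\eta=\alpha$ is solvable precisely when $\alpha\perp Harm^{r}(M)$; writing $\alpha=\omega_{h}+\Delta\eta=\omega_{h}+d(d^{\dagger}\eta)+d^{\dagger}(d\eta)$ then yields the decomposition, and pairwise orthogonality of the three summands follows once more from adjointness together with $d^{2}=0$. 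I would quote this elliptic input from Nakahara \cite{Nakahara} (or from de Rham's original work) rather than reproduce the Sobolev estimates and the construction of the Green operator, since that PDE machinery is exactly the step too long to present here; this is the main obstacle.

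Granting the decomposition, bijectivity is short. Injectivity: if a harmonic $\omega$ is exact, $\omega=d\psi$, then $(\omega,\omega)=(d\psi,\omega)=(\psi,d^{\dagger}\omega)=0$ since $d^{\dagger}\omega=0$, so $\omega=0$. Surjectivity: given a closed $r$-form $\alpha$, decompose $\alpha=\omega_{h}+d\beta+d^{\dagger}\gamma$; then $0=d\alpha=dd^{\dagger}\gamma$, hence $(d^{\dagger}\gamma,d^{\dagger}\gamma)=(dd^{\dagger}\gamma,\gamma)=0$ and $d^{\dagger}\gamma=0$, so $\alpha=\omega_{h}+d\beta$ and $[\alpha]=[\omega_{h}]$ lies in the image. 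Therefore $Harm^{r}(M)\cong H^{r}(M)$. As a bonus this re-proves that $H^{r}(M)$ is finite dimensional and, combined with the Hodge star isomorphism $Harm^{r}(M)\cong Harm^{m-r}(M)$, recovers Poincaré duality for compact oriented $M$.
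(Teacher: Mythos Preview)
Your argument is correct and follows the standard route to Hodge's theorem: establish that harmonic equals closed plus coclosed via the identity $(\Delta\omega,\omega)=\|d\omega\|^{2}+\|d^{\dagger}\omega\|^{2}$, invoke the elliptic theory to obtain the orthogonal Hodge decomposition, and then read off injectivity and surjectivity of $\omega\mapsto[\omega]$. The paper does not actually prove this theorem at all; it simply writes ``See Nakahara, p.~296 \cite{Nakahara}'' and moves on. What you have supplied is essentially the proof one finds in Nakahara, so there is no genuine methodological difference to compare---you have filled in what the paper deferred to a reference, correctly flagging the elliptic regularity input as the black box you are quoting rather than proving.
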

\begin{proof}
See Nakahara, p. 296 \cite{Nakahara}
\end{proof}
As a result of this theorem we have 
\begin{equation}
b_{r}=dim(H^{r}(M))=dim(Harm^{r}(M))
\end{equation}
for a compact manifold, and if the manifold is 4 dimensional, Poincare
duality implies that $b_{p}=b_{4-p}$ or $b_{0}=b_{4}$ and $b_{3}=b_{1}$.
For a compact manifold, a harmonic $0$ form, i.e. a scalar function
$\omega$ where $d\omega=0$, is a constant. Thus, for compact 4 manifolds,
$dim(H^{0}(M))=1=b_{0}=b_{4}$. If $M$ is simply connected, the number
of its one dimensional holes is zero, so for simply connected 4 dimensional
manifolds $0=b_{1}=b_{3}$. A more formal proof of this goes by showing
that $H_{1}$ is trivial if M is simply connected, see for example
Nakahara \cite{Nakahara}, p. 241. 

We can now define the signature of a manifold:
\begin{defn}
Signature of a manifold
\begin{itemize}
\item Let $M$ be a closed manifold of dimension $m=2l$, the form $\sigma_{l,l}:H^{l}(M)\times H^{l}\rightarrow\mathbb{R}$
is non degenerate and therefore it has the rank $b_{l}=dim(H^{l}(M))$.
Furthermore, $\sigma_{l,l}$ is symmetric, if $l$ is even and $\sigma_{l,l}$
is skew-symmetric if l is odd. If $l$ is even, $\sigma$ has $b_{+}$
positive and $b_{-}$ negative real eigenvalues where $b_{+}+b_{-}=b_{l}$.
The signature of $M$ is defined as 
\begin{equation}
\tau(M)\equiv b_{+}-b_{-}
\end{equation}
if l is odd, then we define $\tau(M)\equiv0$. 
\item We say that the form $\sigma_{l,l}$ is definite if $b_{l}=|\tau|$
and indefinite otherwise.
\end{itemize}
\end{defn}
For a closed simply connected 4 dimensional manifold, this means that
$\sigma_{22}$is definite if $|\tau|=\chi-2$ and indefinite otherwise.
\begin{thm}
Let M be a closed orientable $4r$ dimensional differentiable Riemannian
manifold. We have
\begin{equation}
Harm^{2r}(M)=Harm_{+}^{2r}(M)\otimes Harm_{-}^{2r}(M)
\end{equation}
 where $Harm_{-(+)}^{2r}$ is the space of harmonic one forms with
negative (positive) eigenvalues of $*$. Furthermore, 
\begin{equation}
\tau(M)=dim(Harm_{+}^{2r}(M))-dim(Harm_{-}^{2r}(M))
\end{equation}
 \end{thm}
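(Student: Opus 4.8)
The plan is to combine Hodge theory with the fact that, in the middle degree of a $4r$-dimensional manifold, the Hodge star is an involution. First I would specialise the identity $**\omega_{r}=(-1)^{r(m-r)}\omega_{r}$ (Riemannian case) to $m=4r$ and forms of degree $2r$: there $**\omega_{2r}=(-1)^{2r(4r-2r)}\omega_{2r}=(-1)^{4r^{2}}\omega_{2r}=\omega_{2r}$, so $*$ restricts to an involution of $\Omega^{2r}(M)$ whose eigenvalues are $\pm 1$. This gives the orthogonal splitting $\Omega^{2r}(M)=\Omega^{2r}_{+}(M)\oplus\Omega^{2r}_{-}(M)$ into self-dual and anti-self-dual forms. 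Next I would check that the Laplacian commutes with $*$: from $d^{\dagger}=(-1)^{mr+m+1}*d*$ and $\Delta=dd^{\dagger}+d^{\dagger}d$, a short computation using $**=\pm\mathrm{id}$ on all degrees shows $*\Delta=\Delta*$. Hence $*$ preserves $Harm^{2r}(M)=\ker(\Delta\!\restriction_{\Omega^{2r}})$, and restricting the $\pm 1$ eigenspace decomposition to harmonic forms yields $Harm^{2r}(M)=Harm^{2r}_{+}(M)\oplus Harm^{2r}_{-}(M)$, which is the first assertion (the $\otimes$ in the statement is to be read as this direct sum, and ``harmonic one forms'' as ``harmonic $2r$-forms'').

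For the signature formula I would use the Hodge isomorphism $H^{2r}(M)\cong Harm^{2r}(M)$ to represent every class by its unique harmonic representative, so that the intersection form $\sigma_{2r,2r}([\omega],[\eta])=\int_{M}\omega\wedge\eta$ becomes a symmetric bilinear form on $Harm^{2r}(M)$. The crucial step is to relate $\int_{M}\omega\wedge\eta$ to the $L^{2}$ inner product $(\omega,\eta)=\int_{M}\omega\wedge*\eta$, which is symmetric and positive definite. Using symmetry of $(\cdot,\cdot)$ together with the sign $(-1)^{(2r)(2r)}=1$ governing $\int_{M}\alpha\wedge\beta$ versus $\int_{M}\beta\wedge\alpha$ for $2r$-forms, I would obtain $(\omega,\eta)=\int_{M}(*\omega)\wedge\eta$. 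Then for $\omega\in Harm^{2r}_{\pm}(M)$ one has $*\omega=\pm\omega$, so $\sigma_{2r,2r}(\omega,\omega)=\int_{M}\omega\wedge\omega=\pm(\omega,\omega)$, which is strictly positive (resp.\ strictly negative) for $\omega\neq 0$; and for $\omega\in Harm^{2r}_{+}(M)$, $\eta\in Harm^{2r}_{-}(M)$ the same identities give $\int_{M}\omega\wedge\eta=(\omega,\eta)$ and $\int_{M}\omega\wedge\eta=-(\omega,\eta)$, forcing $\sigma_{2r,2r}(\omega,\eta)=0$. Thus $\sigma_{2r,2r}$ is positive definite on $Harm^{2r}_{+}(M)$, negative definite on $Harm^{2r}_{-}(M)$, and these two subspaces are $\sigma$-orthogonal and span; comparing with the definition $\tau(M)=b_{+}-b_{-}$ yields $\tau(M)=\dim Harm^{2r}_{+}(M)-\dim Harm^{2r}_{-}(M)$.

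The only genuine obstacle is sign bookkeeping. One must use the Riemannian version of $d^{\dagger}$ consistently, must note that $**\omega_{2r}=+\omega_{2r}$ in dimension $4r$ (which is exactly what makes the eigenvalues real and the form split into a positive and a negative definite piece --- in dimension $4r+2$ one would get $**=-\mathrm{id}$ on middle forms and a skew-symmetric pairing instead), and must verify that $\int_{M}\omega\wedge*\eta$ is genuinely symmetric and positive so that the self-dual part is the one contributing positively to $\tau$. I would also point out where the hypotheses enter: compactness and orientability are needed for the Hodge theorem and for Stokes' theorem, hence for $\sigma_{2r,2r}$ to be well defined on cohomology and for harmonic representatives to be unique; differentiability is needed to use the De Rham and Hodge machinery quoted earlier. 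Everything beyond the splitting and the sign of $(\omega,\eta)$ is routine linear algebra.
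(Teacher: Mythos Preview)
Your proposal is correct and follows essentially the same route as the paper: use $**=\mathrm{id}$ on middle-degree forms in dimension $4r$ to split $Harm^{2r}(M)$ into $\pm 1$ eigenspaces of $*$, then compute $\sigma_{2r,2r}(\omega,\eta)=\int_M\omega\wedge\eta$ in terms of the positive-definite $L^2$ pairing $(\omega,\eta)=\int_M\omega\wedge *\eta$ to see that the form is positive definite on the self-dual part, negative definite on the anti-self-dual part, and vanishes on cross terms. Your version is in fact slightly more careful than the paper's (you justify that $*$ preserves harmonic forms via $[\Delta,*]=0$, and you correctly flag that the statement's ``$\otimes$'' should be a direct sum and ``harmonic one forms'' should read ``harmonic $2r$-forms'').
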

\begin{proof}
For a $2r$ form on a $4r$ dimensional Riemannian manifold, we have
$*^{2}\omega_{2r}=1\omega_{2r}$, hence the eigenvalues of $*$ are
$\pm1$. We can therefore separate the space of harmonic one forms
$Harm^{2r}(M)$ into two a disjoined subspaces $Harm_{-(+)}^{2r}(M)$
of harmonic one forms with negative (positive) eigenvalues of $*$.
So we have
\begin{equation}
Harm^{2r}(M)=Harm_{+}^{2r}(M)\otimes Harm_{-}^{2r}(M)
\end{equation}
Since M is closed and orientable, we can apply Hodge's theorem and
deRahm's theorem: 
\begin{eqnarray}
b_{r} & = & dim(H_{r}(M))=dim(H^{r}(M))=dim(Harm^{r}(M))\nonumber \\
 & = & dim(Harm_{+}^{2r}(M))+dim(Harm_{-}^{2r}(M))
\end{eqnarray}
we define $\omega_{2r}^{+}\in Harm_{+}^{2r}(M)$ and $\omega_{2r}^{-}\in Harm_{-}^{2r}(M)$
and we have using the properties of the direct product between one
forms: 
\begin{equation}
\sigma_{2r,2r}(\omega_{2r}^{+},\omega_{2r}^{+})=\int_{M}\omega_{2r}^{+}\wedge\omega_{2r}^{+}=\int_{M}\omega_{2r}^{+}\wedge*\omega_{2r}^{+}=(\omega_{2r}^{+},\omega_{2r}^{+})\geq0
\end{equation}
and 
\begin{equation}
\sigma_{2r,2r}(\omega_{2r}^{-},\omega_{2r}^{-})=\int_{M}\omega_{2r}^{-}\wedge\omega_{2r}^{-}=-\int_{M}\omega_{2r}^{-}\wedge*\omega_{2r}^{-}=-(\omega_{2r}^{-},\omega_{2r}^{-})\leq0
\end{equation}
as well as 
\begin{eqnarray}
\sigma_{2r,2r}(\omega_{2r}^{+},\omega_{2r}^{-}) & = & -\int_{M}\omega_{2r}^{+}\wedge*\omega_{2r}^{-}\nonumber \\
 & = & -\int_{M}\omega_{2r}^{-}\wedge*\omega_{2r}^{+}\nonumber \\
 & = & -\sigma_{2r,2r}(\omega_{2r}^{-},\omega_{2r}^{+})\nonumber \\
 & = & 0
\end{eqnarray}
 Therefore $\sigma_{2r,2r}$ is block diagonal and we have $b_{r}=b_{+}+b_{-}$
with $b_{\pm}=dim(Harm_{\pm}^{2r}(M))$ and 
\begin{equation}
\tau(M)=dim(Harm_{+}^{2r}(M))-dim(Harm_{-}^{2r})
\end{equation}

\end{proof}
We have the following general theorems on the topology of 4 manifolds:
\begin{thm}
Markov\cite{Markov}: The problem of homotopy equivalence of two $n$-dimensional
manifolds $M,N$ is undecidable for $n>3$, The problem of homeomorphy
of two $n$-dimensional manifolds $M,N$ is undecidable for $n>3$.
\end{thm}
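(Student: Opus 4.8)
The plan is to reduce the topological decision problem to an undecidable group-theoretic one, exploiting the classical fact that every finitely presented group is the fundamental group of a closed $n$-manifold for $n\geq 4$. I would start from the Adyan--Rabin theorem (which itself rests on the Novikov--Boone construction of a finitely presented group with unsolvable word problem): there is no algorithm which, given a finite presentation $P=\langle x_1,\ldots,x_g \mid r_1,\ldots,r_k\rangle$, decides whether the group $G_P$ it defines is trivial; equivalently, there is a total computable map $i\mapsto P_i$ with $\{i : G_{P_i}=1\}$ non-recursive. Throughout, manifolds are presented by finite combinatorial data (a triangulation or a handle decomposition), so that ``there is an algorithm deciding homeomorphy'' has literal meaning; the statement of the theorem is: no Turing machine, fed two such finite descriptions, decides whether the manifolds are homotopy equivalent (resp. homeomorphic).

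Next I would construct, algorithmically and totally, from each $P$ a closed smooth $n$-manifold $M_P$ ($n\geq 5$) with $\pi_1(M_P)\cong G_P$. Begin with $M_0=\#^g(S^1\times S^{n-1})$, whose fundamental group is the free group $F_g$; for each relator $r_j$ put the corresponding loop in general position (an embedding, since $n\geq 3$) with trivial normal bundle (orientability) and perform surgery, replacing $S^1\times D^{n-1}$ by $D^2\times S^{n-2}$. By van Kampen each such surgery quotients $\pi_1$ by the normal closure of that relator's class, so after the $k$ relator surgeries one obtains $N_P$ with $\pi_1(N_P)\cong F_g/\langle\langle r_1,\ldots,r_k\rangle\rangle\cong G_P$. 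The decisive refinement is to arrange in addition that $M_P$ (obtained from $N_P$ by further surgery below the middle dimension) is a homotopy $n$-sphere precisely when $G_P$ is trivial. The ``only if'' direction is immediate, since a homotopy sphere is simply connected, so $G_P=\pi_1(M_P)=1$. For the ``if'' direction one observes that when $G_P=1$ the manifold $N_P$ is simply connected with homology concentrated in degrees $0,2,n-2,n$ (homologically a connected sum of copies of $S^2\times S^{n-2}$); by the Hurewicz theorem these $H_2$-classes are spherical, by the Whitney embedding theorem they are carried by embedded $2$-spheres with trivial normal bundle (this is where $n\geq 5$ enters), and surgery on such spheres kills $H_2$ together with its Poincaré dual $H_{n-2}$, leaving a simply connected homology $n$-sphere, i.e. a homotopy $n$-sphere. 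Uniformity of the construction is secured either by first passing to balanced presentations, for which triviality remains undecidable, or by carrying the homology-killing surgeries along as part of the algorithm.

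The conclusion is then short. If homotopy equivalence of $n$-manifolds were decidable, then given $P$ one would compute $M_P$ and decide ``$M_P\simeq S^n$?'', hence decide ``$G_P=1$?'', contradicting Adyan--Rabin; this settles the homotopy-equivalence assertion. For homeomorphy one uses in addition the high-dimensional Poincaré conjecture (Smale's $h$-cobordism theorem for $n\geq 5$): a homotopy $n$-sphere is homeomorphic to $S^n$, so ``$M_P\cong S^n$?'' is again equivalent to ``$G_P=1$?'', and a decision procedure for homeomorphy would decide triviality. The case $n=4$ is handled separately: the homotopy-equivalence statement needs nothing further, since a simply connected homology $4$-sphere is a homotopy $4$-sphere by Hurewicz and Whitehead, while for homeomorphy one invokes Freedman's topological Poincaré theorem in dimension $4$ to conclude $M_P\cong S^4$.

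The main obstacle is the refinement step: making $P\mapsto M_P$ simultaneously a total Turing-computable map producing explicit finite manifold data and one for which $M_P$ is a homotopy sphere exactly when $G_P$ is trivial. This is precisely where surgery theory is needed — representing homology classes by embedded spheres and cancelling handles effectively — and where the hypothesis $n\geq 5$ is essential; the dimension $n=4$ must be treated by hand, and, unlike the homotopy-equivalence case, the homeomorphy version genuinely consumes the Poincaré conjecture in the relevant dimension.
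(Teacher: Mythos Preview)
The paper does not supply a proof of this theorem; it is simply quoted as a classical result of Markov (1958) while surveying background on the topology of $4$-manifolds. There is therefore no argument in the paper to compare against.

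Your outline is the standard modern route to Markov's theorem and is essentially sound for $n\ge 5$: realise an arbitrary finitely presented group as $\pi_1$ of a closed $n$-manifold by surgery on $\#^g(S^1\times S^{n-1})$, refine by further surgery so that the output is a homotopy sphere precisely when the group is trivial, and then invoke Adyan--Rabin. Using the high-dimensional Poincar\'e conjecture for the homeomorphy version is legitimate, if anachronistic relative to Markov's original 1958 argument, which of course predates Smale and Freedman.

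There is, however, a genuine soft spot in your treatment of $n=4$. You write that this case ``needs nothing further'', but the step in which you kill $H_2$ by surgering embedded $2$-spheres is exactly the step that relies on the Whitney trick and hence on $n\ge 5$; in dimension~$4$ one cannot in general represent $H_2$-classes by smoothly embedded spheres with trivial normal bundle, so your construction of $M_P$ does not, as written, yield a homology $4$-sphere when $G_P=1$. The result is still true for $n=4$, but one must argue differently --- for instance by comparing $N_P$ not with $S^4$ but with $N_{P_0}$ for a fixed presentation $P_0$ of the trivial group of matching deficiency, and then using Freedman's classification of simply connected topological $4$-manifolds (together with a check that the intersection forms agree) to conclude $N_P\cong N_{P_0}$ when $G_P=1$. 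Your final paragraph rightly flags the surgery refinement as the crux, but the $n=4$ clause understates how much of that difficulty survives.
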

However, Milnor has been able to show, using earlier work of Whitehead
\cite{Whitehead1,Whitehead2} that 
\begin{thm}
Two simply connected, closed, orientable 4-manifolds are homotopy
equivalent if and only if their intersection forms are isomorphic.
\end{thm}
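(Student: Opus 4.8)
The plan is to prove the two implications separately; throughout, the relevant intersection form is the symmetric bilinear form on the free abelian group $H^{2}(M;\mathbb{Z})$ (torsion-free because $H_{1}=0$) given by the cup product evaluated on the fundamental class, which is unimodular by Poincar\'e duality. Note that the real version appearing in the definitions section only retains the rank and signature of this form, which is far too coarse; the integral refinement is what must be used here.

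For the easy direction, suppose $h\colon M\to N$ is a homotopy equivalence. Then $h^{*}\colon H^{2}(N;\mathbb{Z})\to H^{2}(M;\mathbb{Z})$ is an isomorphism of free abelian groups which respects cup products and carries a generator of $H^{4}(N;\mathbb{Z})$ to $\pm$ a generator of $H^{4}(M;\mathbb{Z})$, so that $Q_{M}(h^{*}a,h^{*}b)=\pm Q_{N}(a,b)$ and the forms are isomorphic up to the overall sign, which merely records a choice of orientation and cannot be detected by a homotopy equivalence in any case.

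For the hard direction, suppose $\psi\colon(H^{2}(M;\mathbb{Z}),Q_{M})\to(H^{2}(N;\mathbb{Z}),Q_{N})$ is an isometry; I would build a homotopy equivalence as follows. First, exploiting $\pi_{1}(M)=1$ together with Poincar\'e duality and the universal coefficient theorem, one has $H_{*}(M;\mathbb{Z})=\mathbb{Z},0,\mathbb{Z}^{n},0,\mathbb{Z}$ with $n=b_{2}$, and a minimal CW structure (trading away the $1$- and $3$-cells of a handle or cell decomposition, using $\pi_{1}=1$ and duality) gives $M\simeq X_{\varphi_{M}}:=\bigl(\bigvee_{i=1}^{n}S^{2}\bigr)\cup_{\varphi_{M}}e^{4}$ for some $\varphi_{M}\in\pi_{3}\bigl(\bigvee_{n}S^{2}\bigr)$, and likewise $N\simeq X_{\varphi_{N}}$. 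Second, every element of $\mathrm{GL}_{n}(\mathbb{Z})$, in particular the matrix of $\psi$, is realized by a self-map $g$ of $\bigvee_{n}S^{2}$, which is a homotopy equivalence by Hurewicz and Whitehead; and $g$ extends to a homotopy equivalence $X_{\varphi_{M}}\to X_{\varphi_{N}}$ precisely when $g_{*}\varphi_{M}=\pm\varphi_{N}$ in $\pi_{3}\bigl(\bigvee_{n}S^{2}\bigr)$. Third, one computes $\pi_{3}\bigl(\bigvee_{n}S^{2}\bigr)$ by Hilton's theorem: it is free abelian of rank $\binom{n+1}{2}$, with basis the Hopf maps $\eta_{i}$ on the $n$ wedge summands and the Whitehead products $[\iota_{i},\iota_{j}]$ with $i<j$; writing $\varphi_{M}=\sum_{i}a_{i}\eta_{i}+\sum_{i<j}b_{ij}[\iota_{i},\iota_{j}]$ one identifies $b_{ij}$ with the off-diagonal intersection number $Q_{M}(e_{i},e_{j})$ and $a_{i}$ with the self-intersection $Q_{M}(e_{i},e_{i})$, by reading the cup product structure of $X_{\varphi_{M}}$ off the attaching map cell by cell, the normalizations being fixed by the model cases $\mathbb{CP}^{2}=S^{2}\cup_{\eta}e^{4}$ and $S^{2}\times S^{2}=(S^{2}\vee S^{2})\cup_{[\iota_{1},\iota_{2}]}e^{4}$. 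Equivalently, $\pi_{3}\bigl(\bigvee_{n}S^{2}\bigr)\cong\Gamma(\mathbb{Z}^{n})$ is Whitehead's quadratic functor and $\varphi\mapsto Q_{X_{\varphi}}$ is a $\mathrm{GL}_{n}(\mathbb{Z})$-equivariant correspondence with the symmetric bilinear forms, so that the isometry $\psi$ forces $g_{*}\varphi_{M}=\varphi_{N}$ up to an orientation sign on the top cell, whence the second step produces the desired homotopy equivalence $M\to N$.

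The main obstacle is the third step: the purely homotopy-theoretic input consisting of the computation of $\pi_{3}\bigl(\bigvee_{n}S^{2}\bigr)$ and the verification that, for a closed $4$-manifold, the homotopy class of the top-cell attaching map is equivalent data to the intersection form. The Hilton-theorem computation is standard but not short, and the key geometric point, that the Whitehead-product coefficients of $\varphi_{M}$ are literally the intersection numbers, is exactly where Poincar\'e duality of $M$ enters, via the Wu relation between $Sq^{2}$ and the form, which removes any residual $\mathbb{Z}/2$ ambiguity beyond the overall orientation sign. The first and second steps are routine once $\pi_{1}=1$ is used, and the easy direction is immediate. Rather than reprove all of this in detail, I would invoke Milnor's paper together with Whitehead's work on the ``certain exact sequence'' and Hilton's theorem for the homotopy-theoretic ingredients.
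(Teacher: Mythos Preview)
The paper does not actually prove this theorem; it merely states it and attributes it to Milnor building on Whitehead's work \cite{Milnor,Whitehead1,Whitehead2}. Your proposal is a correct and careful outline of precisely the Milnor--Whitehead argument that the paper is citing: the minimal CW model $(\bigvee_n S^2)\cup_\varphi e^4$, the Hilton decomposition of $\pi_3(\bigvee_n S^2)$ in terms of Hopf maps and Whitehead products, and the identification of the attaching class with the intersection form via Whitehead's quadratic functor $\Gamma$. You also rightly flag that the integral form, not the real form defined earlier in the paper, is what is needed. So your proof is not an alternative route but rather a fleshing-out of the very references the paper defers to; there is nothing to compare, and your sketch is sound.
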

A theorem of Freedman \cite{Freedman} from 1981 shows that the classification
of oriented simply connected 4-manifolds up to homeomorphy can be
reduced to the classification of intersection forms:
\begin{thm}
Freedman \cite{Freedman}: Given an even (odd) intersection form,
there exists up to homeomorphism exactly one (two) simply connected,
closed 4 dimensional manifolds represented by that intersection form.
If the form is odd, the two 4 dimensional manifolds $M_{1}$ and $M_{2}$
are distinguished by their Kirby Siebenmann invariant $\kappa(M_{i})$
which is equal to 0 or 1, and vanishes if and only if the product
manifold $M_{i}\times\mathbb{R}$ is differentiable.
\end{thm}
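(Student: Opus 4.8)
The plan is to prove the two assertions of the statement separately: realization (``there exists at least one such manifold'') and rigidity (``exactly one in the even case, two in the odd case''). Both rest on a single deep input, Freedman's disk embedding theorem, which I discuss last.

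For realization, let $Q$ be a unimodular symmetric bilinear form over $\mathbb{Z}$ of rank $n$. Attach $n$ two-handles to $B^4$ along a framed link $L\subset S^3$ whose linking matrix, with the framings placed on the diagonal, equals $Q$. The resulting handlebody $W$ is simply connected, has intersection form $Q$, and, since $Q$ is unimodular, has boundary an integral homology $3$-sphere $\Sigma$. Freedman's theorem that every integral homology $3$-sphere bounds a compact contractible topological $4$-manifold $C$ then produces the closed topological $4$-manifold $M=W\cup_{\Sigma}C$; van Kampen and Mayer--Vietoris show $M$ is simply connected with intersection form $Q$. Hence at least one such manifold exists for every $Q$.

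For rigidity, suppose $M_1,M_2$ are closed simply connected topological $4$-manifolds with isomorphic intersection forms. By the theorem of Milnor quoted above, $M_1$ and $M_2$ are homotopy equivalent. Topological surgery, which is valid in dimension $4$ when the fundamental group is trivial and is itself a corollary of the disk embedding theorem, then shows that a homotopy equivalence $M_1\simeq M_2$ is normally cobordant to a homeomorphism modulo the surgery obstruction in $L_5(\mathbb{Z})=0$, producing a topological $h$-cobordism $W^5$ from $M_1$ to $M_2$; since $\pi_1=1$ the Whitehead group vanishes, $W$ is an $s$-cobordism, and Freedman's topological $5$-dimensional $h$-cobordism theorem (again a consequence of disk embedding) gives $W\cong M_1\times[0,1]$, so $M_1\cong M_2$. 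The parity dichotomy enters through the Kirby--Siebenmann invariant $\kappa(M)\in H^4(M;\mathbb{Z}/2)=\mathbb{Z}/2$, the obstruction to the existence of a smooth (equivalently PL) structure: for a closed spin — i.e. even form — topological $4$-manifold one has $\kappa(M)\equiv\sigma(M)/8 \pmod 2$, so $\kappa$ is determined by $Q$ and exactly one homeomorphism type occurs; for a non-spin — odd form — manifold $\kappa$ is unconstrained and both values are realized by genuinely distinct manifolds (for instance $\mathbb{CP}^2$ with $\kappa=0$ and the Chern manifold $*\mathbb{CP}^2$ with $\kappa=1$, which is non-smoothable; the $E_8$-manifold built above is the spin analogue, forced to have $\kappa=1$ and non-smoothable by Rokhlin's theorem since its signature $8$ is not divisible by $16$), and smoothing theory in dimension $4$ shows these two exhaust the possibilities because $\kappa$ is the sole obstruction.

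The hard part — in fact the entire substance of the theorem, and the reason it is quoted here without proof — is the \textbf{disk embedding theorem}: in a simply connected topological $4$-manifold, an immersed $2$-disk with the correct framing, algebraically cancelling self-intersections, and an algebraically dual sphere can be replaced by a topologically embedded disk with the same boundary. Every step above that used the Whitney trick (the surgery and $h$-cobordism theorems) or capped a homology sphere silently invokes it. Its proof converts the bad immersed disk into a \emph{Casson handle} — an infinite iterated tower of kinky handles — and then proves, by an intricate infinite Bing-shrinking argument (reimbedding lemmas, gropes, ramified Whitehead continua, and Freedman's ``design''), that every Casson handle is homeomorphic to the standard open $2$-handle $D^2\times\mathbb{R}^2$. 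This is exactly where all smooth techniques break down, and none of the point-set topology, Lorentzian geometry, or algebraic topology developed earlier in this text contributes to it; a genuine treatment would have to construct the Casson-handle machinery essentially from scratch, leaving the handle-theoretic and surgery-theoretic bookkeeping above as comparatively formal.
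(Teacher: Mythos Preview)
The paper does not prove this theorem at all: it is stated as a quoted result with the citation \cite{Freedman} and is used purely as a black box in the discussion of which topological invariants determine a simply connected closed $4$-manifold. There is therefore nothing to compare your argument against in the paper itself.

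That said, your outline is an accurate high-level sketch of how Freedman's theorem is actually proved. The realization step (attach $2$-handles along a framed link with linking matrix $Q$, cap the resulting homology-sphere boundary with a contractible topological $4$-manifold) is exactly Freedman's construction; the rigidity step via Milnor's homotopy classification, topological surgery ($L_5(\mathbb{Z})=0$), and the topological $h$-cobordism theorem is the standard route; and you correctly locate all the genuine content in the disk embedding theorem and the Casson-handle shrinking argument. The treatment of the Kirby--Siebenmann invariant is also right: for even (spin) forms $\kappa\equiv\sigma/8\pmod 2$ is forced by the form, while for odd forms both values are realized (your examples $\mathbb{CP}^2$ and $*\mathbb{CP}^2$ are the canonical ones). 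One small quibble: the sentence ``smoothing theory in dimension $4$ shows these two exhaust the possibilities because $\kappa$ is the sole obstruction'' slightly mislocates the argument --- it is the surgery/$h$-cobordism classification itself (applied after connect-summing with the fake $\mathbb{CP}^2$ to adjust $\kappa$) that shows at most two types occur, not smoothing theory, which in dimension $4$ is famously decoupled from $\kappa$. But this does not affect the logical structure of what you wrote.
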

One then can invoke Serre's classification theorem for indefinite
intersection forms \cite{Serre}:
\begin{thm}
Serre: Let Q and Q' be two symmetric bilinear unimodular forms. If
both Q and Q' are indefinite, then they are isomorphic if and only
if they have the same rank, signature and parity. \cite{Serre}
\end{thm}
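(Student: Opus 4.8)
The ``only if'' direction is immediate: the rank and signature are read off from a diagonalization of $Q$ over $\mathbb{R}$, and the parity (whether $Q(x,x)$ is even for every lattice vector $x$) is visibly preserved by isomorphism, so all three are invariants of the isomorphism class. Throughout, write $n$ for the rank and $\sigma$ for the signature, and set $b_\pm=(n\pm\sigma)/2$, so that indefiniteness means $b_+\ge 1$ and $b_-\ge 1$, in particular $n\ge 2$.

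For the converse the plan is to show that $Q$ and $Q'$ lie in the same genus and then to invoke the theorem of Eichler and Kneser that an indefinite unimodular $\mathbb{Z}$-form of rank $n\ge 3$ has exactly one isomorphism class in its genus; this last fact rests on strong approximation for the spin group and is the one genuinely non-elementary ingredient, which I would cite (see the discussion around Serre's classification in \cite{Serre}) rather than reprove. That $Q$ and $Q'$ share a genus amounts to checking that they are isometric over $\mathbb{R}$ and over every $\mathbb{Z}_p$. Over $\mathbb{R}$ this is precisely equality of signatures. For odd $p$ a unimodular $\mathbb{Z}_p$-form is classified by its rank together with the class of its determinant; since the form is unimodular the determinant is $\pm1$, and comparing with a real diagonalization shows this sign equals $(-1)^{b_-}$, hence is already determined by $n$ and $\sigma$ — so $Q$ and $Q'$ become $\mathbb{Z}_p$-isometric. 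For $p=2$ the classification of unimodular forms requires in addition the type, even versus odd, which is exactly the parity; with rank, signature and parity all matching, $Q$ and $Q'$ are $\mathbb{Z}_2$-isometric as well. The residual case $n=2$ (forced to have $\sigma=0$) I would dispatch by hand: the only indefinite unimodular rank-$2$ forms are $\langle 1\rangle\oplus\langle -1\rangle$, which is odd, and the hyperbolic plane with Gram matrix $\bigl(\begin{smallmatrix}0&1\\1&0\end{smallmatrix}\bigr)$, which is even, and these are distinguished precisely by their parity.

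An alternative, and the route actually taken by Serre, avoids the word ``genus'' and instead produces explicit normal forms: by induction on $n$ one shows that every odd indefinite unimodular form is isomorphic to $b_+\langle 1\rangle\oplus b_-\langle -1\rangle$ and every even one to $a\,U\oplus b\,(\pm E_8)$ with $U$ the hyperbolic plane, the inductive step being the splitting off of a primitive vector $v$ with $v\cdot v=\pm1$ in the odd case, or of a copy of $U$ in the even case. One then reads off that the multiplicities (and the sign $\pm$ in the even case, via $\sigma\bmod 8$) are determined by $n$ and $\sigma$, so the normal form — hence the isomorphism class — depends only on rank, signature and parity. Either way the crux is the same: the existence of the global splitting, equivalently the collapse of genus to class in the indefinite rank-$\ge 3$ case; everything else is bookkeeping of local invariants. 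I would accordingly carry out the local computations in full and cite the Eichler--Kneser input, since a self-contained proof of the latter would be wholly out of proportion to a review.
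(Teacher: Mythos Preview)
Your proposal is substantially more than what the paper does: the paper gives no proof of this theorem at all, merely stating it and citing Serre \cite{Serre} (and Milnor--Husemoller) as the source. So there is no ``paper's proof'' to compare against; you have supplied an outline where the paper supplies only a reference.

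That said, your outline is sound. The invariance direction is trivial as you say. For the converse, both routes you sketch are the standard ones. The genus-theoretic route is correct in its broad strokes: equality of rank and signature forces $\mathbb{R}$-equivalence and fixes the determinant $(-1)^{b_-}$, which together with rank settles $\mathbb{Z}_p$-equivalence for odd $p$; at $p=2$ one additionally needs the type, which is exactly the parity; and then Eichler--Kneser (strong approximation for $\mathrm{Spin}$) collapses genus to class for indefinite rank $\ge 3$. Your separate treatment of rank $2$ is also correct. The alternative normal-form route (splitting off $\langle\pm1\rangle$ in the odd case or a hyperbolic plane $U$ in the even case, reducing inductively to $b_+\langle1\rangle\oplus b_-\langle-1\rangle$ or $aU\oplus b(\pm E_8)$) is indeed how Serre proceeds, with the key non-trivial input being Meyer's theorem guaranteeing an isotropic vector in the indefinite rank $\ge 5$ case (and ad hoc arguments below). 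Either way you are right that the deep step is the global splitting/class-number-one fact, and citing it is the appropriate move in a review article --- which is exactly why the paper itself simply cites the result wholesale.
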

For definite intersection forms, Donaldson \cite{Donaldson} showed 
\begin{thm}
Donaldson\cite{Donaldson}: If a definite form is the intersection
form of a smooth simply connected closed 4 manifold $M$ then it is
$\pm\otimes^{b_{r}}(1)$.
\end{thm}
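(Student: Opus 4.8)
The statement is Donaldson's theorem, and a genuinely self-contained proof lies far outside the scope of this review: it rests on the global analysis of moduli spaces of Yang--Mills instantons. I will only sketch the shape of the argument. First I would reduce to the negative definite case, since reversing the orientation of $M$ replaces the intersection form $\sigma_{2,2}$ by its negative; thus it suffices to show that a negative definite $\sigma_{2,2}$ is isomorphic to $\otimes^{b_{r}}(-1)=-\otimes^{b_{r}}(1)$. Write $n=b_{2}=b_{r}$ for the rank; since the form is negative definite, its only nonzero signature contribution is $\tau(M)=-n$. The whole strategy is to realize the number of vectors of self-intersection $-1$ in the lattice $(H^{2}(M;\mathbb{Z}),\sigma_{2,2})$ as a topological invariant forced to equal $n$, and then to observe that such a lattice must be diagonal.

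The geometric input is an auxiliary $SU(2)$ principal bundle $P\to M$ with $c_{2}(P)=1$ together with the space $\mathcal{M}$ of anti-self-dual connections on $P$ modulo gauge equivalence. The plan is to establish three facts, each of which is itself a substantial theorem in gauge theory: (i) for a \emph{generic} Riemannian metric on $M$, the space $\mathcal{M}$ is, away from a finite set of reducible connections, a smooth orientable manifold of dimension $8c_{2}-3(1-b_{1}+b_{+})=8-3=5$, where the dimension formula is the Atiyah--Hitchin--Singer index computation and we use $b_{1}=0$ (simple connectivity) and $b_{+}=0$ (negative definiteness); (ii) the reducible points correspond bijectively to the pairs $\pm\alpha$ with $\alpha\in H^{2}(M;\mathbb{Z})$ and $\alpha\cdot\alpha=-1$, and near each such point $\mathcal{M}$ is a cone over $\mathbb{CP}^{2}$; (iii) $\mathcal{M}$ is non-empty (Taubes' existence/gluing result) and non-compact, and by Uhlenbeck's compactness and removable-singularity theorems together with Taubes' collar construction a neighborhood of the ideal end of $\mathcal{M}$ is diffeomorphic to $M\times(0,1]$, so that the Uhlenbeck compactification attaches a copy of $M$ as a boundary component.

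Granting (i)--(iii) the remainder is soft. Excising small open cone-neighborhoods of the $r$ reducibles (where $r$ is the number of pairs $\pm\alpha$ with $\alpha^{2}=-1$) and trimming the collar yields a compact oriented $5$-manifold $W$ whose boundary is, up to orientation conventions, $-M$ together with $r$ copies of $\mathbb{CP}^{2}$. Hence $M$ is oriented-cobordant to a disjoint union of $r$ copies of $\mathbb{CP}^{2}$ (with a suitable orientation), and since the signature is a cobordism invariant with $\tau(\mathbb{CP}^{2})=1$, tracking the orientations gives $|\tau(M)|=r$, i.e. $r=n$. Therefore the negative definite unimodular lattice $(H^{2}(M;\mathbb{Z}),\sigma_{2,2})$ of rank $n$ contains $n$ pairs of vectors of self-intersection $-1$. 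A short lattice-theoretic argument finishes: if $e,f$ are distinct (up to sign) vectors with $e\cdot e=f\cdot f=-1$, then negative definiteness forces $e\cdot f\in\{-1,0,1\}$, and the values $\pm1$ are impossible since $e\pm f$ would have nonpositive square and be null unless $f=\mp e$; so the $n$ such vectors are pairwise orthogonal, hence form a basis in which $\sigma_{2,2}=\otimes^{n}(-1)$, which is the claim.

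The main obstacle is entirely concentrated in steps (i)--(iii): the transversality statement requires the Freed--Uhlenbeck generic-metrics theorem, the orientability of $\mathcal{M}$ is Donaldson's orientation result, the description of the ends leans on Taubes' gluing theorem and on Uhlenbeck's weak compactness, and the cone structure near a reducible requires a delicate Kuranishi-model analysis. These are genuine nonlinear PDE inputs with no shortcut. Two points deserve flagging even in a sketch: the Uhlenbeck compactification is \emph{not} a manifold-with-boundary near the reducibles before the excision, so the cobordism argument is only valid after cutting out the cones; and the orientation conventions must be carried carefully to fix the sign in $r=n$. These are precisely the places where Donaldson's original treatment is most subtle.
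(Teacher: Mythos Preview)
The paper does not prove this theorem; it merely states Donaldson's result with a citation and moves on. So there is no ``paper's own proof'' to compare against: the statement is quoted as background for the classification of simply connected closed smooth 4-manifolds used in Hawking's spacetime-foam argument.

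Your sketch is an accurate outline of Donaldson's original gauge-theoretic proof, and the ingredients you list---the Atiyah--Hitchin--Singer dimension formula giving a 5-dimensional moduli space, the identification of reducibles with pairs $\pm\alpha$ of square $-1$ and the local cone-on-$\mathbb{CP}^{2}$ structure, Taubes' existence and collar theorems, Uhlenbeck compactness, and the signature-is-a-cobordism-invariant punchline---are all correctly placed. The closing lattice argument is fine: if $e\cdot f=\pm1$ then $(e\mp f)\cdot(e\mp f)=0$, forcing $f=\pm e$ by definiteness, so the $n$ vectors are orthogonal and, having Gram determinant $\pm1$, span the unimodular lattice. You are also right to flag orientations and the non-manifold nature of the raw compactification as the genuinely delicate points. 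Since the paper offers nothing beyond the citation, your sketch goes well beyond what the paper does; for a review article this level of detail would be out of place, which is presumably why the author simply cited \cite{Donaldson}.
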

As a consequence of all these theorems, the topology of smooth simply
connected closed oriented 4 manifolds is entirely characterized up
to homeomorphism by $\chi$ and $\tau$ and the parity of the intersection
form.

\subsection{Hawking's method to estimate the topologies that contribute dominantly
in the gravitational path integral}

Even if we are only able to evaluate the gravitational path integral
up to one loop order, one can actually use it to derive some estimates
how the spacetime at Planck scale looks like. The gravitational action
is, unfortunately, not scale invariant. In the following, we will
review an argument first made by Wheeler in \cite{WheelerFoam} and
then used by Hawking in \cite{HawkingSpacetimefoam} which shows that
fluctuations of the metric at short length scales, even if they can
change the entire topology of the spacetime, do not have a large action
and thus are not damped in the path integral. 

Regge calculus \cite{ReggeCalculus} is a method for approximately
computing a manifold. With this method, spacetime is decomposed into
a 4 simplicial complex, where each 4 simplex is taken to be flat and
to be determined by its edge lengths. The angles between the 2 simplices
that connect different 4 simplices are such that they could not be
connected together in 4 dimensional flat space. (In the following,
we will use the action with the correct factor $\tilde{I}=\frac{1}{16\pi}I$,
 that we discarded for simplicity in section 2). It was found by Regge
\cite{ReggeCalculus} that the Euclidean gravitational action is equal
to 
\begin{equation}
-\frac{1}{16\pi}\int d^{4}x\sqrt{g}R=-\frac{1}{8\pi}\sum A_{i}\delta_{i}\label{eq:actionregge}
\end{equation}
where $A_{i}$ is the area of the i-th 2 simplex and 
\begin{equation}
\delta_{i}=2\pi-\sum_{k}\theta_{k}
\end{equation}
is the deficit angle, where $\sum_{k}\theta_{k}$ is sum of the angles
between the 3 simplices which are connected by the $i$-th 2 simplex.
A simplicial complex that is stationary under small variations of
the edge length is an approximation to a smooth solution of the Einstein
Field equation. In the path integral, one integrates over all metrics,
and a manifold described by Regge calculus can then be regarded as
a certain metric in the summation of the path integral without any
approximation.

If the edge lengths are chosen such that some 4 simplices collapse
to simplices of lower dimension, the action is still well defined
and finite. For example, let $a,b,c$ be the edges of a triangle,
which is a 2 simplex. Then we must have $a+b>c$ and if $a+b=c$ the
2 simplex collapses to a one simplex (which is an edge). We may chose
an approximation of the spacetime metric with very small simplicial
complexes. We can make some of the simplices collapse to lower dimensions,
and then blow up some of the simplices by an arbitrary small amount.
Thereby, we can change the topology of the manifold. For example the
Euler number or signature may be changed this way, but the action
will remain finite by eq. (\ref{eq:actionregge}) . Moreover, if the
simplices we made our modification with have a vanishingly small edge
length, the action will, by Regge's formula (\ref{eq:actionregge}),
only change by an infinitesimally small amount. 

Without any cutoffs being employed, the gravitational amplitude is
a path integral over all possible metrics, including ones with topology
changing quantum fluctuations that can be described approximately
with simplices of very small edge lengths. Because the fluctuations
at small length scales have a small Euclidean action, the formula
of the Euclidean path integral 
\begin{equation}
Z=\int\mathcal{D}g_{\mu\nu}e^{-I}\label{eq:pathintegral-1}
\end{equation}
then implies the contributions from these quantum fluctuations are
not highly damped. Therefore, they should lead to comparatively large
contributions to the amplitude.

In the following, we want to give a review of Hawking's ideas which
describe a method for evaluating the topologies that give the dominant
contribution in the gravitational path integral. Hawking considers
a path integral 
\begin{equation}
Z(\Lambda)=\int\mathcal{D}g_{\mu\nu}e^{-I}
\end{equation}
that goes over all closed Euclidean metrics with some 4 volume V and
an Euclideanized action 
\begin{equation}
I=-\frac{1}{16\pi}\int d^{4}x\left(\sqrt{g}R-\sqrt{g}2\Lambda\right)\label{eq:actionasdafaf}
\end{equation}
where $\Lambda$ is a Lagrange multiplicator. Hawking writes that
the finite volume should be considered merely as a normalization device,
similar to periodic boundary conditions in field theory. The factor
$\Lambda$ is similar to the cosmological constant but its difference
is that the Lagrange multiplicator can get very large while the cosmological
constant is small. 

This partition function can be represented by 
\begin{equation}
Z(\Lambda)=tr(e^{-\Lambda V/(8\pi)})
\end{equation}
where the trace goes over all ``states'' of the gravitational field.
These ``states'' are however, not to be confused with quantum mechanical
``graviton'' states, but they are simply the solutions over which
the path integral is integrated. One can regard the sum above as the
Laplace transform of a function $N(V)$ where $N(V)dV$ is the number
of states in the gravitational field between $V$ and $V+dV$ 
\begin{equation}
Z(\Lambda)=\int_{0}^{\infty}dVN(V)e^{-\frac{\Lambda V}{8\pi}}\label{eq:amplitude1}
\end{equation}
and from the inverse Laplace transform, we get 
\begin{equation}
N(V)=\frac{1}{16\pi^{2}i}\int_{-i\infty}^{i\infty}d\Lambda Z(\Lambda)e^{\frac{\Lambda V}{8\pi}}\label{eq:amplitude2}
\end{equation}
 where the contour of integration is taken to the right of any singularities
of $Z(\Lambda)$ in order to ensure that $N(V)=0$ for $V\leq0$. 

One would expect the dominant contributions to the path integral to
be close to solutions of Einstein's field equation 
\begin{equation}
R_{\mu\nu}-\frac{1}{2}Rg_{\mu\nu}+\Lambda g_{\mu\nu}=0
\end{equation}
 Contracting with $g_{\mu\nu}$ yields 
\begin{equation}
R=4\Lambda\label{eq:einsteincontrac}
\end{equation}
 or 
\begin{equation}
R_{\mu\nu}=\Lambda g_{\mu\nu}\label{eq:einsteineq}
\end{equation}
Inserting $R=4\Lambda$ into the action of Eq. (\ref{eq:actionasdafaf}),
and integrating we get for a spacetime that fulfills the Einstein
equations 
\begin{equation}
I=-\frac{\Lambda V}{8\pi}
\end{equation}
From $R=4\Lambda$ one observes that $\Lambda$ must have the same
dimensions than the curvature scalar so we can set
\begin{equation}
\Lambda=-8\pi c\frac{1}{\sqrt{V}}\label{eq:Lambda}
\end{equation}
where the $8\pi$ is for convenience and c is a constant that will
turn out to depend on the topology. From this, we have
\begin{equation}
V=\frac{8^{2}\pi^{2}c^{2}}{\Lambda^{2}}\label{eq:Volume}
\end{equation}
 setting this into the action, and integrating, we get for a spacetime
that satisfies Einstein's equation: 
\begin{equation}
I=-\frac{\Lambda V}{8\pi}=-\frac{8\pi c^{2}}{\Lambda}\label{eq:simpleaction}
\end{equation}

One can show a so called Atiyah Singer index theorem, see\cite{AtiyahSinger2}.
From a version of it follows the Chern-Gauss-Bonnet theorem \cite{Chern}.
The latter states that the Euler characteristic of a 4 dimensional
manifold $M$ is given by: 
\begin{equation}
\chi(M)=\frac{1}{32\pi^{2}}\int_{M}\epsilon_{\alpha\beta\gamma\delta}R^{\alpha\beta}\wedge R^{\gamma\delta}-\frac{1}{32\pi^{2}}\int_{\partial M}\epsilon_{\alpha\beta\gamma\delta}(2K^{\alpha\beta}\wedge R^{\gamma\delta}-\frac{4}{3}K^{\alpha\beta}\wedge K_{\epsilon}^{\gamma}\wedge K^{\epsilon\delta})
\end{equation}
In the following we assume the manifold to be closed. Then the boundary
terms vanish. The integral over $M$ can be expressed as (see \cite{AsymEucl}):
\begin{equation}
\chi(M)=\frac{1}{32\pi^{2}}\int_{M}d^{4}x\sqrt{g}\left(C_{\mu\nu\alpha\beta}C^{\mu\nu\alpha\beta}-2R_{\mu\nu}R^{\mu\nu}+\frac{2}{3}R^{2}\right)
\end{equation}
where 
\begin{equation}
C_{\mu\nu\alpha\beta}=R_{\mu\nu\alpha\beta}-(g_{\mu[\alpha}R_{\beta]\nu}-g_{\nu[\alpha}R_{\beta]\mu})+\frac{1}{3}Rg_{\mu[\alpha}g_{\beta]\nu}
\end{equation}
is the Weyl tensor. 

Inserting eqs. (\ref{eq:einsteineq}) and (\ref{eq:einsteincontrac})
into the expression for $\chi(M)$ yields 
\begin{equation}
\chi(M)=\frac{1}{32\pi^{2}}\int_{M}d^{4}x\sqrt{g}\left(C_{\mu\nu\alpha\beta}C^{\mu\nu\alpha\beta}+\frac{8}{3}\Lambda^{2}\right)\label{eq:eulernumberdasdad}
\end{equation}
Note the typo in Hawkings original paper \cite{HawkingSpacetimefoam},
where he writes:
\begin{equation}
\chi(M)=\frac{1}{32\pi^{2}}\int_{M}d^{4}x\sqrt{g}\left(C_{\mu\nu\alpha\beta}C^{\mu\nu\alpha\beta}+2\frac{2}{3}\Lambda^{2}\right)\label{eq:eulernumberdaffadf}
\end{equation}
Similarly, from another version of the Atiyah Singer index theorem
\cite{AtiyahSinger2}, it follows that the signature is equal to 
\begin{equation}
\tau(M)=\frac{1}{48\pi^{2}}\int_{M}R_{\alpha}^{\beta}\wedge R_{\beta}^{\alpha}-\frac{1}{48\pi^{2}}\int_{\partial M}K_{\beta}^{\alpha}\wedge R_{\alpha}^{\beta}-\eta(0)
\end{equation}
 where $\eta(s)$ is the so called eta function of a certain differential
operator on $\partial M$. Assuming that M is closed, one can derive
(see \cite{AsymEucl}): 
\begin{equation}
\tau(M)=\frac{1}{48\pi^{2}}\int_{M}d^{4}x\sqrt{g}C_{\mu\nu\alpha\beta}*C^{\mu\nu\alpha\beta}
\end{equation}
 with $*C^{\mu\nu\alpha\beta}$ as the Hodge dual to $C_{\mu\nu\alpha\beta}$.
One has, see \cite{AsymEucl} 
\begin{equation}
C_{\mu\nu\alpha\beta}C^{\mu\nu\alpha\beta}\geq|C_{\mu\nu\alpha\beta}*C^{\mu\nu\alpha\beta}|
\end{equation}
and thereby 
\begin{eqnarray}
2\chi-3|\tau| & = & \frac{2}{32\pi^{2}}\int_{M}d^{4}x\sqrt{g}\left(C_{\mu\nu\alpha\beta}C^{\mu\nu\alpha\beta}+\frac{8}{3}\Lambda^{2}\right)\nonumber \\
 &  & -\frac{3}{48\pi^{2}}\int_{M}d^{4}x\sqrt{g}\left|C_{\mu\nu\alpha\beta}*C^{\mu\nu\alpha\beta}\right|\nonumber \\
 & \geq & \frac{2}{32\pi^{2}}\int_{M}d^{4}x\sqrt{g}\left(C_{\mu\nu\alpha\beta}C^{\mu\nu\alpha\beta}+\frac{8}{3}\Lambda^{2}\right)\nonumber \\
 &  & -\frac{3}{48\pi^{2}}\int_{M}d^{4}x\sqrt{g}C_{\mu\nu\alpha\beta}C^{\mu\nu\alpha\beta}\nonumber \\
 & = & \frac{2}{32\pi^{2}}\int_{M}d^{4}x\sqrt{g}\frac{8}{3}\Lambda^{2}\nonumber \\
 & = & \frac{2}{32\pi^{2}}V\frac{8}{3}\Lambda^{2}\nonumber \\
 & = & \frac{32}{3}c^{2}\label{eq:inequality}
\end{eqnarray}
where in the last line, we have used eq. (\ref{eq:Lambda})

The following paragraph lists the signature $\tau$ and Euler number
$\chi$ for some solutions of Einstein's field equations. The $S^{4}$
space with a metric, see \cite{Hawkingbook} p. 35
\begin{equation}
ds^{2}=(1-\frac{1}{3}\Lambda r^{2})dt+(1-\frac{1}{3}\Lambda r^{2})^{-1}dr^{2}+r^{2}d\Omega^{2}
\end{equation}
has $\chi=2$ and $\tau=0$, the $CP^{2}$ space with its metric,
see \cite{CP^2} 
\begin{equation}
ds^{2}=\frac{\rho{}^{2}}{\rho{}^{2}+x{}^{2}}\left(\delta_{\mu\nu}-\frac{x_{\mu}x_{\nu}+n{}_{\mu\sigma}^{l}n_{\nu\lambda}^{l}x{}^{\sigma}x{}^{\lambda}}{\rho{}^{2}+x{}^{2}}\right)dx{}^{\mu}dx{}^{\nu}
\end{equation}
where $\rho$ is some length scale and 
\begin{equation}
\eta_{\mu\sigma}^{l}=\begin{pmatrix}0 & 1 & 0 & 0\\
-1 & 0 & 0 & 0\\
0 & 0 & 0 & 1\\
0 & 0 & -1 & 0
\end{pmatrix}
\end{equation}
has $\chi=3$ and $\tau=-1$, see \cite{InstantonSymm}. The space
$S^{2}\times S^{2}$ has a metric, see \cite{S^2xS^2} p. 244 
\begin{equation}
ds^{2}=(1-\Lambda r^{2})dt+\frac{dr^{2}}{1-\Lambda r^{2}}+\frac{1}{\Lambda}d\Omega^{2}
\end{equation}
and $\chi=4$,$\tau=0$. Kummer's quartic surface$K^{3}$, see \cite{K3}
for a physical description of this metric, has $\chi=24$ and $\tau=16$
and the Schwarzschild solution has $\chi=2$ and $\tau=0$.

According to Hawking \cite{Hawkingbook}, p. 59, the Euclidean action
of eq. (\ref{eq:actionasdafaf}) 
\begin{equation}
I=-\frac{1}{16\pi}\int\left(R-2\Lambda\right)\sqrt{g}dx^{4}
\end{equation}
attains a minimum value of 
\begin{equation}
I_{min}=-3\pi/\Lambda
\end{equation}
on $S^{4}$. By using eqs. (\ref{eq:Lambda}), (\ref{eq:Volume})
and (\ref{eq:simpleaction}), we get for a spacetime that satisfies
Einstein's equation 
\begin{equation}
I_{min}=-\frac{3\pi}{\Lambda}=-\frac{8\pi c^{2}}{\Lambda}=c\sqrt{V}=c\sqrt{\frac{8^{2}\pi^{2}c^{2}}{\Lambda^{2}}},
\end{equation}
therefore, the constant $c$ has a lower bound of 
\begin{equation}
c\geq-\sqrt{\frac{3}{8}}\label{eq:lowerbound}
\end{equation}
From equation (\ref{eq:eulernumberdasdad}), one sees that for large
$\chi$ either $C_{\mu\nu\alpha\beta}C^{\mu\nu\alpha\beta}$ must
be large, and / or $\Lambda^{2}$ must be large. By eq. (\ref{eq:Lambda}),
a large $\Lambda^{2}$ implies a large $c^{2}$. Since by eq. (\ref{eq:lowerbound}),
$c$ is bounded from below, a large $\Lambda^{2}$ implies a positive
$c$. 

If $C_{\mu\nu\alpha\beta}C^{\mu\nu\alpha\beta}$ is large in eq. (\ref{eq:eulernumberdaffadf}),
 we get a converging effect on geodesics. We can not build our spacetime
from a microscopic model, where all geodesics will develop conjugate
points after short distances, since this would conflict with the existence
of macroscopic geodesics, whose existence is confirmed by measurements.
In order to prevent the Weyl curvature from converging, a considerably
large negative $\Lambda$ constant has to be put in, and due to eq..
(\ref{eq:Lambda}) this would imply a large constant $c$ because
c is bounded from below by eq. (\ref{eq:lowerbound}). 

From the inequality of eq.. (\ref{eq:inequality}) 
\begin{equation}
2\chi-3|\tau|\geq\frac{32}{3}c^{2}
\end{equation}
one would then expect a considerably small $|\tau|$ for large $\chi$
. Since $|\tau|$ can be considered to be small for large $\chi$
we can assume that then 
\begin{equation}
c\approx d\sqrt{\chi}\label{eq:constant}
\end{equation}
 where 
\begin{equation}
d=\frac{\sqrt{3}}{4}
\end{equation}

To compute the $\chi$ for the manifolds that give the dominant contribution
in the path integral, Hawking employs the so called zeta function
renormalization of the gravitational path integral. The following
short introduction is a review of \cite{Mukhanov} and \cite{Hawkingbook}
p. 45. For a scalar field with an Euclidean action 
\begin{equation}
S=\frac{1}{2}\int d^{4}x\sqrt{g}(g^{\mu\nu}\nabla_{\mu}\varphi\nabla_{\nu}\varphi+V\varphi^{2})
\end{equation}
zeta function renormalization of a path integral like $Z=\int\mathcal{D\varphi}e^{-S}$
goes as follows: After integration by parts, the action becomes 
\begin{eqnarray}
S & = & \frac{1}{2}\int d^{4}x(-\varphi\nabla_{\nu}(\sqrt{g}g^{\mu\nu}\nabla_{\mu}\varphi)+\sqrt{g}V\varphi^{2})\nonumber \\
 & = & \frac{1}{2}\int d^{4}x(-\varphi F\varphi)
\end{eqnarray}
 with 
\begin{equation}
F=-\square+V
\end{equation}
The quantity $F$ is a self adjoint operator which has an eigenvalue
problem 
\begin{equation}
F\varphi_{n}=\lambda_{n}\varphi_{n}
\end{equation}
Because $F$ is self adjoint, the eigenfunctions form an orthonormal
base 
\begin{equation}
\int d^{4}x\sqrt{g}\varphi_{n}\varphi_{m}=\delta_{nm}\label{eq:scalarproo}
\end{equation}
and can be expanded as 
\begin{equation}
\varphi=\sum_{n=0}^{\infty}c_{n}\varphi_{n}
\end{equation}
 where 
\begin{equation}
c_{n}=\int d^{4}x\sqrt{g}\varphi\varphi_{n}
\end{equation}
Putting the expansion for $\varphi$ into the action yields 
\begin{equation}
S=\frac{1}{2}\int d^{4}x\sqrt{g}\sum_{m,n}c_{m}c_{n}\lambda_{m}\varphi_{m}\varphi_{n}=\frac{1}{2}\sum_{n}c_{n}^{2}\lambda_{n}
\end{equation}
Once an orthonormal base of eigenfunctions is chosen, the coefficients
$c_{n}$ characterize the space of the functions $\varphi_{n}$ over
which the path integral is performed. Given that $\mathcal{D}\varphi$
must be covariant and that $c_{n}$ are coordinate independent, one
makes the guess 
\begin{equation}
\mathcal{D\varphi=}\prod_{n}f(c_{n})dc_{n}
\end{equation}
The simplest choice for $f$ would be a constant and comparison with
the measure of the path integral for flat space suggests 
\begin{equation}
\mathcal{D}\varphi=\prod_{n=0}^{\infty}\frac{c_{n}}{\sqrt{2\pi}}
\end{equation}
Then, the Euclidean path integral $Z=\int\mathcal{D\varphi}e^{-S}$
becomes 
\begin{equation}
Z=\int\prod_{n=0}^{\infty}\frac{dc_{n}}{\sqrt{2\pi}}e^{-\frac{1}{2}\lambda_{n}c_{n}^{2}}=\left(\prod_{n=0}^{\infty}\lambda_{n}\right)^{-1/2}
\end{equation}
By using the characteristic polynomial, one can show for a finite
dimensional matrix $A$ that the product of its eigenvalues $\lambda_{n}$is
equal to its determinant $det(A)$. In our case, we have a differential
operator which corresponds to an infinitely dimensional matrix. For
this, the product of the eigenvalues is infinite, and one must find
some way to regularize it. For this reason, we define a generalized
zeta function 
\begin{equation}
\zeta(s)=\sum_{n=0}^{\infty}\left(\frac{1}{\lambda_{n}}\right)^{s}
\end{equation}
It will converge for $Re(s)>2$ and can be analytically extended to
a meromorphic function of $s$ with poles only at $s=1$ and $s=2$.
The gradient of $\zeta(s)$ is formally equal to 
\begin{equation}
\zeta'(s)=\frac{d}{ds}\sum_{n=0}^{\infty}e^{-sln(\lambda_{n})}=-\sum_{n=0}^{\infty}e^{-sln(\lambda_{n})}ln(\lambda_{n})
\end{equation}
and we get
\begin{equation}
\zeta'(0)=-\sum_{n=0}^{\infty}ln(\lambda_{n})=-ln\left(\prod_{n=0}^{\infty}\lambda_{n}\right)
\end{equation}
or 
\begin{equation}
ln(Z)=-\frac{1}{2}ln\left(\prod_{n=0}^{\infty}\lambda_{n}\right)=\frac{1}{2}\zeta'(0)
\end{equation}
note that this computations are formal, because when applying mathematical
rigor, the sums $\sum_{n}ln(\lambda_{n})$ can not be handled as if
they were finite, as we did above. By analytic continuation of the
zeta function, this method should remove the problematic divergences
of the path integral, at least when the theory considered is renormalizable. 

For computation of the zeta function, one employs the notion of the
heath kernel of an operator $F(x,y,\tau)$. The heat kernel is a solution
of the generalized heath equation 
\begin{equation}
\frac{d}{d\tau}K(x,y,\tau)+FK(x,y,\tau)=0\label{eq:heateq}
\end{equation}
 where $x,y$ are points in spacetime, $\tau$ is an additional parameter,
and $F$ is an operator acting on the last argument of $K(x,y,\tau)$.
The heath kernel can be expressed as 
\begin{equation}
K(x,y,\tau)=e^{-F}
\end{equation}
 or in terms of eigenvalues of $F$ 
\begin{equation}
K(x,y,\tau)=\sum_{n=0}^{\infty}e^{-\lambda_{n}\tau}\varphi_{n}(x)\varphi_{n}(y)
\end{equation}
 since 
\begin{equation}
\frac{d}{d\tau}\sum_{n}e^{-\lambda_{n}\tau}\varphi_{n}(x)\varphi_{n}(y)=\sum_{n=0}^{\infty}(-\lambda_{n})e^{-\lambda_{n}\tau}\varphi_{n}(x)\varphi_{n}(y)=-FK(x,y,\tau)
\end{equation}
One defines the ``trace'' of the heath kernel as 
\begin{equation}
tr(K(\tau))=\int d^{4}x\sqrt{g}F(x,x,\tau)=\int d^{4}x\sqrt{g}\sum_{n=0}^{\infty}e^{-\lambda_{n}\tau}\varphi_{n}(x)\varphi_{n}(x)=\sum_{n=0}^{\infty}e^{-\lambda_{n}\tau}
\end{equation}
where we have used in the last line that 
\begin{equation}
\int d^{4}x\sqrt{g}\varphi_{n}(x)\varphi_{n}(x)=\delta_{nn}=1
\end{equation}
The generalized zeta function is related to the trace of the heath
kernel by a Mellin transformation 
\begin{equation}
\zeta(s)=\sum_{n=0}^{\infty}\lambda_{n}^{-s}=\frac{1}{\Gamma(s)}\int_{0}^{\infty}t^{s-1}tr(K(\tau)).\label{eq:afdadafdafdad}
\end{equation}
One can determine $K(x,y,\tau)$ by solving the heath equation (\ref{eq:heateq}),
then compute $tr(K(\tau))$ and from this one gets by eq. (\ref{eq:afdadafdafdad})
the generalized zeta function. For an operator $\square+\xi R$ on
a four dimensional compact manifold, deWitt \cite{DeWittzeta} computed
an expansion 
\begin{equation}
tr(K)=\sum B_{n}\tau^{n-2}
\end{equation}
where 
\begin{equation}
B_{n}=\int d^{4}xb_{n}\sqrt{g},
\end{equation}
and 
\begin{equation}
b_{0}=(4\pi)^{-2},
\end{equation}
\begin{equation}
b_{1}=(4\pi)^{-2}(\frac{1}{6}-\xi)R,
\end{equation}
 and 
\begin{equation}
b_{2}=\frac{1}{2880\pi^{2}}(R^{\mu\nu\alpha\beta}R_{\mu\nu\alpha\beta}-R^{\mu\nu}R_{\alpha\beta}+30(1-6\xi)^{2}R^{2}+(6-30\xi)\square R)
\end{equation}
. 

In their remarkable article \cite{Zetafunction}, Gibbons, Hawking
and Perry used this technique to threat the gravitational path integral
itself. One may express the metric with a classical background as
$\overline{g}_{\mu\nu}=g_{\mu\nu}+h_{\mu\nu}$ and then expand the
Euclidean action perturbatively as in section 2.1:
\begin{equation}
I(\overline{g}_{\mu\nu})=I(g_{\mu\nu})+\underline{I}(h_{\mu\nu})+\underline{\underline{I}}(h_{\mu\nu})+\text{higher order terms}
\end{equation}
where $\underline{I}(h_{\mu\nu})$ is linear and $\underline{\underline{I}}(h_{\mu\nu})$
is quadratic in the quantum field. As in Section 2.2, we have $\underline{I}(h_{\mu\nu})=0$
by the equations of motion, and so, omitting ghost and gauge fixing
terms, the Euclidean path integral in the background field method
is given by 
\begin{equation}
Z_{eu}=e^{-I(g_{\mu\nu})}\int\mathcal{D}h_{\mu\nu}e^{-\underline{\underline{I}}(h_{\mu\nu})}
\end{equation}
Then one must consider the addition of gauge fixing and ghost terms
in order to make the one loop path integral unitary. Gibbons, Hawking
and Perry find that one can express the terms for $\underline{\underline{I}}$,
gauge fixing and ghosts by determinants of certain operators $-F,G,C$
that have positive eigenvalues. Gibbons, Hawking and Perry find for
the amplitude 
\begin{equation}
ln(Z)=-I(g_{\mu\nu})-\frac{1}{2}ln\left(det\left(\frac{1}{2}\pi^{-1}\mu^{-2}(-F+G)\right)\right)+ln\left(det\left(\frac{1}{2}\pi^{-1}\mu^{-2}C\right)\right)
\end{equation}
where $\mu$ is some normalization factor, $C$ is the operator for
the ghosts and $-F+G$ is an operator for the Euclidean action and
the gauge fixing. 

For an operator whose eigenvalues are $k^{-1}\lambda_{n}$, one has
\begin{equation}
\zeta(s)=k^{s}\zeta(s)
\end{equation}
or 
\begin{equation}
\zeta'(0)=ln(k)\zeta(0)+\zeta'(0)
\end{equation}
Therefore, the one loop gravitational amplitude becomes with the zeta
functions expressing the eigenvalues of $F,G,$ and $C:$ 
\begin{equation}
ln(Z)=-I(g_{\mu\nu})+\frac{1}{2}\zeta_{F}'(0)+\frac{1}{2}\zeta_{G}'(0)-\zeta_{C}'(0)+\frac{1}{2}ln(2\pi\mu^{2})(\zeta_{F}(0)+\zeta_{G}(0)-2\zeta_{C}(0))
\end{equation}
From the expansion of the heath kernel, Gibbons, Perry, and Hawking
then find the astonishing result that 
\begin{eqnarray}
\zeta_{F}(0)+\zeta_{G}(0)-2\zeta_{C}(0) & = & \int d^{4}x\sqrt{g}\left(\frac{53}{720\pi^{2}}C_{abcd}C^{abcd}+\frac{763}{540\pi^{2}}\Lambda^{2}\right)\nonumber \\
 & = & \frac{106}{45}\chi+\frac{1168}{15}c^{2}\nonumber \\
 & \equiv & \text{\ensuremath{\gamma}}\label{eq:exponent}
\end{eqnarray}
where eqs. (\ref{eq:eulernumberdasdad}) and (\ref{eq:constant})
have been used. 

If we introduce a scale factor into the metric so that 
\begin{equation}
\tilde{\overline{g}}_{ab}=k\overline{g}_{ab}
\end{equation}
we have to consider that the background action transforms under a
change 
\begin{equation}
\tilde{g}_{ab}=kg_{ab}
\end{equation}
into 
\begin{equation}
I(\tilde{g}_{ab})=kI(g_{ab}),
\end{equation}
 and that the eigenvalues of the operators $F,G,C$ will get multiplied
by $k^{-1}.$ So we get a new amplitude of the form 
\begin{equation}
ln(\tilde{Z})=ln(Z)+(1-k)I(g_{\mu\nu})+\frac{1}{2}\gamma ln(k)\label{eq:rescaling}
\end{equation}
Compared to a solution of Einstein's equation with 
\begin{equation}
R_{\mu\nu}=g_{\mu\nu}.
\end{equation}
eq. (\ref{eq:einsteineq}) 
\begin{equation}
R_{\mu\nu}=\Lambda g_{\mu\nu}
\end{equation}
formally looks like a rescaling with a conformal factor $k=\Lambda$
and from eq. (\ref{eq:rescaling}), we can conclude that the one loop
amplitude behaves as 
\begin{equation}
Z\propto\left(\frac{\Lambda}{\Lambda_{0}}\right)^{-\gamma}\label{eq:Amplitude}
\end{equation}
where $\Lambda_{0}$ is some normalization factor related to $\mu$.
In eq. (\ref{eq:amplitude1}) we had an amplitude 
\begin{equation}
Z(\Lambda)=\int\mathcal{D}g_{\mu\nu}e^{-I}
\end{equation}
and we saw in eq. (\ref{eq:simpleaction}) that for a spacetime which
fulfills Einstein's field equations, we have 
\begin{equation}
I=-\frac{1}{16\pi}\left(\int\left(R-2\Lambda\right)\sqrt{g}dx^{4}\right)=-\frac{\Lambda V}{8\pi}=-\frac{8\pi c^{2}}{\Lambda}
\end{equation}
Using eq. (\ref{eq:Amplitude}), and eq. (\ref{eq:simpleaction}),
 Hawking approximates the action with a quantum and background part
as 
\begin{equation}
Z(\Lambda)\propto\left(\frac{\Lambda}{\Lambda_{0}}\right)^{-\gamma}e^{\frac{8\pi c^{2}}{\Lambda}}=\left(\frac{\Lambda}{\Lambda_{0}}\right)^{-\gamma}e^{\frac{8\pi d^{2}\chi}{\Lambda}}
\end{equation}
By eq. (\ref{eq:simpleaction}), the action is minimized for large
c and from eq. (\ref{eq:inequality}), it follows that large c implies
$|\tau|$ is small. Vanishing $|\tau|$ implies $\frac{6}{32}\chi\geq c^{2}$
by eq. (\ref{eq:inequality}) and we can write in eq (\ref{eq:exponent}):
\begin{equation}
\gamma\approx a\chi
\end{equation}
where $a>106/45$. Thereby the amplitude gets the simple form 
\begin{equation}
Z(\Lambda)\propto\left(\frac{\Lambda}{\Lambda_{0}}\right)^{-a\chi}e^{\frac{8\pi d^{2}\chi}{\Lambda}}\label{eq:amplitude3443}
\end{equation}
In eq. (\ref{eq:amplitude2}), we considered a formula for the number
of states between the Volume elements $V$ and $V+dV$ 
\begin{equation}
N(V)=\frac{1}{16\pi^{2}i}\int_{-i\infty}^{i\infty}d\Lambda Z(\Lambda)e^{\frac{\Lambda V}{8\pi}}
\end{equation}
 In order to estimate the topologies that give the most dominant contribution
in the path integral, a saddle point approximation with the amplitude
of eq. (\ref{eq:amplitude3443}) is employed and yields 
\begin{equation}
N(V)\approx\left(\frac{\Lambda}{\Lambda_{0}}\right)^{-a\chi}e^{\frac{8\pi d^{2}\chi}{\Lambda}+\frac{\Lambda V}{8\pi}}
\end{equation}
Setting $\frac{dN}{d\Lambda}=0$ gives 
\begin{equation}
\frac{e^{\frac{1}{8}\frac{64\pi^{2}\chi a^{2}+\Lambda^{2}V}{\Lambda\pi}}\left(\frac{\Lambda}{\Lambda_{0}}\right)^{-a\chi}(-64\pi\chi d^{2}-8\pi\Lambda a\chi+\Lambda^{2}V)}{\Lambda^{2}}=0
\end{equation}
and solving this for $\Lambda$ results in 
\begin{equation}
\Lambda_{s}=\frac{4\left(\pi a\chi\pm\sqrt{(a\pi\chi)^{2}+4\pi^{2}V\chi d^{2}}\right)}{V}\label{eq:lambdas}
\end{equation}
where, because the contour integral in eq. (\ref{eq:amplitude2})
should pass to the right of the singularity at $\Lambda_{0}$, one
has to take the positive sign of the square root. We can get the dominant
topologies in the path integral from 
\begin{equation}
\frac{dN(V)}{d\chi}|_{\Lambda=\Lambda_{s}}=0
\end{equation}
 which yields after simplification 
\begin{equation}
e^{\frac{1}{8}\frac{64\pi^{2}\chi d^{2}+\Lambda_{s}^{2}V}{\Lambda\pi}}\left(\frac{\Lambda_{s}}{\Lambda_{0}}\right)^{-a\chi}\left(a\Lambda_{s}ln\left(\frac{\Lambda_{s}}{\Lambda_{0}}\right)-8\pi d^{2}\right)=0
\end{equation}
or 
\begin{equation}
aln\left(\frac{\Lambda_{s}}{\Lambda_{0}}\right)\Lambda_{s}-8\pi d^{2}=0
\end{equation}
If $\Lambda_{0}\geq1$, this will be satisfied by 
\begin{equation}
\Lambda_{s}\approx\Lambda_{0}
\end{equation}
and if $\Lambda_{0}<1$, we get 
\begin{equation}
\Lambda_{s}\approx\Lambda_{0}^{a/(8\pi d^{2})}
\end{equation}
Setting $\Lambda_{s}=\Lambda_{0}$ into eq. (\ref{eq:lambdas}), one
arrives, after solving for $\chi$, at 
\begin{equation}
\frac{\Lambda_{0}^{2}}{8\pi(8\pi d^{2}+\Lambda_{0}a)}V=\chi
\end{equation}
or 
\[
\chi\propto hV
\]
where $h=\frac{\Lambda_{0}^{2}}{8\pi(8\pi d^{2}+\Lambda_{0}a)}$ is
some constant depending on the cutoff $\Lambda_{0}$ that is likely
to be set somewhere near the length scales of the Planck scale. A
similar expression follows with $\Lambda_{s}\approx\Lambda_{0}^{a/(8\pi d^{2})}$.
Finally, Hawking concludes: 
\begin{quotation}
This supports the picture of spacetime foam because it says that the
dominant contribution to the number of states comes from metrics with
one gravitational instanton per unit Planck volume $h^{-1}$.
\end{quotation}

\section{Canonical quantization of general relativity}

\subsection{The Wheeler deWitt equation}

In section 3, we have seen that the gravitational path integral is
dominated by metrics which describe virtual gravitational instantons.
One therefore could believe that the divergence of the two loop amplitude
is due to a general failure of perturbation theory in general relativity.
For example, Hawking writes in \cite{Hawkingbook}: 
\begin{quotation}
``Attempts to quantize gravity ignoring the topological possibilities
and simply drawing Feynman diagrams around flat space have not been
very successful. It seems to me that the fault lies not with the pure
gravity or supergravity theories themselves but with the uncritical
application of perturbation theory to them. In classical relativity
we have found that perturbation theory has only limited range of validity
One can not describe a black hole as a perturbation around flat space.
Yet this is what writing down a string of Feynman diagrams amounts
to.''
\end{quotation}
Therefore, we will describe the standard non perturbative quantization
framework for gravity in this section. The canonical quantization
of gravity provides an excellent tool especially for the quantization
of closed spacetimes, e.g. the closed Friedmann Robertson Walker universe. 

We consider the quantization of systems where the Lagrangian $L(q,\dot{q})$
with $q$ as canonical coordinate, is singular. By this, it is meant
that that the canonical momentum 
\begin{equation}
p_{i}=\frac{\partial L(q,\dot{q})}{\partial\dot{q}}
\end{equation}
can not be solved for $\dot{q}$. Gravity is such a theory and the
 Hamiltonian description of those theories were investigated at first
by Dirac \cite{Dirac1}, who presented a method for converting a gauge
field theory with singular Lagrangian into a form with a Hamiltonian.
In 1967, deWitt used this model to derive a quantum mechanical equation
for relativistic spacetimes \cite{deWitt} . In this section, we will
shortly review some aspects of this work. 

We begin with the assumption that spacetime is globally hyperbolic.
Then, one can find a time function $t$ such that each surface $t=const$
is a Cauchy surface $\Sigma$ and there exists a corresponding time
flow vector field $t^{\mu}$satisfying $t^{\mu}\nabla_{\mu}t=1$.
The metric tensor $g_{\mu\nu}$induces a spatial metric $\gamma_{ij}$
on $\Sigma_{t}$ and one can decompose the four metric $g_{\mu\nu}$
as follows: 
\begin{equation}
g_{\mu\nu}=\left(\begin{array}{cc}
-N^{2}+\beta_{k}\beta^{k} & \beta_{j}\\
\beta_{i} & \gamma_{ij}
\end{array}\right)\label{eq:ADMMetric}
\end{equation}
This metric corresponds to a line element 
\begin{eqnarray}
ds^{2} & = & -N^{2}dt^{2}+\gamma_{ij}(dx^{i}+\beta^{i}dt)(dx^{j}+\beta^{j}dt)\label{eq:ADMLINE}\\
 & = & -(N^{2}-\beta_{i}\beta^{i})dt^{2}+2\beta_{i}dx^{i}dt+\gamma_{ij}dx^{i}dx^{j}\nonumber 
\end{eqnarray}
The function $N$ is called lapse function and it is defined by
\begin{equation}
N=\frac{1}{n^{\mu}\nabla_{\mu}t}
\end{equation}
where $n^{\mu}$ is the unit normal to $\Sigma_{t}$, is called lapse
function. The three dimensional vector $\beta_{k}$ is the component
of $t^{\mu}$ tangential to $\Sigma_{t}$ and is called shift vector.
If one imagines spacetime as being foliated by a family of hypersurfaces
with $t=const$, $Ndt$ is then the proper time lapse between the
upper and lower hypersurfaces, and the shift vector gives the correspondence
between two points in the hypersurfaces.The point $(x^{i}+dx^{i},\beta^{i}dt)$
in the lower hypersurface corresponds to $(x^{i}+dx^{i},t+dt)$ in
the upper hypersurface. The spatial indices are raised and lowered
using the 3 metric $\gamma_{ij}$ and its inverse, with $\gamma_{ik}\gamma^{kj}=\delta_{i}^{j}$,
$\beta^{i}=\gamma^{ij}\beta_{j}$ $\gamma=det(\gamma_{ij})$. The
Lagrangian of gravity has the form 
\begin{equation}
L=\frac{1}{16\pi}\int d^{3}x\sqrt{-g}R
\end{equation}
In the following sections 3 and 4, we neglect the factor $1/16\pi$for
simplicity. This Lagrangian can, up to a total derivative be expressed
as, see \cite{Wald} p. 464: 
\begin{equation}
L=\int d^{3}x\sqrt{-g}R=\int d^{3}xN\sqrt{\gamma}\left(K_{ij}K^{ij}-K^{2}+^{(3)}R\right)\label{eq:admlagerangian}
\end{equation}
where 
\begin{equation}
K_{ij}=\frac{1}{2}N^{-1}(D_{j}\beta_{i}+D_{i}\beta_{j}-\partial_{t}\gamma_{ij})\label{eq:extcruv}
\end{equation}
is the extrinsic curvature of the hypersurface $x^{0}=const$ where
$D_{i}$ denotes covariant derivation with respect to the i-th direction
based on the three metric $\gamma_{ij}$ and $^{(3)}R$ is the curvature
scalar with respect to $\gamma_{ij}$. 

One can define conjugate momenta for $N,\beta_{i},\gamma_{ij}$ 
\begin{equation}
\pi=\frac{\delta L}{\delta\partial_{t}N}=0,
\end{equation}
and 
\begin{equation}
\pi^{i}=\frac{\delta L}{\delta\partial_{t}\beta_{i}}=0,
\end{equation}
and with help of 
\begin{equation}
\frac{\delta K_{ij}}{\delta\partial_{t}\gamma_{kl}}=-\frac{\delta_{ik}\delta_{jl}}{2N},
\end{equation}
and 
\begin{equation}
K=\gamma^{ij}K_{ij}
\end{equation}
we can derive 
\begin{eqnarray}
\pi^{kl} & = & \frac{\delta L}{\delta\partial_{t}\gamma_{kl}}\nonumber \\
 & = & N\sqrt{\gamma}\left(2K_{ij}\frac{\delta K_{ij}}{\delta\partial_{t}\gamma_{kl}}-2K\frac{\delta K}{\delta\partial_{t}\gamma_{kl}}\right)\nonumber \\
 & = & N\sqrt{\gamma}\left(-\frac{2K^{kl}}{2N}+\frac{2K\gamma^{kl}}{2N}\right)\nonumber \\
 & = & -\sqrt{\gamma}\left(K^{kl}-\gamma^{kl}K\right)\nonumber \\
 & = & \sqrt{\gamma}\left(\gamma^{kl}K-K^{kl}\right),\label{eq:conjmomentum}
\end{eqnarray}
The Hamiltonian is then 
\begin{eqnarray}
H & = & \int d^{3}x(\pi\partial_{t}N+\pi^{i}\partial_{t}\beta_{i}+\pi^{ij}\partial_{t}\gamma_{ij})-L\nonumber \\
 & = & \int d^{3}x(\pi^{ij}\partial_{t}\gamma_{ij})-L\nonumber \\
 & = & \int d^{3}x\left(\sqrt{\gamma}\left(\gamma^{ij}K-K^{ij}\right)\partial_{t}\gamma_{ij}\right)-L\nonumber \\
 & = & \int d^{3}x\left(-2N\sqrt{\gamma}\left(\gamma^{ij}K-K^{ij}\right)\frac{1}{2N}\left(D_{j}\beta_{i}+D_{i}\beta_{j}-\partial_{t}\gamma_{ij}\right)\right.\nonumber \\
 &  & +\left.\sqrt{\gamma}(K\gamma^{ij}-K^{ij})(D_{j}\beta_{i}+D_{i}\beta_{j})\right)-L\nonumber \\
 & = & \int d^{3}x\left(-2N\sqrt{\gamma}\left(\gamma^{ij}K-K^{ij}\right)\frac{1}{2N}\left(D_{j}\beta_{i}+D_{i}\beta_{j}-\partial_{t}\gamma_{ij}\right)\right.\nonumber \\
 &  & \left.+\pi^{ij}(D_{j}\beta_{i}+D_{i}\beta_{j})+N\sqrt{\gamma}\left(K^{2}-K_{ij}K^{ij}-^{(3)}R\right)\right)\nonumber \\
 & = & \int d^{3}x\left(-2N\sqrt{\gamma}\left(\gamma^{ij}K-K^{ij}\right)K_{ij}+N\sqrt{\gamma}\left(K^{2}-K_{ij}K^{ij}-^{(3)}R\right)+\pi^{ij}(D_{j}\beta_{i}+D_{i}\beta_{j})\right)\nonumber \\
 & = & \int d^{3}x\left(-2N\sqrt{\gamma}\gamma^{ij}KK_{ij}+2N\sqrt{\gamma}K^{ij}K_{ij}+N\sqrt{\gamma}\left(K^{2}-K_{ij}K^{ij}-^{(3)}R\right)+2\pi^{ij}D_{j}\beta_{i}\right)\nonumber \\
 & = & \int d^{3}x\left(-2N\sqrt{\gamma}K^{2}+2N\sqrt{\gamma}K^{ij}K_{ij}+N\sqrt{\gamma}\left(K^{2}-K_{ij}K^{ij}-^{(3)}R\right)+2\pi^{ij}D_{j}\beta_{i}\right)\nonumber \\
 & = & \int d^{3}x\left(N\sqrt{\gamma}(K_{ij}K^{ij}-K^{2}-{}^{3}R)-2\beta_{i}D_{j}\left(\gamma^{-1/2}\pi^{ij}\right)+2D_{i}(\gamma^{-1/2}\beta_{j}\pi^{ij})\right)\nonumber \\
 & = & \int d^{3}x\left(N\sqrt{\gamma}(K_{ij}K^{ij}-K^{2}-{}^{3}R)-2\beta_{i}D_{j}\left(\gamma^{-1/2}\pi^{ij}\right)+2D_{i}(\gamma^{-1/2}\beta_{j}\pi^{ij})\right)\label{eq:Hamiltonianconstraint}
\end{eqnarray}
The term $2D_{i}(\gamma^{-1/2}\beta_{j}\pi^{ij})$ only contributes
a boundary term to $H$, which, for finite spacetimes can be neglected
after the integration. It therefore will be dropped and we arrive
at

\begin{eqnarray}
H & = & \int d^{3}x\left(N\sqrt{\gamma}(K_{ij}K^{ij}-K^{2}-{}^{(3)}R)-\beta_{i}2D_{j}\left(\gamma^{-1/2}\pi^{ij}\right)\right)\nonumber \\
 &  & \int d^{3}x\left(N\mathcal{H_{G}}+\beta_{i}\chi^{i}\right)\label{Hamiltonoperatorconstraint2}
\end{eqnarray}
with 
\begin{equation}
\chi^{i}=2D_{j}\left(\gamma^{-1/2}\pi^{ij}\right)
\end{equation}
and 
\begin{equation}
\mathcal{H_{G}}=\sqrt{\gamma}(K_{ij}K^{ij}-K^{2}-{}^{3}R)
\end{equation}
When writing this, one should emphasize that the Lagrangian $L$ and
the Hamiltonian $H$ were derived both by omitting boundary terms.
These terms do not contribute anything to $H$ in case of finite worlds.
For asymptotically flat worlds, one must add, see \cite{deWitt},
or \cite{Wald}, p. 469, a contribution: 
\begin{equation}
E_{\infty}=\int_{\Sigma}N\sqrt{\gamma}\gamma^{ij}(\gamma_{ik,j}-\gamma_{ij,k})
\end{equation}
to the Hamiltonian. 

Since $\pi=0$ we have $\partial_{t}\pi=0$. Therefore, the Poisson
bracket yields 
\begin{equation}
\left\{ \pi,H\right\} =\partial_{t}\pi=0=\frac{\partial H}{\partial N}=\mathcal{H}_{g}\label{Hamiltonconstraint}
\end{equation}
Similarly, we have $\pi^{i}=0$, or $\partial_{t}\pi^{i}=0$, which
implies
\begin{equation}
\left\{ \pi^{i},H\right\} =\partial_{t}\pi^{i}=0=\frac{\partial H}{\partial\beta_{i}}=\chi^{i}=0\label{Diffeomorphismconstraint}
\end{equation}
Eq. (\ref{Diffeomorphismconstraint}) is associated with spatial diffeomorphism
invariance and therefore called diffeomorphism constraint. 

Using
\begin{equation}
\pi^{kl}\gamma_{kl}=\sqrt{\gamma}\left(\delta_{i}^{i}K-K^{kl}\gamma_{kl}\right)=\sqrt{\gamma}\left(\delta_{i}^{i}K-K\right)=\sqrt{\gamma}2K
\end{equation}
yields 
\begin{equation}
K=\frac{1}{2}\gamma^{-1/2}\pi^{kl}\gamma_{kl},
\end{equation}
 and we can simplify $\mathcal{H}_{G}$ even further: 
\begin{eqnarray}
\mathcal{H_{G}} & = & \sqrt{\gamma}K_{ij}K^{ij}-\sqrt{\gamma}K^{2}-\sqrt{\gamma}{}^{(3)}R\nonumber \\
 & = & \sqrt{\gamma}\left(K^{ij}-\gamma^{ij}K\right)K_{ij}+\pi^{ij}K\gamma_{ij}-\pi^{ij}K\gamma_{ij}-\sqrt{\gamma}{}^{\;(3)}R\nonumber \\
 & = & -\pi^{ij}(K_{ij}-K\gamma_{ij})-\pi^{ij}K\gamma_{ij}-\sqrt{\gamma}{}^{\;(3)}R\nonumber \\
 & = & \gamma^{-1/2}\pi^{ij}\pi_{ij}-\pi^{ij}K\gamma_{ij}-\sqrt{\gamma}{}^{\;(3)}R\nonumber \\
 & = & \gamma^{-1/2}\pi^{ij}\pi_{ij}-\frac{1}{2}\pi^{ij}\gamma_{ij}\gamma^{-1/2}\pi^{kl}\gamma_{kl}-\sqrt{\gamma}{}^{\;(3)}R\nonumber \\
 & = & \frac{1}{2}\gamma^{-1/2}(\gamma_{ik}\gamma_{jl}+\gamma_{il}\gamma_{jk}-\gamma_{ij}\gamma_{kl})\pi^{ij}\pi^{kl}-\sqrt{\gamma}^{\:(3)}R\nonumber \\
 & = & \mathcal{G}_{ijkl}\pi^{ij}\pi^{kl}-\sqrt{\gamma}^{\:(3)}R\label{eq:Hamiltondensity}
\end{eqnarray}
where 
\begin{equation}
\mathcal{G}_{ijkl}=\frac{1}{2}\gamma^{-1/2}(\gamma_{ik}\gamma_{jl}+\gamma_{il}\gamma_{jk}-\gamma_{ij}\gamma_{kl})
\end{equation}
Since $\gamma_{ij}$ and $\pi^{ij}$ are canonical coordinates. Therefore,
we have the following Poisson bracket: 
\begin{equation}
\left\{ \gamma_{ij}(x),\pi^{kl}(x')\right\} =\delta_{(i}^{k}\delta_{j)}^{l}\delta(x,x')\label{eq:inconsistency}
\end{equation}
In the quantum theory, this becomes: 
\begin{equation}
\left[\hat{\gamma}_{ij}(x),\hat{\pi}^{kl}(x')\right]=i\delta_{(i}^{k}\delta_{j)}^{l}\delta(x,x')\label{eq:commutator-1}
\end{equation}
where $\hat{\gamma}_{ij}$ and $\hat{\pi}^{ij}$ are now operators
acting on a state functional $\Psi$ that depends on the three metric
$\gamma_{ij}$ for which we will use the symbolic notation $\Psi(\gamma)$.
The relation (\ref{eq:commutator-1}) is fulfilled if 
\begin{equation}
\widehat{\gamma}_{ij}\Psi(\gamma)=\gamma_{ij}\Psi(\gamma)
\end{equation}
 and 
\begin{equation}
\hat{\pi}^{ij}\Psi(\gamma)=\frac{1}{i}\frac{\delta}{\delta\gamma_{ij}}\Psi(\gamma)\label{eq:momentumoperator}
\end{equation}
The Hamiltonian constraint \ref{Hamiltonconstraint} then becomes:
\begin{equation}
\left(\mathcal{G}_{ijkl}\frac{\delta}{\delta\gamma_{ij}}\frac{\delta}{\delta\gamma_{kl}}+\sqrt{\gamma}^{\:(3)}R\right)\Psi(\gamma)=0\label{eq:wheelerdewit}
\end{equation}
This is the Wheeler-deWitt equation, which describes a Schroedinger
like equation for the universe. And the diffeomorphism constraint
becomes 
\begin{equation}
2D_{j}\left(\gamma^{-1/2}\frac{1}{i}\frac{\delta}{\delta\gamma_{ij}}\Psi(\gamma)\right)=0.
\end{equation}

The Wheeler deWitt equation (\ref{eq:wheelerdewit}) can be approximately
solved with a semiclassical WKB-like ansatz 
\begin{equation}
\Psi=C(\gamma)e^{iS(\gamma)}\label{eq:wkb}
\end{equation}
where it is assumed that 
\begin{equation}
\left|\frac{\delta C(\gamma)}{\delta\gamma_{ij}}\right|<<\left|C(\gamma)\frac{\delta S(\gamma)}{\delta\gamma_{ij}}\right|
\end{equation}
deWitt \cite{deWitt} gets from the Hamiltonian constraint of eq.
(\ref{eq:wheelerdewit}) an equation for the phase 
\begin{equation}
\mathcal{G}_{ijkl}\frac{\delta S(\gamma)}{\delta\gamma_{ij}}\frac{\delta S(\gamma)}{\delta\gamma_{kl}}=\sqrt{\gamma}^{\:(3)}R\label{eq:einstein1}
\end{equation}
and one for the amplitude 
\begin{equation}
\frac{\delta}{\delta\gamma_{ij}}\left(C^{2}(\gamma)\frac{\delta S(\gamma)}{\delta\gamma_{kl}}\right)=0
\end{equation}
Additionally, deWitt found from the diffeomorphism 
\begin{equation}
D_{j}\left(\frac{\delta S(\gamma)}{\delta\gamma_{ij}}\right)=0\label{eq:einstein2}
\end{equation}
 and 
\begin{equation}
D_{j}\left(\frac{\delta C(\gamma)}{\delta\gamma_{ij}}\right)=0\label{eq:einstein3}
\end{equation}
From a time integration of equation (\ref{eq:extcruv}),
\begin{equation}
-\frac{\partial\gamma_{ij}}{\partial x^{0}}=2NK_{ij}-D_{j}\beta_{i}-D_{i}\beta_{j}
\end{equation}
one can determine the four geometry of spacetime. By Setting 
\begin{equation}
\pi^{ij}=\frac{\delta S(\gamma)}{\delta\gamma_{ij}}
\end{equation}
 and noting that 
\begin{equation}
\pi^{ij}=-\sqrt{\gamma}(K^{ij}-\gamma^{ij}K)
\end{equation}
deWitt gets the equation 
\begin{equation}
\frac{\partial\gamma_{ij}}{\partial x^{0}}=2N\mathcal{G}_{ijkl}\frac{\delta S(\gamma)}{\delta\gamma_{kl}}-D_{j}\beta_{i}-D_{i}\beta_{j}
\end{equation}
Differentiating this equation by $x^{0}$, deWitt derived a set of
differential equations that together with eqs. (\ref{eq:einstein1}),
(\ref{eq:einstein2}) and (\ref{eq:einstein3}) turned out to be equivalent
to the classical Einstein equations. Hence, the ansatz of eq. (\ref{eq:wkb})
is indeed similar to a WKB approximation.

The Hamiltonian generates time translations but it vanishes in general
relativity. To get a dynamical theory, deWitt proposed to use wave
packets in form of superpositions like 
\begin{equation}
\Psi=Ce^{-iS}+Ce^{iS}
\end{equation}
 In his article, deWitt used such an ansatz to describe the properties
of the quantized Friedmann Robertson Walker universe \cite{deWitt}.

\subsection{Problems of canonical quantum gravity, occurrence of inconsistencies
and infinities at low distance physics.}

Unfortunately, until now, no exact solutions of the Wheeler deWitt
equation have been found. In his article \cite{deWitt}, deWitt also
noted that the quantization leads to a severe inconsistency even without
considering the problems with the definition of a scalar product.
Contracting all indices in eq. (\ref{eq:inconsistency}) and setting
$x'=x$ yields 
\begin{equation}
\left[\hat{\gamma}_{ij}(x),\hat{\pi}^{ij}(x)\right]=6i\delta(x,x)
\end{equation}
And therefore, we have 
\begin{equation}
\left[6i\hbar\delta(x,x),i\int\chi_{k'}\delta\zeta^{k'}d^{3}x'\right]=0.\label{eq:commutator}
\end{equation}
where denotes $\delta\zeta^{k}$ an infinitesimal displacement. On
the other hand, deWitt computed the commutators (the notation $,k$
means partial differentation with respect to $x^{k}$ ) 
\begin{equation}
\left[\hat{\gamma}_{ij},i\int\chi_{k'}\delta\zeta^{k'}d^{3}x'\right]=-\hat{\gamma}_{ij,k}\delta\zeta^{k}-\hat{\gamma}_{kj}\delta\zeta_{,i}^{k}-\hat{\gamma}_{ik}\delta\zeta_{,j}^{k}
\end{equation}
and 
\begin{equation}
\left[\hat{\pi}^{ij},i\int\chi_{k'}\delta\zeta^{k'}d^{3}x'\right]=-\left(\hat{\pi}^{ij}\delta\zeta^{k}\right)_{k}+\hat{\pi}^{kj}\delta\zeta_{,k}^{i}+\hat{\pi}^{ik}\delta\zeta_{,k}^{j}
\end{equation}
Using 
\begin{equation}
[A-B,C]=[A,C]-[B,C]
\end{equation}
and 
\begin{equation}
[AB,C]=A[B,C]+[A,C]B
\end{equation}
we can evaluate the commutator in eq. (\ref{eq:commutator}) as 
\begin{eqnarray}
\left[\left[\hat{\gamma}_{ij},\hat{\pi}^{ij}\right],i\int\chi_{k'}\delta\zeta^{k'}d^{3}x'\right] & = & \left[\left(\hat{\gamma}_{ij}\hat{\pi}^{ij}-\hat{\pi}^{ij}\hat{\gamma}_{ij}\right),i\int\chi_{k'}\delta\zeta^{k'}d^{3}x'\right]\nonumber \\
 & = & \left[\hat{\gamma}_{ij}\hat{\pi}^{ij},i\int\chi_{k'}\delta\zeta^{k'}d^{3}x'\right]-\left[\hat{\pi}^{ij}\hat{\gamma}_{ij},i\int\chi_{k'}\delta\zeta^{k'}d^{3}x'\right]\nonumber \\
 & = & \hat{\gamma}_{ij}\left[\hat{\pi}^{ij},i\int\chi_{k'}\delta\zeta^{k'}d^{3}x'\right]+\left[\hat{\gamma}_{ij},i\int\chi_{k'}\delta\zeta^{k'}d^{3}x'\right]\hat{\pi}^{ij}\nonumber \\
 &  & -\hat{\pi}^{ij}\left[\hat{\gamma}_{ij},i\int\chi_{k'}\delta\zeta^{k'}d^{3}x'\right]-\left[\hat{\pi}^{ij},i\int\chi_{k'}\delta\zeta^{k'}d^{3}x'\right]\hat{\gamma}_{ij}\nonumber \\
 & = & \hat{\gamma}_{ij}\left((-\hat{\pi}^{ij}\delta\zeta^{k})_{,k}+\hat{\pi}^{kj}\delta\zeta_{,k}^{i}+\hat{\pi}^{ik}\delta\zeta_{,k}^{j}\right)\nonumber \\
 &  & +\left(-\hat{\gamma}_{ij,k}\delta\zeta^{k}-\hat{\gamma}_{kj}\delta\zeta_{,i}^{k}-\hat{\gamma}_{ik}\delta\zeta_{,j}^{k}\right)\hat{\pi}^{ij}\nonumber \\
 &  & -\hat{\pi}^{ij}\left(-\hat{\gamma}_{ij,k}(x)\delta\zeta^{k}-\hat{\gamma}_{kj}\delta\zeta_{,i}^{k}-\hat{\gamma}_{ik}\delta\zeta_{,j}^{k}\right)\nonumber \\
 &  & -(-\hat{\pi}^{ij}\delta\zeta^{k})_{,k}+\hat{\pi}^{kj}\delta\zeta_{,k}^{i}+\hat{\pi}^{ik}\delta\zeta_{,k}^{j}\hat{\gamma}_{ij}
\end{eqnarray}
simplifying further, we get
\begin{eqnarray}
 & = & \hat{\gamma}_{ij}(-\hat{\pi}^{ij}\delta\zeta^{k})_{,k}+\hat{\gamma}_{ij}\hat{\pi}^{kj}\delta\zeta_{,k}^{i}+\hat{\gamma}_{ij}\hat{\pi}^{ik}\delta\zeta_{,k}^{j}\nonumber \\
 &  & -\hat{\gamma}_{ij,k}\delta\zeta^{k}\hat{\pi}^{ij}-\hat{\gamma}_{kj}\delta\zeta_{,i}^{k}\hat{\pi}^{ij}-\hat{\gamma}_{ik}\delta\zeta_{,j}^{k}\hat{\pi}^{ij}\nonumber \\
 &  & +\hat{\pi}^{ij}\hat{\gamma}_{ij,k}\delta\zeta^{k}+\hat{\pi}^{ij}\hat{\gamma}_{kj}\delta\zeta_{,i}^{k}+\hat{\pi}^{ij}\hat{\gamma}_{ik}\delta\zeta_{,j}^{k}\nonumber \\
 &  & +(\hat{\pi}^{ij}\delta\zeta^{k})_{,k}\hat{\gamma}_{ij}-\hat{\pi}^{kj}\delta\zeta_{,k}^{i}\hat{\gamma}_{ij}-\hat{\pi}^{ik}\delta\zeta_{,k}^{j}\hat{\gamma}_{ij}\nonumber \\
 & = & \left(\hat{\gamma}_{ij}\hat{\pi}^{kj}-\hat{\pi}^{kj}\hat{\gamma}_{ij}\right)\delta\zeta_{,k}^{i}+\left(\hat{\pi}^{ij}\hat{\gamma}_{ik}-\hat{\gamma}_{ik}\hat{\pi}^{ij}\right)\delta\zeta_{,j}^{k}\nonumber \\
 &  & +\left(\hat{\pi}^{ij}\hat{\gamma}_{kj}-\hat{\gamma}_{kj}\hat{\pi}^{ij}\right)\delta\zeta_{,i}^{k}+\left(\hat{\gamma}_{ij}\hat{\pi}^{ik}-\hat{\pi}^{ik}\hat{\gamma}_{ij}\right)\delta\zeta_{,k}^{j}\nonumber \\
 &  & +\left(\hat{\pi}^{ij}\hat{\gamma}_{ij,k}-\hat{\gamma}_{ij,k}\hat{\pi}^{ij}\right)\delta\zeta^{k}+(\hat{\pi}^{ij}\delta\zeta^{k})_{,k}\hat{\gamma}_{ij}-\hat{\gamma}_{ij}(\hat{\pi}^{ij}\delta\zeta^{k})_{,k}\nonumber \\
 & = & \delta_{ij}^{kj}\delta\zeta_{,k}^{i}-\delta_{ik}^{ij}\delta\zeta_{,j}^{k}-\delta_{kj}^{ij}\delta\zeta_{,i}^{k}+\delta_{ij}^{ik}\delta\zeta_{,k}^{j}+\left(\hat{\pi}^{ij}\hat{\gamma}_{ij,k}-\hat{\gamma}_{ij,k}\hat{\pi}^{ij}\right)\delta\zeta^{k}\nonumber \\
 &  & +(\hat{\pi}^{ij}\delta\zeta^{k})_{,k}\hat{\gamma}_{ij}-\hat{\gamma}_{ij}(\hat{\pi}^{ij}\delta\zeta^{k})_{,k}\nonumber \\
 & = & \left(\hat{\pi}^{ij}\hat{\gamma}_{ij,k}-\hat{\gamma}_{ij,k}\hat{\pi}^{ij}\right)\delta\zeta^{k}+(\hat{\pi}^{ij}\delta\zeta^{k})_{,k}\hat{\gamma}_{ij}-\hat{\gamma}_{ij}(\hat{\pi}^{ij}\delta\zeta^{k})_{,k}\nonumber \\
 & = & -6i\left(\delta(x,x)\delta\zeta^{k}\right)_{,k}
\end{eqnarray}
All in all, we have: 
\begin{eqnarray}
\left[\left[\hat{\gamma}_{ij}(x),\hat{\pi}^{ij}(x)\right],i\int\chi_{k'}\delta\zeta^{k'}d^{3}x'\right] & = & \left[6i\delta(x,x),i\int\chi_{k'}\delta\zeta^{k'}d^{3}x'\right]\label{paradox}\\
 & = & 0\nonumber \\
 & = & -6i\left(\delta(x,x)\delta\zeta^{k}\right)_{,k}\nonumber 
\end{eqnarray}
which is a contradiction. The delta function is a distribution that,
according to deWitt, ``may, without inconsistency, be taught as a
limit of a sequence of successively narrower twin peaked functions,
all of which are smooth, have unit integral and vanish at point $x'=x$
in the valley between the peaks'' 
\begin{equation}
\delta(x)=\lim_{\epsilon\rightarrow0}\frac{1}{2\pi}\left(f_{\epsilon}(x-\sqrt{\epsilon})+f_{\epsilon}(x+\sqrt{\epsilon})-\frac{2f_{\epsilon}(x)}{1+\epsilon}\right)
\end{equation}
where 
\[
f_{\epsilon}(x)=\frac{\epsilon}{x^{2}+\epsilon^{2}}
\]
In his article\cite{deWitt}, deWitt notes on p. 1121: ``In an infinite
world, passage to $\epsilon\rightarrow0$ would correspond to the
usual cutoff going to infinity in momentum space''. On p.1120, deWitt
writes that the then appearing inconsistency from eq. (\ref{paradox})
``bears on problems of interpreting divergences''. Apparently, the
canonical version of quantum gravity becomes inconsistent at high
energies. The article by Goroff and Sagnotti \cite{Goroff1,Goroff2}
that demonstrated inconsistencies of covariant quantum gravity at
high momentum was published in the year 1985. It seems that deWitt
arrived at a similar conclusion 18 years earlier.

\subsection{A short review of Loop quantum gravity and its problems}

One attempt to regularize the Hamiltonian is the so-called loop quantum
gravity. In the following, we give a short review of these methods.
We can, however, not give all details on the rather involved techniques.
We confine us here to simply stating the main results of this regularization
procedure. The text below in this section can be viewed as a summary
of \cite{Nicolai} and sect 4.3 and 6 of \cite{Kiefer}, also \cite{Rovelli}
was of some help. The interested reader is referred to the excellent
introductions \cite{Nicolai,Rovelli,Tiemann,Kiefer} .

First, one writes the Hamiltonian in so called Ashtekar variables.
Using an orthonormal basis $e_{i}^{a}(x)$ where $i,a\in1,2,3$ one
can define 
\begin{equation}
E_{i}^{a}(x)=|det(e_{a}^{i})|e_{i}^{a}(x)
\end{equation}
and a connection 
\begin{equation}
A_{a}^{i}(x)=\Gamma_{a}^{i}(x)+\gamma K_{a}^{i}(x),\label{eq:gausconstraintprev}
\end{equation}
where 
\begin{equation}
\Gamma_{a}^{i}(x)=-\frac{1}{2}\omega_{ajk}\epsilon^{ijk},
\end{equation}
is the Levi-Civita Connection with $\epsilon^{ijk}$ as anti symmetric
tensor and the spin connection 
\begin{equation}
\omega_{a\; j}^{\; i}=\Gamma_{kj}^{i}e_{a}^{k}
\end{equation}
and 
\begin{equation}
K_{a}^{i}=K_{ab}(x)e^{bi}(x)
\end{equation}
is the extrinsic curvature on the three dimensional spacetime manifold.
The parameter $\gamma$ in Eq. (\ref{eq:gausconstraintprev}) is called
Barbero Immirizi parameter. $E_{i}^{a}$ and $A_{a}^{i}$ are canonically
conjugate variables and the Hamiltonian constraint of eq. (\ref{Hamiltonconstraint})
expressed in Ashtekar variables, becomes after some calculation, see
\cite{Nicolai}: 
\begin{equation}
\mathcal{H_{G}}=\frac{E_{a}^{m}E_{b}^{n}}{\sqrt{|det(E_{i}^{a})|}}\left(\epsilon^{abc}F_{mnc}-\frac{1}{2}(1+\gamma^{2})K_{[m}^{\;\, a}K_{n]}^{\; b}\right)
\end{equation}
where 
\begin{eqnarray}
F_{mnc} & = & \partial_{m}A_{nc}-\partial_{n}A_{mc}+\epsilon_{cvw}A_{mv}A_{nw}\nonumber \\
 & = & -\frac{1}{2}\epsilon_{cuv}R_{mnuv}+\gamma(D_{m}K_{nc}-D_{n}K_{mc})+\gamma^{2}\epsilon_{cuv}K_{m}^{\; u}K_{n}^{\; v}
\end{eqnarray}
is the field strength of $A$. This could be simplified by Thiemann,
see \cite{ThiemannHamil,Nicolai} to an expression for the Hamiltonian
operator 
\begin{eqnarray}
H_{G} & = & \int d^{3}xN\epsilon^{mnp}tr\left(F_{mn}\left\{ A_{p},V\right\} -\frac{1+\gamma^{2}}{2}\{A_{m},K\}\{A_{n},K\}\{A_{p},V\}\right)\nonumber \\
 & = & H_{1}+H_{2}\label{eq:loophamiltonian2}
\end{eqnarray}
where 
\begin{equation}
H_{1}=\int d^{3}xN(x)\epsilon^{mnp}tr\left(F_{mn}\left\{ A_{p},V\right\} \right)
\end{equation}
and$N(x)$ is the lapse function, $A_{p}=\tau_{i}A_{a}^{i}$, and
$\tau_{i}=i\sigma_{i}/2$ with $\sigma_{i}$ being Pauli matrices,
\begin{equation}
V=\int_{\Sigma}\sqrt{\frac{1}{3!}\epsilon_{abc}\epsilon^{mnp}E_{m}^{a}E_{n}^{b}E_{p}^{c}}
\end{equation}
is the volume, and 
\begin{equation}
K=\int_{\Sigma}d^{3}xK_{a}^{i}E_{i}^{a}
\end{equation}
is the trace of the extrinsic curvature. 

Loop quantum gravity considers quantum states on a Hilbert space of
so-called spin networks. The Hamiltonian gets converted to an operator
on such spin network states, which allows a proper regularization.
We call a function $\lambda:[0,1]\in\mathbb{R}\rightarrow\Sigma,s\mapsto\{\lambda^{a}(s)\}$
a link or edge. Then, $h_{\lambda}[A](s)\in SU(2)$ is called a holonomy
along $\lambda$ corresponding to $A_{a}=A_{a}^{i}\tau_{i}$ if it
fulfills 
\begin{equation}
h_{\lambda}[A](0)=\mathbf{1}
\end{equation}
 and 
\begin{equation}
\frac{d}{ds}h_{\lambda}[A](s)-A_{a}(\lambda(s))\frac{d\lambda^{a}(s)}{ds}h_{\lambda}[A](s)
\end{equation}
 The formal solution of this differential equation is 
\begin{equation}
U[A,\lambda]=1+\int_{0}^{1}dsA(\lambda(s))+\int_{0}^{1}ds\int_{0}^{1}dtA(\lambda(t))A(\lambda(s))+\ldots
\end{equation}
The holonomies transform in matrix valued $SU(2)$ representations$\rho_{j_{\lambda}}$
for arbitrary spins $j_{\lambda}=\frac{1}{2},1,\frac{3}{2}$. We will,
following Nicolai \cite{Nicolai}, denote them by $(\rho_{j_{\lambda}},h_{\lambda}[A])_{\alpha\beta}$
where $\alpha,\beta$ are indices of the $SU(2)$ representation. 

A spin network is a graph $\Gamma$ of finitely many vertices $v_{i}$
that are connected finitely many edges $\lambda_{i}$, where each
edge is associated with a holonomy. With a function $\psi_{n}:[SU(2)]^{n}\rightarrow\mathbb{C},$
we can define a cylindrical function 
\begin{equation}
\Psi_{\Gamma\psi}[A]=\psi\left((\rho_{j_{\lambda_{1}}},h_{\lambda_{1}}[A])_{\alpha_{1}\beta_{1}},(\rho_{j_{\lambda_{2}}},h_{\lambda_{2}}[A])_{\alpha_{2}\beta_{2}},\ldots,(\rho_{j_{\lambda_{n}}},h_{\lambda_{n}}[A])_{\alpha_{n}\beta_{n}}\right).
\end{equation}
When going to Ashtekar variables, one can derive from eq. (\ref{eq:gausconstraintprev})
a so called Gauss constraint: 
\begin{equation}
\partial_{m}E_{a}^{m}+\epsilon_{abc}A_{m}^{\; b}E^{cm}=0
\end{equation}
This constraint is satisfied for SU(2) invariant cylindrical functions.
These are constructed from holonomies whose spins $j$ obey the Clebsch-Gordan
rules for pairs of spins of the holonomies that are connected at each
vertex. The SU(2) invariant cylindrical function is then defined by
contracting the SU(2) indices of the holonomies at each vertex with
Clebsch-Gordan coefficients. 

For example, for a graph with three edges such that one vertex connects
all of these edges, the SU(2) gauge invariant spin state is 
\begin{equation}
\Psi_{\Gamma\left\{ ,J,C\right\} }[A]=(\rho_{j_{\lambda_{1}}},h_{\lambda_{1}}[A])_{\alpha_{1}\beta_{1}}(\rho_{j_{\lambda_{2}}},h_{\lambda_{2}}[A])_{\alpha_{2}\beta_{2}}(\rho_{j_{\lambda_{n}}},h_{\lambda_{n}}[A])_{\alpha_{3}\beta_{3}}C{}_{\beta_{1}\beta_{2}\beta_{3}}^{j_{\lambda_{1}}j_{\lambda_{2}}j_{\lambda_{3}}}
\end{equation}
where $C{}_{\beta_{1}\beta_{2}\beta_{3}}^{j_{\lambda_{1}}j_{\lambda_{2}}j_{\lambda_{3}}}$
are the Clebsch-Gordan coefficients of the representations connected
to the edges $\lambda_{1},\ldots,\lambda_{2}$.For graphs where a
vertex connects more than three links, different choices for the state
functions are possible, see \cite{Nicolai}. 

Given a graph $\Gamma$ and two gauge invariant cylindrical functions
on this graph, one can then define a norm 
\begin{equation}
\langle\Psi_{\Gamma\left\{ J,C\right\} }[A]|\Psi_{\Gamma,\left\{ J',C'\right\} '}[A']\rangle=\int\Pi_{\lambda_{n}\in\Gamma}dh_{\lambda_{n}}\Psi_{\Gamma,\left\{ J,C\right\} }[A]\Psi_{\Gamma\left\{ J',C'\right\} }[A'].
\end{equation}
where $dh_{\lambda_{n}}$ is the SU(2) Haar measure. The scalar product
for states with different graphs is defined to be zero. From this
norm, one can construct a Hilbert space $\mathcal{H}_{G}$ of gauge
invariant states, if one considers only those states which have finite
norm. 

The diffeomorphism constraint of canonical quantum gravity is then
implemented with a subspace of $\mathcal{H}_{G}$ . Following Nicolai,
we define for a state $|\Psi_{\Gamma,\left\{ J,C\right\} }[A]\rangle\in\mathcal{H}_{G}$
the sum 
\begin{equation}
\eta(\Psi_{\Gamma,\left\{ J,C\right\} }[A])\equiv\sum_{\phi\in Diff\backslash\Gamma}\Psi_{\Gamma,\left\{ J,C\right\} }[A\circ\phi]
\end{equation}
where $Diff\backslash\Gamma$ denotes the set of diffeomorphisms on
the three dimensional spacetime manifold that do not leave $\Gamma$
invariant. The sum 
\begin{equation}
\eta(\Psi_{\Gamma',\left\{ J',C'\right\} }[A])|\Psi_{\Gamma,\{J,C\}}[A]\rangle=\sum_{\phi\in Diff\backslash\Gamma'}\langle\Psi_{\Gamma',\left\{ J',C'\right\} }[A\circ\phi]|\Psi_{\Gamma,\{J,C\}}[A]\rangle
\end{equation}
consists of only finite terms, since the scalar product will be zero
if the graph$\Gamma'\circ\phi$ is different from the graph $\Gamma$.
If the graphs do not differ, then $\phi$ can only change the orientation
or order of the vertices, but since there are finite vertices in each
graph, and the scalar product is finite for $|\Psi_{\Gamma,\left\{ J,C\right\} }[A]\rangle\in\mathcal{H}_{G}$,
the sum must be finite. Since $\phi$ is in the set of diffeomorphisms
there are only contributions if $\Gamma$ and $\Gamma'$ are diffeomorphic.
By 
\begin{equation}
\langle\eta(\Psi_{\Gamma',\left\{ J',C'\right\} }[A])|\eta(\Psi_{\Gamma,\{J,C\}}[A])\rangle\equiv\langle\eta(\Psi_{\Gamma,\left\{ J,C\right\} }[A])|\Psi_{\Gamma,\{J,C\}}[A]\rangle
\end{equation}
we then get a finite norm for the diffeomorphism invariant states
from which we can construct $\mathcal{H}_{diff}\subset\mathcal{H}_{G}$.

With the area element 
\[
dF_{a}=\epsilon_{mnp}E_{a}^{m}dx^{n}dx^{p}
\]
the flux-vector 
\[
F_{S}^{a}(E)=\int_{S}dF^{a}
\]
 for a surface $S$ in the three dimensional manifold is the conjugate
variable to a holonomy $h_{\lambda}[A]$ and one can compute the Poisson
bracket as 
\begin{equation}
\left\{ (\rho_{j_{\lambda}}h_{\lambda}[A])_{\alpha\beta},F_{S}^{a}(E)\right\} =\iota(\lambda.S)\gamma\left(\rho_{j_{\lambda_{1}}}h_{\lambda_{1}}[A]\right)\tau^{a}\left(\rho_{j_{\lambda_{2}}}h_{\lambda_{2}}[A]\right)_{\alpha\beta}
\end{equation}
where $\lambda_{1}$ and $\lambda_{2}$ are the parts of the curve
that lie on different sides of the surface $S$ and the intersection
number $\iota(\lambda,S)$ is defined by: 
\begin{equation}
\iota(\lambda,S)=\int_{\lambda}dx^{m}\int_{S}dy^{n}dy^{p}\epsilon_{mnp}\delta^{3}(x,y)\in\left\{ \pm1,0\right\} 
\end{equation}
In the quantum theory, the Poisson brackets become commutators 
\begin{equation}
\left[\widehat{(\rho_{j_{\lambda}}h_{\lambda}[A])}_{\alpha\beta},\widehat{F}_{S}^{a}(E)\right]=il_{p}^{2}\iota(\lambda.S)\gamma\left(\rho_{j_{\lambda_{1}}}h_{\lambda_{1}}[A]\right)_{\alpha\beta}\tau^{a}\left(\rho_{j_{\lambda_{2}}}h_{\lambda_{2}}[A]\right)_{\alpha\beta}\label{poissonbr}
\end{equation}
where $l_{p}=1,62*10^{-33}cm$ is the Planck length. 

If an holonomy operator $\widehat{(\rho_{j_{\lambda}}h_{\lambda}[A])}$
acts on a spin network state, the holonomy just ads some edge to the
graph that is 
\begin{equation}
\widehat{(\rho_{j_{\lambda}}h_{\lambda}[A])}_{\alpha\beta}|\Psi_{\Gamma,\{J,C\}}[A]\rangle=(\rho_{j_{\lambda}}h_{\lambda}[A])_{\alpha\beta}|\Psi_{\Gamma,\{J,C\}}[A]\rangle
\end{equation}
The action of a flux on a spin network state is zero if the surface
$S$ does not intersect the Graph of $|\Psi_{\Gamma,\{J,C\}}[A]\rangle$
. If $S$ intersects the graph on an edge, a spin matrix $\tau_{\alpha}$
is inserted and the spin network changes as: 
\begin{eqnarray}
\widehat{F}_{S}^{a}(E)\left(,\dots,\left(\rho_{j_{\lambda}}h_{\lambda}[A]\right)_{\alpha\beta}\dots,\right)\;\;\;\;\;\;\;\;\nonumber \\
=8\pi il_{p}^{2}\iota(\lambda,S)\gamma\left(,\dots,\left(\rho_{j_{\lambda_{1}}}h_{\lambda_{1}}[A]\right)_{\alpha\beta}\left(\tau_{i}\right)_{\alpha\beta}\left(\rho_{j_{\lambda_{2}}}h_{\lambda_{2}}[A]\right)_{\alpha\beta},\dots,\right)\label{eq:intersect}
\end{eqnarray}
where $\lambda=\lambda_{1}\cup\lambda_{2}$ and $\left(\tau_{i}\right)_{\alpha\beta}$
is in the $SU(2)$ generator representation corresponding to $j$.
If $S$ intersects $\lambda$ at a vertex, the result depends on the
position and orientation of the surface with respect to the edge,
as well as on the choice for the Clebsch Gordan coefficients, see
\cite{Nicolai}. Unfortunately, this makes the computations in LQG
highly dependent on the vertex triangulation that is chosen for $\Sigma.$ 

Using the formula $A(S)=\int_{S}\sqrt{dF^{a}dF^{a}}$ for the area
of a surface $S$, we can define an area operator by replacing the
integral by a sum made of $N$ infinitesimal surfaces $S_{l}$ with
$S=\cup_{l}^{N}S_{l}$. 
\begin{equation}
\widehat{A}(S)=\lim_{N\rightarrow\infty}\sum_{l=1}^{N}\sqrt{\left|\widehat{F}_{S_{l}}^{a}(E)\widehat{F}_{S_{l}}^{a}(E)\right|}
\end{equation}
Using a coordinate system $\Gamma$, where each of the small areas
are intersected by only one edge, we find by eq. (\ref{eq:intersect})
and 
\begin{equation}
\left(\tau_{i}\right)_{\alpha\beta}\left(\tau_{i}\right)_{\alpha\beta}=-j_{\lambda}(j_{\lambda}+1)\mathbf{1}
\end{equation}
that 
\begin{equation}
\widehat{A}(S)\Psi_{\Gamma,\{J,C\},}=8\pi l_{p}^{2}\gamma\sum_{p=1}^{N(\Gamma)}\sqrt{j_{p}(j_{p}+1)}\Psi_{\Gamma,\{J,C\},}
\end{equation}
where $N(\Gamma)$ is the number of links in $\Gamma$. In eq. (\ref{poissonbr}),
no functions are present that could become infinite. This together
with fact that the area has a discrete spectrum in the spin network
states implies that in loop quantum gravity, one can not run into
problems with infinities as with the Wheeler deWitt equation in section
2.2. However, the result for the operators depend highly on the coordinate
system of the graph that $\Psi_{\Gamma,\{J,C\},}$ is associated with,
e.g it depends on the number of links $N(\Gamma)$ in the graph, and
the position of the nodes as well as the spin representations. 

Like the area operator, one can define the volume operator of a three
dimensional volume $\Omega$ as a finite Riemann sum of $N$ small
volume elements $\Omega_{k}$ with $\Omega=\cup_{l}^{N}\Omega_{l}\Omega$:
\begin{equation}
V(\Omega)=\lim_{N\rightarrow\infty}\sum_{l=1}^{N}\sqrt{\left|\frac{1}{3}\epsilon_{abc}\widehat{F}_{S_{l}^{1}}^{a}(E)\widehat{F}_{S_{l}^{2}}^{a}(E)\widehat{F}_{S_{l}^{3}}^{a}(E)\right|}
\end{equation}
where $S_{l}^{a}:S_{l}^{1}\cup S_{l}^{2}\cup S_{l}^{3}=\Omega_{l}$
are three non-coincident surfaces. When computing this operator, one
must, however, choose the coordinate system appropriately, since otherwise
the operator would either diverge, or vanish, see \cite{Nicolai}
p. 29. 

Now we fix a point $x$ on the three manifold and a tangent vector
$u$ at $x$ and consider a path $\lambda_{x,u}$ of length $\epsilon$
starting at $x$ tangent to $u$. Then the holonomy for this configuration
can be expanded as: 
\begin{equation}
h_{\lambda_{x,u}}[A]=1+\epsilon u^{a}A_{a}(x)+\mathcal{O}(\epsilon^{2})
\end{equation}
 see \cite{Rovelli,Tiemann,ThiemannHamil}. Similarly, we fix two
tangent vectors $u,v$ at $x$ and consider the triangular loop $x,u,v$
denoted by $\lambda_{x,u,v}$ with one vertex at $x$, and two edges
each of length $\epsilon$ and tangent to $u,v$, then 
\begin{equation}
h_{\lambda_{x,u,v}}[F]=1+\frac{1}{2}\epsilon^{2}u^{a}v^{b}F_{ab}(x)+\mathcal{O}(\epsilon^{3})
\end{equation}
We can now define tetrahedra of edge length $\epsilon$ with tangents
$u_{1},u_{2},u_{3}$ at each point $x$ whose triple product $u_{1}(u_{2}\times u_{3})=1$.
Thiemann noted that in \cite{ThiemannHamil} that the $H_{1}$ part
of the Hamiltonian in eq. (\ref{eq:loophamiltonian2}) can then be
written as 
\begin{equation}
H_{1}=\int d^{3}xN\mathcal{H}_{G}=\lim_{\epsilon\rightarrow0}\frac{1}{\epsilon^{3}}\int d^{3}xN(x)\epsilon^{ijk}tr\left(h_{\lambda_{x,u_{i},u_{j}}}h_{\lambda_{x,u_{k}}}\left\{ h_{\lambda_{x,u_{k}}}^{-1},V\right\} \right)
\end{equation}
the integral can be replaced by a Riemannian sum of small three dimensional
regions $\Omega_{m}$ of volume $\epsilon^{3}$.with $x_{m}$being
an arbitrary point in $\Omega_{m}$ 
\begin{equation}
H_{1}=\lim_{\epsilon\rightarrow0}\frac{1}{\epsilon^{3}}\sum_{m}\epsilon^{3}N(x_{m})\epsilon^{ijk}tr\left(h_{\lambda_{x_{m},u_{i},u_{j}}}h_{\lambda_{x_{m},u_{k}}}\left\{ h_{\lambda_{x_{m},u_{k}}^{-1}},V(\Omega_{m})\right\} \right)
\end{equation}
 Finally, $\epsilon$ drops out, and we can replace the Poisson bracket
in the above sum by commutators. 

We now have to choose the points $x_{m}$, the tangents $u_{1},u_{2},u_{3}$
and the paths $\lambda_{x,u,v}$ and $\lambda_{x,u}$ such that the
operator $H_{1}$ is well defined, gauge invariant and nontrivial.
For $H_{1}$, Thiemann found that such a choice can be made, see \cite{Rovelli},
p 279 or Thiemann's article \cite{ThiemannHamil}. So we end up with
a well defined first part of the Hamilton operator that can act on
a spin network state. Unfortunately, the second part, $H_{2}$, of
the Hamiltonian is omitted in the loop quantum gravity literature,
due to computational complexity. If one chooses the Barbero Immirizi
parameter to be $\gamma=\pm i$ the second part of the Hamiltonian
drops out. However, Nicolai notes on p. 6 of \cite{Nicolai2} that
such a choice would lead to the phase space of general relativity
to be complexified. Then a reality constraint would have to be imposed
in order to recover the original phase space. Nicolai writes that
quantizing this reality constraint would lead to additional difficulties.

Even if we assume that it is possible to give a meaningful interpretation
of the second part of the Hamiltonian or to solve the reality constraint,
loop quantum gravity requires a specific choice of the spin network
graph, i.e. a special coordinate system of points has to be chosen.
Some people might think that this procedure is at odds with the principles
of general relativity. Furthermore, the results of this theory seem
to depend on the spin representation that was chosen for the holonomies.
Since these representations are arbitrary, there is a quantization
ambiguity incorporated in loop quantum gravity. 

Finally, the Wheeler deWitt equation (\ref{eq:wheelerdewit}) admits
an approximate solution with superpositions of WKB states. We noted
in section 4.1 that by using these approximate solutions, deWitt was
able to show a connection of the Wheeler deWitt theory to Einstein's
equations of classical general relativity. Unfortunately, it is not
clear at all, how to derive the WKB states of the Wheeler deWitt theory
from some semi-classical limit of the spin network states that one
has in loop quantum gravity. Worse, no one has even succeeded to derive
Einstein's classical equations from the loop quantum gravity. Therefore,
the author of this note is highly skeptical of the loop quantization
approach. However, it is the opinion of the author that the certain
techniques of LQG to describe a spacetime as a discrete graph may
turn out to be of some use, when one is dealing with nonperturbative
phenomena in quantum gravity.

\section{The path integral of gravitation as solution of the Wheeler-deWitt
equation}

When the articles of deWitt appeared, the relation between the perturbative
and the canonical versions of quantum gravity were unknown. This changed
with the work of Hartle and Hawking \cite{Hartle}. For a quantum
theory with a scalar field, the path integral 
\begin{equation}
Z=\psi(x,t)=N\int\mathcal{D}\phi(t)e^{iS(\phi(t))}
\end{equation}
where N is some normalization factor, fulfills a Schroedinger equation
\begin{equation}
i\frac{\partial}{\partial t}\psi(x,t)=H\psi
\end{equation}
see, e.g \cite{Peskin}. In analogy to this, the path integral for
gravity should satisfy the Wheeler deWitt equation. Hartle and Hawking
\cite{Hartle} showed that this is the case for 
\begin{equation}
Z=\int\mathcal{D}g_{\mu\nu}e^{iS}\label{eq:pathintegral}
\end{equation}
if one does not include ghost and gauge fixing terms. Later, Barvinsky
showed in \cite{Barvinsky1} that the inclusion of ghosts and gauge
fixing leads to factor ordering ambiguities as they were present in
section 4.2.

The path integral of eq. (\ref{eq:pathintegral}) can be used to define
expectation values of the form 
\begin{eqnarray}
\langle F(g)\rangle & = & \int\mathcal{D}g_{\mu\nu}F(g)e^{iS}\nonumber \\
 & = & Z^{-1}FZ
\end{eqnarray}
The wave function of the Wheeler deWitt equation depends only on the
three geometry $\gamma_{ij}$.The metric in eq. (\ref{eq:ADMMetric})
has a line element \ref{eq:ADMLINE}: 
\begin{equation}
ds^{2}=-N^{2}dt^{2}+\gamma_{ij}(dx^{i}+\beta^{i}dt)(dx^{j}+\beta^{j}dt)
\end{equation}
where $N$ is the lapse function that gives the proper time lapse
between the upper and lower hypersurfaces. When we vary $Z$ with
respect to $N$ we therefore put it forward or backward in time, but
as the wave function of general relativity should be independent on
time, we have 
\begin{eqnarray}
0 & = & \langle\frac{\delta}{\delta N}\rangle\nonumber \\
 & = & Z^{-1}\frac{\delta}{\delta N}Z\nonumber \\
 & = & \int\mathcal{D}g_{\mu\nu}\frac{\delta}{\delta N}e^{iS}\nonumber \\
 & = & Z^{-1}\left(\frac{i\delta S}{\delta N}\right)Z\label{eq:Variation1}
\end{eqnarray}
From eq.(\ref{eq:Hamiltonianconstraint}), we have 
\begin{equation}
H=\int d^{3}x(\pi\partial_{t}N+\pi^{i}\partial_{t}\beta_{i}+\pi^{ij}\partial_{t}\gamma_{ij})-L)
\end{equation}
 or 
\begin{equation}
S=\int dtL=\int d^{4}x(\pi\partial_{t}N+\pi^{i}\partial_{t}\beta_{i}+\pi^{ij}\partial_{t}\gamma_{ij})-\int dtH\label{eq:Actionadm2}
\end{equation}
 with $\pi=\pi^{i}=0$, and $\gamma_{ij}$ being independent of $N$,
and 
\begin{equation}
H=\int d^{3}x\left(N\mathcal{H_{G}}+\beta_{i}\chi^{i}\right)\label{eq:Hamilton2}
\end{equation}
As a result, we are led to the Hamiltonian constraint 
\begin{equation}
\frac{i\delta S}{\delta N}Z=-i\mathcal{H_{G}}Z=0
\end{equation}
where 
\begin{equation}
\mathcal{H_{G}}=\mathcal{G}_{ijkl}\pi^{ij}\pi^{kl}-\sqrt{\gamma}^{\:(3)}R\label{eq:Hamilton3}
\end{equation}
Similarly, computation of $\langle\frac{\delta}{\delta\beta^{i}}\rangle=0$
immediately leads to the diffeomorphism constraints of eq. (\ref{Diffeomorphismconstraint}). 

If we compute 
\begin{equation}
Z^{-1}\left(-i\frac{\delta}{\delta\gamma_{ij}}\right)Z=Z^{-1}\int\mathcal{D}g_{\mu\nu}\frac{\delta S}{\delta\gamma_{ij}}e{}^{iS}
\end{equation}
 we get, using 
\begin{equation}
\frac{\delta S}{\delta\gamma_{ij}}=\sqrt{\gamma}\left(\gamma^{ij}K-K^{ji}\right)=\pi^{ij}
\end{equation}
the expression 
\begin{equation}
Z^{-1}\left(-i\frac{\delta}{\delta\gamma_{ij}}\right)Z=Z^{-1}\int\mathcal{D}g_{\mu\nu}\pi^{ij}e{}^{iS}
\end{equation}
which yields 
\begin{equation}
\hat{\pi}^{ij}=-i\frac{\delta}{\delta\gamma_{ij}}\label{eq:fdkdfsdf}
\end{equation}
and is equal to the momentum operator from eq. (\ref{eq:momentumoperator}).
Putting eq. (\ref{eq:fdkdfsdf}) into eq. (\ref{eq:Hamilton3}), we
arrive at the Wheeler deWitt equation. Unfortunately, this is true
only approximately. Barvinsky considered the path integral with gauge-fixing
and ghost contributions \cite{Barvinsky1}. He found that these terms
lead to additional delta functions $\delta(0)$ in the constraint
equations. Similar terms arose in our discussion of the factor ordering
problem above. Apparently, the Wheeler deWitt equation can only derived
from the path integral if we set $\delta(0)=0$ and thereby ignore
the factor ordering problem.

\section{Two ways of the description of black holes in quantum gravity}

\subsection{Calculation of the black hole entropy from the path integral}

This section reviews the computation of the black hole entropy from
Euclidean quantum gravity by Gibbons and Hawking \cite{GibbonsHawking}.
One may express the metric with a classical background as $\overline{g}_{\mu\nu}=g_{\mu\nu}+h_{\mu\nu}$
and then expand the Euclidean action perturbatively as in section
2.1:
\begin{equation}
I(\overline{g}_{\mu\nu})=I(g_{\mu\nu})+\underline{I}(h_{\mu\nu})+\underline{\underline{I}}(h_{\mu\nu})+\text{higher order terms}
\end{equation}
where $\underline{I}(h_{\mu\nu})$ is linear and $\underline{\underline{I}}(h_{\mu\nu})$
is quadratic in the quantum field. As in Section 2.2, we have $\underline{I}(h_{\mu\nu})=0$
by the equations of motion, and so, omitting ghost and gauge fixing
terms, the Euclidean path integral in the background field method
is given by 
\begin{equation}
Z_{eu}=e^{-I(g_{\mu\nu})}\int\mathcal{D}h_{\mu\nu}e^{-\underline{\underline{I}}(h_{\mu\nu})}
\end{equation}
The path integral up to the action quadratic in the metric perturbation
can be evaluated with the same techniques of dimensional regularization
from section 2.2. In their book on Euclidean quantum gravity \cite{Hawkingbook},
Gibbons and Hawking use a slightly different technique, namely zeta
function regularization, see section 3.

The background field method can handle the possibility that the background
$g_{\mu\nu}$ is a classical black hole. We now assume that $g_{\mu\nu}$
is given by the Schwarzschild metric of eq. (\ref{eq:schwarzschild}).
Going over to Kruskal coordinates, the metric becomes 
\begin{equation}
ds^{2}=\frac{32M^{3}}{r}e^{\frac{-r}{2M}}(-dz+dy)+r^{2}d\Omega^{2}
\end{equation}
with 
\begin{equation}
-z^{2}+y^{2}=\left(\frac{r}{2M}-1\right)e^{\frac{r}{2M}}\label{eq:koordinates}
\end{equation}
and 
\begin{equation}
\frac{(y+z)}{(y-z)}=e^{\frac{t}{2M}}\label{eq:gglokjhpgt}
\end{equation}
The singularity lies at $-z^{2}+y^{2}=-1$. Setting $\zeta=iz$, the
metric becomes positive definite: 
\begin{equation}
ds^{2}=\frac{32M^{3}}{r}e^{\frac{-r}{2M}}(d\zeta+dy)+r^{2}d\Omega^{2}
\end{equation}
with 
\begin{equation}
\zeta^{2}+y^{2}=\left(\frac{r}{2M}-1\right)e^{\frac{r}{2M}}
\end{equation}
the coordinate $r$ will be real and greater than or equal to $2M$
as long as $y$ and $\zeta$ are real. eq. (\ref{eq:gglokjhpgt})
shows that setting $t=-i\tau$ implies $\tau$ has a period of $8\pi M$. 

The Euclidean path integral has a connection to the canonical partition
function. We have for field configurations of a scalar field $\varphi_{1}$at
$t_{1}$and $\varphi_{2}$ at $t_{2}$with a Hamiltonian $H$
\begin{eqnarray}
\langle\varphi_{2},t_{2}|\varphi_{1},t_{1}\rangle & = & \int\mathcal{D}\varphi e^{iS}=\langle\varphi_{2},t_{2}|e^{-iH(t_{2}-t_{1})}|\varphi_{1},t_{1}\rangle
\end{eqnarray}
where the path integral is over all field configurations that take
the value $\varphi_{1}$at $t_{1}$ and $\varphi_{2}$ at $t_{2}$,
and on the right hand side, the Schroedinger picture for the amplitude
was invoked. Setting $t_{2}-t_{1}=-i\beta$ and $\varphi_{1}=\varphi_{2}$,
a summation over all $\varphi_{1}$ yields the canonical partition
sum 
\begin{equation}
tr(e^{-\beta H})=\int\mathcal{D}\varphi e^{-I}=Z_{eu}
\end{equation}
with the path integral now taken over all fields with period in $\beta$
in imaginary time. Since the Euclidean section of the Schwarzschildmetric
is periodic in 
\begin{equation}
\beta=8\pi M
\end{equation}
one should be able to compute the canonical partition sum from the
Euclidean path integral. The Euclidean section of the Schwarzschildmetric
has $R=0$. Therefore, the non-zero part comes from the boundary part
of the action in eq. (\ref{eq:fulleuclideanaction}): 
\begin{equation}
I=-\int d^{4}x\sqrt{g}R-2\int_{\partial M}d^{3}x\sqrt{\gamma}(K-K^{0})\label{eq:fulleuclideanaction-1}
\end{equation}
which is an integral over the intrinsic curvature. Evaluation of this
integral yields, see \cite{GibbonsHawking}: 
\begin{equation}
I=4\pi M^{2}=\frac{\beta^{2}}{16\pi}
\end{equation}
(Note that to obtain the correct result, one has to use the action
with the correct factor $\tilde{I}=\frac{1}{16\pi}I$,  which we discarded
for simplicity in section 2).

From statistical mechanics, we have 
\begin{equation}
\langle E\rangle=-\frac{\partial}{\partial\beta}ln(Z_{eu})
\end{equation}
 inserting the contribution of the background 
\begin{equation}
Z_{eu}\approx e^{-\frac{\beta^{2}}{16\pi}}
\end{equation}
we get 
\[
\langle E\rangle=M=\frac{\beta}{8\pi}
\]
 The entropy of the canonical ensemble is defined by 
\begin{equation}
S=\beta\langle E\rangle+ln(Z_{eu})=\frac{\beta^{2}}{8\pi}-\frac{\beta^{2}}{16\pi}=\frac{\beta^{2}}{16\pi}
\end{equation}
which yields 
\begin{equation}
S=4\pi M^{2}=\frac{1}{4}A
\end{equation}
 where $A$ is the area of the event horizon.

Gibbons and Hawking mention in their article \cite{GibbonsHawking}
that one can use this technique also for different spacetimes and
write: ``Because $R$ and $K$ are holomorphic functions on the complexified
spacetime except at singularities, the action integral is really a
contour integral and will have the same value on any section of the
complexified spacetime which is homologous to the Euclidean section
even though the metric on this section may be complex. This allows
to extend the procedure to other spacetimes which do not necessarily
have a real Euclidean section.'' Gibbons and Hawking then discuss
the Kerr solution and mention that one can use this technique also
in the presence of matter fields, where a black hole is surrounded
by a perfect fluid rotating at some angular velocity. Finally, they
also consider a star of rotating matter without an event horizon and
find that without an event horizon, the gravitational field apparently
does not contribute to entropy.

\subsection{Calculation of the black hole entropy from the Wheeler deWitt equation}

One can derive the entropy of a black hole not only from the gravitational
path integral, but also from the canonical formalism involving the
Hamiltonian and the Wheeler deWitt equation. The following section
is merely a summary of the excellent explanations \cite{Kiefer-1,Louko}.
The line element of a spacetime with spherical symmetry is, see the
calculation in \cite{Louko}: 
\begin{equation}
ds^{2}=-N^{2}(r,t)dt^{2}+\Lambda^{2}(r,t)(dr+\beta^{r}dt)^{2}+R^{2}(r,t)d\Omega^{2}
\end{equation}
where $(r,t)$ is a parametrization of the spacetime, $N$ is the
lapse and $\beta^{r}$ the shift function. The Hamiltonian constraint
of such a spacetime can be derived as, 

\begin{equation}
\mathcal{H}_{G}=\frac{\Lambda P_{\Lambda}}{2R^{2}}-\frac{P_{\Lambda}P_{R}}{R}+V_{G}=0\label{eq:Hamiltonianblackhole}
\end{equation}
see Kuchar:\cite{Kuchar}, and Louko and Whiting \cite{Louko}, where
\begin{equation}
V_{G}=\frac{RR''}{\Lambda}-\frac{RR'\Lambda'}{\Lambda^{2}}+\frac{R'^{2}}{2\Lambda}-\frac{\Lambda}{2}
\end{equation}
The diffeomorphism constraint becomes
\begin{equation}
\mathcal{H}_{r}=P_{R}R'-\Lambda P'_{\Lambda}=0
\end{equation}
 with 
\begin{equation}
P_{\Lambda}=-N^{-1}R(\dot{R}-R'N^{r})\label{eq:canmomandesfa}
\end{equation}
and 
\begin{equation}
P_{R}=-N^{-1}(\Lambda(\dot{R}-R'N^{r})+R(\dot{\Lambda}-(\Lambda N^{r})')\label{eq:canmomannewfswf}
\end{equation}
as canonical momenta that one obtains from varying the action with
respect to $\dot{R}$ and $\dot{\Lambda}$. Using the canonical momenta
from eqs. (\ref{eq:canmomandesfa}-\ref{eq:canmomannewfswf}) and
the Hamiltonian constraint of eq. (\ref{eq:Hamiltonianblackhole}),
the Wheeler deWitt equation is 
\begin{equation}
\left(\frac{-\Lambda\delta^{2}}{2R^{2}\delta\Lambda^{2}}+\frac{1}{R}\frac{\delta^{2}}{\delta\Lambda\delta R}+V_{G}\right)\psi(\Lambda,R)=0
\end{equation}
 A semi-classical solution of this equation is given by the WKB ansatz
\begin{equation}
\psi(\Lambda,R)=C(\Lambda,R)e^{iS_{0}(\Lambda,R)}
\end{equation}
 with 
\begin{equation}
\left|\frac{\delta C(\Lambda,R)}{\delta\Lambda}\right|<<\left|C(\Lambda,R)\frac{\delta S_{0}(\Lambda,R)}{\delta\Lambda}\right|
\end{equation}
and 
\begin{equation}
\left|\frac{\delta C(\Lambda,R)}{\delta R}\right|<<\left|C(\Lambda,R)\frac{\delta S_{0}(\Lambda,R)}{\delta R}\right|
\end{equation}
which implies for the Hamiltonian constraint 
\begin{equation}
\frac{-\Lambda}{2R^{2}}\left(\frac{\delta S_{0}}{\delta\Lambda}\right)^{2}+\frac{1}{R}\frac{\delta^{2}S_{0}}{\delta\Lambda\delta R}+V_{G}=0.
\end{equation}
We have for the exterior of the Schwarzschildmetric an action of the
form 
\begin{equation}
S=\int dtL=\int dt\int_{0}^{\infty}dr\left(P_{\Lambda}\dot{\Lambda}+P_{R}\dot{R}-N\mathcal{H}_{G}-\beta^{r}\mathcal{H}_{r}\right)\label{eq:admlagerangian-1}
\end{equation}
See \cite{Kiefer-1,Louko}. The action in eq. (\ref{eq:admlagerangian-1})
does not take boundary terms into account. In order to describe a
spacetime with an event horizon and a flat curvature at infinity,
the action must therefore be supplied with appropriate boundary terms.
These then lead to new degrees of freedom that give rise to additional
constraints. The additional constraints then lead to modifications
in the Hamiltonian and the functional $S_{0}$. 

According to Louko and Whiting \cite{Louko}, the following boundary
conditions are appropriate for an non-degenerate event horiuon, which
we locate at the parameter value $r=0$: 
\begin{equation}
R(t,r)=R_{0}(t)+R_{2}(t)r^{2}+\mathcal{O}(r^{4})\text{, where }R_{0}\text{ is defined by }\left(1-\frac{2M}{R}\right)|_{r=0},
\end{equation}
\begin{equation}
N(t,r)=N_{1}(t)r+\mathcal{O}(r^{3}),
\end{equation}
\begin{equation}
\beta^{r}(t,r)=\beta_{1}^{r}(t)r+\mathcal{O}(r^{3}),
\end{equation}
\begin{equation}
\Lambda(t,r)=\Lambda_{0}(t)+\mathcal{O}(r^{2})
\end{equation}
Kiefer and Brotz write in \cite{Kiefer-1} that the variation of the
action $S$ from eq. (\ref{eq:admlagerangian-1}) leads with the above
boundary conditions to the following boundary term at $r=0$: 
\begin{equation}
\delta S|_{r=0}=\left.-\frac{\partial}{\partial r}\left(N\frac{\partial\mathcal{H}_{G}}{\partial R''}\right)\delta R\right|_{r=0}=-\frac{N_{1}R_{0}}{\Lambda_{0}}\delta R_{0}
\end{equation}
This term must be subtracted from the action if $N_{1}\neq0$. Similarly,
it was found by Regge and Teitelboim in 1974 \cite{Reggeteitelboim}
that in case of asymptotically flat spacetimes, another boundary term
must be subtracted. This was also mentioned earlier by deWitt in \cite{deWitt}.
With all these boundary terms included, the action now becomes::
\begin{eqnarray}
S_{total}=\int dtL & = & \int dt\int_{0}^{\infty}dr\left(P_{\Lambda}\dot{\Lambda}+P_{R}\dot{R}-N\mathcal{H}_{G}-\beta^{r}\mathcal{H}_{r}\right)\nonumber \\
 &  & +\frac{1}{2}\int dt\frac{N_{1}R_{0}^{2}}{\Lambda}-\int dtN_{+}M\label{eq:changedaction}
\end{eqnarray}
with $M$ as the so-called ADM mass and $N_{+}$ as the lapse function
at infinity. A variation of this action would lead to an unwanted
term 
\begin{equation}
\int dt\frac{R_{0}^{2}}{2}\delta\left(\frac{N_{1}}{\Lambda}\right)
\end{equation}
This term is zero if we assume that 
\begin{equation}
\left(\frac{N_{1}}{\Lambda}\right)\equiv N_{0}(t)
\end{equation}
is fixed at $r=0$, i.e.it can not be varied. The need of fixing $N_{1}$
and $N_{+}$can be removed if we define suitable parametrizations
$N_{0}(t)\equiv\dot{\tau}(t)$ and $N_{+}\equiv\dot{\tau}_{+}(t)$
and consider $\tau$ and $\tau_{+}$ as additional variables. We then
get an action:
\begin{eqnarray}
S_{total} & = & \int dt\int_{0}^{\infty}dr\left(P_{\Lambda}\dot{\Lambda}+P_{R}\dot{R}-N\mathcal{H}_{G}-\beta^{r}\mathcal{H}_{r}\right)\nonumber \\
 &  & +\int dt\frac{R_{0}^{2}}{2}\dot{\tau}-\int dtM\dot{\tau}_{+}\label{eq:changedaction2}
\end{eqnarray}
In the canonical framework, the functions $\tau$ and $\tau_{+}$
represent additional degrees of freedom that must be supplied with
corresponding canonical momenta $\pi_{0}$ and $\pi_{+}$. These momenta
can only brought consistently into the action as additional variables,
if we impose additional constraints 
\begin{equation}
C_{0}=\pi_{0}-\frac{R_{0}^{2}}{2}=0\label{eq:constrblackhole1}
\end{equation}
 and 
\begin{equation}
C_{+}=\pi_{+}+M=0\label{eq:constrblackhole2}
\end{equation}
With $N_{0}$ and $N_{+}$ now acting as Lagrange multipliers, the
action becomes 
\begin{eqnarray}
S_{total} & = & \int dtL=\int dt\int_{0}^{\infty}dr\left(P_{\Lambda}\dot{\Lambda}+P_{R}\dot{R}-N\mathcal{H}_{G}-\beta^{r}\mathcal{H}_{r}\right)\nonumber \\
 &  & +\int dt\pi_{0}\dot{\tau}_{0}+\pi_{+}\dot{\tau}_{+}-N_{0}C_{0}-N_{+}C_{+}
\end{eqnarray}
In the quantum theory, the additional momenta become $\pi=-i\frac{\delta}{\delta\tau_{0}}$
and $\pi_{+}=-i\frac{\delta}{\delta\tau_{+}}$ and the WKB ansatz
\[
\psi=C(\Lambda,R)e^{iS_{0}(\Lambda,R.\tau_{0},\tau_{+})}
\]
with the additional degrees of freedom $\tau_{+}$ and $\tau_{0}$
then leads to the quantum mechanical constraints 
\begin{equation}
\frac{\partial_{0}S_{0}}{\partial\tau_{0}}-\frac{R_{0}^{2}}{2}=0\label{eq:quantumconstr1}
\end{equation}
\begin{equation}
\frac{\partial_{0}S_{0}}{\partial\tau_{+}}+M=0\label{eq:quantumconstr2}
\end{equation}
This changes the solution $S_{0}$ into 
\begin{equation}
S_{0}(\Lambda,R,\tau_{+},\tau_{0})\rightarrow S_{0}+\frac{R_{0}^{2}}{2}\tau_{0}-M\tau_{+}
\end{equation}

Setting $r-2M=\zeta(r)$ in the Schwarzschildmetric, we get, up to
order $\frac{1}{2M}$, see \cite{Dhabolkare} on p. 4: 
\begin{equation}
ds^{2}=-\frac{\zeta}{2M}dt^{2}+\frac{2M}{\zeta}(d\zeta)^{2}+(2M)^{2}d\Omega^{2}
\end{equation}
Defining 
\begin{equation}
\rho^{2}=8M\zeta
\end{equation}
yields 
\begin{equation}
d\zeta^{2}\frac{2M}{\zeta}=d\rho^{2}
\end{equation}
and the line element becomes 
\begin{equation}
ds^{2}=-\frac{\rho^{2}}{16M^{2}}dt^{2}+d\rho^{2}+(2M)^{2}d\Omega^{2}
\end{equation}

This corresponds to our previous line element with $\beta^{r}=0$,
$\Lambda=1$, $N(t,\rho)=N_{1}(t)\rho$, and $N_{1}=\frac{1}{4M}$.
Setting $t=-it_{e}$, the euclideanized version of this line element
is 

\begin{equation}
ds^{2}=N_{1}^{2}\rho^{2}dt_{e}^{2}+d\rho^{2}+(2M)^{2}d\Omega^{2}
\end{equation}
The Euclideanized Schwarzschildmetric has a time coordinate that is
periodic with $8\pi M$ . Thereby, using $N_{1}=\frac{1}{4M}$, the
variable $\tau_{0}$ becomes 
\begin{equation}
\tau_{0}=\int_{0}^{8\pi M}dtN_{1}=8\pi MN_{1}=2\pi
\end{equation}
and we get 
\begin{equation}
S_{0}+R_{0}^{2}\pi-M\tau_{+}=S_{0}+\frac{1}{4}A-M\tau_{+}
\end{equation}
Similarly, the parameter$\beta=\tau_{+}=\int dtN_{+}$ is interpreted
by Louko and Whiting in \cite{Louko} as the inverse of the normalized
temperature at infinity. The euclideanized WKB solution of the Wheeler
deWitt equation is then 
\begin{equation}
\psi(\Lambda,R,)=\psi_{0}(\Lambda,R)e^{-\beta M+\frac{A}{4}}
\end{equation}
We have seen in section 5 that the gravitational path integral is
a solution of the Wheeler deWitt equation. Similarly, $\psi_{0}(\Lambda,R)=e^{-S_{0}(\Lambda,R)}$
can now be considered as the quantum contribution to the path integral
and $e^{-\beta M+\frac{A}{4}}$ is considered as the background contribution
to the amplitude. By comparison with . 
\begin{equation}
\tilde{S}-\beta\langle E\rangle=ln(Z_{eu})
\end{equation}
where $\tilde{S}$ is the entropy and in the Euclidean path integral
$Z_{eu}$only background contributions are considered, we can indentify
$\tilde{S}=\frac{1}{4}A$ as black hole entropy.

\section{A comment on recent papers by Dvali and Gomez.}

\subsection{A comment on articles by Dvali and Gomez regarding a proposed ``self-completeness''
of quantum gravity}

In the following, we will critically assess various statements and
claims that Dvali and Gomez make in \cite{Dvali2}. The main subject
of this article seems to circle around the question whether one could
design an experiment to resolve so called trans-Planckian states.
Dvali and Gomez first consider a gravitational amplitude on p. 5 
\begin{equation}
T^{\mu\nu}\Delta_{\mu\nu\alpha\beta}\tau^{\alpha\beta}=\frac{T_{\mu\nu}\tau^{\mu\nu}-\frac{1}{2}T_{\mu}^{\mu}\tau_{\nu}^{\nu}}{p^{2}}
\end{equation}
with two energy momentum sources $T_{\mu\nu}$ and $\tau_{\mu\nu}$.
They then consider a modification of this amplitude of the form 
\begin{equation}
T^{\mu\nu}\Delta_{\mu\nu\alpha\beta}\tau^{\alpha\beta}=\frac{T_{\mu\nu}\tau^{\mu\nu}-\frac{1}{2}T_{\mu}^{\mu}\tau_{\nu}^{\nu}}{p^{2}}+\frac{aT_{\mu\nu}\tau^{\mu\nu}-b\frac{1}{3}T_{\mu}^{\mu}\tau_{\nu}^{\nu}}{p^{2}+(1/L)^{2}}\label{eq:gravipropdvali}
\end{equation}
where the parameters a and b are fixed according to the spin of the
new particle and $(1/L)$ symbolizes the energy of the new particle.
The latter is assumed to have a mass of $1/L$ which is higher than
the Planck mass. Dvali and Gomez then write that such a modification
of a propagator would be impossible to detect since a scattering experiment
at that energies would create a black hole of a Schwarzschild radius
larger than the impact parameter of the scattering experiment. For
this, Dvali and Gomez use a version of the so called generalized uncertainty
principle in quantum gravity. In the following paragraph, we will
mention constants like $G,c$ and the Planck's constant $h$ explicitly.
Multiplying the equations for the Schwarzschild radius 
\begin{equation}
r_{s}=\frac{2Gm}{c^{2}}
\end{equation}
and the reduced Compton wavelength 
\begin{equation}
\overline{\lambda}_{c}=\frac{\lambda_{c}}{2\pi}=\frac{\hbar}{mc}
\end{equation}
gives the so called Planck length $l_{p}$ 
\begin{equation}
r_{s}\overline{\lambda}_{c}=\frac{2G\hbar}{c^{3}}=2l_{p}^{2}\geq l_{p}^{2}
\end{equation}
Since the Planck mass $m_{p}$ is defined by $r_{s}=\lambda_{c}$
any attempt to generate a black hole of mass greater than $m_{p}=\sqrt{\frac{hc}{2G}}$
will give a black hole larger than its Compton wavelength. This black
hole should be regarded as a classical black hole for large $r$ or
$m>m_{p}$ . Then, Dvali and Gomez give an argument that should approximately
be something like: ''putting a quantum state within a box of width
$L$ with infinitely high walls would require an energy of $E\propto\frac{1}{L^{2}}$''
Actually, they write that $E=\frac{1}{L}$ on p. 4, but this seems
to be a typo. Converting this kinetic energy into a mass with $E=mc^{2}$
yields $m=E/c^{2}\propto\frac{1}{L^{2}c^{2}}$ or $\overline{\lambda}_{c}\propto L^{2}\hbar c$
and setting this into the formula for the definition of the Planck
length yields 
\begin{equation}
r_{s}\propto\frac{2l_{p}^{2}}{L^{2}\hbar c}
\end{equation}
Based on these approximations, whose validity is by no means guaranteed
even approximately at ``trans-Planckian energies'', Dvali and Gomez
conclude that any attempt to resolve distances smaller than the Planck
length $l_{p}$ should create an even larger black hole, which could
then be regarded as classical. Thereby, the modifications in the graviton
propagator in eq. (\ref{eq:gravipropdvali}) would not show up in
an experiment, since any attempt to resolve the added trans-Planckian
degrees of freedom would lead to the creation of a black hole with
a Schwarzschild radius larger than the Planck length. This very same
argument is then repeated in several different physical situations.
Finally, Dvali and Gomez note on p. 5 in eq. (8) that the path integral
of classical black holes is exponentially suppressed, as we saw in
section 6. From all this, Dvali and Gomez concludes on p. 3 that $"Deep-UV-gravity=Deep-IR-gravity"$
and they state on p. 2
\begin{quotation}
``The existing common knowledge about Einstein gravity is that it
becomes inapplicable at deep UV and that it must be completed by a
more powerful theory that will restore consistency at sub-Planckian
distances.We wish to question the above statement and suggest that
pure Einstein gravity is self complete in deep-UV. In other words,
 we argue that for restoring consistency no new propagating degrees
of freedom are necessary at energies $>>M_{p}$''
\end{quotation}
In the article of Dvali and Gomez, there are several aspects which
in the view of the author of this text are somewhat problematic. For
example. on p. 6, Dvali and Gomez write
\begin{quotation}
``However, for $L^{-1}>>M_{planck}$, the same vertex describes evaporation
of a classical BH of mass $1/L$ into a single particle-anti-particle
pair''. 
\end{quotation}
But in their entire article, they do not seem to specify the mathematical
details of a quantum field theory that would describe an ``evaporation
of a black hole'' as a ``vertex''. On p. 32, they write:
\begin{quotation}
In other words, a String theory with order one String coupling is
built in in Einstein gravity. To put it differently, by writing down
Einstein's action, we are committing ourselves to a String theory
\end{quotation}
As we saw in section 2, the amplitude of Einstein gravity is divergent
at two loops. This is different in String theory, whose two loop finiteness
has been shown by D'Hoker and Phong \cite{DHooker}. Furthermore,
String theory reduces in the classical limit to a version of Supergravity
and not Einstein gravity \cite{gsw}. In the opinion of the author
of this text, this renders the claims of Dvali and Gomez regarding
String theory highly problematic. 

However, most obvious problem inherent in the assumptions of Dvali
and Gomez is that the formula of the Schwarzschild radius, their eqs.
(5) and (7), which they invoke on p. 4, 5, 12, 15, 16, 23, 24, 25,
and 29 of their article in various different physical situations,
are something that comes entirely from purely classical physics. Without
a proper description of a quantum black hole, it is not clear what
the analogue of the Schwarzschild radius will look like, when a quantum
state of high energy is put into a box of ``trans-Planckian width''
as Dvali and Gomez call it. 

In fact, the equations for the black hole argument of Dvali and Gomez
do not seem to follow at all from the quantum field theoretical amplitudes
themselves. As we saw in section 3, up to one loop order of the gravitational
amplitude, only some estimates can be made about the topologies that
give the dominant contributions to the path integral at Planck scale.
The analysis of section 3 implies that the euler characteristic $\chi$
of the metrics which are dominant contributions fulfills 
\[
\chi\propto hV
\]
where $V$ is the volume and $h$ is some constant. According to Hawking,
this result means that we will find one gravitational instanton per
unit volume $h^{-1}$ at Planck scale. Thereby, the obvious disconnectedness
of the black hole argument from Dvali and Gomez with the quantum field
theoretical amplitude of quantum gravity can, at least for amplitudes
up to one loop order, be somewhat removed with Hawking's work on spacetime
foam from 1978. However, the amplitude that follows from quantum gravity
seems to always imply a summation over several metrics, and not only
over ones that describe classical black hole metrics. Hence there
remain important differences between the proposal from Dvali and Gomez
and the gravitational path integral even at one loop order. 

As we saw in section 2.2, an evaluation of the gravitational path
integral in two loop order leads to divergencies. One could suspect
that these divergences arise because of non-local topological features
of the spacetime that we have to expect at planck scale. The one loop
analysis in section 3 reveals a spacetime filled with some sort of
gravitational instantons. Since a gravitational instanton is a non-local
entity, further analysis of the path integral would likely need non-perturbative
methods. The traditional way of computing a perturbation series with
the gravitational path integral in terms of Feynman diagrams may then
no longer be a valid procedure at all. To quote Hawking\cite{Hawkingbook}: 
\begin{quotation}
``Attempts to quantize gravity ignoring the topological possibilities
and simply drawing Feynman diagrams around flat space have not been
very successful. It seems to me that the fault lies not with the pure
gravity or supergravity theories themselves but with the uncritical
application of perturbation theory to them. In classical relativity
we have found that perturbation theory has only limited range of validity.
One can not describe a black hole as a perturbation around flat space.
Yet this is what writing down a string of Feynman diagrams amounts
to.''
\end{quotation}
That one likely has to do with non-perturbative physics at the Planck
scale and below is acknowledged by Dvali and Gomez when they write
on p. 16:
\begin{quotation}
However, for $m>>m_{p}$ the new degree of freedom is no longer a
perturbative state.
\end{quotation}
Unfortunately, a non perturbative evaluation of the path integral
turned out to yield a divergent amplitude again. In section 2.3, we
arrived at the conclusion that the path integral of Euclidean quantum
gravity can be described by decomposing the metric into $\tilde{g}_{\mu\nu}=\Omega^{2}g_{\mu\nu}$
, with $\Omega$ as a conformal factor. We can now write the path
integral as 
\begin{equation}
Z_{eu}=\int\mathcal{D}gY(g)\label{eq:wffswfs}
\end{equation}
where 
\begin{equation}
Y(g)=\int\mathcal{D}\Omega e^{-I(\Omega^{2},g)}\label{eq:swfsdfs}
\end{equation}
and with an action 
\begin{equation}
I(\Omega^{2},g)=-\int d^{4}x\sqrt{g}\left(\Omega^{2}R+6g^{\mu\nu}\partial_{\mu}\Omega\partial_{\nu}\Omega\right)-2\int_{\partial M}d^{3}x\sqrt{\gamma}\Omega^{2}(K-K^{0})
\end{equation}
Because of the derivatives of $\Omega$, the action $I(\Omega^{2},g)$
can be arbitrarily negative if a rapidly varying conformal factor
is chosen. However, eq. (\ref{eq:swfsdfs}) says that one has to do
an integration over all possible conformal factors, including ones
that lead to a divergent path integral. 

The non-perturbative evaluation of the gravitational path integral
does not contain any mechanism that would exclude the summation over
metrics $\tilde{g}_{\mu\nu}=\Omega^{2}g_{\mu\nu}$ with a rapidly
variing conformal factor. For the tought experiment that is envisaged
by Dvali and Gomez where a collider probes distances at the Planck
scale, the quantum mechanical formulas above would therefore imply
a divergent amplitude. 

Another way of doing non-perturbative quantum gravity would be the
Wheeler deWitt equation. Unfortunately, this theory is only consistent
for low energy states within a WKB approximation. As we observed in
section 4.2, at high energies, severe factor ordering problems appear
and the theory becomes inconsistent, yielding equations like 
\[
0=-6i\left(\delta(x,x)\delta\zeta^{k}\right)_{,k}
\]
where $\zeta$ is an infinitesimal displacement and $,k$ denotes
a partial derivative with respect to $x^{k}$.

The author of this text has the opinion that one could perhaps speak
of some kind of ``self completeness'', if there were, for example,
a mechanism in the Wheeler deWitt equation, which would show that
high energy states would automatically lead to the creation of classical
solutions for black holes. Within the scope of a semi classical WKB
approximation, the Wheeler deWitt theory is able to handle large non-local
quantized metrics, like the quantized Friedmann Robertson Walker universe,
see \cite{deWitt}, or a quantized Schwarzschild metric, see section
6.2. Furthermore, from the WKB ansatz, deWitt was able to show a connection
between the classical Einstein equations and the Wheeler deWitt theory,
see section 4.1 and \cite{deWitt}. Therefore, it would be conceivable
if this theory would predict the emergence of classical black holes
at high energies. Unfortunately, such a mechanism seems not to be
there in the Wheeler deWitt theory, but one gets severe factor ordering
inconsistencies at high energies instead, as we have demonstrated
in section 4.2. 

The artificial invocation of the classical Schwarzschild radius at
trans-Planckian energies in the article \cite{Dvali2} by Dvali and
Gomez does not follow from the equations of non-perturbative quantum
gravity. The author of this text therefore has the opinion that the
proposals of Dvali and Gomez on the alleged ``self-completeness''
of gravity are not justified. Instead the author of this note advocates
the view that quantum general relativity must be replaced by some
other theory, like loop quantum gravity, or String theory that both
support the definition of a complete Hilbert space of states without
giving rise to severe inconsistencies at high energies.

But there is an additional problem in the proposal of Dvali and Gomez.
Any theory that proposes an amplitude which is dominated by virtual
gravitational instantons must take into account that the trajectories
of ordinary particles flying in this spacetime might be altered by
these instanton metrics. Hawking, Page and Pope figured that this
might even lead to predictions that can be experimentally tested \cite{Bubbles1,Bubbles2}.
They devised an approximation scheme to evaluate the path integral
of gravity and matter fields non-perturbatively.

A typical amplitude for a scalar particle propagating from a initial
field mode $u(x')$ to a final mode $v(x')$ would be 
\begin{equation}
-\int\overline{u}(x')\overleftrightarrow{\nabla}_{\mu}G(x',y')\overleftrightarrow{\nabla}_{\nu}v(x')d\Sigma^{\mu}(x')d\Sigma^{\nu}(y')
\end{equation}
where $\Sigma^{\mu}(x')$ and $\Sigma^{\nu}(y')$ are the Cauchy data
for the initial and final states and $G(x',y')$ is the Green's function
of the metric. The s-matrix is computed from initial and final states
emerging from infinity, where they are supposed to be non-interacting,
and must be governed by flat space equations of motion. Therefore,
such amplitudes make only sense in asymptotically Euclidean metrics
and we must try to convert metrics that are not asymptotically Euclidean
into an asymptotically Euclidean form with an appropriate conformal
factor. 

Taking the idea of the non-perturbative evaluation of the path integral
seriously, one can not confine oneself to simply consider an amplitude
composed of black hole metrics, but one has to sum the path integral
over all possible metrics. Unfortunately, the Green's functions for
most of these metrics are not known and we can not compute the expression
of a scattering amplitude of a particle within such an arbitrary metric. 

Hawking, Page and Pope note that by using topological sums of a certain
number of copies of $CP^{2}$ and $\overline{CP^{2}}$ (the bar means
opposite orientation), one can construct a simply connected closed
manifold of arbitrary signature $\tau$ and Euler characteristic $\chi$,
with an odd and definite intersection form. Similarly, by using certain
numbers of copies of $S^{2}\times S^{2}$ and $K^{3}$ if $\tau>0$
or $\overline{K^{3}}$ if $\tau<0$, one can construct a simply connected
closed manifold with even and indefinite intersection form, arbitrary
signature and Euler number, see \cite{Kirby} p. 26. By Freedman's
theorem, the topology of the simply connected spacetimes from these
construction would then, up to homeomorphy, be equivalent to an arbitrary
simply connected spacetime with the same Euler number and signature.
Note, however that this equivalence just holds for the topology, and
not the metric. 

Hawking, Page and Pope then propose the view that one should restrict
the path integral to simply connected spacetimes .With the argument
above, the topology of this simply connected spacetime can be build
out of building blocks like $CP^{2}$, $\overline{CP^{2}},$$S^{2}\times S^{2}$,
$K^{3}$,$\overline{K^{3}}$. Hawking et al then proceed to calculate
the scattering amplitudes of particles that are moving in these building
blocks. They weight these amplitudes with $e^{-I}$ where $I$ is
the Euclidean action of the metric, and include some weighting factor
that depends on the conformal transformation which was employed to
make the manifolds asymptotically Euclidean. Then, Hawking, Page and
Pope average over all parameter values (like scale parameters, or
certain orientations) that these metrics have. 

For example, we noted in section 3.2 that $CP^{2}$ has the following
metric 
\begin{equation}
ds'=\frac{\rho'^{2}}{\rho'^{2}+x'^{2}}\left(\delta_{\mu\nu}-\frac{x_{\mu}'x_{\nu}'+n{}_{\mu\sigma}^{l}n_{\nu\lambda}^{l}x'^{\sigma}x'^{\lambda}}{\rho'^{2}+x'^{2}}\right)dx'^{\mu}dx'^{\nu}
\end{equation}
with a scale parameter $\rho$ and 
\begin{equation}
\eta_{\mu\sigma}^{l}=\begin{pmatrix}0 & 1 & 0 & 0\\
-1 & 0 & 0 & 0\\
0 & 0 & 0 & 1\\
0 & 0 & -1 & 0
\end{pmatrix}
\end{equation}
 (see \cite{Bubbles1,Bubbles2}). From the metric of $CP^{2}$, an
asymptotically Euclidean metric can be obtained by an appropriate
conformal transformation $g=\Omega^{2}g'$ which sets the origin of
$CP^{2}$to infinity. Thereby we get a space of the same topology
as $CP^{2}$(up to homeomorphy), but with the appropriate infinity
structures. The green functions of $CP^{2}$ have the form, see \cite{Bubbles1,Bubbles2}:
\begin{equation}
G(x',y')=\frac{1}{4\pi^{2}\rho'(1-L)}
\end{equation}
 where 
\begin{equation}
L=\frac{(\rho'+x'y'-in_{\mu\nu}^{l}x'^{\mu}y'^{\nu})(\rho'^{2}+x'y'+in_{\mu\nu}^{l}x'^{\mu}y'^{\nu})}{(\rho'^{2}+x'^{2})(\rho'^{2}+y'^{2})}
\end{equation}
Note that this Green's function has additional singularities compared
to the Green's functions in flat space. For spaces like $S^{2}\times S^{2}$
the Green's functions are not known, so Hawking, Page and Pope made
an approximation to $S^{2}\times S^{2}$ by a metric that describes
a conformally flat manifold with conical singularities.

After weighting the individual amplitudes and integrating over the
metric parameters, Hawking et al conclude that the amplitudes are
of order 
\begin{equation}
A\propto\left(\frac{k_{1}k_{2}}{m_{p}}\right)^{s}
\end{equation}
where $k_{1}$ and $k_{2}$ are the momenta of the in and out states,
$m_{p}$ is the Planck mass and $s$ is the spin. For a scalar particle,
like the Higgs field, we have $s=0$ and therefore the amplitudes
would be of order one. 

Hawking writes that this would suggest that the Higgs particle is
of composite nature. Unfortunately, in 2012, the Higgs particle has
been found at the Large Hadron Collider in Genf, and further analysis
of the data provided evidence for the Higgs field to be indeed a scalar
particle \cite{Higgs2}. This puts the approximations of Hawking et
al severely into question. Additionally, Warner\cite{Warner} has
analyzed scattering amplitudes of Spin 1 fields with Hawking's model,
and he also found large amplitudes that are in disagreement with observation. 

From this one can conclude that either the approximation of the gravitational
path integral by restricting it to $CP^{2}$, $\overline{CP^{2}},$$S^{2}\times S^{2}$,
$K^{3}$,$\overline{K^{3}}$ metrics is wrong, or that the idea of
a gravitational scattering amplitude dominated at Planck scale by
virtual gravitational instantons is physically incorrect. 

Unfortunately, any model that proposes a scattering amplitude being
dominated by non-perturbative virtual gravitational instantons will
have to confront the problem that the Green's functions for these
instantons will, in general, look very different from flat space.
Thereby some changes in the particle trajectories should be expected
in all these models. 

Hawking et al tried to approximate the Euclidean path integral by
restricting the summation to $CP^{2},\overline{CP^{2}},S^{2}\times S^{2},K^{3},\overline{K^{3}}$
metrics because one does not know the Green's function of all the
metrics that the path integral has to be summed over. The fact that
their approximation turned out to be wrong also suggests that entirely
new methods are needed to describe the spacetime at Planck scale.

\subsection{A comment on articles by Dvali and Gomez on black holes}

In Section 3, we have shown that the gravitational one loop amplitude
should be dominated by virtual gravitational instantons at Planck
scale. So perhaps a way to get further understanding of gravity would
be to create reasonable quantum mechanical models for spacetimes with
high Euler numbers and signature. Recently, Dvali and Gomez have proposed
something like this. In his article ``black holes as critical point
of quantum phase transition.'' \cite{DvaliBlackholes} they write
on p. 2
\begin{quote}
``Black holes represent Bose-Einstein-Condensates of gravitons at
the critical point of a quantum phase transition''. 
\end{quote}
Dvali and Gomez proceed by writing on p. 10
\begin{quotation}
``We now wish to establish the connection between the black hole
quantum portrait and critical phenomena in ordinary BEC. We shall
consider a simple prototype that captures the key phenomenon. Let
$\psi(x)$be a field operator describing the order parameter of a
Bose gas. The simplest Hamiltonian that takes into account the self
interaction of the order parameter can be written in the form 
\[
H=-\hbar L_{0}\int d^{3}x\Psi(x)\nabla^{2}\Psi(x)-g\int d^{3}x\Psi(x)^{\dagger}\Psi(x)^{\dagger}\Psi(x)\Psi(x)
\]

\end{quotation}
Dvali and Gomez then write 
\begin{quotation}
``We shall put the system in a box of Size R and periodic boundary
conditions 
\[
\Psi(0)=\Psi(2\pi R)
\]
Performing a plane wave expansion 
\[
\Psi=\sum_{k}\frac{a_{k}}{\sqrt{V}}e^{i\frac{\vec{k}\vec{x}}{R}}
\]
we can rewrite the Hamiltonian as 
\[
\mathcal{H}=\sum_{k}k^{2}a_{k}^{\dagger}a_{k}-\frac{1}{4}\alpha\sum_{k}a_{k+p}^{\dagger}a_{k'-p}^{\dagger}a_{k}a_{k'}"
\]

\end{quotation}
In contrast to these statements by Dvali and Gomez, we have derived
the classical Hamiltonian of general relativity in section 4 as 
\begin{eqnarray}
H & = & \int d^{3}x\left(N\sqrt{\gamma}(K_{ij}K^{ij}-K^{2}-{}^{3}R)-\beta_{i}2D_{j}\left(\gamma^{-1/2}\pi^{ij}\right)\right)\nonumber \\
 & = & \int d^{3}x\left(N\left(\mathcal{G}_{ijkl}\pi^{ij}\pi^{kl}-\sqrt{\gamma}^{\:(3)}R\right)+\beta_{i}\chi^{i}\right)\label{Hamiltonoperatorconstraint2-1}
\end{eqnarray}
In the quantum theory, this gave rise to a Hamiltonian constraint
that turned out to be 
\begin{equation}
\left(\mathcal{G}_{ijkl}\frac{\delta}{\delta\gamma_{ij}}\frac{\delta}{\delta\gamma_{kl}}+\sqrt{\gamma}^{\:(3)}R\right)\Psi(\gamma_{ij})=0\label{eq:wheelerdewitt2}
\end{equation}
and a diffeomorphism constraint 
\[
2iD_{j}\left(\frac{\delta}{\delta\gamma_{ij}}\Psi(\gamma_{ij})\right)=0
\]

For comparison with the Hamiltonian proposed by Dvali and Gomez, we
now try to put the Hamiltonian constraint of quantum gravity into
a form with creation and annihilation operators. To the knowledge
of the author of this note, there seems to be just one article \cite{McGuigan}
that has investigated this question. Usually, one defines, in analogy
to the Klein-Gordon equation, the following norm for the wave-functionals
of the Wheeler deWitt equation: 
\begin{equation}
\langle\Psi_{1}|\Psi_{2}\rangle=\int\prod_{x}d\Sigma^{ij}(x)\Psi_{1}^{*}\left(\mathcal{G}_{ijkl}\frac{\overrightarrow{\delta}}{i\delta\gamma_{kl}}-\frac{\overleftarrow{\delta}}{i\delta\gamma_{kl}}\mathcal{G}_{ijkl}\right)\Psi_{2}\label{eq:wdwnorm}
\end{equation}
where $d\Sigma^{ab}$ is the a surface element of the $6\times\infty^{3}$
dimensional space spanned up by the pseudo-metric $\mathcal{G}_{ijkl}$. 

McGuigan writes \cite{McGuigan} that in order to convert the Hamiltonian
into a form with creation and annihilation operators, one would have
to find a complete set of solutions $\Psi_{n}$ which are orthonormal
with respect to the above norm of eq. (\ref{eq:wdwnorm}). Then one
could make an ansatz 
\begin{equation}
\Psi(\gamma_{ij})=\sum_{k}\left(a_{k}^{ij}\Psi_{n}(\gamma_{ij})+a_{k}^{ij\dagger}\Psi_{n}^{*}(\gamma_{ij})\right)
\end{equation}
where $a_{k}^{ij\dagger}$ and $a_{k}^{ij}$are the creation and annihilation
operators for gravitons with $k$ momentum. However, McGuigan states 
\begin{quotation}
``The presence of the metric $\mathcal{G}_{ijkl}$as well as the
term $\sqrt{\gamma}^{\;(3)}R$ which are not quadratic in the $\gamma_{ij}$
or its derivatives will prevent us from finding such solutions here''
.
\end{quotation}
Please note that the situation here is entirely different with that
from zeta function renormalization from section 3. There, we had an
operator where an orthonormal base could be found. This was the case
because we made a perturbation expansion that we previously have cut
at second order, and then we were able to find an orthonormal base
for these second order terms. In contrast to a perturbation series
at one loop, the Hamiltonian incorporates the full non-perturbative
information of the theory. Unfortunately, for non perturbative quantum
gravity which incorporates the full nonlinearity of the theory, an
orthonormal base of quantum states can not be found.

One should also note that similar problems occur for a Hamiltonian
that describes spherical black holes. In the Wheeler deWitt equation
for spherical black holes from section 6.2 
\begin{equation}
\left(\frac{-\Lambda\delta^{2}}{2R^{2}\delta\Lambda^{2}}+\frac{1}{R}\frac{\delta^{2}}{\delta\Lambda\delta R}+\frac{RR''}{\Lambda}-\frac{RR'\Lambda'}{\Lambda^{2}}+\frac{R'^{2}}{2\Lambda}-\frac{\Lambda}{2}\right)\psi(\Lambda,R)=0
\end{equation}
there occur terms like $1/R^{2}$ or $1/\Lambda^{2}$which would prevent
finding a complete orthonormal set of solutions.

McGuigan then goes on noting that the situation would not be that
way in linearized gravity, where he investigates a topology $S^{1}\times S^{1}\times S^{1}$.
Expanding the metric as 
\begin{equation}
\gamma_{ij}(x)=\int d^{3}k\frac{1}{\left(2\pi\right)^{3}}\gamma_{ij}(k)e^{kx}
\end{equation}
and defining ``zero modes'' $\gamma_{ij0}=\gamma_{ij}(k=0)$, McGuigan
finds a Hamiltonian 
\begin{equation}
H=H_{0G}+H_{osc}
\end{equation}
where 
\begin{equation}
H_{0G}=\frac{1}{l^{2}}c\left(-\pi_{a}^{2}\frac{l^{4}}{2a}+\frac{1}{2a^{3}}24\gamma_{ik0}\gamma_{jl0}\pi_{0}^{ij}\pi_{0}^{kl}-a\frac{k}{2}\left(1-V(\gamma_{ij0})\right)\right)
\end{equation}
is a ``zero mode'' part. In $H_{0G}$, we have $c=g_{00}$, the
constant $a^{3}$ denotes the volume of the system, $l$ is the planck
length and $\pi_{0}^{ij}$ are the canonical momenta associated to
the zero modes $\gamma_{ij0}$. The function $V$ is, according to
McGuigan, a placeholder for ``the complicated dependence'' on the
zero modes that comes from the $^{3}R$ part of the Hamiltonian. 

Furthermore we have 
\begin{equation}
H_{osc}=ca^{3}\int\frac{d^{3}k}{(2\pi)^{3}}|k|\left(a_{k}^{ij\dagger}a_{k}^{ij}+\frac{1}{2}\right)
\end{equation}
which is the part of the Hamiltonian that describes gravitons with
annihilation and creation operators. 

When writing this, one should note that one can quantize linearized
gravity much more easily by methods that were first developed by Gupta
in 1928 \cite{Gupta}. Gupta started from a metric 
\begin{equation}
g_{\mu\nu}=\eta_{\mu\nu}+h_{\mu\nu}
\end{equation}
 where $|h_{\mu\nu}|<<1$ and the linearized Einstein equations for
weak fields: 
\begin{equation}
8\pi T_{\mu\nu}=\frac{1}{2}(\partial_{\rho}\partial_{\nu}h_{\mu}^{\sigma}+\partial_{\sigma}\partial_{\mu}h_{\nu}^{\sigma}-\partial_{\mu}\partial_{\nu}h-\square h_{\mu\nu}-\eta_{\mu\nu}\partial_{\rho}\partial_{\lambda}h^{\rho\lambda}+\eta_{\mu\nu}\square h
\end{equation}
For the linearized metric $h_{\mu\nu}$ Gupta then inserted an expansion
in terms of creation and annihilation operators for the gravitons.
The creation and annihilation operators fulfilled the usual commutator
relations for bosons. Historically, it was this investigation by Gupta
that showed gravitons to be spin 2 particles.

The author of this note wants to emphasize that McGuigan concludes
a description with creation and annihilation operators can not be
given for the Hamiltonian of the full non-linearized theory of gravitation.
This is in direct contrast to the Hamiltonian proposed by Dvali and
Gomez. But unfortunately, there are an additional problematic aspects: 

We noted in section 4.2 that deWitt has found severe factor ordering
inconsistencies that always should occur for for high energy states
with the Wheeler deWitt theory. This problem essentially forbids the
construction of a full quantum theory for the Wheeler deWitt equation,
which necessarily would have to incorporate states of high energy.
In section 4.1, we therefore were only able to solve the Wheeler deWitt
equation approximately by a WKB ansatz 
\[
\Psi=Ce^{iS_{0}}
\]
. However, state of the form $a^{\text{+}}|\psi\rangle$ that Dvali
and Gomez want to construct in their articles is in general not a
semi classical WKB state. Instead, expressions like $a^{\text{+}}|\psi\rangle$
suggest that one would not work in a semiclassical limit, but then
the Wheeler deWitt theory becomes inconsistent.

In Loop quantum gravity that we have reviewed in section 4.3, one
writes the Hamiltonian in Ashtekar variables, and then one defines
a state space with holonomies that are connected to points on a suitable
triangularization of the spacetime manifold. One does this in Loop
quantum gravity just in order to get a quantum theory of gravity that
is free of the factor ordering inconsistencies that the usual Wheeler
deWitt theory is plagued with. If one could consistently describe
a black hole in form of a condensate consisting of ordinary gravitons,
one would simply not need theories like loop quantum gravity or String
theory, where one introduces an entire set of additional assumptions
just in order to be able to define a consistent quantum theory.

The Hamiltonian proposed by Dvali 
\begin{equation}
H=-\hbar L_{0}\int d^{3}x\Psi(x)\nabla^{2}\Psi(x)-g\int d^{3}x\Psi(x)^{\dagger}\Psi(x)^{\dagger}\Psi(x)\Psi(x)\label{eq:dvalihamilt}
\end{equation}
seems to be devoid of all the problems given above that one faces
when one considers the usual Hamiltonian of non perturbative quantum
gravity, eq. (\ref{eq:wheelerdewitt2}), which also seems to look
completely different than Dvali's proposal. Furthermore, from the
usual Hamiltonian of eq. (\ref{Hamiltonoperatorconstraint2}), a connection
to the classical Einstein equations could be derived with the WKB
ansatz, as we saw in section 4.1. In Contrast, a rigorous derivation
of Einstein's equation, or a derivation of the Schwarzschild metric
is absent in the corresponding article of Dvali and Gomez. All this
suggests that the model of Dvali does not correspond to the usual
quantized Hamiltonian of general relativity and it does not seem to
describe linearized gravity either. 

Finally, I want to mention an additional problem of the proposal by
Dvali and Gomez. They claim that they are able to derive the black
hole entropy from their Bose Einstein condensate model. Dvali and
Gomez write on p. 14 of \cite{DvaliBlackholes} after a step by step
calculation where their Hamiltonian of eq. (\ref{eq:dvalihamilt})
was used as a starting point:
\begin{quotation}
Thus, we have reproduced the black hole evaporation law from the depletion
of the cold Bose-Einstein condensate at criticality.
\end{quotation}
However, during the standard calculation of the black hole entropy
from the path integral in section 7.1, it became apparent that the
black hole entropy always emerges from the proper inclusion of boundary
terms at the event horizon. The entropy is computed from the background
term
\[
e^{-I(g_{\mu\nu})}
\]
 in the Euclidean path integral 
\begin{equation}
Z_{eu}=e^{-I(g_{\mu\nu})}\int\mathcal{D}h_{\mu\nu}e^{-\underline{\underline{I}}(h_{\mu\nu})}
\end{equation}
with the Euclidean action 
\begin{equation}
I=-\int d^{4}x\sqrt{g}R-2\int_{\partial M}d^{3}x\sqrt{\gamma}(K-K^{0}),\label{eq:fulleuclideanaction-1-1}
\end{equation}
Since the Euclidean section of the Schwarzschild solution has $R=0$,
only the boundary term 
\begin{equation}
2\int_{\partial M}d^{3}x\sqrt{\gamma}(K-K^{0})
\end{equation}
 at the event horizon gives a non-vanishing result and contributes
to the gravitational entropy.

The standard Hamiltonian of quantum gravity was derived in section
4 by omitting such boundary terms. Therefore, the black hole entropy
could only be computed with the canonical formalism in section 6.2
after we included the necessary boundary terms into the action. Using
a description with canonical variables, the action with the boundary
terms included was given by eq. (\ref{eq:changedaction2}): 
\begin{eqnarray}
S_{total} & = & \int dt\int_{0}^{\infty}dr\left(P_{\Lambda}\dot{\Lambda}+P_{R}\dot{R}-N\mathcal{H}_{G}-\beta^{r}\mathcal{H}_{r}\right)\nonumber \\
 &  & +\int dt\frac{R_{0}^{2}}{2}\dot{\tau}-\int dtM\dot{\tau}_{+}
\end{eqnarray}
The boundary terms are given by $\int dt\frac{R_{0}^{2}}{2}\dot{\tau}-\int dtM\dot{\tau}_{+}$
and they contain the new degrees of freedom $\tau$ and $\tau_{+}$.
The latter have to be quantized with additional constraints and corresponding
canonical momenta $\pi_{0}$ and $\pi_{+}$. The classical version
of the constraints turned out to be, see eq. (\ref{eq:constrblackhole1}-\ref{eq:constrblackhole2}):
\begin{equation}
C_{0}=\pi_{0}-\frac{R_{0}^{2}}{2}=0
\end{equation}
 and 
\begin{equation}
C_{+}=\pi_{+}+M=0
\end{equation}
Upon quantization, the canonical momenta were converted into operators
$\pi=-i\frac{\delta}{\delta\tau_{0}}$ and $\pi_{+}=-i\frac{\delta}{\delta\tau_{+}}$
that act on a state functional 
\[
\psi=C(\Lambda,R)e^{iS_{0}(\Lambda,R.\tau_{0},\tau_{+})}
\]
of the Wheeler deWitt equation for the black hole. The quantum mechanical
constraints for the boundary terms become, see eq. (\ref{eq:quantumconstr1}-\ref{eq:quantumconstr2}):
\begin{equation}
\frac{\partial_{0}S_{0}}{\partial\tau_{0}}-\frac{R_{0}^{2}}{2}=0
\end{equation}
\begin{equation}
\frac{\partial_{0}S_{0}}{\partial\tau_{+}}+M=0
\end{equation}
The inclusion of the boundary terms changes the wave functional as
\begin{equation}
\psi=C(\Lambda,R)e^{S_{0}(\Lambda,R,\tau_{+},\tau_{0})}\rightarrow\psi=C(\Lambda,R)e^{S_{0}+\frac{R_{0}^{2}}{2}\tau_{0}-M\tau_{+}}
\end{equation}
In section 5 we have shown that the amplitude from the path integral
corresponds to a solution of the Wheeler deWitt equation. Hence, the
black hole entropy could finally be derived from the changed wave
functional in section 6.2 after the values of $\tau_{0}$ and $\tau_{+}$
were computed using the properties of the Euclideanized section of
the Schwarzschild solution. 

The fact that Hawking radiation emerges from the proper quantization
of boundary terms at the event horizon can also be seen by the calculation
of Gibbons and Hawking \cite{GibbonsHawking} who showed that a spherical
star without an event horizon has no gravitational entropy at all.

In the Hamiltonian of eq. (\ref{eq:dvalihamilt}) from Dvali and Gomez,
boundary terms that give raise to additional degrees of freedom at
the event horizon, which then change the wavefunctional of the black
hole, do not seem to be present. For this season, I question whether
one can, as Dvali and Gomez claim in \cite{DvaliBlackholes}, correctly
derive the black hole entropy from a Hamiltonian like the one they
propose. It seems to me that their Hamiltonian would, at best, resemble
something like the quantum Hamiltonian in eq (\ref{eq:wheelerdewitt2}),
which alone does not suffice to derive the black hole entropy, since
the boundary terms that give rise to a change of the wave functional
are missing. It is for all these reasons, why I find the proposals
by Dvali and Gomez on black holes to be highly problematic.


\begin{thebibliography}{10}
\bibitem{Wald}R. Wald, General Relativity, University Of Chicago
Press (1984)

\bibitem{HawkingEllis}Hawking, Ellis, The Large Scale Structure of
Spacetime, Series: Cambridge Monographs on Mathematical Physics, Cambridge
University Press (1975)

\bibitem{Carroll}S. Carroll, Spacetime and Geometry, Addison-Wesley
(2003)

\bibitem{Uhrysohn}http://www.math.caltech.edu/\textasciitilde{}2010-11/1term/ma109a/notes/lect5.pdf,
 p. 20

\bibitem{Oneil}B. O'Neill, Semi-Riemannian Geometry, with Applications
to Relativity, Academic Press (1983)

\bibitem{penrose}R. Penrose, Techniques of Differential Topology
in Relativity, in: Regional Conference Series in Applied Mathematics,
SIAM (1972).

\bibitem{Geroch}R. Geroch, Domain of Dependence, J. Math. Phys 11,
p. 437 (1970)

\bibitem{JuergenJost}J. Jost, Riemannian Geometry and Geometric Analysis,
Springer Verlag, (2011)

\bibitem{BlackholeGalaxy}S. Gillessen, F. Eisenhauernd, S. Trippe,
T. Alexander, R. Genzel, F. Martins, T. Ott, Astrophys. J. 692, p.
1075 (2009)

\bibitem{MTW}Misner, Thorne, Wheeler, Gravitation, W.H. Freeman and
Company (1973)

\bibitem{penrose1}R.Penrose, Phys. Rev. Lett. 14, p. 57 (1965)

\bibitem{Hawking}S.W. Hawking, R. Penrose, Proc. R. Soc. Lond. A
27, vol. 314 p. 529 (1970)

\bibitem{Dirac1}P. A. M. Dirac, Phys Rev. 114, p. 924 (1959),

\bibitem{deWitt}B. deWitt, Phys. Rev. 160, p. 1113 (1967)

\bibitem{Rovelli}C. Rovelli, Quantum Gravity, Series: Cambridge Monographs
of Mathematical Physics, Cambridge University Press ( 2007)

\bibitem{Tiemann}T. Thiemann, Modern Canonical Quantum General Relativity,
Cambridge Monographs of Mathematical Physics, Cambridge University
Press (2008)

\bibitem{Nicolai}H. Nicolai, Class.Quant.Grav.22:R193, (2005)

\bibitem{Kiefer}K. Kiefer, Quantum Gravity, Series: International
Series of Monographs on Physics, Oxford University Press ( 2012)

\bibitem{Hamber}H. W. Hamber, Discrete and Continuum Quantum Gravity,
arXiv:0704.2895

\bibitem{ThiemannHamil} T. Thiemann, Phys.Lett. B380 (1996) p. 257.

\bibitem{Velt}M. Veltman, Quantum Theory of Gravitation, in Methods
in Field Theory, Series: Les Houches Summer School Proceedings, World
Scientific (1981)

\bibitem{popov}L.D. Fadeev, V.N.Popov, Sov. Phys. Usp. 16, p.777
(1974)

\bibitem{canonical blackholes}T. Brotz, Phys.Rev.D57, p. 2349 (1998)

\bibitem{Feynman}R. P. Feynman, Quantum theory of gravitation, and:
Problems in quantizing the Yang Mills and gravitational fields, in:
Selected Papers of Richard Feynman, Series: World Scientific Series
in 20th Century Physics, World Scientific (2000)

\bibitem{Dewitt2}B. deWitt, Phys Rev.162 p.1195 (1967)

\bibitem{Fadeev}L.D. Faddeev, V.N. Popov, Physics Letters B25, p.29
(1967)

\bibitem{thooft}G. `t Hooft, M Veltman, Ann. de l' Inst. Henri Poincare
20, p. 69 (1974), 

\bibitem{DeWitt3}B. deWitt, Phys Rev. 162, p. 1239 (1967)

\bibitem{thooft collected}G. `t Hooft, Under the spell of the gauge
principle, Series: Advances in Mathematical Physics, World Scientific
Publishing (1994)

\bibitem{DeWit4}B. deWitt, General Relativity and Gravitation 1,
p. 181 (1970)

\bibitem{Background1}L.F.Abbott, Acta Physica Polonica B13, p. 33
(1982)

\bibitem{Background2}L. F. Abbott, Nucl. Phys. B185, p. 189 (1981)

\bibitem{Background3}M.T. Grisaru, P. van Nieuwenhuizen, C.C.Wu,
Phys Rev D12, p. 3203 (1975)

\bibitem{Einstein YM}S. Deser, Hung-Sheng Tsao, P. van Nieuwenhuizen,
Phys. Lett. B50, p. 491 (1974)

\bibitem{EinsteinMaxwell-1}S. Deser, P. van Nieuwenhuizen, Phys.Rev.
D10, p. 401 (1974)

\bibitem{Einsteindirac}S. Deser, P. van Nieuwenhuizen, Phys.Rev.
D10, p. 411 (1974) 

\bibitem{Goroff1}M. Goroff, A. Sagnotti, Phys Lett. B160,  p. 81
(1985)

\bibitem{Goroff2}M. Goroff, A Sagnotti, Nucl Phys. B266,  p. 709
(1986)

\bibitem{Vandeven}A.E.M. Van deVen, Nucl.Phys. B378, p. 309 (1992) 

\bibitem{Peskin}M.E. Peskin, D.V. Schroeder, Introduction to Quantum
Field Theory, Series: Frontiers in Physics, Westview Press (1995).

\bibitem{Hartle}J.B. Hartle, S. W. Hawking, Wavefunction of the Universe,
Phys. Rev. D. 28, p 2960 (1983)

\bibitem{Barvinsky1}A.O. Barvinsky, Phys. Lett. B175, p. 401 (1986)

\bibitem{GibbonsHawking}G. W. Gibbons, S.W. Hawking, Phys. Rev. D15,
p. 2752 (1977)

\bibitem{Hawkingbook}G. W. Gibbons, S. W. Hawking, Euclidean Quantum
Gravity, World Scientific Publishing (1993)

\bibitem{Kuchar}K .V. Kuchar, Phys. Rev. D50, p. 3961 (1994)

\bibitem{Louko}J. Louko, B. F. Whiting, Phys. Rev. D51, p. 5583 (1995)

\bibitem{Kiefer-1}T. Brotz, C. Kiefer, Phys.Rev. D55, p. 2186 (1997) 

\bibitem{HawkingSpacetimefoam}S.W. Hawking, Nucl. Phys. B 144, p.
349 (1978)

\bibitem{EguchiGilkeyHanson}T. Eguchi, P.B. Gilkey, A.J. Hanson,
Phys. Rep. 66, p. 213 (1980)

\bibitem{Nakahara}M. Nakahara, Geometry, Physics and Topology, Taylor
\& Francis (2003)

\bibitem{Bredon}G. E. Bredon, Topology and Geometry, Series: Graduate
Texts in Mathematics 139, Springer Verlag (2000)

\bibitem{Markov}A. A. Markov \textquotedbl{}Insolubility of the problem
of homeomorphy\textquotedbl{} Proc. Internat. Congress Math, Cambridge
Univ. Press (1958) 

\bibitem{Milnor}Milnor, J., On simply connected 4-manifolds, International
Symposium on Algebraic Topology p. 122 (1958), Universidad, Nacional
Autonoma de Mexico and UNESCO, Mexico City.

\bibitem{Whitehead1}J. H. C. Whitehead, Combinatorial homotopy. I.,
Bull. Amer. Math. Soc.55, p. 213 (1949)

\bibitem{Whitehead2}J. H. C. Whitehead, Combinatorial homotopy. II.,
Bull. Amer. Math. Soc. 55, p. 453 (1949)

\bibitem{Freedman}Freedman, M, The topology of four-dimensional manifolds,
J. Differential Geom. 17, p. 357 (1982)

\bibitem{Wilde World}A.Scorpan, The Wild World of 4-manifolds, American
Mathematical Society (2005)

\bibitem{Serre}J. Serre, A Course in Arithmetic, Series: Graduate
Texts in Mathematics, Springer Verlag (1996), J. Milnor and D. Husemoller,
Symmetric bilinear forms, Springer Verlag, (1973).

\bibitem{Donaldson}S.k. Donaldson, J. Differential Geom. 18, p. 279
(1983)

\bibitem{Chern}S. Chern, Ann. Math. 46, p. 674 (1945)

\bibitem{AsymEucl}G.W. Gibbons, C.N.Pope, Commun. Math. Phys. 66,
p. 267 (1979)

\bibitem{AtiyahSinger2}M.F. Atiyah, V.K. Patodi, I.M. Singer,Math.
Proc. Cambridge Phil. Soc 77, p. 43 (1978)

\bibitem{CP^2}T. Eguchi, P.G.O. Freud, Phys. Rev. Lett. 37 p. 1251
(1976) 

\bibitem{InstantonSymm}G.W. Gibbons, S.W. Hawking, Commun. Math.
Phys. 66, p. 291 (1979)

\bibitem{Bubbles1}S.W. Hawking, D.N. Page, C.N. Pope, Nucl. Phys.
B 170, p. 283 (1981) 

\bibitem{Bubbles2}S.W. Hawking, D.N.Page, C.N. Pope, Phys. Lett.
B86, p. 178 (1979) 

\bibitem{Warner}N.P. Warner, Commun. Math. Phys. 86, p. 419 (1982)

\bibitem{Zetafunction}G.W. Gibbons, S. W. Hawking,, M.J. Perry, Nucl.
Phys. B138, p. 141 (1978) 

\bibitem{WheelerFoam}J. A. Wheeler, in Relativity, Groups and Topology,
Series: Houches Lecture, Gordan and Breach (1964)

\bibitem{ReggeCalculus}T. Regge, Nuovo. Cim. 19, p. 558 (1961)

\bibitem{DvaliBlackholes}G. Dvali, C. Gomez, Black Holes as Critical
Point of Quantum Phase Transition, http://arxiv.org/abs/1207.4059

\bibitem{McGuigan}M. MCGuigan, Phys. Rev. D 38, p. 3031 (1988). 

\bibitem{K3}D.N.Page, Phys.Lett. B80, p. 55 (1978)

\bibitem{S^2xS^2}G.W. Gibbons, C.N. Pope, Commun. Math. Phys. 61,
p. 239 (1978)

\bibitem{Mukhanov}V. Mukhanov, S. Winitzki, Introduction to Quantum
Effects in gravity, Cambridge University Press; (2007)

\bibitem{DeWittzeta}B. S. deWitt, Dynamical theory of groups and
fields in: Relativty, groups and topology, Series: Houches Lecture,
Gordan and Breach (1964), B.S. deWitt, Phys. Rep. 19C, p. 295 (1975)

\bibitem{Kirby}R.C.Kirby, The Topology of four manifolds, Series:
Lect. Notes in Mathematics 1374, Springer Verlag (1980).

\bibitem{Dvali2}G. Dvali, C. Gomez, Self-Completeness of Einstein
Gravity,http://arxiv.org/abs/1005.3497

\bibitem{Higgs}Atlas Collaboration, Phys. Lett. B 716, p. 1 (2012),
CMS Collaboration, Phys. Lett. B 716, p. 30 (2012)

\bibitem{Higgs2}Atlas Collaboration, Phys. Lett. B726, p 120 (2013),
CMS Collaboration, Phys. Rev. Lett. 110 (2013) 081803

\bibitem{Geroch2}Geroch, J. Math. Phys. 9, p. 1739 (1968)

\bibitem{Nicolai2}H. Nicolai, K. Peeters, Lect. Notes in Phys.721
p. 151 (2007)

\bibitem{PauliFierz}M. Fierz and W. Pauli, Proc. Roy. Soc. Lond.
A173, p. 211 (1939)

\bibitem{Posact}R. Schoen, S. T. Yau, Phys. Rev. Lett, Phys. Rev.
Lett. 42, p. 547 (1979)

\bibitem{Gupta}S.N. Gupta Proc. Phys. Soc. A 65, p. 161 (1952)

\bibitem{Dhabolkare}A. Dabholkar, S. Nampuri, Lect. Notes. Phys.
851, p.165 (2012)

\bibitem{Reggeteitelboim}T. Regge, C. Teitelboim, Ann Phys. (N.Y.)
88, p. 286 (1974).

\bibitem{DHooker}E. D'Hoker, D.H. Phong, Nucl. Phys. B715 p. 3 (2005),
Nucl. Phys. B715 p. 91 (2005)

\bibitem{gsw}M. B. Green, J. H. Schwarz, E. Witten, Superstring Theory:
Volume 1, Introduction, and Superstring Theory: Volume 2, Loop Amplitudes,
Anomalies and Phenomenology, Series: Cambridge Monographs on Mathematical
Physics, Cambridge University Press (1988)\end{thebibliography}
\end{document}